\documentclass[showpacs,aps,pra,longbibliography,onecolumn,notitlepage,nofootinbib]{revtex4-2}

\usepackage[dvips]{graphicx} 
\usepackage{amsmath,amssymb,amsthm,mathrsfs,amsfonts,dsfont}
\usepackage{enumerate}
\usepackage{epsfig}
\usepackage{subfigure}
\usepackage{xcolor}
\usepackage{amsthm}
\usepackage{braket}
\usepackage{multirow}    
\usepackage{diagbox}
\usepackage{float}
\usepackage{array}     
\usepackage{comment}
\usepackage{algorithm}
\usepackage{algpseudocode}
\usepackage{diagbox}
\usepackage{enumitem} 
\usepackage{booktabs,colortbl} 
\usepackage{rotating}
\usepackage[normalem]{ulem}
\usepackage[most]{tcolorbox}
\usepackage{enumitem} 
\usepackage[colorlinks = true]{hyperref}
\newcommand{\comments}[1]{}

\newcommand{\tabincell}[2]{\begin{tabular}{@{}#1@{}}#2\end{tabular}} 
\newcommand{\e}{\mathrm{e}}
\renewcommand{\i}{\mathrm{i}}

\tcbmaketheorem{Boxes}{Box}{
breakable,
fonttitle=\sffamily\bfseries\large, fontupper=\sffamily,fontlower=\sffamily,colframe=red!75!black,colback=white
}{boxes}{box}

\newtheorem{theorem}{Theorem}
\newtheorem{lemma}{Lemma}
\newtheorem{corollary}{Corollary}

\newtheorem{definition}{Definition}

\newtheorem{observation}{Observation}


\newcommand{\red}[1]{{\color{red} #1}}
\newcommand{\blue}[1]{{\color{blue} #1}}

\newcommand{\mbb}{\mathbb}
\newcommand{\mc}{\mathcal}


\begin{document}
\title{Quantum key distribution surpassing the repeaterless rate-transmittance bound without global phase locking}

\author{Pei Zeng}
\affiliation{Center for Quantum Information, Institute for Interdisciplinary Information Sciences, Tsinghua University, Beijing 100084, China}
\author{Hongyi Zhou}
\affiliation{Center for Quantum Information, Institute for Interdisciplinary Information Sciences, Tsinghua University, Beijing 100084, China}
\author{Weijie Wu}
\affiliation{Center for Quantum Information, Institute for Interdisciplinary Information Sciences, Tsinghua University, Beijing 100084, China}
\author{Xiongfeng Ma}
\affiliation{Center for Quantum Information, Institute for Interdisciplinary Information Sciences, Tsinghua University, Beijing 100084, China}

\begin{abstract}
Quantum key distribution --- the establishment of information-theoretically secure keys based on quantum physics --- is mainly limited by its practical performance, which is characterised by the dependence of the key rate on the channel transmittance $R(\eta)$.
Recently, schemes based on single-photon interference have been proposed to improve the key rate to $R=O(\sqrt{\eta})$ by overcoming the point-to-point secret key capacity bound with interferometers.
Unfortunately, all of these schemes require challenging global phase locking to realise a stable long-arm single-photon interferometer with a precision of approximately 100 nm over fibres that are hundreds of kilometres long.
Aiming to address this problem, we propose a mode-pairing measurement-device-independent quantum key distribution scheme in which the encoded key bits and bases are determined during data post-processing. Using conventional second-order interference, this scheme can achieve a key rate of $R=O(\sqrt{\eta})$ without global phase locking when the local phase fluctuation is mild. We expect this high-performance scheme to be ready-to-implement with off-the-shelf optical devices.

\end{abstract}

\maketitle


Quantum key distribution (QKD)~\cite{bennett1984quantum,ekert1991Quantum} is currently the most successful application of quantum information science and serves as the first stepping stone towards a future quantum communication network~\cite{Chen2021integrated}.
A core advantage of QKD compared to other quantum communication tasks is that it is ready to implement with current commercially available off-the-shelf optical devices.
However, two major characteristics of QKD --- its practical security and key-rate performance --- limit its real-life implementation. The key generation speed suffers heavily from transmission loss in the optical channel.
Fundamentally, the asymptotic key rate for point-to-point QKD schemes is upper bounded by the repeaterless rate-transmittance bound~\cite{takeoka2014fundamental,pirandola2017fundamental}, which is approximately a linear function of the transmittance, $R\le O(\eta)$. 
Quantum repeaters~\cite{EntSwap1993,Briegel1998Repeater,azuma2015all} have been proposed as a radical solution to this problem. 
Unfortunately, none of the quantum repeater proposals are easy to implement in the near term.

In real-life use, the deviation of the realistic behaviour of physical devices from their ideal ones gives rise to critical issues in practical security. There are many quantum attacks that can take advantage of the loopholes introduced by device imperfections~\cite{Xu2020Secure}. A typical QKD system can be divided into three parts: source, channel, and measurement. The security of the channel has been well addressed in the security proofs for QKD \cite{lo1999Unconditional,shor2000Simple,koashi2009simple}. The source is relatively simple and can be well characterised \cite{gottesman2004security}. In contrast, the measurement device, is complicated and difficult to calibrate. Moreover, an adversary could manipulate the measurement device by sending unexpected signals \cite{makarov2008effects,qi2007time}. To solve this implementation security problem, measurement-device-independent quantum key distribution (MDI-QKD) schemes have been proposed to close the detection loopholes once and for all~\cite{lo2012Measurement}. Various experimental systems have been successfully demonstrated~\cite{rubenok2013real,liu2013experimental,silva2013proof,woodward2021gigahertz}, with extension to a communication network~\cite{Tang2016MDInet}. 


A generic MDI-QKD setup is shown in Fig.~\ref{fig:MDIQKDcomp}(a).
Each of the two communicating parties, Alice and Bob, holds a quantum light source, encodes random bits into quantum pulses, and sends these pulses to a measurement site through lossy channels.
Measurement devices are possessed by an untrusted party, Charlie, who is supposed to correlate Alice's and Bob's signals via interference detection.
Based on the detection results announced by Charlie, Alice and Bob sift the local random bits encoded in the pulses to generate secure key bits.
Note that the security of MDI-QKD schemes does not rely upon the physical implementation of the detection devices.
Alice and Bob need to trust only their own locally encoded quantum sources.
Since neither Alice nor Bob receives quantum signals from the channel during key distribution, any hacker's attempt to manipulate the users' devices becomes extremely difficult compared to regular QKD schemes~\cite{makarov2008effects,qi2007time}.


\begin{figure}[htbp!]
\centering \includegraphics[width=8cm]{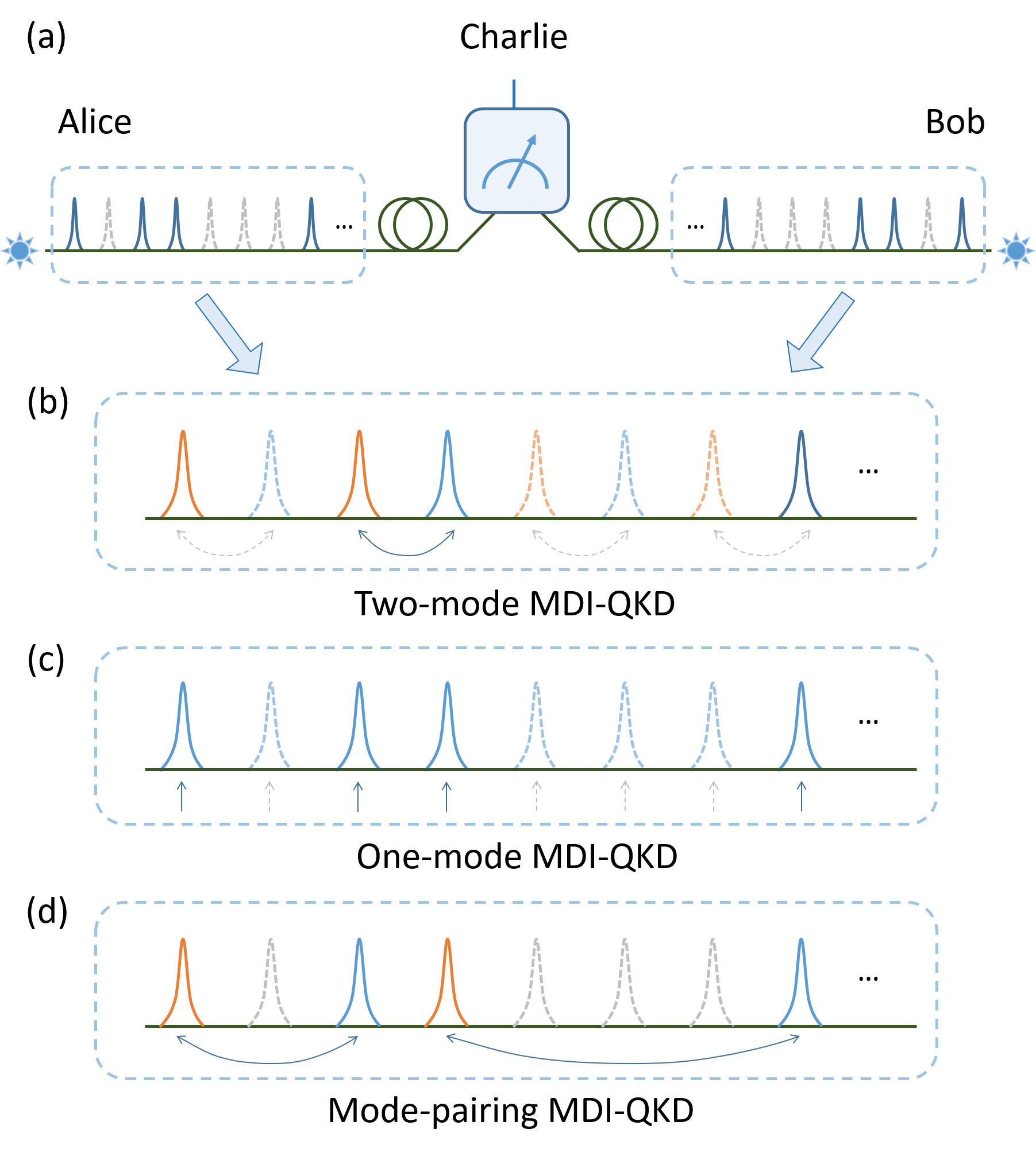}
\caption{ Comparison of two-mode, one-mode and mode-pairing MDI-QKD schemes. (a) Schematic diagram of a generic MDI-QKD scheme. The solid and dashed pulses yield successful and unsuccessful detection, respectively, at the measurement site. For (b),(c), and (d), each wave packet in the diagram represents two independent pulses emitted simultaneously by Alice and Bob. (b) In two-mode MDI-QKD schemes, the pairing of the blue pulses (as phase references) and orange pulses (as signals) is predetermined, necessitating coincidence detection. (c) In one-mode MDI-QKD schemes, there is no phase reference pulse, necessitating global phase locking. (d) In the mode-pairing MDI-QKD scheme, in accordance with the detection results, Alice and Bob pair the clicked pulses and assign them to be either reference or signal pulses, such that neither coincidence detection nor global phase locking is required.} \label{fig:MDIQKDcomp}
\end{figure}

Strictly speaking, MDI-QKD is not a point-to-point scheme, as there is an interference site between Alice and Bob.
Consequently, it is not necessarily limited by the repeaterless rate-transmittance bound.
Nevertheless, the original MDI-QKD scheme~\cite{lo2012Measurement}, in which Alice and Bob both encode a `dual-rail' qubit into a single-photon subspace on two polarization modes, unfortunately cannot overcome this bound. Later, alternative schemes were proposed~\cite{tamaki2012phase,Ma2012alternative} in which the qubit is encoded into two optical time bins. We refer to schemes of this type as two-mode MDI-QKD, in the sense that the single-side key information is encoded in the relative phase of the coherent states in the two orthogonal optical modes, i.e., second-quantized electromagnetic fields. To correlate Alice's and Bob's encoded information in a two-mode scheme, a successful two-photon interference measurement is required. If either Alice or Bob's emitted photon is lost in transmission, there will be no conclusive detection result. For example, in the time-bin encoding scheme~\cite{Ma2012alternative} shown in Fig.~\ref{fig:MDIQKDcomp}(b), Alice and Bob each emit a qubit encoded in two time-bin modes, with Alice emitting $A_1$ and $A_2$ and Bob emitting $B_1$ and $B_2$. Only when both the interference between modes $A_1$ and $B_1$ and that between $A_2$ and $B_2$ yield successful detection can Alice restore Bob's raw key information. Thus, successful interference requires a coincidence detection. Due to this coincidence-detection requirement, rounds with only a single detection are discarded, resulting in a relatively low key generation rate --- one that is a linear function of the transmittance, $O(\eta)$. From the perspective of practical implementation, however, the coincidence detection also has certain merits. This approach can ensure stable optical interference, while Alice and Bob need only to stabilise the relative phases between the two modes.

Coincidence detection is the essential factor that prevents MDI-QKD from overcoming the linear key-rate bound. To eliminate this requirement, a new type of MDI-QKD scheme called twin-field quantum key distribution (TF-QKD) based on encoding information into a single optical mode have been proposed~\cite{lucamarini2018overcoming}, illustrated in Fig.~\ref{fig:MDIQKDcomp}(c).
Later on, variants of TF-QKD have been proposed, among which the key information in encoded in either the phase~\cite{Ma2018phase,lin2018simple} (known as phase-matching QKD) or the intensity~\cite{wang2018twin} (known as sending-or-not-sending TF-QKD) of coherent states.
In this work, we refer to these twin-field-type schemes as one-mode MDI-QKD schemes for a conceptual comparison to the traditional two-mode MDI-QKD schemes, since the single-side information in these schemes is encoded into a single optical mode in each round.
Similar to the Duan-Lukin-Cirac-Zoller-type repeater design~\cite{duan2001long}, such one-mode schemes use single-photon interference instead of coincidence detection, hence yielding a quadratic improvement in key rate compared to two-mode schemes~\cite{lucamarini2018overcoming,Ma2018phase,lin2018simple}.
As a result, they can overcome the point-to-point linear key-rate bound~\cite{takeoka2014fundamental,pirandola2017fundamental}.
Unfortunately, one-mode schemes are more challenging to implement due to the unstable optical interference resulting from the lack of global phase references.
For example, in the phase-matching QKD (PM-QKD) scheme~\cite{Ma2018phase}, the key information is encoded into the \textit{global phase} of Alice's and Bob's coherent states.
The phases of the coherent states generated by two remote and independent lasers need to be matched at the measurement site.
A small phase drift or fluctuation caused by the lasers and/or channels is hazardous for key generation.

At first glance, it seems that we cannot simultaneously enjoy the advantages of one-mode schemes (i.e., quadratic improvement in successful detection) and two-mode schemes (i.e., stable optical interference), due to an intrinsic trade-off between the information-encoding efficiency and robustness. On the one hand, the relative information among different optical modes is more difficult to retrieve when the channel loss is large. On the other hand, the global phase of a coherent state is not as stable as the relative phase between two coherent states travelling through the same quantum channel.
In a typical 200-km fibre with a telecommunication frequency of 1550 nm, the phase of a coherent state is susceptible to small fluctuations in the optical transmission time ($\sim 10^{-15}$~s), optical length ($\sim 200$~nm) and light frequency ( $\sim 100$~kHz).
Recently, experimentalists have made great efforts to demonstrate high-performance in one-mode schemes, utilising high-end technologies to perform a precise control operation to stabilise the global phase by locking the frequency and phase of the coherent states~\cite{minder2019experimental,wang2019beating,fang2020implementation,zhong2020proofofprinciple, chen2020sending,pittaluga2021600,clivati2020coherent,wang2022twin}. However, this significantly increases the experimental difficulty and undermines the applicability of one-mode schemes in real life.

In this work, we propose a mode-pairing MDI-QKD scheme that can offer both --- simple implementation and high performance. Hereafter, we refer to this scheme as the mode-pairing scheme for simplicity.
By observing that the majority detection events are single-clicks and are discard in the two-mode MDI-QKD schemes, we try to recycle the discarded single-click in the mode-pairing scheme.
To do that, the coherent states in the transmitted modes are initially prepared independently with randomly encoded information.
Based on the fact that \emph{the two detection events used to read out the encoded information do not need to occur at two predetermined locations}, the key is extracted from two paired detection events rather than coincidence detection, as shown in Fig.~\ref{fig:MDIQKDcomp}(d). This offers a quadratic improvement akin to that of one-mode schemes when the local phases can be stabilized using currently available phase stabilization techniques.
Moreover, key information about the mode-pairing scheme is encoded in the relative phases or intensities, whose stability relies only upon the conditions of the local phase references and optical paths.
Therefore, the technical complexity is similar to that of two-mode schemes, which have been widely implemented both in the laboratory~\cite{rubenok2013real,liu2013experimental,silva2013proof,Tang2014Experimental} and in the field~\cite{Tang2014field,Tang2016MDInet}.
Notably, to adapt to different hardware conditions, the mode-pairing scheme can be freely tuned between the one-mode and two-mode schemes by adjusting a pulse-interval parameter (as discussed later in Section~\ref{ssec:modepairing}) during data postprocessing to optimise the system performance.



\section{Results}


\subsection{Mode-pairing scheme} \label{Sec:MPscheme}
In the mode-pairing scheme, Alice and Bob first prepare coherent states with independently and randomly chosen intensities and phases in each emitted optical mode.
These coherent states are sent to the untrusted measurement site, Charlie. Based on Charlie's announced measurement results, Alice and Bob pair the optical modes with successful detection and determine the key bits and bases for each mode pair locally. They then sift the bases and generate secure key bits via postprocessing. The scheme is introduced in Box~\ref{box:MPprotocol} and illustrated in Fig.~\ref{fig:MPprotocol}. For simplicity of the introduction of the main protocol design, we omit the details of the decoy-state method~\cite{Lo2005Decoy} and discrete phase randomisation here. A complete description of the mode-pairing scheme is given in the Methods section, subsection~\ref{method:decoyscheme}.

\begin{figure*}[htbp]
\centering \includegraphics[width=16cm]{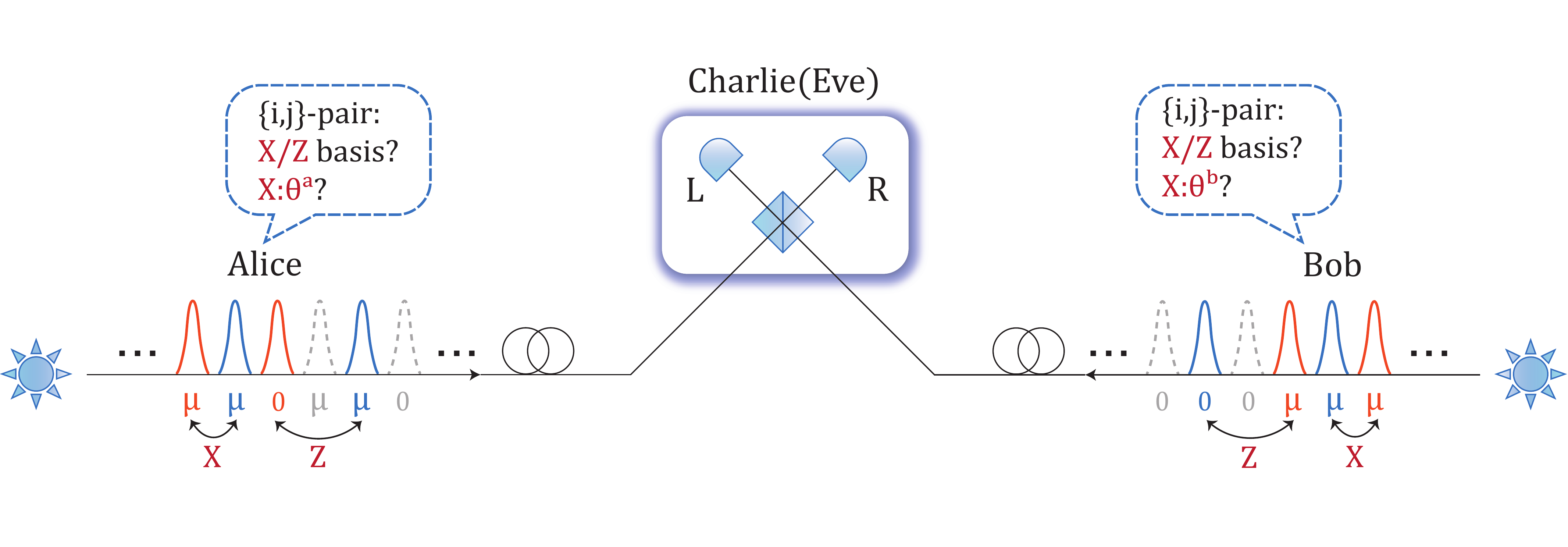}
\caption{Schematic diagram of the mode-pairing MDI-QKD scheme. The two communicating parties, Alice and Bob, first prepare coherent pulses with random intensities chosen from $\{0,\mu\}$ and random phases $\phi^{a(b)}_i\in[0,2\pi)$ and send them to an untrusted measurement site, Charlie. After interference measurement, Charlie announces the detection results. Based on the detection results, Alice and Bob pair the pulses and determine their encoding bases. For $X$-pairs, they announce the alignment angles $\theta^a$ and $\theta^b$ and maintain data for which $\theta^a = \theta^b$. They use $Z$-pairs to generate keys and other data for parameter estimation.} \label{fig:MPprotocol}
\end{figure*}

\begin{Boxes}{Mode-pairing scheme}{MPprotocol}
1.\textbf{State preparation}: In the $i$-th round ($i=1,2,...,N$), Alice prepares a coherent state $\ket{\sqrt{\mu^a_i}\e^{\i\phi^a_i}}$ in optical mode $A_i$ with an intensity $\mu^a_i$ randomly chosen from $\{0,\mu\}$ and a phase $\phi^a_i$ uniformly chosen from $[0, 2\pi)$. Similarly, Bob randomly chooses $\mu^b_i$ and $\phi^b_i$ and prepares $\ket{\sqrt{\mu^b_i}\e^{\i\phi^b_i}}$ in mode $B_i$.

2.\textbf{Measurement}: Alice and Bob send modes $A_i$ and $B_i$ to Charlie, who performs single-photon interference measurements. Charlie announces the click patterns for both detectors $L$ and $R$.


Alice and Bob repeat the above two steps for $N$ rounds. Then, they postprocess the data as follows.
\tcblower

3.\textbf{Mode pairing}: For all rounds with successful detection, in which one and only one of the two detectors clicks, Alice and Bob apply a strategy of grouping two clicked rounds as a pair. The encoded phases and intensities in these two rounds form a data pair. A simple pairing strategy is introduced in Section~\ref{ssec:modepairing}.

4.\textbf{Basis sifting}: Based on the intensities of the two grouped rounds indexed by $i$ and $j$, Alice labels the `basis' of the data pair as $Z$ if the intensities are $(0,\mu)$ or $(\mu,0)$, as $X$ if the intensities are $(\mu,\mu)$, or as `0' if the intensities are $(0,0)$. Bob sets the basis using the same method. Alice and Bob announce the basis of each data pair; if they both announce the basis $X$ or $Z$, they maintain the data pairs, whereas otherwise, the data pairs are discarded.

5.\textbf{Key mapping}:
For each $Z$-basis pair ($Z$-pair for simplicity) at locations $i$ and $j$, Alice sets her key as $\kappa^a = 0$ if $(\mu^a_i,\mu^a_j)= (0,\mu)$ and $\kappa^a = 1$ if $(\mu^a_i,\mu^a_j) = (\mu,0)$. For each $X$-basis pair ($X$-pair for simplicity) at locations $i$ and $j$, the key is extracted from the relative phase $(\phi^a_j - \phi^a_i) = \theta^a + \pi\kappa^a$, where the raw key bit is $\kappa^a = \left\lfloor((\phi^a_j - \phi^a_i)/\pi \text{ mod } 2)\right\rfloor$ and the alignment angle is $\theta^a := (\phi^a_j - \phi^a_i) \text{ mod } \pi$. In a similar way, Bob assigns his raw key bit $\kappa^b$ and determines $\theta^b$. The difference in the key mapping for $Z$-pairs is that, Bob sets the raw key bit $\kappa^b$ as $0$ if $(\mu^b_i,\mu^b_j)=(\mu,0)$ and $\kappa^b = 1$ if $(\mu^b_i,\mu^b_j) = (0,\mu)$.
As an extra step on the $X$-pairs, if Charlie's detection announcement is $(L,L)$ or $(R,R)$, Bob keeps the bit $\kappa^b$; otherwise, if Charlie's announcement is $(L,R)$ or $(R,L)$, Bob flips $\kappa^b$.
For the $X$-pairs, Alice and Bob announce the alignment angles $\theta^a$ and $\theta^b$. If $\theta^a = \theta^b$, then the data pairs are kept; otherwise, the data pairs are discarded.

6.\textbf{Parameter estimation}: Alice and Bob estimate the fraction of clicked signals $q_{(1,1)}$ and the corresponding phase error rate $e^X_{(1,1)}$ of $Z$-pairs where Alice and Bob both emit a single photon at locations $i$ and $j$, using the data of the $Z$-pairs and $X$-pairs. They also estimate the quantum bit error rate $E^{(\mu,\mu),Z}$ of the $Z$-pairs.

7.\textbf{Key distillation}: Alice and Bob use the $Z$-pairs to generate a key. They perform error correction and privacy amplification on the basis of $q_{(1,1)}$, $E^{(\mu,\mu),Z}$ and $e^{X}_{(1,1)}$.
\end{Boxes}

In the mode-pairing scheme, we mainly consider the keys generated from the $Z$-pair data, since they have a much lower quantum bit error rate $E^Z_{\mu\mu}$ than the $X$-pair data.
The encoding of the mode-pairing scheme in Box~\ref{box:MPprotocol} originates from the time-bin encoding MDI-QKD scheme~\cite{Ma2012alternative}. If Alice's two paired optical modes $\{A_i, A_j\}$ are assigned to the $Z$-basis, then the state of the two optical modes is either $\ket{0}_{A_i}\ket{\sqrt{\mu}\e^{\i\phi^a_j}}_{A_j}$ or $\ket{\sqrt{\mu}\e^{\i\phi^a_i}}_{A_i}\ket{0}_{A_j}$, where $\phi^a_i$ and $\phi^a_j$ are two independent random phases. We can write the encoded states in a unified form:
\begin{equation} \label{eq:Zbasisstate}
\ket{\psi^a_Z}_{A_i,A_j} = \ket{\sqrt{\kappa^a\mu}\e^{\i\phi^a_i}}_{A_i}\ket{\sqrt{\bar{\kappa}^a\mu}\e^{\i\phi^a_j}}_{A_j},
\end{equation}
where $\kappa^a$ is the encoded key information and $\bar{\kappa} := \kappa\oplus 1$ is the inverse of $\kappa$. In the other case, in which the two optical modes $\{A_i, A_j\}$ are assigned to the $X$-basis, we can rewrite their two independent random phases $\phi^a_i$ and $\phi^a_j$ as
\begin{equation}
\begin{aligned}
\phi^a &:= \phi^a_i \in [0,2\pi), \\
\phi^a_{\delta} &:= \phi^a_j - \phi^a_i \in [0,2\pi).
\end{aligned}
\end{equation}
In this way, the phase $\phi^a$ becomes a global random phase on the pulse pair, while $\phi^a_\delta$ is the relative phase for quantum information `encoding'. Due to the independence of $\phi^a_i$ and $\phi^a_j$, the phases $\phi^a$ and $\phi^a_\delta$ are also independent of each other and uniformly range from $[0,2\pi)$. By definition, we have $\phi^a_\delta = \theta^a + \pi\kappa^a$. Then, the $X$-pair state can be written as,
\begin{equation} \label{eq:Xbasisstate}
\ket{\psi^a_X}_{A_i,A_j} = \ket{\sqrt{\mu^a}\e^{\i\phi^a}}_{A_i}\ket{\sqrt{\mu^a} \e^{\i(\phi^a + \theta^a + \kappa^a\pi)}}_{A_j},
\end{equation}
where $\mu^a\in\{0,\mu\}$. When $\theta=0$ or $\pi/2$, Alice emits $X$-basis or $Y$-basis states, respectively, as used in the time-bin encoding MDI-QKD scheme~\cite{Ma2012alternative}.

We remark that in either the $Z$-pair state in Eq.~\eqref{eq:Zbasisstate} or the $X$-pair state in Eq.~\eqref{eq:Xbasisstate}, there is a global random phase $\phi^a$, which will not be revealed publicly. With this (global coherent state) phase randomisation, the emitted $Z$- and $X$-pair states can be regarded as a mixture of photon number states~\cite{Lo2005Decoy}. Then, Alice and Bob can estimate the detections caused by the pairs where they both emit single photons and use them to generate secure keys, in a manner similar to traditional two-mode schemes. Therefore, the security of the mode-pairing scheme is similar to that of two-mode schemes.
Nevertheless, the mode-pairing scheme in Box~\ref{box:MPprotocol} has the following unique features.
\begin{enumerate}
\item
The emitted states in different optical modes $\{A_i\}$ are independent and identically distributed (i.i.d.). Therefore, the information encoded in different optical modes is completely decoupled.
\item
Based on the postselection of clicked signals, different optical modes are paired \emph{afterwards}. The relative information between the two modes is then converted into raw key data.
\end{enumerate}

In the mode-pairing scheme, the key information is determined not in the state preparation step, but by the detection location, sharing some similarities with the differential-phase-shifting QKD scheme~\cite{inoue2003differential,sasaki2014practical}. It is the untrusted measurement site that determines the location of successful detection and thereby affects the pairing setting. The `dual-rail' qubits encoded on the single photons are `postselected' on the basis of this detection. By virtual of the independence of the optical modes, the information encoded in the `postselected' qubits cannot be revealed from other optical pulses.

For another comparison, the sending-or-not-sending (SNS) TF-QKD scheme~\cite{wang2018twin} also uses a $Z$-basis time-bin encoding, whereby either Alice or Bob emits an optical mode to generate key bits. The state preparation of the mode-pairing scheme shares similarities with the SNS TF-QKD scheme. However, the information of the mode-pairing scheme is encoded into the relative information between the two optical modes. As a result, the basis-sifting and key mapping of the mode-pairing scheme follow different logic originated from the time-bin encoding MDI-QKD scheme~\cite{Ma2012alternative}. Note that in the SNS scheme, bits 0 and 1 are highly biased in the $Z$ basis, whereas in the mode-pairing scheme, they are evenly distributed.

A critical issue in the security analysis of the mode-pairing scheme is to \emph{maintain the flexibility to determine in which two optical modes to perform the overall photon number measurement until Charlie announces the detection results}. Note that, in the original two-mode QKD schemes, the encoders can always be assumed to perform an overall photon number measurement and post-select the single-photon components as good `dual-rail' qubits \emph{before they emit their signals to Charlie}. In the mode-pairing scheme, however, this is not viable because the optical pulse pair, for which the single-photon component is defined, is postselected based on Charlie's detection announcement. To solve this problem, we introduce source replacement for the random phases in the coherent states to purify them as ancillary qudits and define an indirect overall photon number measurement on them. The source-replacement procedure can be found in the Methods section, subsection~\ref{method:sourceReplace}. Conditioned on the indirect overall photon number measurement result to be single-photon states, the $X$-basis error rate fairly estimates the $Z$-basis phase error rate for the signals for which Alice and Bob both emit single photons.

In Appendix~\ref{Sec:Security}, we provide a detailed security proof based on entanglement distillation.
The main idea is to introduce a `fixed-pairing' scheme, in which the pairing setting, i.e., which locations are paired together, is predetermined and hence independent of Charlie's announcement. We first prove that, with any given pairing setting, the fixed-pairing scheme is secure, as it can be reduced to a two-mode MDI-QKD scheme. Afterwards, we examine the private state generated by the mode-pairing scheme and prove that it is the same as that of a fixed-pairing scheme under all possible measurements that Charlie could perform and announcement methods. In this way, we prove the equivalence of the mode-pairing scheme to a group of fixed-pairing schemes with different pairing settings.


\subsection{Pairing strategy} \label{ssec:modepairing}
The pairing strategy mentioned in Step~3 lies at the core of the mode-pairing scheme in Box~\ref{box:MPprotocol}, which correlates two independent signals and determines their bases and key bits. Note that the relative phase between two paired quantum signals determines the key information in the $X$ basis. When the time interval between these two pulses becomes too large, the key information suffers from phase fluctuation, which is charactesized by the laser coherence time. Therefore, Alice and Bob should establish a maximal pairing interval $l$, such that the number of pulses between the two paired signals should not exceed $l$. In practice, $l$ can be estimated by multiplying the laser coherence time by the system repetition rate.

Here, we consider a simple pairing strategy in which Alice pairs adjacent detection pulses together if the time interval between them is not too large ($\le l$). The details are shown in Algorithm \ref{alg:pairing} and illustrated in Fig.~\ref{fig:directpair}. Charlie's announcement in the $i$-th round is denoted by a Boolean variable $C_i$ that indicates whether the detection is successful. That is, $C_i=1$ implies that either the detector $L$ or $R$ clicks. Otherwise, there is no click or double clicks.

\begin{figure}[htbp]
\centering
\includegraphics[width=8cm]{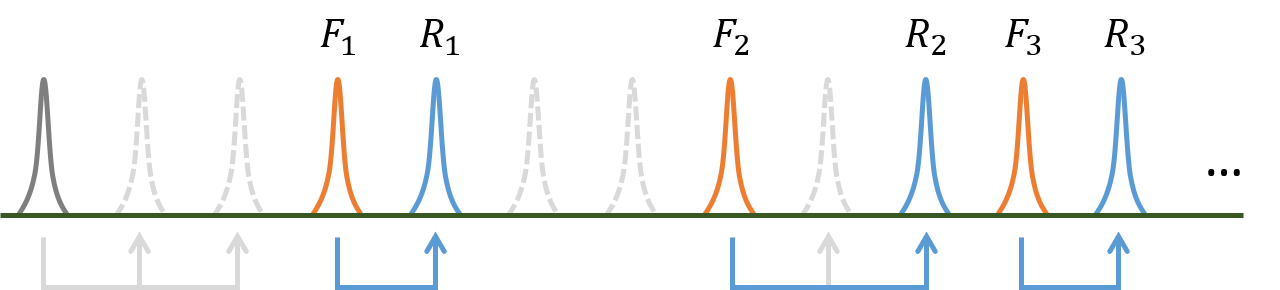}
\caption{Schematic diagram of the simple pairing strategy. Here, we set $l=2$. The solid and dashed pulses are those with and without successful detection, respectively. Orange and blue pulses are, respectively, the front pulses and rear pulses that succeed in pairing within $l=2$ pulses, while grey pulses are the ones fail in pairing. The labels $F_k$ and $R_k$ represent the front and rear pulses, respectively, in the $k$-th successful pair.} \label{fig:directpair}
\end{figure}

\begin{algorithm}[H]
\caption{Simple pairing strategy}
\label{alg:pairing}
\begin{algorithmic}[1]
\Require
Charlie's announced detection results $C_i$ for $i=1$ to $N$; maximal pairing interval $l$.
\Ensure
$K$ pairs; front- and rear-pulse locations $(F_k,R_k)$ for the $k$-th pair, where $k=1$ to $K$.
\State Initialise the pairing index $k:= 1$; initialise the flag $f:=0$.
\For{$i= 1~\text{\textbf{to}}~N$} \Comment{Enumerating all locations}
	\If{$f=0$}  \Comment{Searching for the front-pulse location}
		\If{$C_i = 1$} \Comment{Successful detection}
			\State Set the temporary front-pulse location to $F_k:=i$; set the flag to $f:=1$.
		\EndIf
    \Else \Comment{Searching the rear-pulse location}
        \If{$C_i = 1$} \Comment{Successful detection}
        \State Set the rear-pulse location to $R_k:=i$; update the pairing index to $k:=k+1$; reset the flag to $f:=0$.
        \ElsIf{$F_k - i\geq l$} \Comment{Pairing interval exceeding $l$}
        \State Reset the flag to $f:=0$.
		\EndIf
	\EndIf
\EndFor
\State Set the total number of pairs to $K:=k-1$.
\end{algorithmic}
\end{algorithm}

To check the efficiency of this pairing strategy, let us calculate the pairing rate $r_p$ (i.e. the average number of pairs generated per pulse). We assume that Alice and Bob choose intensities $0$ and $\mu$ with equal probability, maximising the number of successful pairs in the $Z$ basis. With a typical QKD channel model, the pairing rate $r_p$ is calculated as shown in the Methods Section \ref{method:pairrate},
\begin{equation} \label{eq:rp}
r_p(p, l) = \left[ \frac{1}{p [1 - (1-p)^l]} + \frac{1}{p} \right]^{-1},
\end{equation}
where $p$ is the probability that the emitted pulses result in a click event, given approximately by $\eta_s\mu$. Here, $\eta_s$ and $\eta$ denote the channel transmittance from Alice to Charlie and the total transmittance from Alice to Bob, respectively. When the channel is symmetric for Alice and Bob, we have $\eta = \eta_s^2$.
An explicit simulation formula for $p$ in a pure-loss channel is given in Appendix~\ref{Sec:Simu}.
Note that both the pairing ratio $r_p$ and the detection probability $p$ can be directly obtained by experimentation.

The raw key rate mainly depends on the pairing rate $r_p$. Now, let us check the scaling of $r_p$ with the channel transmittance in the symmetric-channel case. If the local phase reference is sufficiently stable, then the maximal interval can be set to $l\to +\infty$. In this case,
\begin{equation}
r_p=\frac{p}{2}\approx \frac{\eta_s\mu}{2} = O(\sqrt{\eta}),
\end{equation}
where the optimal intensity is $\mu=O(1)$, as evaluated in Appendix~\ref{Sec:optimalmu}.
On the other hand, if the local phase reference is not at all stable, one must set $l=1$; then,
\begin{equation}
r_p = \frac{p^2}{1+p} \approx \frac{\eta_s^2 \mu^2}{1+ \eta_s\mu} = O(\eta).
\end{equation}
In this case, the experimental requirements for the mode-pairing scheme are close to those of the existing time-bin MDI-QKD scheme~\cite{Ma2012alternative}.

In practice, $l$ can be adjusted in accordance with the laser quality and quantum-channel fluctuations. Note that $l$ can also be adjusted during data postprocessing, offering flexibility for various environmental changes in real time. Generally, the whole pairing strategy can be adjusted through different realisations.

\subsection{Practical issues and simulation} \label{Ssec:simulation}
The key rate of the mode-pairing scheme, as rigorously analysed in Appendix~\ref{Sec:Security}
, has a decoy-state MDI-QKD form:
\begin{equation}
R = r_p  r_s \left\{q_{(1,1)}\left[ 1 - H(e^X_{(1,1)}) \right] - f H(E^{(\mu,\mu),Z}) \right\},
\end{equation}
where $r_p$ is the pairing rate contributed by each block, $r_s$ is the proportion of $Z$-pairs among all the generated location pairs (approximately $1/8$), $q_{(1,1)}$ is the fraction of $Z$-pairs caused by single-photon-pair states $\rho^{(1,1)}$ in which both Alice and Bob send single-photon states in the two paired modes, $e^X_{(1,1)}$ is the phase error rate of the detection caused by $\rho^{(1,1)}$, $f$ is the error-correction efficiency, and $E^{(\mu,\mu),Z}$ is the bit error rate of the sifted raw data. The fraction $q_{(1,1)}$ and the phase error $e^X_{(1,1)}$ can be estimated using the decoy-state method~\cite{hwang2003decoy,Lo2005Decoy,wang2005decoy}.
A detailed estimation procedure for $q_{(1,1)}$ and $e^X_{(1,1)}$ with the vacuum + weak decoy-state method is introduced in Appendix~\ref{Sec:decoyestimate}.

During the key mapping step in Box~\ref{box:MPprotocol}, the $X$-pair sifting condition $\theta^a = \theta^b$ is impossible to fulfil exactly. This results in insufficient data for $X$-basis error rate estimation. To solve this problem, one can apply discrete phase randomisation~\cite{Cao2015discrete} such that $\theta^a$ and $\theta^b$ are chosen from a discrete set. We expect the discretisation effect to be negligible when the number of discrete phases is reasonably large, such as $D=16$, similar to the situation in previous works on one-mode MDI-QKD~\cite{Zeng2019Symmetryprotected}.

Based on the above analysis, we simulate the asymptotic performance of the mode-pairing scheme under a typical symmetric quantum-channel model, using practical experimental parameter settings. We assign the maximal pairing interval $l$ of the mode-pairing scheme as a value between $1$ and $1\times 10^6$, aiming to illustrate the dependence of the key rate on $l$. We also compare the key rate of the mode-pairing scheme with those of a typical two-mode scheme, time-bin encoding MDI-QKD~\cite{Ma2012alternative}, and two one-mode schemes --- PM-QKD~\cite{Zeng2019Symmetryprotected} and SNS TF-QKD~\cite{jiang2019unconditional}. The simulation results are shown in Fig.~\ref{fig:keysimu}. Here, we compare the asymptotic key rate performance of all the schemes under the scenario of one-way local-operation and classical communication.
The simulation formulas for these schemes are listed in Appendix~\ref{Sec:Simu}.

\begin{figure}[htbp]
\centering \includegraphics[width=18cm]{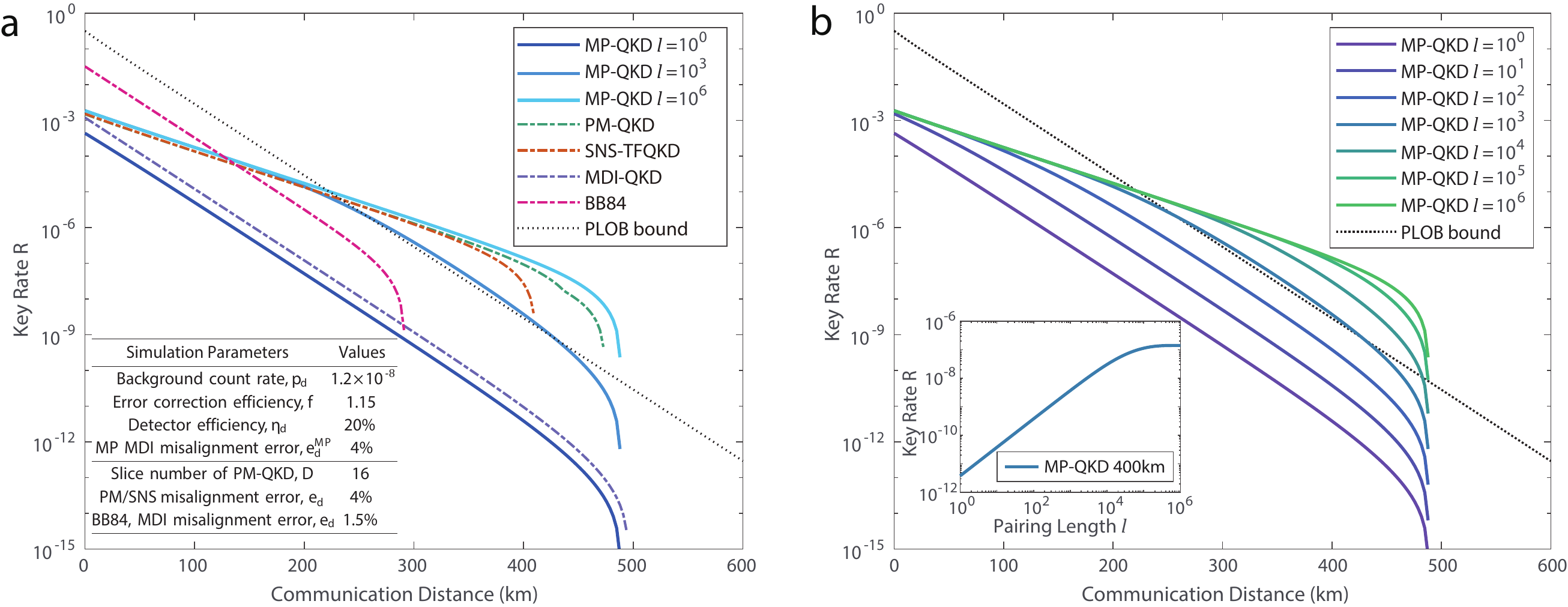}
\caption{Asymptotic key-rate performance of the mode-pairing scheme, with the horizontal axis representing the total communication distance with a fibre loss of $0.2$ dB$/$km and the vertical axis representing the key generation rate. \textbf{a}. Main Panel: Performance comparison of the mode-pairing scheme (denoted by MP-QKD in the plot) with the decoy-state Bennett-Brassard 1984 (BB84)~\cite{bennett1984quantum,gottesman2004security,Lo2005Decoy}, MDI-QKD~\cite{lo2012Measurement}, PM-QKD~\cite{Ma2018phase,Zeng2019Symmetryprotected}, SNS-TFQKD~\cite{wang2018twin,jiang2019unconditional} schemes and the repeaterless rate-transmittance bound (PLOB bound)~\cite{pirandola2017fundamental}. Inset: The simulation parameters used in the key-rate plot, which are mainly from Ref.~\cite{fang2020implementation}. \textbf{b}. Main Panel: The rate-distance dependence of the mode-pairing scheme with different maximal-pairing intervals $l$. Inset: The key rate with respect to the pairing interval $l$ for a communication distance of $400$ km.} \label{fig:keysimu}
\end{figure}

As shown in Fig.~\ref{fig:keysimu}a, the mode-pairing scheme with only neighbour pairing, $l=1$, show a performance comparable to that of the original two-mode scheme. These two schemes have the same scaling property, i.e., $R= O(\eta)$. The deviation is caused by an extra sifting factor in the mode-pairing scheme as a result of independent encoding. When the maximal pairing interval $l$ is increased to $1\times 10^3$, the key rate is significantly enhanced by $3$ orders of magnitude compared to the $l=1$ case, making it able to surpass the linear key-rate bound. If we further increase $l$ above $1\times 10^5$, then the mode-pairing scheme has a similar key rate to PM-QKD and SNS-TFQKD and a scaling property given by $R= O(\sqrt{\eta})$.
In Fig.~\ref{fig:keysimu}b, we further compare the key-rate performance of the mode-pairing scheme under different settings for $l$. When $l$ falls within the range of $1$ to $1\times 10^6$, the key rate of the mode-pairing scheme lies between the two extreme cases of $O(\eta)$ and $O(\sqrt{\eta})$. The key-rate behaviour is dominated by the pairing rate given in Eq.~\eqref{eq:rp}.

In typical optical experiments, the typical line-width of a common commercial laser is $3$ kHz (see for example, Ref.~\cite{fang2020implementation}). Hence, the coherence time of the laser is around $333$ $\mu$s.
In practice, the frequency fluctuation of the lasers will affect the stabilization of the phase. To test the feasibility of the mode-pairing scheme, we perform an interference experiment using a commercial optical communication system with a repetition rate of $625$ MHz. 
The experiment detail is shown in Appendix~\ref{sc:ExpDemo}.
Based on the experimental data, we find that the phase coherence can be maintained well in a time interval of $5$ $\mu$s, correspond to $l=3000\sim 4000$. If we apply the state-of-the-art optical communication system with the repetition rate of $4$ GHz \cite{wang2022twin}, we can realize a pairing interval over $l=20000$.
As an extra remark, our current discussion on the implementation of the mode-pairing scheme is based on the multiplexing of optical time-bin modes. Nonetheless, the proposed mode-pairing design is generic for the multiplexing of other optical degrees of freedom. For example, we can use optical modes with different frequencies, encoding information and interfering them independently, and pair them during the post-processing. This can be used to increase the maximal pairing interval to an even larger value without the global phase locking.
From Fig.~\ref{fig:keysimu}b we can see that the key rate of the mode-pairing scheme with $l=1\times 10^4$ remains $R\sim O(\sqrt{\eta})$ when $\eta$ is smaller than 30~dB, corresponding to a communication distance of 300 km.
The asymptotic key rate of the mode-pairing scheme is $3$ to $5$ orders of magnitude higher than that of the two-mode scheme. We remark that the decoherence effect caused by the optical-fibre channel is negligible compared to the laser coherence time.
When the fibre length is around $500$ km, the velocity of phase drift in the fibre is less than $10$ rad/ms~\cite{fang2020implementation}, which can be calibrated using strong laser pulses without the need for real-time feedback control. As a result, the value of $l$ depends only upon the local phase reference and not the communication distance.

One advantage of the mode-pairing scheme is that it can be adapted to specific hardware conditions. In practice, optical systems may be unstable, causing the local phase reference to fluctuate rapidly. In this case, we can reduce the maximal pairing interval $l$ and search for the optimal pairing strategy during the postprocessing procedure. As shown in the inset plot of Fig.~\ref{fig:keysimu}b, the key rate of the mode-pairing scheme first increases linearly with increasing $l$ before saturating when $l$ is larger than $p^{-1} = (\mu \sqrt{\eta})^{-1}$. In this case, Alice and Bob find successful detection within $l$ locations with a high probability.
Even when the optical system is unstable, the key rate can be nearly $l$ times higher than that of the original time-bin MDI-QKD scheme when the value of $l$ does not exceed $p^{-1} = (\mu \sqrt{\eta})^{-1}$. We remark that, with the original experimental apparatus used in time-bin MDI-QKD, one can directly enhance the key rate by a factor of approximately $100$ using the mode-pairing scheme. On the other hand, we note that for a given communication distance, $l$ does not need to be very large to reach the maximal key-rate performance. For example, when the distance reaches $200$ km, a maximal pairing interval of $l=1000$ is sufficient to achieve the optimal key-rate performance. We leave a detailed evaluation for future research.


\section{Discussion}
Based on a re-examination of the conventional two-mode MDI-QKD schemes and the recently proposed one-mode MDI-QKD schemes, we have developed a mode-pairing MDI-QKD scheme that retains the advantages of both, namely, achieving a high key rate with easy implementation. Since MDI-QKD schemes have the highest practical security level among the currently feasible QKD schemes, we expect the mode-pairing scheme paves the way for an optimal design for QKD, simultaneously enjoying high practicality, implementation security, and performance.

There remain several interesting directions for future work. Natural follow-up questions lie in the statistical analysis of the mode-pairing scheme in the finite-data-size regime and efficient parameter estimation. Due to the photon-number-based property of the mode-pairing scheme, previous studies of the statistical analysis of two-mode MDI-QKD schemes~\cite{ma2012statistical,curty2014finite,xi2014protocol} can be readily extended to analyse the mode-pairing scheme. To improve the efficiency of data usage, Alice and Bob may perform parameter estimation before basis sifting in order to use all signals that were originally discarded. On the other hand, one could design a mode-pairing scheme using the $X$-basis for key generation and the $Z$-basis for parameter estimation.

In this work, we employ a simple mode-pairing strategy based on pairing adjacent detection pulses. A more sophisticated pairing method might make bit and basis sifting more efficient. To improve the pairing strategy, Alice and Bob could reveal parts of the encoded intensity and phase information. For example, in the simple pairing strategy introduced as Algorithm~\ref{alg:pairing}, Alice and Bob reveal the bases of the generated data pairs immediately after locations $i$ and $j$ are paired. If their basis choices differ, Alice and Bob `unpair' locations $i$ and $j$, and seek the next good pairing location for location $i$ until the basis choices match.

To further enhance the performance, we could extend the mode-pairing design to other optical degrees of freedom, such as angular momentum and spectrum mode. Meanwhile, we could multiplex the usage of different degrees of freedom to enhance the repetition rate and extend the pairing interval $l$. Such multiplexing techniques would have additional benefits for the mode-pairing scheme. Suppose that we multiplex $m$ quantum channels for a QKD task. In a normal setting, the key generation speed would be improved by a factor of $m$. For the mode-pairing scheme, in addition to this $m$-fold improvement, multiplexing would also introduce a larger pairing interval $ml$, since Alice and Bob would be able to pair quantum signals from different channels. A larger pairing interval $ml$ would result in more paired signals and, hence, more key bits. Especially in the high-channel-loss regime where the distance between two clicked signals is large, the number of successful pairs becomes proportional to the maximum pairing interval $ml$. Thus, the key generation rate is proportional to $m^2$ in the high-channel-loss regime.

Meanwhile, entanglement-based MDI-QKD schemes are essentially based on entanglement-swapping, which is the core design feature of quantum repeaters. The mode-pairing technique may help design a robust quantum repeater against a lossy channel. Note that our work shares similarities with the memory-assisted MDI-QKD protocol~\cite{panayi2014memory} with quantum memories in the middle and with the all-photonic intercity MDI-QKD protocol~\cite{azuma2015intercity} with adaptive Bell-state measurement on the postselected photons. It is interesting to discuss the possibility of combining the mode-pairing design with an adaptive Bell-state measurement to tolerate more losses.

Moreover, the mode-pairing scheme has a unique feature in that the key bits are determined not in the encoding or measurement steps but upon postprocessing, which is an approach can be further explored in other quantum communication tasks, including continuous-variable schemes.

\section{Methods}

\subsection{Source replacement of the encoding state} \label{method:sourceReplace}

The main idea of the security proof for the mode-pairing scheme is to introduce an entanglement-based scheme and reduce the security of the scheme to that of a traditional two-mode MDI-QKD scheme. To realise this, we perform a systematic source-replacement procedure~\cite{Scarani2009security,Ferenczi2012symmetries}. Without loss of generality, in this subsection, we always assume the paired locations $(i,j)$ to be $(1,2)$ to simplify the notations.

For convenience in the security proof, we slightly modify the scheme described in Box~\ref{box:MPprotocol}. First, we assume that the random phase of each mode is discretely chosen from a set of $D$ phases, evenly distributed in $[0,2\pi)$. We expect the corresponding correction term in the security analysis due to the discretisation effect to be negligible~\cite{Cao2015discrete,Zeng2019Symmetryprotected}. Second, in the security proof, we modify the phase encoding and postprocessing procedures, as shown in Table~\ref{tab:phaseSifting}. In the original scheme, Alice modulates $A_1$ and $A_2$ based on two random phases $\phi^a_1$ and $\phi^a_2$, respectively. During the $X$-basis processing, she calculates the relative phase difference $\phi^a_\delta:= \phi^a_2 - \phi^a_1$ and splits it into an alignment angle $\theta^a$ in the range of $[0,\pi)$ and a raw key bit $\kappa^a$. We modify these procedures as follows: in addition to the two random phases $\phi^a_1$ and $\phi^a_2$, Alice also generates two bits $z_1''$ and $z_2''$ and applies extra phase modulations of $z_1''\pi$ and $z_2''\pi$ to $A_1$ and $A_2$, respectively. During the $X$-basis processing, she calculates the relative phase difference $\phi^a_\delta:= \phi^a_2 - \phi^a_1$ and directly announces it for alignment-angle sifting.
In the Appendix~\ref{ssc:PrepareandMeasure}, we prove the equivalence of these two encoding methods.

\begin{table}[htbp]
\centering
\begin{tabular}{c|c|c|c}
\hline
 & Modulated phase & $X$-basis postprocessing & Sifting condition \\
 \hline
Original scheme & \tabincell{c}{$A_1: \phi^a_1$, \\ $A_2: \phi^a_2$}  & \tabincell{c}{$\theta^a = (\phi^a_2 - \phi^a_1) \text{ mod } \pi,$ \\ $\kappa^a = \lfloor (\phi^a_2 - \phi^a_1)/\pi \text{ mod } 2 \rfloor$} & $ \theta^a = \theta^b $ \\
 \hline
Modified scheme & \tabincell{c}{$A_1: \phi^a_1 + z_1'' \pi$, \\ $A_2: \phi^a_2 + z_2'' \pi$} & \tabincell{c}{$\theta^a = \phi^a_2 - \phi^a_1,$ \\ $\kappa^a =  z_1''\oplus z_2''$} & $ \theta^a - \theta^b = 0 \text{ or } \pi $ \\
\hline
\end{tabular}
\caption{Comparison of the phase encoding and postprocessing procedures of the mode-pairing scheme presented in the main text and the modified scheme considered in the security proof. In the modified scheme, Alice introduces an extra $\pi$-phase modulation for the storage of a bit $z_1''$. This helps to decouple the phase randomisation and phase encoding analysis.} \label{tab:phaseSifting}
\end{table}

With the modification above, Alice further generates a random bit $z_1'$ and a random dit ($d=D$) $j_1$ in the first round. Based on the values of $z_1'$, $z_1''$ and $j^a_1$, she prepares the state
\begin{equation}
\ket{\psi^{Com}} = \ket{\sqrt{z_1'\mu}e^{\i (\pi z_1'' + \phi_1^a)}},
\end{equation}
with $\phi_1 = j_1 \frac{2\pi}{D}$. As shown in Fig.~\ref{fig:sourceRep}, we substitute the encoding of random encoded information into the introduction of extra ancillary qubit and qudit systems labelled as $\tilde{A}_1$, $A_1''$ and $A_1'$. The purified encoding state is
\begin{equation} \label{eq:PsiComPurified}
\begin{aligned}
\ket{\tilde{\Psi}^{Com}}_{\tilde{A}_1, A_1', A_1'',A_1} = \frac{1}{2\sqrt{D}} \sum_{j_1=0}^{D-1} \ket{j_1}_{\tilde{A}_1} \left( \ket{00}\ket{0} + \ket{01}\ket{0} + \ket{10}\ket{\sqrt{\mu}e^{i\phi^a_1}} + \ket{11}\ket{\sqrt{\mu}e^{i(\phi^a_1+\pi)}} \right)_{A_1',A_1'';A_1}.
\end{aligned}
\end{equation}
In Fig.~\ref{fig:sourceRep}, we provide a specific state preparation procedure. The initial state is
\begin{equation}
\begin{aligned}
\ket{+_D} &:= \frac{1}{\sqrt{D}} \sum_{j=0}^{D-1} \ket{j}, \\
\ket{+} &:= \ket{+_2}.
\end{aligned}
\end{equation}
Here, Alice applies a controlled-phase gate $C_D\text{-}\hat{U}(\phi_\Delta)$ with $\phi_\Delta:=\frac{2\pi}{D}$ from the qudit $\tilde{A}_1$ to optical mode $A_1$. The controlled-phase gate is defined as
\begin{equation}
C_D\text{-}\hat{U}(\phi)_{\tilde{A}A} := \sum_{j=0}^{D-1} \ket{j}_{\tilde{A}}\bra{j}\otimes e^{\i \phi j a^\dag a},
\end{equation}
where $a^\dag$ and $a$ are the creation and annihilation operators, respectively, of mode $A_1$. Alice also applies a controlled-phase gate $C\text{-}\hat{U}(\pi)$ from $A_1''$ to $A_1$.

\begin{figure}[htbp!]
\centering \includegraphics[width=12cm]{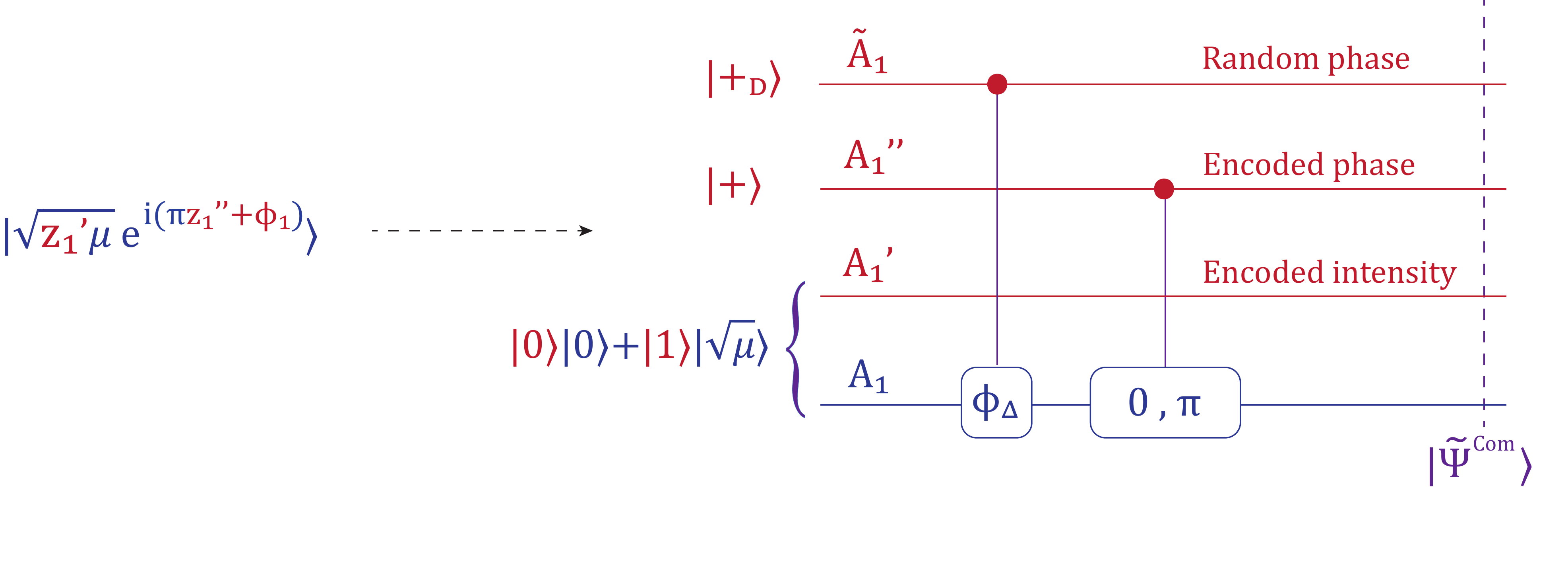}
\caption{Source-replacement procedure for the mode-pairing scheme. We substitute the encoding of all random encoded information into the introduction of purified ancillary systems.} \label{fig:sourceRep}
\end{figure}

In the entanglement-based mode-pairing scheme, Alice and Bob generate the composite encoding state $\ket{\tilde{\Psi}^{Com}}$ defined in Eq.~\eqref{eq:PsiComPurified} in each round. They emit the optical modes to Charlie for interference. Based on Charlie's announcement, they pair the locations and perform global operations on the corresponding ancillaries to generate raw key bits and useful parameters. In Fig.~\ref{fig:systemUsage}, we list the global operations performed on Alice's paired locations. Among them, the relative encoded intensity $\tau^a:=z_1'\oplus z_2'$ is used to determine the basis choice. The encoded intensity $\lambda^a:=z_1'$ and the relative encoded phase $\sigma^a=z_1''\oplus z_2''$ are the raw key bits in the $Z$-basis and $X$-basis postprocessing, respectively.

\begin{figure}[htbp!]
\centering \includegraphics[width=16cm]{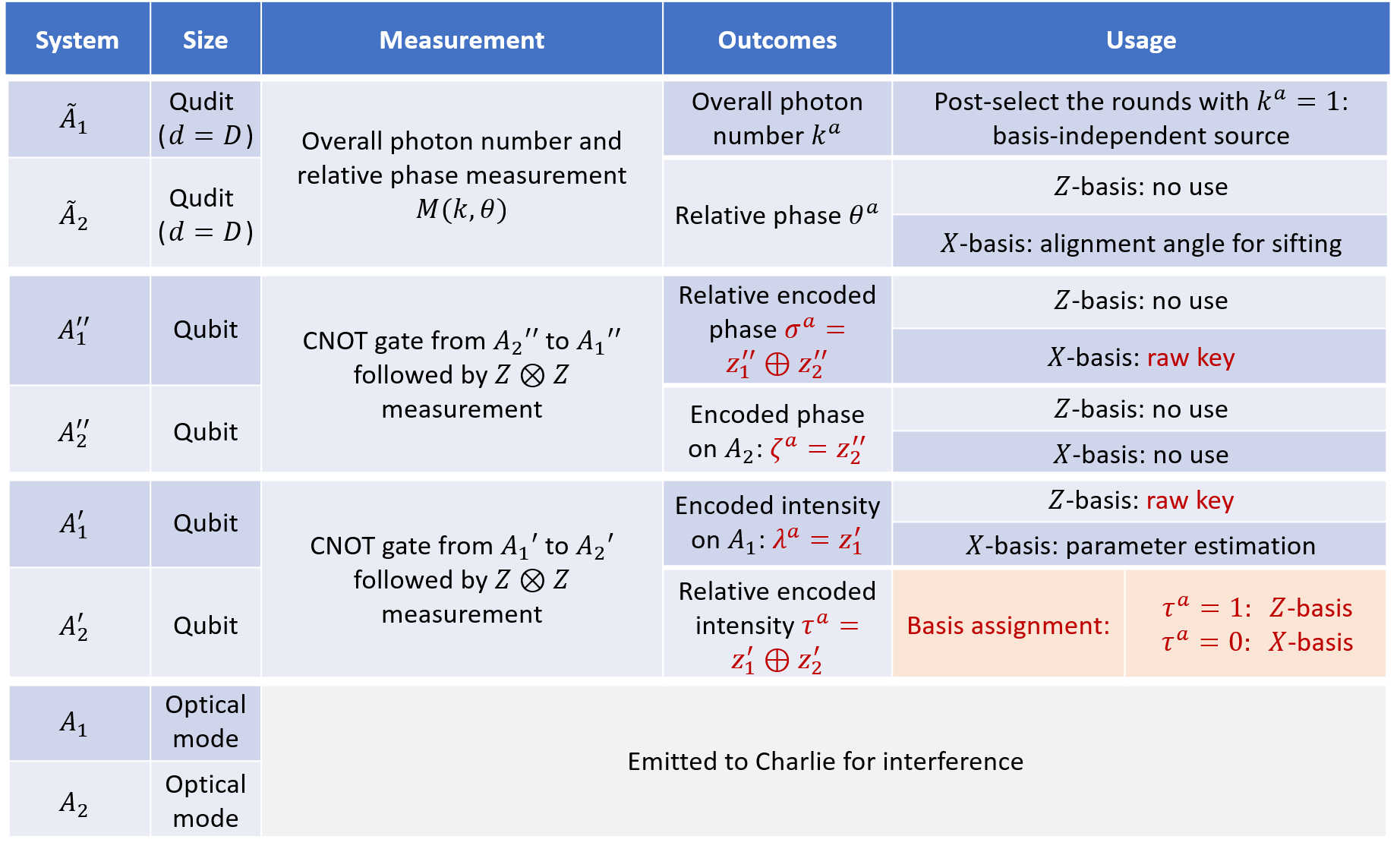}
\caption{The quantum operations and usage of Alice's encoding states on two paired locations $(1,2)$. There are 8 systems based on Alice's two paired locations. Among them, the two qudits $\tilde{A}_1$ and $\tilde{A}_2$ are measured to obtain the overall photon number $k^a$ and the relative phase $\theta^a$ of two optical modes $A_1$ and $A_2$. The two qubits $A_1''$ and $A_2''$ are measured to obtain the relative phase, which is the raw key bit in the $X$-basis. Another two qubits $A_1'$ and $A_2'$ are measured to obtain the encoded intensity in $A_1$ and the relative encoded intensity, which are used for the key mapping on the $Z$-basis and the basis assignment, respectively. } \label{fig:systemUsage}
\end{figure}

A key point in our security proof is that we replace the random phases and register them into purified systems $\tilde{A}_1$ and $\tilde{A}_2$. This enables us to define a global measurement $M(k,\theta)$ on $\tilde{A}_1$ and $\tilde{A}_2$ to \emph{simultaneously} obtain the overall photon number and the relative phase information encoded in optical modes $A_1$ and $A_2$.
The construction of $M(k,\theta)$ is described in Appendix~\ref{Sec:random}.
With the introduction of the purified systems $\tilde{A}_1$ and $\tilde{A}_2$ and the existence of the global measurement $M(k,\theta)$, Alice (same for Bob) is able to determine \emph{at which two locations to perform the global photon number measurement after Charlie's announcement}. With this measurement, Alice and Bob can further reduce the encoding state to a two-mode scheme.
The detailed security proof is provided in Appendix~\ref{Sec:Security}.

\subsection{Mode-pairing scheme with decoy states} \label{method:decoyscheme}
Here, we present the mode-pairing scheme with an extra decoy intensity $\nu$ to estimate the parameters $q_{11}$ and $e^X_{11}$.
Of course, more decoy intensities can be applied in a similar manner.


\begin{enumerate}
\item
\textbf{State preparation}: In the $i$-th round ($i=1,2,...,N$), Alice prepares a coherent state $\ket{\sqrt{\mu^a_i}\exp(\i\phi^a_i)}$ in optical mode $A_i$ with an intensity $\mu^a_i$ randomly chosen from $\{0,\nu,\mu\}$ ($0<\nu<\mu<1$) and a phase $\phi^a_i$ uniformly chosen from the set $\{\frac{2\pi}{D}k\}_{k=0}^{D-1}$. She records $\mu^a_i$ and $\phi^a_i$ for later use. Likewise, Bob chooses $\mu^b_i$ and $\phi^b_i$ randomly and prepares $\ket{\sqrt{\mu^b_i}\exp(\i\phi^b_i)}$ in mode $B_i$.

\item
\textbf{Measurement}: (Same as Step~2 in Box~\ref{box:MPprotocol}.) Alice and Bob send modes $A_i$ and $B_i$ to Charlie, who performs the single-photon interference measurement. Charlie announces the clicks of the detectors $L$ and/or $R$.

Alice and Bob repeat the above two steps $N$ times; then, they perform the following data postprocessing procedures:

\item
\textbf{Mode pairing}: (Same as Step~3 in Box~\ref{box:MPprotocol}.) For all rounds with successful detection ($L$ or $R$ clicks), Alice and Bob establish a strategy for grouping two clicked rounds as a pair. A specific pairing strategy is introduced in Section~\ref{ssec:modepairing}.

\item
\textbf{Basis sifting}: Based on the intensities of two grouped rounds, Alice labels the `basis' of the data pair as:
\begin{enumerate}
\item $Z$ if one of the intensities is $0$ and the other is nonzero;
\item $X$ if both of the intensities are the same and nonzero; or
\item `0' if the intensities are $(0,0)$, which will be reserved for decoy estimation of both the $Z$ and $X$ bases; or
\item `discard' when both intensities are nonzero and not equal.
\end{enumerate}
See also the table below for the basis assignment.
\begin{table}[H]
\centering
\begin{tabular}{|l|c c c|}
\hline
\diagbox{$\mu_i$}{$\mu_j$} & 0 & $\nu$ & $\mu$ \\ \hline
0     & `0' & $Z$ & $Z$ \\
$\nu$ & $Z$  &  $X$ & `discard' \\
$\mu$ & $Z$  &  `discard' & $X$ \\\hline
\end{tabular}
\end{table}

Alice and Bob announce the basis ($X$, $Z$, `0', or `discard') and the sum of the intensities $(\mu^a_{i,j}, \mu^b_{i,j})$ for each location pair $i,j$. If the announced bases are the same and no `discard' state occurs, they record the pair basis and maintain the data pairs; if one of the announced bases is `0' and the other one is $X$($Z$), they record the pair basis as $X$($Z$) and keep the data pairs; if both of the announced bases are `0', they record the pair basis as `0' and maintain the data pairs; and otherwise, they discard the data. See the table below for the basis-sifting strategy.
\begin{table}[H]
\centering
\begin{tabular}{|l|c c c|}
\hline
\diagbox{Bob}{Alice} & `0' & $X$ & $Z$ \\ \hline
`0'     & `0' & $X$ & $Z$ \\
$X$ & $X$  &  $X$ & `discard' \\
$Z$ & $Z$  &  `discard' &  $Z$ \\\hline
\end{tabular}
\end{table}

\item
\textbf{Key mapping}: (Same as Step~5 in Box~\ref{box:MPprotocol}.) For each $Z$-pair at locations $i$ and $j$, Alice sets her key to $\kappa^a = 0$ if the intensity of the $i$-th pulse is $\mu^a_i = 0$ and to $\kappa^a = 1$ if $\mu^a_j = 0$. For each $X$-pair at locations $i$ and $j$, the key is extracted from the relative phase $(\phi^a_j - \phi^a_i) = \theta^a + \pi\kappa^a$, where the raw key bit is $\kappa^a = \left\lfloor((\phi^a_j - \phi^a_i)/\pi \text{ mod } 2)\right\rfloor$ and the alignment angle is $\theta^a := (\phi^a_j - \phi^a_i) \text{ mod } \pi$. Similarly, Bob also assigns his raw key bit $\kappa^b$ and determines $\theta^b$.
For the $X$-pairs, Alice and Bob announce the alignment angles $\theta^a$ and $\theta^b$. If $\theta^a = \theta^b$, they keep the data pairs; otherwise, they discard them.
	
\item
\textbf{Parameter estimation}: Alice and Bob estimate the quantum bit error rate $E_{\mu\mu}^Z$ of the raw key data in $Z$-pairs with overall intensities of $(\mu^a_{i,j}, \mu^b_{i,j})=(\mu,\mu)$. They use $Z$-pairs with different intensity settings to estimate the clicked single-photon fraction $q_{11}$ using the decoy-state method, and the $X$-pairs are used to estimate the single-photon phase error rate $e^X_{11}$.
Specially, $q_{11}$ and $e^X_{11}$ are estimated via the decoy-state method introduced in Appendix~\ref{Sec:decoyestimate}.

\item
\textbf{Key distillation}: (Same as Step~7 in Box~\ref{box:MPprotocol}.) Alice and Bob use the $Z$-pairs to generate a key. They perform error correction and privacy amplification in accordance with $q_{11}$, $E_{\mu\mu}^Z$ and $e^{X}_{11}$.
\end{enumerate}

\subsection{Mode-pairing-efficiency calculation} \label{method:pairrate}
We calculate the expected pairing number $r_p(p,l)$ that corresponds to the simple mode-pairing strategy in Algorithm~\ref{alg:pairing}, which is related to the average click probability $p$ during each round, and the maximal pairing interval $l$.

For calculation convenience, we assume that in addition to the front and rear locations $(F_k, R_k)$ of the $k$-th pair, Alice and Bob also record the starting location $S_k$, which indicates the location at which the first successful detection signal occurs during the pairing procedure for the $k$-th pair. If the second successful detection signal $R_k$ is found within the next $l$ locations, then $F_k=S_k$; otherwise, $F_k$ will be larger than $S_k$.
Let $G_k:=S_{k+1} - S_k$ denote a random variable that reflects the location gap between the $k$-th and $(k+1)$-th starting pulses. Then the expected pairing number per pulse is given by
\begin{equation}
r_p = \frac{1}{\mathbb{E}(G_k)}.
\end{equation}
Hence, we need to calculate only the expectation value of $G_k$. First, we split it into two parts,
\begin{equation}
G_k = (R_k - S_k) + (S_{k+1} - R_k) = H_k + G^{(b)}_k,
\end{equation}
where $H_k: = R_k - S_k$ and $G^{(b)}_k := S_{k+1} - R_k$. Hence,
\begin{equation}
\mathbb{E}(G_k) = \mathbb{E}(H_k) + \mathbb{E}(G^{(b)}_k).
\end{equation}

It is easy to show that $G^{(b)}_k$ obeys a geometric distribution,
\begin{equation}
\Pr(G^{(b)}_k = d) = (1-p)^{d-1}p, \quad d=1,2,...
\end{equation}
Then, the expectation value is $\mathbb{E}(G^{(b)}_k) = 1/p$.

The calculation of the pulse interval $H_k$ is more complex. Suppose that we already know the expectation value $\mathbb{E}(H_k)$; now we calculate the expectation value $\mathbb{E}(H_k|d)$ conditioned on the distance between the starting point and the following click. We have
\begin{equation}
\mathbb{E}(H_k|d) =
\begin{cases}
d, & \quad d\leq l, \\
\mathbb{E}(H_k) + d, & \quad d>l.
\end{cases}
\end{equation}
Therefore,
\begin{equation}
\begin{aligned}
\mathbb{E}(H_k) &= \sum_{d=1}^{+\infty} \Pr(d)\mathbb{E}(H_k|d) \\
&= \sum_{d=1}^{l} (1-p)^{d-1}p  d + \sum_{d>l} (1-p)^{d-1}p [\mathbb{E}(H_k) + d] \\
&= \sum_{d=1}^{+\infty} (1-p)^{d-1}p  d + \mathbb{E}(H_k) \sum_{d>l} (1-p)^{d-1}p \\
&= \frac{1}{p} + \mathbb{E}(H_k) (1-p)^l \\
\end{aligned}
\end{equation}
We have
\begin{equation}
\mathbb{E}(H_k) = \frac{1}{p [1 - (1-p)^l]};
\end{equation}
therefore,
\begin{equation}
\begin{aligned}
\mathbb{E}(G_k) &= \frac{1}{p [1 - (1-p)^l]} + \frac{1}{p} ,\\
\Rightarrow r_p &= \left[ \frac{1}{p [1 - (1-p)^l]} + \frac{1}{p} \right]^{-1}.
\end{aligned}
\end{equation}

\section*{Data availability}
The data that support the plots within this paper and other findings of this study are available from the corresponding authors upon reasonable request.

\section*{Acknowledgments}
We thank Yizhi Huang, Guoding Liu, Zhenhuan Liu, Tian Ye, Junjie Chen, Minbo Gao, and Xingjian Zhang for the helpful discussion on the pairing rate calculation and general comments  on the presentation.
We especially thank Norbert L\"{u}tkenhaus for the helpful discussions on the security analysis, thorough proofreading, and beneficial suggestions on the manuscript presentation. 
We especially thank Hao-Tao Zhu and Teng-Yun Chen for providing us some preliminary results showing the phase stabilization after removing phase-locking in the mode-pairing scheme.
This work was supported by the National Natural Science Foundation of China Grants No.~11875173 and No.~12174216 and the National Key Research and Development Program of China Grants No.~2019QY0702 and No.~2017YFA0303903.

\emph{Note added.---}
After we submitted our work for reviewing, we became aware of a relevant work by \textcite{xie2021breaking}, who consider a similar MDI-QKD protocol that match the clicked data to generate key information. Under the assumption that the single-photon distributions in all the Charlie's successful detection events are independent and identically distributed, the authors simulate the performance of the protocol and show its ability to break the repeaterless rate-transmittance bound.


\section*{Competing interests}
The authors declare no competing interests.




%

\onecolumngrid
\newpage

\begin{appendix}


\maketitle

We present a full security proof of the mode-pairing measurement-device-independent quantum key distribution scheme (MP scheme for simplicity), the details of the numerical simulation and a proof-of-principle experimental demonstration to show the feasibility of the MP scheme.
In Sec.~\ref{Sec:random}, we introduce a source-replacement procedure of the random phases and photon numbers in the coherent states as a preliminary.
In Sec.~\ref{Sec:Security}, we provide the security proof of the MP scheme.
In Sec.~\ref{Sec:decoyestimate}, we introduce the decoy-state estimation method for the mode-pairing scheme introduce in Methods ``Mode-pairing scheme with decoy states''.
In Sec.~\ref{Sec:Simu}, we list the simulation formulas of the MP scheme, as well as the two-mode measurement-device-independent quantum key distribution (MDI-QKD), decoy-state BB84, and phase-matching quantum key distribution (PM-QKD) schemes.
In Sec.~\ref{Sec:optimalmu}, we present numerical results related to the optical pulse intensities of different schemes.
In Sec.~\ref{sc:ExpDemo}, we present a proof-of-principle demonstration of the mode-pairing scheme without global phase locking.


\section{Source replacement of the random phase systems} \label{Sec:random}
In the security proof of the coherent-state QKD schemes, we will frequently come across the photon-number measurement on the coherent states with randomised phases. Using the overall photon-number measurement on two optical modes, one can post-select the single-photon component as an ideal encoding qubit.
A unique feature of the MP scheme is that, Alice and Bob do not want to decide which two optical modes to perform the overall photon-number measurement before obtaining the detection result from the untrusted party, Charlie.
To analyze the security of the MP scheme in this scenario, we perform an extra source-replacement procedure by introducing extra ancillary systems to record the random phase or photon number information. In this way, we can realize the overall photon-number measurement of the coherent states \emph{indirectly} based on the measurement on the ancillary systems.

For the convenience of later discussion, here, we introduce the source-replacement procedure of the random phases in detail.
First, we consider the source replacement of the encoded random phase in a single optical mode in Sec.~\ref{ssc:single_random}. With the Fourier transform on the ancillary basis, we then show the complementarity between encoded phases and photon numbers. After that, we describe the source replacement of the encoded random phases in the two-optical-mode case in Sec.~\ref{ssc:two_random}. We will show the compatibility of the relative-phase measurement and the overall photon-number measurement.



\subsection{Single-optical-mode case} \label{ssc:single_random}
Consider a coherent state $\ket{\sqrt{\mu}e^{i\phi}}$ on system $A$ with a random phase $\phi$. For the convenience of later discussion, we assume that the phase $\phi$ is discretely randomised from the set $\{\phi_j= \frac{2\pi}{D}j\}_{j\in[D]}$. Here, $D$ is the number of discrete phases and $[D]:=\{0,1,...,D-1\}$. In the source replacement, we introduce a qudit with $d=D$ to record the random phase information. The joint purified system can be written as
\begin{equation} \label{eq:psiArandom}
\ket{\psi}_{\tilde{A}A} = \frac{1}{\sqrt{D}} \sum_{j=0}^{D-1} \ket{j}_{\tilde{A}} \ket{\sqrt{\mu} e^{i\phi_j}}_A.
\end{equation}
This purified state can be prepared by first preparing a coherent state $\ket{\sqrt{\mu}}_A$ and a maximal coherent qudit state
\begin{equation}
\ket{+_D}_{\tilde{A}} = \frac{1}{\sqrt{D}} \sum_{j=0}^{D-1} \ket{j}_{\tilde{A}},
\end{equation}
and then applying a controlled-phase gate $C_D\text{-}\hat{U}(\phi_\Delta)$ with $\phi_\Delta:=\frac{2\pi}{D}$ from the qudit $\tilde{A}$ to the optical mode $A$. The controlled-phase gate is defined as
\begin{equation}
C_D\text{-}\hat{U}(\phi)_{\tilde{A}A} := \sum_{j=0}^{D-1} \ket{j}_{\tilde{A}}\bra{j}\otimes e^{i \phi j a^\dag a},
\end{equation}
where $a^\dag$ and $a$ are the creation and annihilation operators of the mode $A$, respectively.

\begin{figure}[htbp]
\includegraphics[width=9cm]{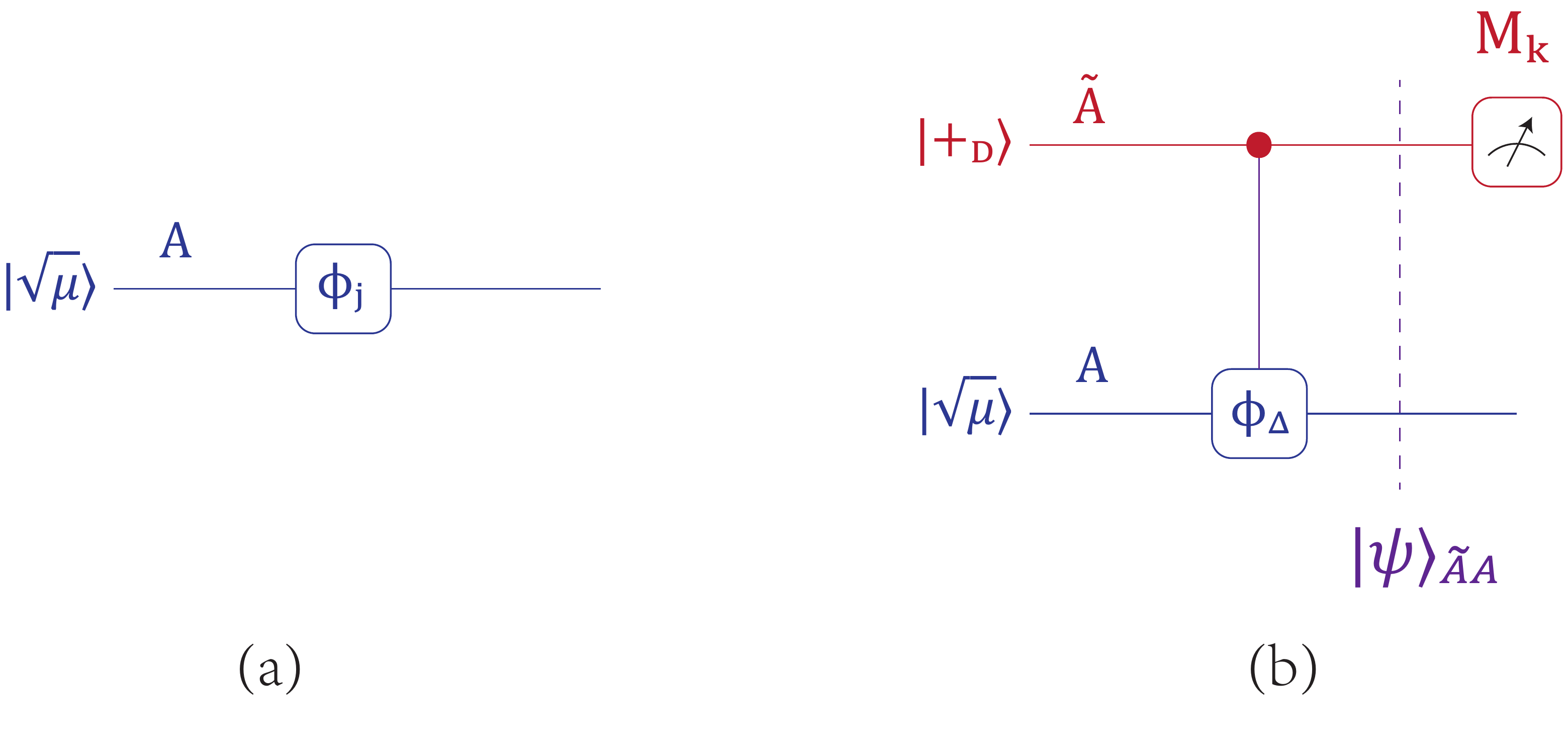}
\caption{(a) A coherent state with a discrete random phase $\phi_j$. (b) A possible way to produce the entangled state generated by the source replacement on the coherent states with random phase encoding. Alice first prepares a maximal coherent qudit state $\ket{+_D}$ on the system $\tilde{A}$, and then performs the control gate $C_D-\hat{U}(\phi_\Delta)$ from $\tilde{A}$ to $A$ to generate the state $\ket{\psi}$. Alice can then choose to measure on $\tilde{A}$ on $\{\ket{j}\}_j$ or $\{\ket{\tilde{k}}\}_k$ basis to read out either the phase or the photon number information of the system $A$. Here, $\phi_\Delta = \frac{2\pi}{D}$.} \label{fig:singlePurified}
\end{figure}

To define the measurement on the ancillary $\tilde{A}$ that informs the photon number of $A$, we introduce a complementary basis $\{\ket{\tilde{k}}_{\tilde{A}}\}_{k=0}^{D-1}$ on the qudit system $\tilde{A}$ via the Fourier transform, $\forall k, j \in [D]$,
\begin{equation} \label{eq:singleFT}
\begin{aligned}
\ket{\tilde{k}}_{\tilde{A}} &:= \frac{1}{\sqrt{D}} \sum_{j=0}^{D-1} e^{i \frac{2\pi}{D}jk } \ket{j}_{\tilde{A}}, \\
\ket{j}_{\tilde{A}} &= \frac{1}{\sqrt{D}} \sum_{k=0}^{D-1} e^{-i \frac{2\pi}{D}jk } \ket{\tilde{k}}_{\tilde{A}}.
\end{aligned}
\end{equation}

In this way, the state in Eq.~\eqref{eq:psiArandom} can be reformulated as
\begin{equation} \label{eq:psiArandom2}
\begin{aligned}
\ket{\psi}_{\tilde{A}A} &= \frac{1}{\sqrt{D}} \sum_{j=0}^{D-1} \ket{j}_{\tilde{A}} \ket{\sqrt{\mu} e^{i\phi_j}}_A \\
&= \frac{1}{D} \sum_{j=0}^{D-1} \sum_{k=0}^{D-1} e^{-i \frac{2\pi}{D}jk } \ket{\tilde{k}}_{\tilde{A}} \ket{\sqrt{\mu} e^{i\phi_j}}_A \\
&= \frac{1}{D} \sum_{k=0}^{D-1} \ket{\tilde{k}}_{\tilde{A}} \left( \sum_{j=0}^{D-1} e^{-i \frac{2\pi}{D}jk } \ket{\sqrt{\mu} e^{i\phi_j}}_A  \right) \\
&= \sum_{k=0}^{D-1} \sqrt{P_k} \ket{\tilde{k}}_{\tilde{A}} \ket{\lambda^{\mu}(k)}_A,
\end{aligned}
\end{equation}
where the normalized pseudo-Fock state is given by,
\begin{equation} \label{eq:pseudoFock}
\begin{aligned}
\ket{\lambda^\mu(k)}_A &:= \frac{1}{D\sqrt{P_k^\mu}} \sum_{j=0}^{D-1} e^{-i \frac{2\pi}{D}jk } \ket{\sqrt{\mu} e^{i\phi_j}}_A \\
&= \frac{1}{\sqrt{P_k^\mu}} e^{-\frac{\mu}{2}} \sum_{m=0}^{\infty} \frac{\sqrt{\mu}^{(mD+k)}}{\sqrt{(mD+k)!}} \ket{mD+k}_A,
\end{aligned}
\end{equation}
and the corresponding probability distribution is,
\begin{equation}  \label{eq:PseudoPoissson}
P_k^\mu := e^{-\mu}\sum_{m=0}^{\infty} \frac{\mu^{(mD+k)}}{(mD+k)!}.
\end{equation}

From Eq.~\eqref{eq:psiArandom2} we can see that, the global phase $\phi_j$ and the (pesudo) photon number $k$ are two complementary observables, which cannot be determined simultaneously. Furthermore, since the normalized pesudo-Fock states $\{\ket{\lambda^\mu(k)}_A\}_k$ are orthogonal with each other, if we measure the ancillary system $\tilde{A}$ on the basis $\{\ket{\tilde{k}}\}$, it is equivalent to perform the projective (pesudo) photon-number measurement on the system $A$.

When $D\to\infty$, Eq.~\eqref{eq:pseudoFock} becomes Fock states and Eq.~\eqref{eq:PseudoPoissson} becomes a Poisson distribution, $P^{\mu}(k)=e^{-\mu}\frac{\mu^k}{k!}$, and then the discrete phase randomisation approaches the continuous phase randomisation. In practice, it suffices to choose $D\geq 12$ to make the discretisation effect ignorable~\cite{Cao2015discrete,Zeng2019Symmetryprotected}.

\subsection{Two-optical-mode case} \label{ssc:two_random}

In the coherent-state MDI-QKD scheme, usually we encode the information into the single-photon subspace on two orthogonal optical modes. To see how this works, we consider the two-optical-mode encoding and show the compatibility between the global photon number and the relative-phase measurement.

Consider two coherent states $\ket{\sqrt{\mu}e^{i\phi_1}}_{A_1}$ and $\ket{\sqrt{\mu}e^{i\phi_2}}_{A_2}$ on two orthogonal optical modes with random phases $\phi_1$ and $\phi_2$. Similarly, we assume the random phases are discretely randomised from the sets $\{\phi_{j_1} =\frac{2\pi}{D}j_1\}_{j_1=0}^{D-1}$ and $\{\phi_{j_2} =\frac{2\pi}{D}j_2\}_{j_2=0}^{D-1}$, respectively. The joint purified system can be written as,
\begin{equation} \label{eq:psiA12random}
\begin{aligned}
\ket{\psi_{1,2}}_{\tilde{A}_1\tilde{A}_2,A_1 A_2} &= \frac{1}{D} \sum_{j_1=0}^{D-1} \sum_{j_2=0}^{D-1} \ket{j_1}_{\tilde{A}_1} \ket{j_2}_{\tilde{A}_2} \ket{\sqrt{\mu} e^{i\phi_{j_1}}}_{A_1}\ket{\sqrt{\mu} e^{i\phi_{j_2}}}_{A_2} \\
&= \frac{1}{D} \sum_{j_2=0}^{D-1} \sum_{j_\theta=0}^{D-1} \ket{j_2+j_\theta}_{\tilde{A}_1} \ket{j_2}_{\tilde{A}_2} \ket{\sqrt{\mu} e^{i\phi_{(j_2+j_\theta)}}}_{A_1}\ket{\sqrt{\mu} e^{i\phi_{j_2}}}_{A_2}, \\
\end{aligned}
\end{equation}
where we substitute the variable $j_1:= j_2 + j_\theta$. The addition $+$ for all the discrete indices ($j_1$, $j_2$ and $j_\theta$) is defined on the ring $\mbb{Z}_D$ (taking modulus of $D$).

\begin{figure}[htbp]
\includegraphics[width=14cm]{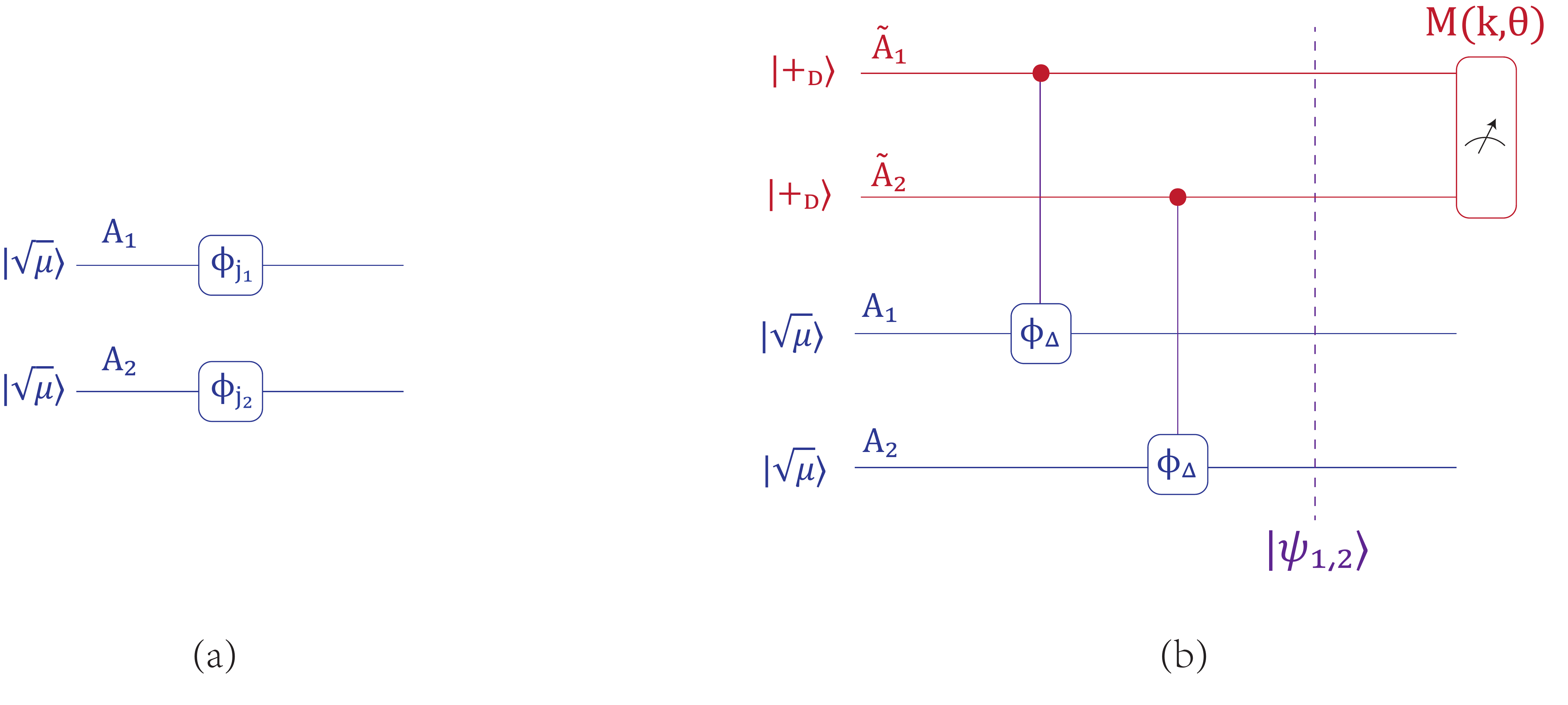}
\caption{(a) Two coherent states with independent random phases $\phi_{j_1}$ and $\phi_{j_2}$. (b) A possible way to produce the entangled state generated by the source replacement on the coherent states with random phase encoding. Alice first prepares two maximal coherent qudit state $\ket{+_D}$ on the systems $\tilde{A}_1$ and $\tilde{A}_2$, then performs the control gates $C_D-\hat{U}(\phi_\Delta)$ from $\tilde{A}_1$ and $\tilde{A}_2$ to $A_1$ and $A_2$, respectively, to generate the state $\ket{\psi_{1,2}}$. When Alice performs the joint measurement $M(k,\theta)$ defined in Eq.~\eqref{eq:Mktheta} on $\tilde{A}_1$ and $\tilde{A}_2$, she can obtain the global photon number $k$ as well as the relative phase information $\theta$ on systems $A_1$ and $A_2$.} \label{fig:twoPurified}
\end{figure}

For each given $\theta$, the basis vectors $\{ \ket{j_2+j_\theta}_{\tilde{A}_1} \ket{j_2}_{\tilde{A}_2} \}_{j_2=0}^{D-1}$ defines a subspace on the joint Hilbert space of two qudits $\tilde{A}_1$ and $\tilde{A}_2$. Similar to Eq.~\eqref{eq:singleFT}, we introduce a partial Fourier transform on each subspace, $\forall k, j_2\in [D]$,
\begin{equation} \label{eq:twoFT}
\begin{aligned}
\ket{\tilde{k},j_\theta}_{\tilde{A}_1,\tilde{A}_2} &:= \frac{1}{\sqrt{D}} \sum_{j_2=0}^{D-1} e^{i \frac{2\pi}{D}j_2 k} \ket{j_2+j_\theta}_{\tilde{A}_1} \ket{j_2}_{\tilde{A}_2}, \\
\ket{j_2+j_\theta}_{\tilde{A}_1} \ket{j_2}_{\tilde{A}_2} &= \frac{1}{\sqrt{D}} \sum_{k=0}^{D-1} e^{-i \frac{2\pi}{D}j_2 k } \ket{\tilde{k},j_\theta}_{\tilde{A}_1\tilde{A}_2}.
\end{aligned}
\end{equation}

In this way, the state in Eq.~\eqref{eq:psiA12random} can be reformulated as
\begin{equation} \label{eq:psiA12random2}
\begin{aligned}
\ket{\psi_{1,2}}_{\tilde{A}_1\tilde{A}_2,A_1 A_2} &= \frac{1}{D} \sum_{j_2=0}^{D-1} \sum_{j_\theta=0}^{D-1} \ket{j_2+j_\theta}_{\tilde{A}_1} \ket{j_2}_{\tilde{A}_2} \ket{\sqrt{\mu} e^{i\phi_{(j_2+j_\theta)}}}_{A_1}\ket{\sqrt{\mu} e^{i\phi_{j_2}}}_{A_2} \\
&= \frac{1}{D\sqrt{D}} \sum_{j_2=0}^{D-1} \sum_{j_\theta=0}^{D-1} \sum_{k=0}^{D-1} e^{-i \frac{2\pi}{D}j_2 k } \ket{\tilde{k},j_\theta}_{\tilde{A}_1,\tilde{A}_2} \ket{\sqrt{\mu} e^{i\phi_{(j_2+j_\theta)}}}_{A_1}\ket{\sqrt{\mu} e^{i\phi_{j_2}}}_{A_2} \\
&= \frac{1}{D\sqrt{D}} \sum_{j_\theta=0}^{D-1} \sum_{k=0}^{D-1} \ket{\tilde{k},j_\theta}_{\tilde{A}_1\tilde{A}_2} \left(\sum_{j_2=0}^{D-1} e^{-i \frac{2\pi}{D}j_2 k }  \ket{\sqrt{\mu} e^{i\phi_{(j_2+j_\theta)}}}_{A_1}\ket{\sqrt{\mu} e^{i\phi_{j_2}}}_{A_2} \right) \\
&= \frac{1}{D\sqrt{D}} \sum_{j_\theta=0}^{D-1} \sum_{k=0}^{D-1} \ket{\tilde{k},j_\theta}_{\tilde{A}_1\tilde{A}_2}\; \hat{U}_{A_1}(\frac{2\pi}{D}j_\theta) \circ \hat{BS} \left(\sum_{j_2=0}^{D-1} e^{-i \frac{2\pi}{D}j_2 k }  \ket{\sqrt{2\mu} e^{i\phi_{j_2}}}_{A_1} \ket{0}_{A_2} \right) \\
&= \sum_{j_\theta=0}^{D-1} \frac{1}{\sqrt{D}} \sum_{k=0}^{D-1} \sqrt{P_k^{2\mu}} \ket{\tilde{k},j_\theta}_{\tilde{A}_1\tilde{A}_2}\; \ket{\lambda_{1,2}^{2\mu}(k),j_\theta}_{A_1A_2},
\end{aligned}
\end{equation}
where in the second equality, we use the partial Fourier transform in Eq.~\eqref{eq:twoFT}. In the fourth equality, we simplify the expression by a phase-gate $\hat{U}_{A_1}(\phi) := e^{i\phi a^\dag_1 a_1}$ and the 50:50 beam-splitter $\hat{BS}$ with the following transformation relationship,
\begin{equation}
\begin{pmatrix}
a_1' \\
a_2'
\end{pmatrix}
=
\dfrac{1}{\sqrt{2}}
\begin{pmatrix}
1 & 1 \\
1 & -1
\end{pmatrix}
\begin{pmatrix}
a_1 \\
a_2
\end{pmatrix},
\end{equation}
where $a_1$ and $a_2$ denote the annihilation operators of the two input modes, while $a_1'$ and $a_2'$ denote the annihilation operators of the two output modes. In the fifth equality, we define
\begin{equation} \label{eq:lambda_twomode}
\begin{aligned}
\ket{\lambda^{2\mu}_{1,2}(k),j_\theta}_{A_1A_2} :&= \frac{1}{D\sqrt{P_k^{2\mu}}} \hat{U}_{A_1}(\frac{2\pi}{D}j_\theta) \circ \hat{BS} \left(\sum_{j_2=0}^{D-1} e^{-i \frac{2\pi}{D}j_2 k }  \ket{\sqrt{2\mu} e^{i\phi_{j_2}}}_{A_1} \ket{0}_{A_2} \right) \\
&= \hat{U}_{A_1}(\frac{2\pi}{D}j_\theta) \circ \hat{BS} \left( \ket{\lambda^{2\mu}(k)}_{A_1} \ket{0}_{A_2} \right) \\
&= \hat{U}_{A_1}(\frac{2\pi}{D}j_\theta) \circ \hat{BS} \left( \frac{e^{-\mu}}{\sqrt{P_k^{2\mu}}}\sum_{m=0}^{\infty} \frac{\sqrt{2\mu}^{(mD+k)}}{\sqrt{(mD+k)!}}  \ket{mD+k}_{A_1} \ket{0}_{A_2} \right),
\end{aligned}
\end{equation}
to be a normalized state with pseudo-Fock number $k$ and relative phase $\theta=\frac{2\pi}{D}j_\theta$.

From Eq.~\eqref{eq:psiA12random2} we can see that, the overall (pseudo) photon number $k$ and the (discrete) relative phase $\theta$ are two \emph{compatible} observables for two coherent states with random phases. Denote the projective measurement on the basis $\{\ket{\tilde{k},j_\theta}_{\tilde{A}_1,\tilde{A}_2}\}$ to be a global photon-number and encoded relative-phase measurement $M(k,\theta)$,
\begin{equation} \label{eq:Mktheta}
M(k,\theta):= \left\{ \Pi_{k,j_\theta}  := \ket{\tilde{k},j_\theta}_{\tilde{A}_1,\tilde{A}_2} \bra{\tilde{k},j_\theta} \right\},
\end{equation}
which will be frequently used in the following security proof.


It is easy to check that, the conditional states $\{\ket{\lambda^{2\mu}_{1,2}(k),j_\theta}\}_{k,j_\theta}$ with different photon numbers $k$ are orthogonal with each other. We now consider the following two encoding procedures,
\begin{enumerate}
\item Alice first prepares an entangled state $\ket{\psi_{1,2}}_{\tilde{A}_1\tilde{A}_2,A_1 A_2}$. She them performs the measurement $M(k,\theta)$ defined in Eq.~\eqref{eq:Mktheta} on $\tilde{A}_1$ and $\tilde{A}_2$ to determine the global photon number $k$ and the encoded relative phase $\theta = j_\theta \frac{2\pi}{D}$ on the emitted optical modes $A_1$ and $A_2$;
\item
Alice generates two coherent states on $A_1$ and $A_2$ with random phases $\phi_{j_1}$ and $\phi_{j_2}$, respectively. She records the relative phase $\theta = j_\theta \frac{2\pi}{D}$ and then performs direct global photon-number measurement on the two coherent states to determine the global photon number $k$.
\end{enumerate}
In Ref.~\cite{Cao2015discrete}, it has been shown that the resultant states in the second procedure are also $\{\ket{\lambda^{2\mu}_{1,2}(k),j_\theta}\}_{k,j_\theta}$. As a result, these two encoding procedures are equivalent.

In the extreme case when $D\to \infty$, the state $\ket{\lambda^{2\mu}_{1,2}(k),j_\theta}_{A_1A_2}$ becomes
\begin{equation} \label{eq:lambda12kDInf}
\ket{\lambda_{1,2}(k),j_\theta} = \hat{U}_{A_1}(\frac{2\pi}{D}j_\theta) \circ \hat{BS} \left( \ket{k}_{A_1} \ket{0}_{A_2} \right),
\end{equation}
which is independent of the intensity $\mu$. Especially, the state with global photon number $k=1$ and the relative phase $\phi_{j_\theta}$ becomes
\begin{equation}
\ket{\lambda_{1,2}(1),j_\theta} = \frac{1}{\sqrt{2}}(\ket{01} + e^{i\frac{2\pi}{D}j_\theta} \ket{10} )_{A_1A_2},
\end{equation}
which forms a qubit subspace widely used in QKD.

\section{Security of mode-pairing scheme} \label{Sec:Security}
In this section, we prove the security of the MP scheme via proving the security of its equivalent entanglement version. The major tool we use in the security proof is the QKD equivalence argument.

For a generic MDI-QKD setting, Alice and Bob can view Charlie's site as a joint measurement, $M_c$, on emitted optical pulses. Here, measurement $M_c$ contains all Charlie's operations, including measurement on emitted optical pulses, data processing, and the announcement strategy. The measurement-device-independent property makes measurement $M_c$ a black box to Alice and Bob. Based on the measurement result, $\vec{C}$, Alice and Bob perform a key-generation operation, $M_{ab}$, to their ancillary states to get the final private states, where the key-generation operation $M_{ab}$ includes state measurement, key mapping, parameter estimation, and data post-processing. The whole procedure is depicted in Fig.~\ref{fig:MDIMcMab}.

\begin{figure}[htbp]
\includegraphics[width=12cm]{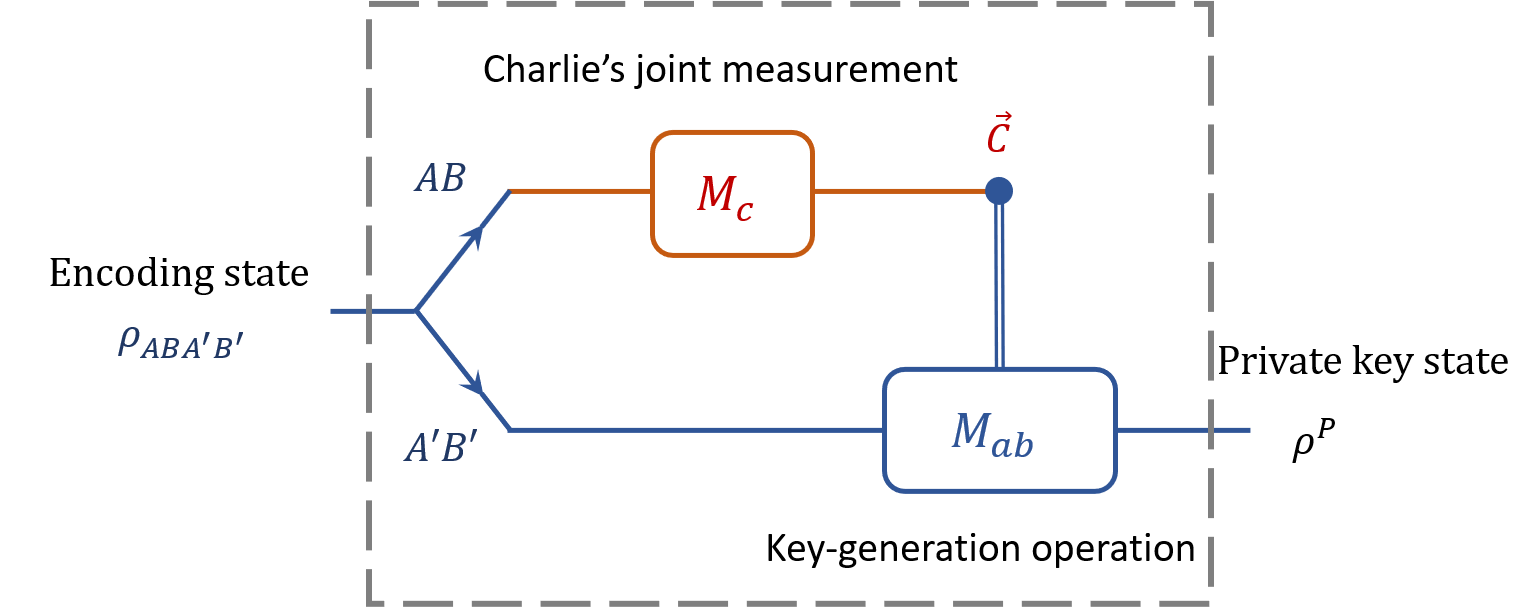}
\caption{Generic MDI-QKD. The red part owned by Charlie is uncharacterized and untrusted. The blue part is well characterized in Alice and Bob's hands. In general, Charlie's joint measurement outcome, $\vec{C}$, controls Alice and Bob's key-generation operation, $M_{ab}$. This (classical) control operation is fixed by the protocol. That is, given a measurement result, $\vec{C}$, Alice and Bob know exactly how to operate on their ancillary states to extract the final key. We can treat the whole circuit in the gray dashed box as a gigantic operation on the initial state, $\rho_{ABA'B'}$, to output a private key state, $\rho^P$.} \label{fig:MDIMcMab}
\end{figure}

\begin{lemma}[Equivalent MDI-QKD]\label{lem:equivMDI}
Two MDI-QKD schemes, in the form of Fig.~\ref{fig:MDIMcMab}, generate the same private state given the same attack and hence are equivalent in security, if the following items are the same,
\begin{enumerate}
\item
Alice and Bob's initial states, including emitted quantum states of system $AB$ and ancillary states of system $A'B'$;
\item
the (classical) control operation for key generation, $\vec{C}$-$M_{ab}$.
\end{enumerate}
\end{lemma}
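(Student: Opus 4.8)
The plan is to cast both schemes into the single canonical circuit of Fig.~\ref{fig:MDIMcMab} and then observe that, in that picture, the output private state is a \emph{deterministic function} of exactly the three ingredients named in the hypotheses: Alice and Bob's joint initial state $\rho_{ABA'B'}$, Charlie's untrusted measurement $M_c$ (``the attack''), and the classically controlled key-generation map $\vec{C}\text{-}M_{ab}$. First I would fix the bookkeeping: Alice holds $AA'$, Bob holds $BB'$; the modes $AB$ are sent to the measurement site while $A'B'$ are retained. Measurement-device-independence means $M_c$ is a quantum instrument $\{\mathcal{E}_{\vec{C}}\}_{\vec{C}}$ acting only on $AB$ (possibly together with Charlie's private ancilla), outputting a classical announcement $\vec{C}$ and a residual system $E$ held by the adversary, with $\sum_{\vec{C}}\mathcal{E}_{\vec{C}}$ trace preserving. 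The protocol then prescribes, for each value of $\vec{C}$, a CPTP map $\Lambda_{\vec{C}}$ applied to $A'B'$; this $\Lambda_{\vec{C}}$ absorbs the state measurement, key mapping, parameter estimation, basis sifting, and error correction / privacy amplification, all of which are classical post-processing once $\vec{C}$ is known.

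The second step is to write the composite map explicitly. After Charlie announces $\vec{C}$ and Alice and Bob respond with $\Lambda_{\vec{C}}$, the joint state of the final key registers together with all public communication and the adversary's system is
\begin{equation}
\rho^{P} \;=\; \sum_{\vec{C}} \ket{\vec{C}}\bra{\vec{C}} \otimes (\Lambda_{\vec{C}}\otimes \mathcal{E}_{\vec{C}})[\rho_{ABA'B'}],
\end{equation}
where $\Lambda_{\vec{C}}$ acts on $A'B'$ and $\mathcal{E}_{\vec{C}}$ on $AB$. The right-hand side manifestly depends only on $\rho_{ABA'B'}$, on the instrument $\{\mathcal{E}_{\vec{C}}\}$, and on the family $\{\Lambda_{\vec{C}}\}$ together with the assignment $\vec{C}\mapsto\Lambda_{\vec{C}}$ --- that is, on items~1 and~2 of the lemma plus the attack. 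Hence two schemes that agree on items~1 and~2 produce the \emph{same} $\rho^{P}$ for \emph{every} fixed choice of $M_c$, and since $M_c$ ranges over all admissible attacks, this holds against all attacks at once.

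The third step is to conclude security equivalence: every standard QKD figure of merit --- secrecy and correctness in the composable sense, or equivalently the extractable key length expressed through the smooth min-entropy of the key conditioned on the adversary's side information --- is a functional of $\rho^{P}$ alone, so equality of $\rho^{P}$ forces equality of the security parameters and of the achievable key rate. I expect the compositional argument above to be essentially routine; the part that genuinely requires care is the \emph{reduction} that precedes each application of the lemma, namely rewriting two superficially different protocols (differing, e.g., in their encoding or in the order of post-processing steps) into the common form of Fig.~\ref{fig:MDIMcMab} and verifying that the rewriting leaves both $\rho_{ABA'B'}$ and $\vec{C}\text{-}M_{ab}$ unchanged. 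Once that verification is done --- as it will be when comparing the mode-pairing scheme with a fixed-pairing scheme --- the lemma yields the rest immediately.
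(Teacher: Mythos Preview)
Your proposal is correct and follows essentially the same approach as the paper: both argue that the output state $\rho^P$ is a deterministic function of the initial state, the attack $M_c$, and the controlled key-generation map, so equality of items~1 and~2 forces equality of $\rho^P$ for every fixed attack. The paper's version is terser (it simply treats the gray dashed box as ``one gigantic operation on the same state''), while you spell out the instrument decomposition $\rho^P=\sum_{\vec C}\ket{\vec C}\bra{\vec C}\otimes(\Lambda_{\vec C}\otimes\mathcal{E}_{\vec C})[\rho_{ABA'B'}]$ and make explicit that security figures of merit are functionals of $\rho^P$ alone; the paper also adds the one-line observation that identical initial states on $AB$ imply identical spaces of admissible $M_c$, which you leave implicit.
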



Here, item~2 implies the dependence of Alice and Bob's key-generation operation $M_{ab}$ shown in Fig.~\ref{fig:MDIMcMab} are the same in the two schemes if Charlie's announcements $\vec{C}$ are the same.

\begin{proof}
Since Alice and Bob's initial states in both schemes are the same, the spaces of possible operations $M_c$ Charlie can perform on system $AB$ are the same. Then, we only need to compare the resultant private key states, $\rho^P$, when Charlie performs the same operations $M_c$ on the two schemes.

In Fig.~\ref{fig:MDIMcMab}, we can treat the whole circuit as one gigantic operation, as shown in the gray dashed box. If Charlie's operations $M_c$ are the same in the two schemes, the final private states are the same, because they both come from the same operation on the same state. In MDI-QKD, an eavesdropper's attack is reflected in $M_c$. Given any attack, the two schemes would render the same private key states. Therefore, the security of the two schemes are equivalent in security.
\end{proof}

\begin{corollary}[Equivalent MDI-QKD under post-selection] \label{coro:equivMDIpost}
Two MDI-QKD schemes, in the form of Fig.~\ref{fig:MDIMcMab}, are equivalent in security, if the following items are the same,
\begin{enumerate}
\item
Alice and Bob's initial states, including emitted quantum states of system $AB$ and ancillary states of system $A'B'$, after a post-selection procedure that is independent of Charlie's announcement $\vec{C}$;
\item
the (classical) control operation for key generation $\vec{C}$-$M_{ab}$.
\end{enumerate}
\end{corollary}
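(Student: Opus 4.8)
The plan is to derive Corollary~\ref{coro:equivMDIpost} directly from Lemma~\ref{lem:equivMDI} by absorbing the post-selection into the state-preparation stage. First I would model the post-selection as a binary-outcome local instrument $\Pi$ performed jointly by Alice and Bob on their initial state $\rho_{ABA'B'}$, producing a flag ``keep''/``discard''; by hypothesis the action and the outcome statistics of $\Pi$ do not depend on Charlie's announcement $\vec{C}$. The first step is to argue that, because of this independence, $\Pi$ may be commuted to the front of the circuit in Fig.~\ref{fig:MDIMcMab}, i.e.\ treated as part of preparation: when $\Pi$ acts only on the ancillary systems $A'B'$ this is immediate, since Charlie's black box $M_c$ acts on $AB$; when $\Pi$ involves the transmitted modes $AB$ (for instance a photon-number post-selection on a phase-randomised coherent state, or a choice of intensity and basis), the relevant measurement is diagonal in a basis left invariant by the phase-randomised source, so performing it at the source does not alter the state handed to Charlie and does not disturb $M_c$.

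Conditioned on the ``keep'' branch, Alice and Bob then hold a renormalised initial state $\rho^{\mathrm{keep}}_{ABA'B'} := \Pi_{\mathrm{keep}}(\rho_{ABA'B'})/p_{\mathrm{keep}}$, and by item~1 of the corollary this state is identical in the two schemes. What remains of each circuit is exactly the configuration of Lemma~\ref{lem:equivMDI}: an uncharacterised black box $M_c$ acting on $AB$, followed by the classical control operation $\vec{C}$-$M_{ab}$, which item~2 guarantees is the same in the two schemes. Applying Lemma~\ref{lem:equivMDI} to the post-selected schemes, for any attack $M_c$ the two conditional private states $\rho^{P,\mathrm{keep}}$ coincide.

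Finally I would lift this back to the unconditional statement: the ``discard'' branch contributes no key, and $p_{\mathrm{keep}}$ is the same in both schemes because both $\Pi$ and the pre-post-selection state are the same, so the full private states --- equivalently the extractable key lengths, and hence the security --- agree. I expect the main obstacle to be that first step: making precise the sense in which a post-selection that touches the transmitted systems $AB$ can be pulled through Charlie's untrusted measurement. The clean way to handle it is to observe that all post-selections actually used in the paper (photon-number post-selection realised via the source-replacement ancillas of Sec.~\ref{Sec:random}, together with the classical intensity and basis choices) are, after source replacement, measurements on systems that are never sent to Charlie; for these ``independent of $\vec{C}$'' is automatic and the commutation is trivial. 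I would therefore state and prove the corollary in that form, and remark that the general case follows by the same reordering argument whenever the post-selection measurement commutes with the structure of the phase-randomised source.
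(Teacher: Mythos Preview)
Your proposal is correct and follows the same route the paper takes: reduce to Lemma~\ref{lem:equivMDI} by commuting the $\vec{C}$-independent post-selection in front of Charlie's black box, so that the two schemes have identical initial states and identical $\vec{C}$-$M_{ab}$, whence Lemma~\ref{lem:equivMDI} applies. The paper's own argument is the single line ``The Corollary~\ref{coro:equivMDIpost} is a direct result of Lemma~\ref{lem:equivMDI}'', so you have supplied the details it leaves implicit; your observation that all post-selections actually used (via the source-replacement ancillas of Sec.~\ref{Sec:random}) live on systems never sent to Charlie is exactly the point, and it makes the commutation step immediate without needing the more delicate general case you flag.
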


The Corollary~\ref{coro:equivMDIpost} is a direct result of Lemma~\ref{lem:equivMDI}. Based on Corollary~\ref{coro:equivMDIpost}, we can introduce extra encoding redundancy in MDI-QKD. If the encoding state of the new MDI-QKD scheme is the same as the original one with proper post-selection independent of Charlie's announcement $\vec{C}$, then the security of the new scheme is equivalent to the original one.

The main procedure to prove the security of the mode-pairing scheme is to reduce the entanglement-based MP scheme to the traditional single-photon MDI-QKD scheme through a few source-replacement steps, as sketched out in Fig.~\ref{fig:sketch}. For the security proof, we work our way backward from the final stage to the first one. For the completeness of the analysis, we will also review the well-established security proof of the single-photon two-mode MDI-QKD and the coherent-state two-mode MDI-QKD schemes with the source-replacement language.


\begin{figure}[htbp]
\includegraphics[width=16cm]{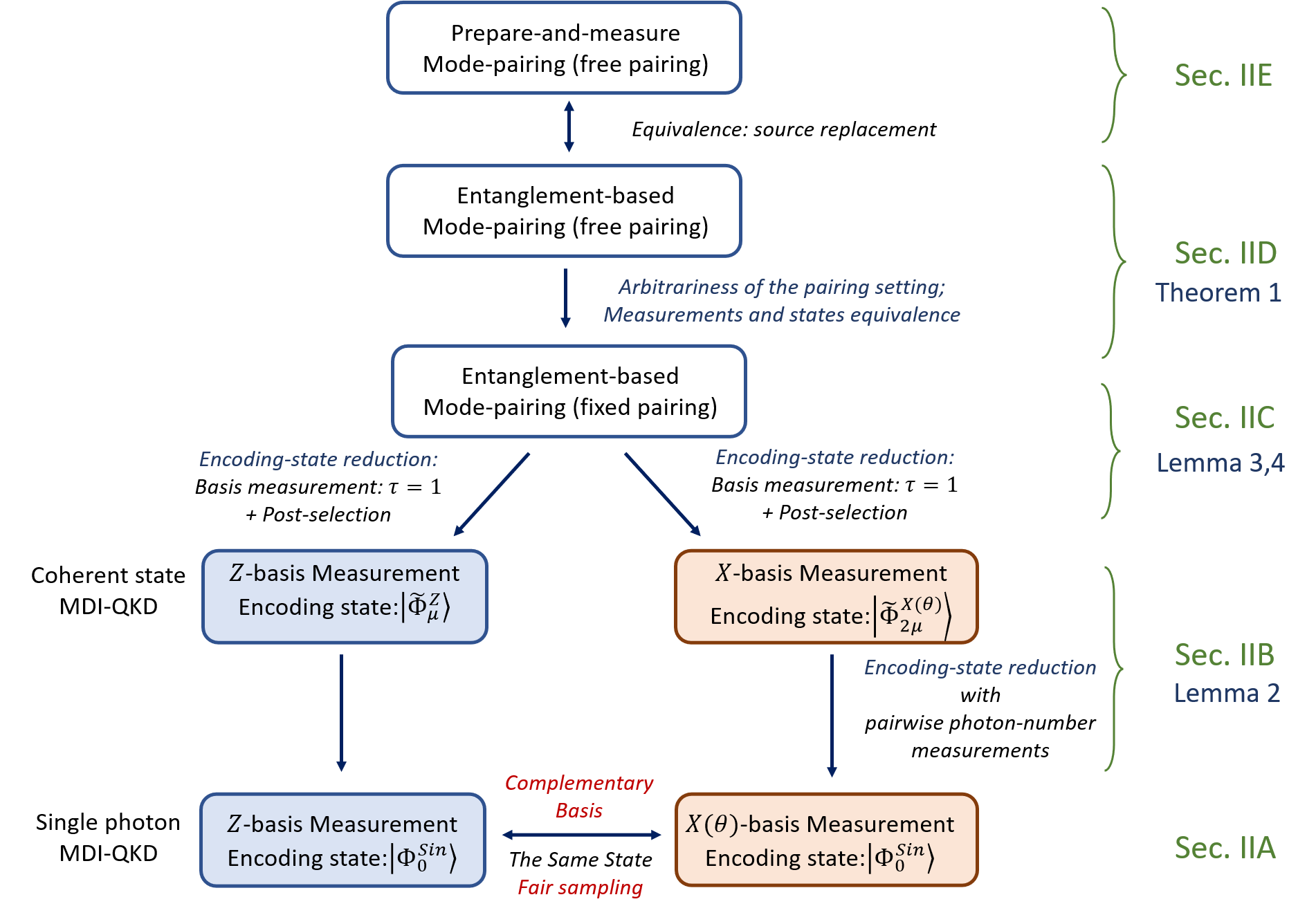}
\caption{Sketch of the security proof of the MP scheme. The main idea is to introduce source replacement and then reduce the security of the MP scheme to that of single-photon MDI-QKD, where the $X$-basis error rate can be used to fairly estimate the $Z$-basis phase-error rate.} \label{fig:sketch}
\end{figure}

In Sec.~\ref{ssc:secureTMMDI}, we review an single-photon two-mode MDI-QKD scheme whose security can be easily verified by the Lo-Chau security proof based on entanglement distillation \cite{lo1999Unconditional}.

In Sec.~\ref{ssc:secureTMMDIcoherent}, we review coherent-state two-mode MDI-QKD. Based on the methods in Sec.~\ref{Sec:random}, we replace the random phase systems with ancillary qudits. By introducing overall photon-number measurement, we can reduce the encoding state to the single-photon MDI-QKD case reviewed in Sec.~\ref{ssc:secureTMMDI}.

In Sec.~\ref{ssc:MPfixed}, we introduce a mode-pairing scheme with a fixed pairing setting, where the prepared states in different rounds are identical and independently distributed (i.i.d.). We then reduce the MP scheme to the two-mode MDI-QKD scheme. To do so, we perform global control gates on the ancillary qubits and measure them to assign the bases, denoted by $\tau$, and perform post-selection. We show that the encoding states of the MP scheme with proper post-selection will be the same as those of the two-mode MDI-QKD scheme.

In Sec.~\ref{ssc:MPfree}, we consider the MP scheme in which the pairing setting is not predetermined, but rather is determined by Charlie's announcements. We will show that when choosing a pairing setting based on Charlie's announcements, the free-pairing MP scheme is equivalent to the fixed-pairing MP scheme with the same pairing setting. The arbitrariness of the pairing setting gives us the freedom to choose pairing strategies, which can even be determined by Charlie.

Finally, in Sec.~\ref{ssc:PrepareandMeasure}, we reduce the entanglement-based scheme to the prepare-and-measure one in the main text via the Shor-Preskill argument \cite{shor2000Simple}.

\subsection{Single-photon two-mode MDI-QKD} \label{ssc:secureTMMDI}
We start with the case where Alice and Bob both hold single-photon sources. The diagram of a two-mode MDI-QKD scheme is shown in Fig.~\ref{fig:singleMDIdia} \cite{lo2012Measurement,Ma2012alternative}. Alice holds an ancillary qubit system $A'$ that interacts with two optical modes, $A_1$ and $A_2$. The single-photon subspace of the two modes forms a qubit. Bob's encoding and post-selection procedures are the same as those of Alice unless otherwise stated. The encoding process is shown in Fig.~\ref{fig:singleMDI}. Note that the encoding methods (a) and (b) in Fig.~\ref{fig:singleMDI} produce the same state $\rho_0$.

\begin{figure}[htbp]
\includegraphics[width=16cm]{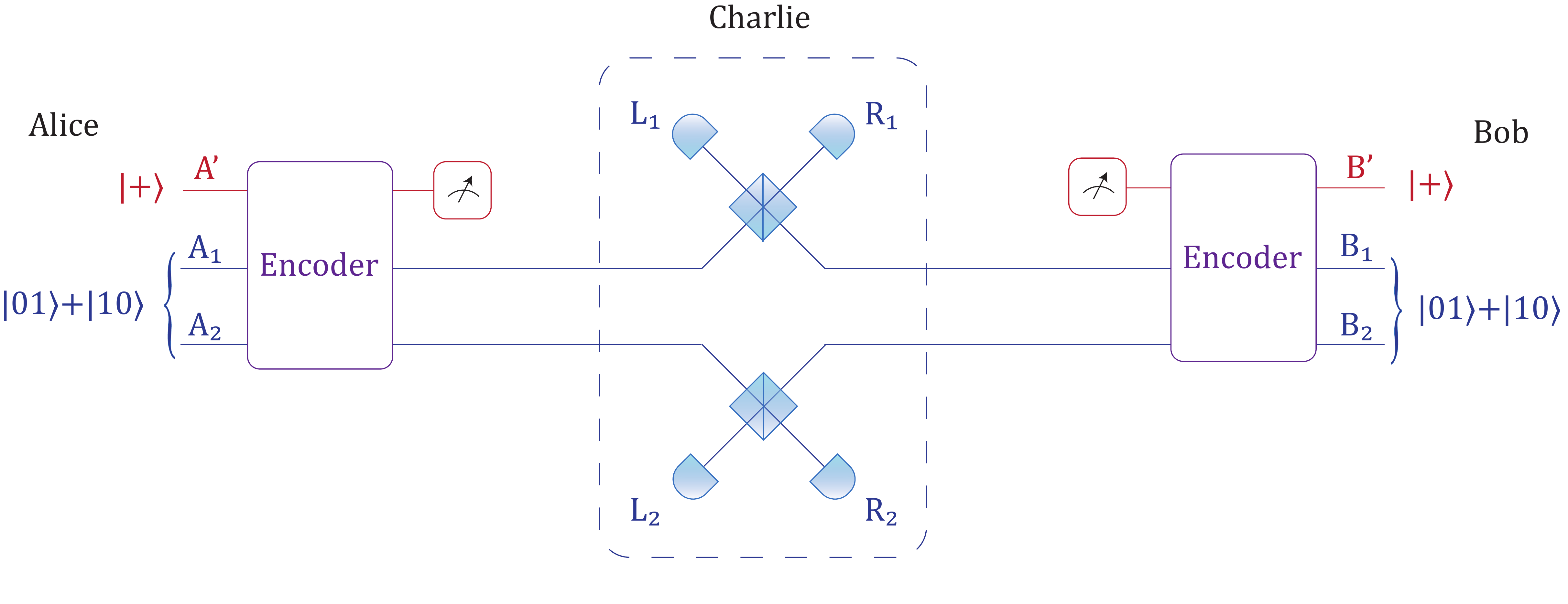}
\caption{Diagram of the entanglement version of the two-mode MDI-QKD scheme when Alice and Bob hold single photon sources. The red lines represent the ancillary qubits used by Alice and Bob to store key information and distill keys, while the blue lines represent the optical modes transmitted to Charlie.} \label{fig:singleMDIdia}
\end{figure}

\begin{figure}[htbp]
\includegraphics[width=14cm]{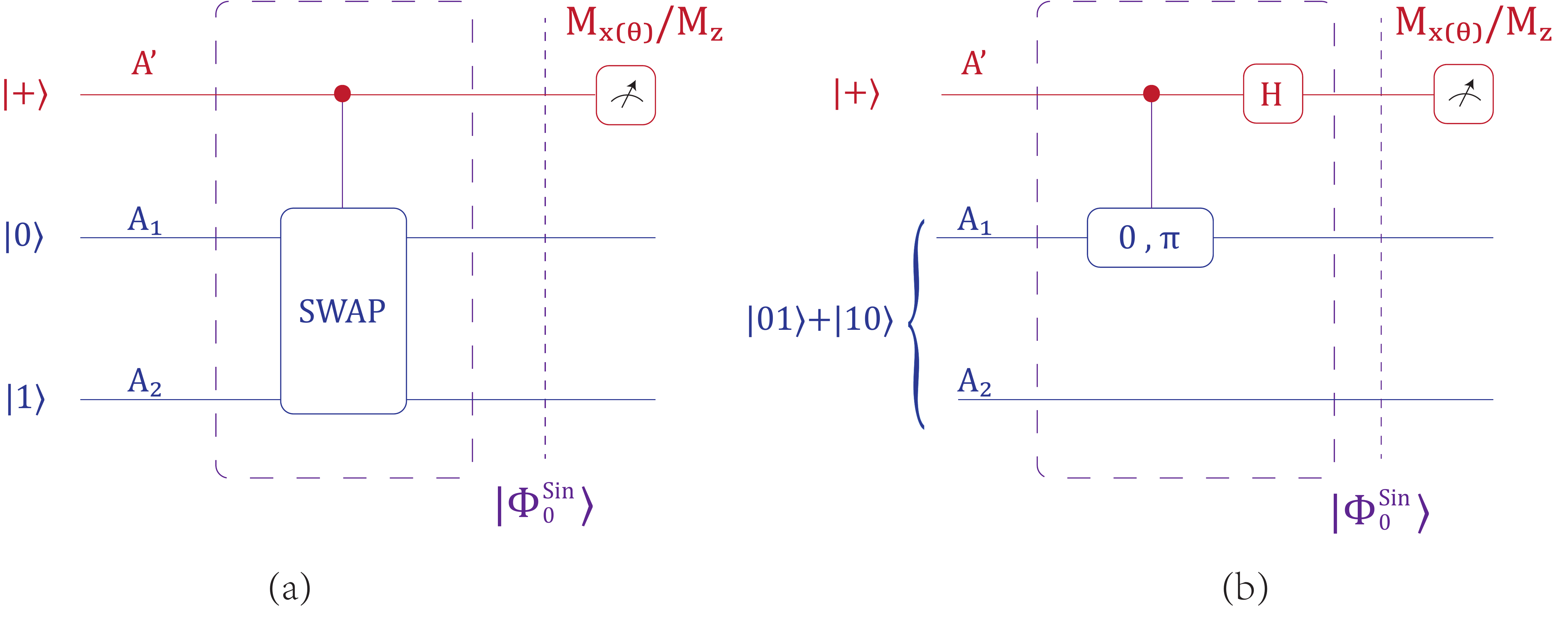}
\caption{Alice's encoding process in the entanglement version of the two-mode MDI-QKD scheme. (a) Alice uses an ancillary state $\ket{+}$ to perform a control-swap operation on two modes $A_1$ and $A_2$. (b) Alice first generates a single-photon superposition state $\frac{1}{\sqrt{2}}(\ket{01}+\ket{10})$ on two modes, $A_1$ and $A_2$; then, she uses an ancillary qubit $A'$ to control the phase of $A_1$. Note that the encoding processes described in (a) and (b) produce exactly the same state, $\ket{\Phi^{Sin}_0}$. Hereafter, in all the figures, the red line denotes the qubit ancillaries, while the blue line denotes the optical modes.} \label{fig:singleMDI}
\end{figure}

In the two-mode MDI-QKD scheme, Alice and Bob each emit encoded signals to a measurement device controlled by Charlie in the middle, who is supposed to correlate their emitted signals. Alice (Bob) uses a single photon on two orthogonal modes $A_1 (B_1)$ and $A_2 (B_2)$ as a qubit. The $Z$ basis of the qubit is naturally defined as
\begin{equation}
\begin{aligned}
\ket{\psi_z(0)} &= \ket{1}_{A_1}\ket{0}_{A_2}, \\
\ket{\psi_z(1)} &= \ket{0}_{A_1}\ket{1}_{A_2}.
\end{aligned}
\end{equation}
For simplification, we omit the tensor notation $\otimes$ between different modes unless any ambiguity occurs. The $\{X(\theta)\}_\theta$-basis is defined as
\begin{equation}
\begin{aligned}
\ket{+(\theta)} &= \frac{1}{\sqrt{2}}(\ket{\psi_z(0)} + e^{-i\theta}\ket{\psi_z(1)})=  \frac{1}{\sqrt{2}}(\ket{10} + e^{-i\theta}\ket{01})_{A_1,A_2}, \\
\ket{-(\theta)} &= \frac{1}{\sqrt{2}}(\ket{\psi_z(0)} - e^{-i\theta}\ket{\psi_z(1)})=  \frac{1}{\sqrt{2}}(\ket{10} - e^{-i\theta}\ket{01})_{A_1,A_2},
\end{aligned}
\end{equation}
where $\theta\in[0,\pi)$. When $\theta=0$ and $\pi/2$, these become the eigenstates of the $X$ and $Y$ bases, respectively. A state on the $X$-$Y$ plane can then be denoted by
\begin{equation} \label{eq:psithetaX}
\ket{\psi^\theta_x(\kappa)} = \frac{1}{\sqrt{2}}(\ket{\psi_z(0)} + e^{-i(\theta+\pi\kappa)}\ket{\psi_z(1)}),
\end{equation}
where $\theta$ denotes the basis and $\kappa$ denotes the sign of the state. The basis defined by $\{\ket{\psi^\theta_x(\kappa)}\}_{\kappa=0,1}$ is called the $X(\theta)$-basis.

In Fig.~\ref{fig:singleMDI}, the generated encoding state is
\begin{equation} \label{eq:Phisin0}
\ket{\Phi^{Sin}_0} = \frac{1}{\sqrt{2}}(\ket{0}\ket{01} + \ket{1}\ket{10})_{A';A_1,A_2},
\end{equation}
where the superscript $Sin$ indicates that the state is of a single photon. If we regard the single-photon as a qubit, then $\ket{\Phi^{Sin}_0}$ is the Bell state.

For the qubit system $A'$, we can similarly define the $X(\theta)$-basis using Eq.~\eqref{eq:psithetaX}. To realise an $X(\theta)$-basis measurement on $A'$, one first performs a $Z$-axis rotation $R_Z(\theta)$ on $A'$ before performing $X$-basis measurement,
\begin{equation} \label{eq:PhisinTheta}
\ket{\Phi^{Sin}_\theta} := R_Z(\theta)\ket{\Phi^{Sin}_0} = \frac{1}{\sqrt{2}}(\ket{0}\ket{01} + e^{i\theta}\ket{1}\ket{10})_{A';A_1,A_2}.
\end{equation}
Therefore, as shown in Fig.~\ref{fig:singleMDIreduce}(c), an $X(\theta)$-basis measurement on $A'$ can be equivalently realised by first modulating the phase of $A_1$ by $\theta$ and then measuring $A'$ on the $X$-basis. We also remark that a $Z$-basis measurement on $\ket{\Phi^{Sin}_\theta}$ is equivalent to one on $\ket{\Phi^{Sin}_0}$.

\begin{figure}[htbp!]
\includegraphics[width=12cm]{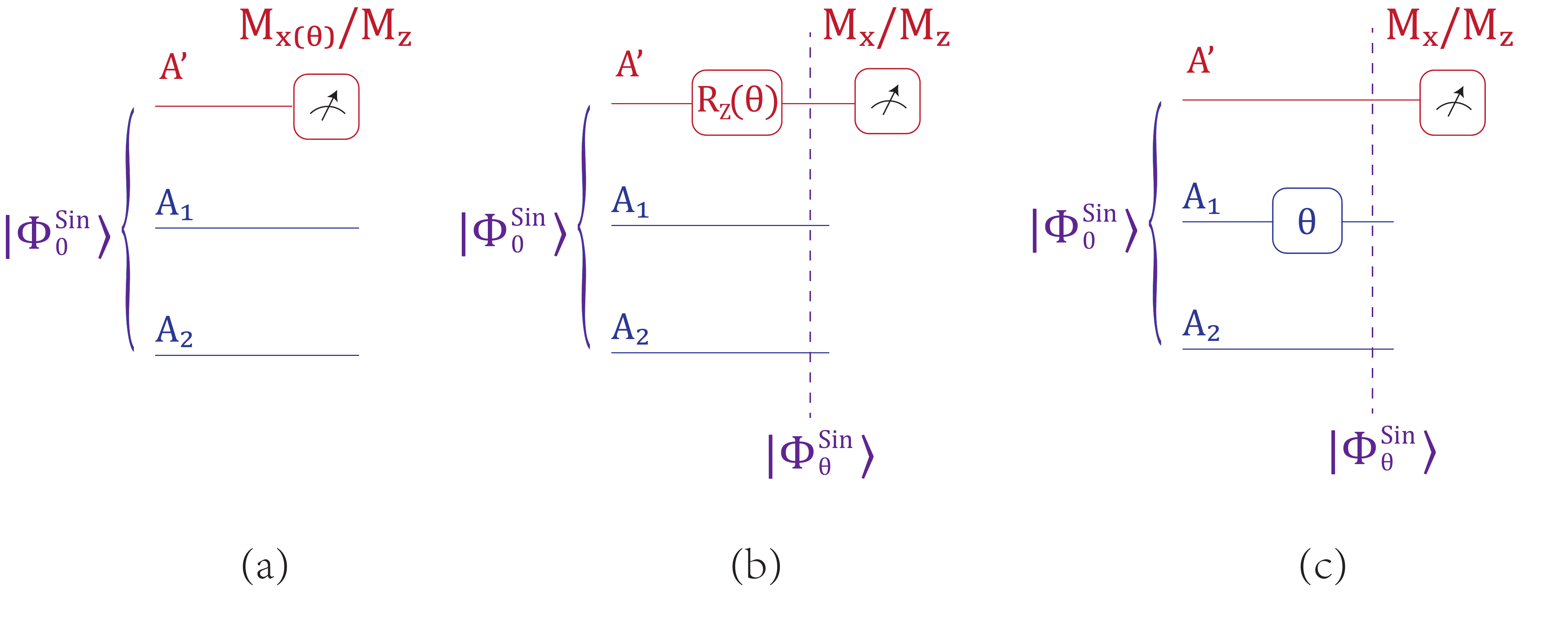}
\caption{Equivalent realisation of the encoding process and measurement in Fig.~\ref{fig:singleMDI}. The $Z$-axis rotation on qubit $A'$ can be moved to a phase modulation on mode $A_1$. The resultant state $\ket{\Phi^{Sin}_\theta}$ remains invariant. Here, the red gate $R_Z(\theta)$ denotes a $Z$-axis rotation by an angle $\theta$, while the blue gate $\theta$ denotes phase modulation on mode $A_1$.} \label{fig:singleMDIreduce}
\end{figure}

Hereafter, we establish the $Z$ basis as for key generation and the $X(\theta)$ basis  only for parameter estimation. The single-photon two-mode MDI-QKD with a fixed $\theta$ runs as shown in Box~\ref{box:singleMDIQKD}.

\begin{Boxes}{Single-photon two-mode MDI-QKD}{singleMDIQKD}

\begin{enumerate}
\item 
State preparation: Alice prepares an entangled state $\ket{\Phi^{Sin}_0}$ defined in Eq.~\eqref{eq:Phisin0} with devices (a) or (b) described in Fig.~\ref{fig:singleMDI}, containing a qubit system $A'$ and two optical modes, $A_1$ and $A_2$. Similarly, Bob prepares $\rho_0$ on $B'$, $B_1$ and $B_2$.

\item 
Measurement: Alice and Bob send their optical modes $A_1$, $A_2$, $B_1$ and $B_2$ to an untrusted party, Charlie, who is supposed to perform coincident interference measurement, as shown in Fig.~\ref{fig:singleMDIdia}.

\item 
Announcement: Charlie announces the $L_1$, $R_1$, $L_2$ and $R_2$ detection results. If one of $L_1$ and $R_1$ clicks and one of $L_2$ and $R_2$ clicks, then Alice and Bob keep their signals. If it is $(L_1,R_2)$-click or $(L_2,R_1)$-click, then Bob applies $Z$ gate on his qubit $B'$.


Alice and Bob perform the above steps over many rounds and end up with a joint $2n$-qubit state $\rho_{A' B'}\in(\mathcal{H}_{A'}\otimes\mathcal{H}_{B'})^{\otimes n}$. 

\item 
Parameter estimation: Alice decides at random whether to perform measurements in the $Z$ or $X(\theta)$ basis and announces her basis choice to Bob. They then measure $A'$, $B'$ in the same basis. They announce the $X(\theta)$-basis measurement results and estimate the (phase) error rate. The $Z$-basis measurement results on $\rho_{A' B'}$ are denoted by the raw-data string $\kappa^a$ and $\kappa^b$.

\item 
Classical post-processing: Alice and Bob reconcile the key string to $\kappa^a$ via an classical channel by consuming $l_{ec}$ key bits. They then perform privacy amplification using universal-2 hashing matrices. The sizes of the hashing matrices are determined by the estimated phase-error rate of $\kappa^a$ from the $X(\theta)$-basis error rate.
\end{enumerate}

\end{Boxes}


The security of this single-photon MDI-QKD can be reduced to that of the BBM92 scheme \cite{bbm1992quantum}. Following the security proof based on complementarity \cite{lo1999Unconditional,shor2000Simple,koashi2009simple}, we must estimate the phase-error rate, i.e., the information disturbance in the complementary basis of the $Z$ basis for key generation. Since the single-photon source is basis-independent, one can fairly estimate the phase-error rate of the $Z$ basis using the $X(\theta)$ basis.

In the classical post-processing, the information reconciliation is conducted by an encrypted classical channel with the consummation of a preshared $l_{ec}$-bit key. This is for the convenience of the description of the security analysis. In practice, one can apply different ways of one-way information reconciliation without encryption. The key rate of the single-photon MDI-QKD under the one-way classical communication will be the same.

\subsection{Coherent-state two-mode MDI-QKD} \label{ssc:secureTMMDIcoherent}
In practice, the users can replace single-photon sources with weak coherent-state sources. The security of coherent-state MDI-QKD schemes have already been well studied in previous MDI-QKD works \cite{lo2012Measurement,Ma2012alternative}, where a photon-number-channel model is assumed. That is, the untrusted Charlie first measures the global photon number $k^a$ and $k^b$ on Alice's and Bob's emitted optical modes, respectively. Based on the measurement outcomes, Charlie then decides the follow-up measurement and announcement strategy.

Here, we prove the security of coherent-state two-mode MDI-QKD from a new perspective where the random phases are purified and stored locally, as mentioned in Sec.~\ref{ssc:two_random}. In this way, Alice and Bob can decide and perform the photon-number measurement on the optical modes after emitting the signals to Charlie.

In the original coherent-state two-mode MDI-QKD scheme, Alice first generates two random phases, $\phi^a_1$ and $\phi^a_2$, which are independently and uniformly chosen from $[0,2\pi)$, and then performs the encoding process shown in Fig.~\ref{fig:coherentMDI}. The $Z$-basis encoding state is
\begin{equation} \label{eq:PhiZ0}
\ket{\Phi^Z_\mu(\phi^a_1,\phi^a_2)} = \frac{1}{\sqrt{2}} \left(\ket{0}\ket{0}\ket{\sqrt{\mu}e^{i\phi^{a}_2} } + \ket{1}\ket{\sqrt{\mu}e^{i\phi^{a}_1} }\ket{0} \right)_{A';A_1;A_2}.
\end{equation}
We use the notation $\theta^a:= \phi^a_1 - \phi^a_2$. The state can also be written as
\begin{equation}
\ket{\Phi^Z_\mu(\theta^a,\phi^a_2)} = \frac{1}{\sqrt{2}} \left(\ket{0}\ket{0}\ket{\sqrt{\mu}e^{i\phi^{a}_2} } + \ket{1}\ket{\sqrt{\mu}e^{i(\phi^{a}_2+\theta^a)} }\ket{0} \right)_{A';A_1;A_2}.
\end{equation}
If we randomise the phase $\phi^a_2$ uniformly in $[0,2\pi)$ and keep $\theta^a$ fixed, then the single-photon part of the resultant state is $\ket{\Phi^{Sin}_{\theta^a}}$ in Eq.~\eqref{eq:PhisinTheta},
\begin{equation}
\ket{\Phi^{Sin}_{\theta^a}} = R_Z(\theta^a)\ket{\Phi^{Sin}_0} = \frac{1}{\sqrt{2}}(\ket{0}\ket{01} + e^{i\theta^a}\ket{1}\ket{10})_{A';A_1,A_2}.
\end{equation}
In what follows, the coherent state with complex amplitude $\sqrt{\mu}e^{i\phi}$ will always be written in this form $\ket{\sqrt{\mu}e^{i\phi}}$ to avoid the ambiguity to the Fock states $\ket{0},\ket{1},...,\ket{k}$.

\begin{figure}[htbp]
\centering \includegraphics[width=17cm]{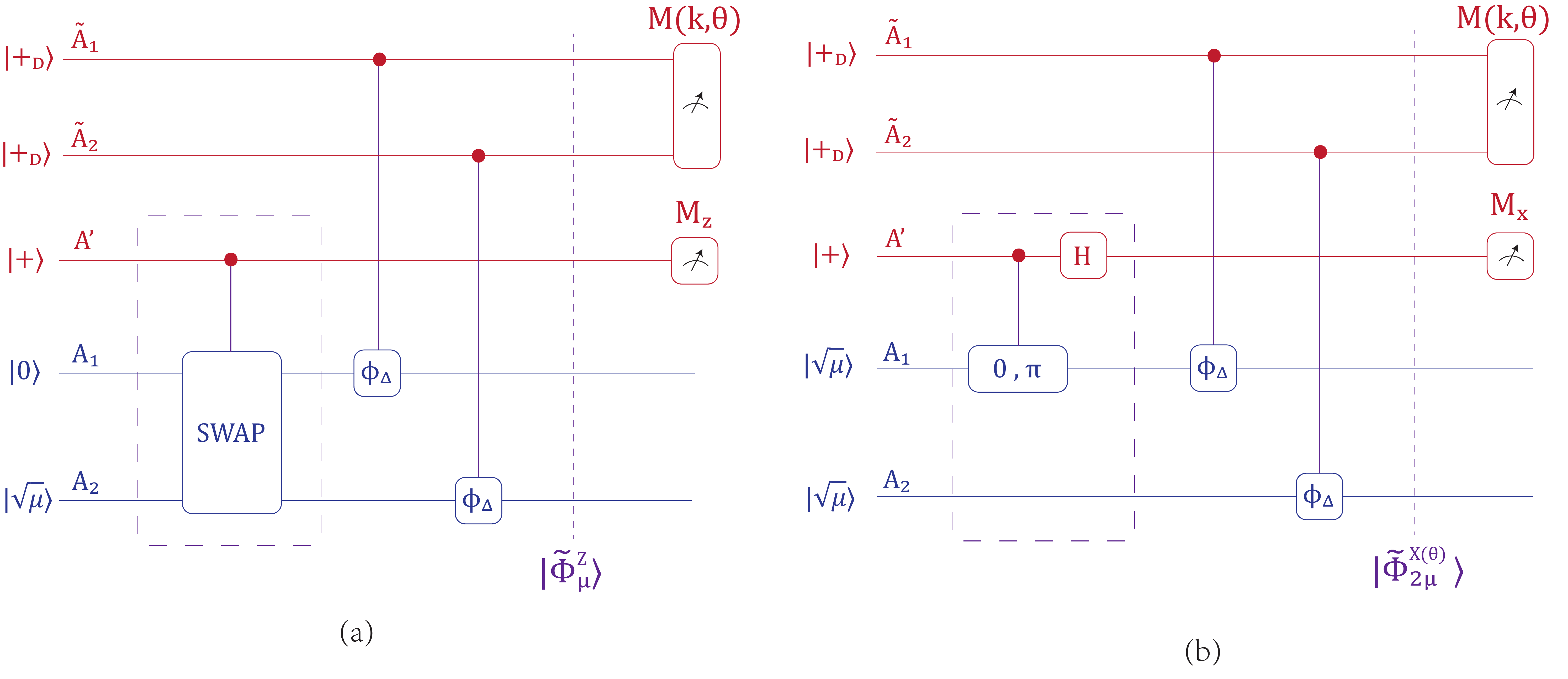}
\caption{Alice's encoding in two-mode MDI-QKD with A coherent-state source of intensity $\mu$. Alice generates the $Z$-basis state $\ket{\tilde{\Phi}^Z_\mu}$ or $X(\theta)$-basis encoding state $\ket{\tilde{\Phi}^{X(\theta^a)}_0}$ with two qudit systems $\tilde{A}_1$ and $\tilde{A}_2$ to store the random phase information $\phi^a_1$ and $\phi^a_2$, respecively. The qubit system $A'$ is used for the final key generation. Here, $\theta^a:= \phi^a_1 - \phi^a_2$. (a) and (b) present the $Z$- and $X(\theta)$-basis encoding processes, respectively. The state $\ket{+_D}:=\frac{1}{\sqrt{D}}\sum_{m=0}^{D-1}\ket{m}$. When Alice performs the global measurements $M(k,\theta)$ defined in Eq.~\eqref{eq:Mktheta} on the systems $\tilde{A}_1$ and $\tilde{A}_2$ and the photon number $k=1$ and relative phase is $\theta^a$, the post-selected state on the systems $A'$, $A_1$ and $A_2$ is approximately the single-photon-encoding state $\ket{\Phi^{Sin}_{\theta^a}}$ for both $Z$-basis and $X(\theta)$-basis case. In this case, the encoding processes in (a) and (b) will be reduced to those in (a) and (b) of Fig.~\ref{fig:singleMDI}.
} \label{fig:coherentMDI}
\end{figure}

Recall that the $Z$-basis measurement result on the state $\ket{\Phi^{Sin}_{\theta^a}}$ is independent of the relative phase $\theta^a$. For any fixed $\theta^a$, an $X(\theta)$-basis measurement on $\ket{\Phi^Z_\mu(\theta^a,\phi^a_2)}$ is equivalent to an $X(\theta+\theta^a)$-basis measurement on $\ket{\Phi^Z_\mu(0,\phi^a_2)}$. Hereafter, we do not discriminate between encoded states with different $\theta^a$ for the $Z$ basis.


Now, we further perform the source replacement for the encoded random phases $\phi^a_1$ and $\phi^a_2$, following the methods used in Sec.~\ref{Sec:random}. To do this, we first assume the random phases $\phi^a_1$ and $\phi^a_2$ are \emph{discretely and uniformly randomised} from the set $\{\frac{2\pi}{D}j\}_{j\in[D]}$. Here $D$ is the number of discrete phases. Note that picking up a $D\geq 10$ would make the discrete phase randomization very close to the continuous one \cite{Cao2015discrete}. Then, we introduce two ancillary qudit systems $\tilde{A}_1$ and $\tilde{A}_2$ with $d=D$ to store the random phase information, as shown in Fig.~\ref{fig:twoPurified}(b).

The whole $Z$-basis encoding state with the purified random phase system is
\begin{equation} \label{eq:PhiZ0purified}
\ket{\tilde{\Phi}^Z_\mu} = \frac{1}{\sqrt{2}D} \sum_{j_1=0}^{D-1} \sum_{j_2=0}^{D-1} \ket{j_1}_{\tilde{A}_1}\ket{j_2}_{\tilde{A}_2} \left(\ket{0}\ket{0}\ket{\sqrt{\mu}e^{i\phi_{j_2}} } + \ket{1}\ket{\sqrt{\mu}e^{i\phi_{j_1}} }\ket{0} \right)_{A',A_1A_2},
\end{equation}
where $\phi_{j_1}:= \frac{2\pi}{D}j_1$ and $\phi_{j_2}:= \frac{2\pi}{D}j_2$ are the two random phases with indices $j_1$ and $j_2$, respectively.

The $X(\theta)$-basis encoding state is
\begin{equation} \label{eq:PhiXtheta0}
\ket{\Phi^{X(\theta)}_{2\mu}(\phi^a_1, \phi^a_2)} = \frac{1}{\sqrt{2}} \left(\ket{0}\ket{\sqrt{\mu}e^{i(\phi^{a}_1)}}\ket{\sqrt{\mu}e^{i\phi^{a}_2} } + \ket{1}\ket{\sqrt{\mu}e^{i(\phi^{a}_1 + \pi )  } }\ket{\sqrt{\mu}e^{i\phi^{a}_2} } \right)_{A';A_1A_2}.
\end{equation}

Similar to the $Z$-basis case, we introduce purified systems $\tilde{A}_1$ and $\tilde{A}_2$ to register the random phase information,
\begin{equation} \label{eq:PhiXtheta0purified}
\begin{aligned}
\ket{\tilde{\Phi}^{X(\theta)}_{2\mu}} = \frac{1}{\sqrt{2}D} \sum_{j_1=0}^{D-1} \sum_{j_2=0}^{D-1} \ket{j_1}_{\tilde{A}_1}\ket{j_2}_{\tilde{A}_2} \left(\ket{+}\ket{\sqrt{\mu}e^{i\phi_{j_1}} } \ket{\sqrt{\mu}e^{i\phi_{j_2}} } + \ket{-} \ket{\sqrt{\mu}e^{i(\phi_{j_1} + \pi ) } } \ket{\sqrt{\mu}e^{i\phi_{j_2}} } \right)_{A';A_1A_2}.
\end{aligned}
\end{equation}

To reduce the $Z$-basis and $X(\theta)$-basis encoding state $\ket{\tilde{\Phi}^Z_\mu}$ and $\ket{\tilde{\Phi}^{X(\theta)}_{2\mu}}$ to the single-photon encoding state $\ket{\Phi^{Sin}_\theta}$ in Eq.~\eqref{eq:PhisinTheta}, we have the following lemma.

\begin{lemma}[Encoding-state reduction from coherent-state to single-photon two-mode MDI-QKD]\label{lem:Coh2Sin}
In coherent-state two-mode MDI-QKD,  Alice generates the state $\ket{\tilde{\Phi}^Z_\mu}$ defined in Eq.~\eqref{eq:PhiZ0purified} or $\ket{\tilde{\Phi}^{X(\theta)}_{2\mu}}$ defined in Eq.~\eqref{eq:PhiXtheta0purified} for $Z$-basis and $X(\theta)$-basis encoding, respectively. She then performs the global measurement $M(k,\theta)$ defined in Eq.~\eqref{eq:Mktheta} on the local ancillary qudits $\tilde{A}_1$ and $\tilde{A}_2$. When the discrete phase number $D\to \infty$, we have,
\begin{enumerate}
\item
(Poisson photon-number distribution) For the $Z$-basis state $\ket{\tilde{\Phi}^Z_\mu}$ with overall intensity $\mu$, the probability of the photon-number measurement result $k$ is $\Pr(k) = e^{-\mu} \frac{\mu^k}{k!}$; for the $X(\theta)$-basis state $\ket{\tilde{\Phi}^{X(\theta)}_{2\mu}}$ with overall intensity $2\mu$, the probability of the photon-number measurement result $k$ is $\Pr(k) = e^{-2\mu} \frac{(2\mu)^k}{k!}$;
	\item (Independence of the photon-number states to the intensity) The resultant state after the pairwise measurement $M(k,\theta)$ is independent of the intensity value $\mu$.
	\item (Basis-independence of the single-photon state) If the measurement result on $\ket{\tilde{\Phi}^Z_\mu}$ or $\ket{\tilde{\Phi}^{X(\theta)}_{2\mu}}$ is $k=1$ and $\theta=\frac{2\pi}{D}j_\theta$, then the conditional state on $A'$, $A_1$ and $A_2$ will be reduced to $\ket{\Phi^{Sin}_{\theta}}$ defined in Eq.~\eqref{eq:PhisinTheta}.
\end{enumerate}
\end{lemma}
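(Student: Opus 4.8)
The plan is to write each purified encoding state in the eigenbasis of the global measurement $M(k,\theta)$ of Eq.~\eqref{eq:Mktheta}, exactly as the bare two-mode state $\ket{\psi_{1,2}}$ was rewritten in Eq.~\eqref{eq:psiA12random2}, and then read off the outcome statistics and the post-measurement states. Beyond Sec.~\ref{ssc:two_random} the only inputs needed are the limits used there: as $D\to\infty$ the weight $P_k^{\mu}$ of Eq.~\eqref{eq:PseudoPoissson} becomes the Poisson weight $e^{-\mu}\mu^{k}/k!$, the pseudo-Fock state $\ket{\lambda^{\mu}(k)}$ of Eq.~\eqref{eq:pseudoFock} becomes the Fock state $\ket{k}$ (only the $m=0$ term survives, independently of $\mu$), and correspondingly $\ket{\lambda^{2\mu}_{1,2}(k),j_\theta}$ becomes the $\mu$-independent state of Eq.~\eqref{eq:lambda12kDInf}. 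All the exact-$D$ parts of the statement hold before any limit is taken.

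For the $Z$-basis state $\ket{\tilde{\Phi}^Z_\mu}$ of Eq.~\eqref{eq:PhiZ0purified} I would substitute $j_1=j_2+j_\theta$ on $\mbb{Z}_D$ as in Eq.~\eqref{eq:psiA12random}, apply the partial Fourier transform of Eq.~\eqref{eq:twoFT} to $\ket{j_2+j_\theta}_{\tilde{A}_1}\ket{j_2}_{\tilde{A}_2}$, and then collapse the $j_2$-sum in each of the two $A'$-branches with Eq.~\eqref{eq:pseudoFock}; in the $\ket{1}_{A'}$ branch one first shifts the summation index back by $j_\theta$, which attaches a factor $e^{i\frac{2\pi}{D}j_\theta k}$. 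This gives
\begin{equation}
\ket{\tilde{\Phi}^Z_\mu} = \sum_{j_\theta=0}^{D-1}\frac{1}{\sqrt{D}}\sum_{k=0}^{D-1}\sqrt{P_k^{\mu}}\,\ket{\tilde{k},j_\theta}_{\tilde{A}_1\tilde{A}_2}\,\ket{\lambda^{\mu}_Z(k),j_\theta}_{A',A_1,A_2},
\end{equation}
with the normalised conditional state $\ket{\lambda^{\mu}_Z(k),j_\theta}:=\frac{1}{\sqrt{2}}\bigl(\ket{0}_{A'}\ket{0}_{A_1}\ket{\lambda^{\mu}(k)}_{A_2}+e^{i\frac{2\pi}{D}j_\theta k}\ket{1}_{A'}\ket{\lambda^{\mu}(k)}_{A_1}\ket{0}_{A_2}\bigr)$. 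Summing over $j_\theta$ yields $\Pr(k)=P_k^{\mu}\to e^{-\mu}\mu^{k}/k!$ (item 1); since $\ket{\lambda^{\mu}(k)}\to\ket{k}$ the conditional state is intensity-independent in the limit (item 2); and at $k=1$ it becomes $\frac{1}{\sqrt{2}}(\ket{0}\ket{01}+e^{i\theta}\ket{1}\ket{10})_{A';A_1A_2}=\ket{\Phi^{Sin}_\theta}$ with $\theta=\frac{2\pi}{D}j_\theta=\theta^a$ (item 3).

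For the $X(\theta)$-basis state I would not redo the computation but observe that $\ket{\tilde{\Phi}^{X(\theta)}_{2\mu}}$ of Eq.~\eqref{eq:PhiXtheta0purified} is obtained from $\ket{+}_{A'}\otimes\ket{\psi_{1,2}}_{\tilde{A}_1\tilde{A}_2,A_1A_2}$ by first applying the controlled phase gate $\ket{0}\bra{0}_{A'}\otimes I+\ket{1}\bra{1}_{A'}\otimes\hat{U}_{A_1}(\pi)$ and then the Hadamard $\hat{H}_{A'}$, both of which act only on $A'$ and $A_1$ and therefore commute with $M(k,\theta)$. Hence the $M(k,\theta)$ statistics coincide with those of $\ket{\psi_{1,2}}$ in Eq.~\eqref{eq:psiA12random2}, giving $\Pr(k)=P_k^{2\mu}\to e^{-2\mu}(2\mu)^{k}/k!$ (item 1), and the conditional state is the image of $\ket{+}_{A'}\otimes\ket{\lambda^{2\mu}_{1,2}(k),j_\theta}_{A_1A_2}$ under those two gates, which is intensity-independent as $D\to\infty$ by Eq.~\eqref{eq:lambda12kDInf} (item 2). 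At $k=1$ one substitutes $\ket{\lambda_{1,2}(1),j_\theta}=\frac{1}{\sqrt{2}}(\ket{01}+e^{i\theta}\ket{10})_{A_1A_2}$ and carries out the short gate computation ($\hat{U}_{A_1}(\pi)$ flips the sign of the $\ket{10}$ component only, then $\hat{H}_{A'}$ recombines the branches) to recover $\ket{\Phi^{Sin}_\theta}$ again (item 3).

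The work is essentially bookkeeping: the substantive machinery, i.e.\ the partial Fourier transform, the identity~\eqref{eq:psiA12random2} and the large-$D$ limit~\eqref{eq:lambda12kDInf}, is already in place in Sec.~\ref{ssc:two_random}, so the lemma follows from one direct substitution ($Z$-basis) plus one observation that the $X(\theta)$-basis state is a local $A'A_1$ dressing of $\ket{+}_{A'}\otimes\ket{\psi_{1,2}}$. The only places demanding a little care are keeping the modular shift of the summation index consistent so that the phase $e^{i\frac{2\pi}{D}j_\theta k}$ lands on the correct branch, and stating cleanly that $D\to\infty$ is invoked solely to pass from $(P_k^{\mu},\ket{\lambda^{\mu}(k)})$ to the Poisson weights and Fock states; I do not anticipate a genuine obstacle.
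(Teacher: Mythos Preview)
Your proposal is correct and essentially matches the paper's proof. Both proceed by rewriting the purified encoding states in the $\{\ket{\tilde{k},j_\theta}\}$ basis via the partial Fourier transform of Eq.~\eqref{eq:twoFT}, identifying the conditional states, and invoking the $D\to\infty$ limit of Eq.~\eqref{eq:pseudoFock}. Your $Z$-basis conditional state is exactly the paper's $\ket{\lambda^\mu_Z(k),j_\theta}$ written out explicitly (the paper packages it as $\hat{U}_{A_1}(\frac{2\pi}{D}j_\theta)\circ\hat{C\text{-}SWAP}$ acting on $\ket{+}_{A'}\ket{0}_{A_1}\ket{\lambda^\mu(k)}_{A_2}$, which evaluates to your expression since $\hat{U}_{A_1}(\frac{2\pi}{D}j_\theta)\ket{\lambda^\mu(k)}=e^{i\frac{2\pi}{D}j_\theta k}\ket{\lambda^\mu(k)}$).

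For the $X(\theta)$-basis you take a marginally cleaner route: the paper redoes the full $j_2$-summation and only at an intermediate step recognises the $\hat{H}_{A'}\hat{C\text{-}U(\pi)}_{A',A_1}$ structure, whereas you factor this out from the start as a local dressing of $\ket{+}_{A'}\otimes\ket{\psi_{1,2}}$ that commutes with $M(k,\theta)$, letting you quote Eq.~\eqref{eq:psiA12random2} directly. This buys you a shorter argument but is the same idea in substance.
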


\begin{proof}
In Section~\ref{ssc:two_random}, we introduce a global basis transformation on the ancillary systems $\tilde{A}_1$ and $\tilde{A}_2$.
Assuming all the phases are discretely chosen from the set $\{\frac{2\pi}{D}j\}_{j\in[D]}$, we can express a purified encoded state Eq.~\eqref{eq:PhiZ0purified} and transform the basis on it,
\begin{equation} \label{eq:PhiZ0purified2}
\begin{aligned}
\ket{\tilde{\Phi}^Z_\mu} &= \frac{1}{\sqrt{2}D} \sum_{j_1=0}^{D-1} \sum_{j_2=0}^{D-1} \ket{j_1}_{\tilde{A}_1}\ket{j_2}_{\tilde{A}_2} \left(\ket{0}\ket{0}\ket{\sqrt{\mu}e^{i\phi_{j_2}} } + \ket{1}\ket{\sqrt{\mu}e^{i\phi_{j_1}} }\ket{0} \right)_{A',A_1A_2} \\
&= \frac{1}{\sqrt{2}D} \sum_{j_2=0}^{D-1} \sum_{j_\theta=0}^{D-1} \ket{j_2+j_\theta}_{\tilde{A}_1}\ket{j_2}_{\tilde{A}_2} \left(\ket{0}\ket{0}\ket{\sqrt{\mu}e^{i\phi_{j_2}} } + \ket{1}\ket{\sqrt{\mu}e^{i\phi_{j_2+j_\theta}} }\ket{0} \right)_{A',A_1A_2} \\
&= \frac{1}{\sqrt{2D}D} \sum_{j_2=0}^{D-1} \sum_{j_\theta=0}^{D-1} \sum_{k=0}^{D-1} e^{-i \frac{2\pi}{D}j_2 k } \ket{\tilde{k},j_\theta}_{\tilde{A}_1,\tilde{A}_2} \left(\ket{0}\ket{0}\ket{\sqrt{\mu}e^{i\phi_{j_2}} } + \ket{1}\ket{\sqrt{\mu}e^{i\phi_{j_2+j_\theta}} }\ket{0} \right)_{A',A_1A_2} \\
&= \frac{1}{\sqrt{2D}D} \sum_{j_\theta=0}^{D-1} \sum_{k=0}^{D-1} \ket{\tilde{k},j_\theta}_{\tilde{A}_1,\tilde{A}_2} \sum_{j_2=0}^{D-1} \left[ e^{-i \frac{2\pi}{D}j_2 k } \left(\ket{0}\ket{0}\ket{\sqrt{\mu}e^{i\phi_{j_2}} } + \ket{1}\ket{\sqrt{\mu}e^{i\phi_{j_2+j_\theta}} }\ket{0} \right)_{A',A_1A_2} \right] \\
&= \frac{1}{\sqrt{D}D} \sum_{j_\theta=0}^{D-1} \sum_{k=0}^{D-1} \ket{\tilde{k},j_\theta}_{\tilde{A}_1,\tilde{A}_2} \hat{U}_{A_1}(\frac{2\pi}{D}j_\theta) \circ \hat{C\text{-}SWAP}  \left( \sum_{j_2=0}^{D-1}  e^{-i \frac{2\pi}{D}j_2 k } \ket{+}_{A'} \ket{0}_{A_1} \ket{ \sqrt{\mu}e^{i\phi_{j_2}} }_{A_2} \right) \\
&= \sum_{j_\theta=0}^{D-1} \frac{1}{\sqrt{D}} \sum_{k=0}^{D-1} \sqrt{P_k^\mu} \ket{\tilde{k},j_\theta}_{\tilde{A}_1,\tilde{A}_2} \hat{U}_{A_1}(\frac{2\pi}{D}j_\theta) \circ \hat{C\text{-}SWAP} \left( \ket{+}_{A'} \ket{0}_{A_1} \ket{\lambda^\mu(k)}_{A_2} \right) \\
&= \sum_{j_\theta=0}^{D-1} \frac{1}{\sqrt{D}} \sum_{k=0}^{D-1} \sqrt{P_k^\mu} \ket{\tilde{k},j_\theta}_{\tilde{A}_1,\tilde{A}_2} \ket{\lambda^\mu_Z(k), j_\theta }_{A',A_1A_2}.
\end{aligned}
\end{equation}
Here, in the third equality, we introduce a partial Fourier transform defined in Eq.~\eqref{eq:twoFT}. In the fifth equality, we simplify the expression by a phase-gate $\hat{U}_{A_1}(\phi) := e^{i\phi a^\dag_1 a_1}$ and a controlled-swap gate $\hat{C\text{-}SWAP}$ defined on a qubit $A'$ and two optical modes,
\begin{equation}
\begin{aligned}
\hat{C\text{-}SWAP}\ket{0}_{A'}\ket{\psi(a^\dag_1, a^\dag_2)} &= \ket{0}_{A'}\ket{\psi(a^\dag_1, a^\dag_2)}, \\
\hat{C\text{-}SWAP}\ket{1}_{A'}\ket{\psi(a^\dag_1, a^\dag_2)} &= \ket{1}_{A'}\ket{\psi(a^\dag_2, a^\dag_1)}.
\end{aligned}
\end{equation}
In the sixth equality, we use the pseudo-Fock state definition in Eq.~\eqref{eq:pseudoFock}. The probability
\begin{equation}
P_k^\mu := e^{-\mu}\sum_{m=0}^{\infty} \frac{\mu^{(mD+k)}}{(mD+k)!},
\end{equation}
is a mixture of the Poisson distribution probability $P^{\mu}(k):= e^{-\mu}\frac{\mu^k}{k!}$. In the seventh equality, the state $\ket{\lambda^\mu_Z(k),j_\theta}_{A,A_1A_2}$ is
\begin{equation} \label{eq:lambdaZk}
\begin{aligned}
\ket{\lambda^\mu_Z(k),j_\theta}_{A,A_1A_2}:&= \hat{U}_{A_1}(\frac{2\pi}{D}j_\theta) \circ \hat{C\text{-}SWAP} \left( \ket{+}_{A'}\ket{0}_{A_1}\ket{\lambda^\mu(k)}_{A_2} \right) \\
&= \hat{U}_{A_1}(\frac{2\pi}{D}j_\theta) \circ \hat{C\text{-}SWAP} \left( \ket{+}_{A'}\, \frac{e^{-\frac{\mu}{2}}}{\sqrt{P_k^{\mu}}}\sum_{m=0}^{\infty} \frac{\sqrt{\mu}^{(mD+k)}}{\sqrt{(mD+k)!}} \ket{0}_{A_1} \ket{mD+k}_{A_2} \right),
\end{aligned}
\end{equation}
is a normalized state with pseudo-Fock number $k$ and relative phase $\theta=\frac{2\pi}{D}j_\theta$.

When $D\to \infty$, the probability $P_k^\mu$ becomes the Poisson distribution. The conditional encoding state $\ket{\lambda^\mu_Z(k), j_\theta}_{A',A_1A_2}$ becomes
\begin{equation} \label{eq:lambdaZkDinf}
\ket{\lambda^\mu_Z(k),j_\theta}_{A,A_1A_2} = \hat{U}_{A_1}(\frac{2\pi}{D}j_\theta) \circ \hat{C\text{-}SWAP} \left( \ket{+}_{A'}\, \ket{0}_{A_1} \,\ket{k}_{A_2} \right),
\end{equation}
which is irrelavant to the intensity $\mu$. Especially, when $k=1$, the conditinal state becomes
\begin{equation} \label{eq:lambdaZk1}
\ket{\lambda^\mu_Z(1),j_\theta}_{A,A_1A_2} = \frac{1}{\sqrt{2}} (\ket{0}\ket{01} + e^{i \theta}\ket{1}\ket{10})_{A,A_1A_2} = \ket{\Phi^{Sin}_{\theta}},
\end{equation}
Here, $\ket{\Phi^{Sin}_{\theta}}$ is defined in Eq.~\eqref{eq:PhisinTheta}. When $D$ is finite, the equality of Eq.~\eqref{eq:lambdaZk1} becomes approximation due to the discrete phase randomisation effect. It has been shown in the literature~\cite{Cao2015discrete,Zeng2019Symmetryprotected} that when $D\geq 12$, the discrete phase randomization is very close to the continuous one.

If we perform the same basis transformation on $\tilde{A}_1$ and $\tilde{A}_2$ as the one for the $Z$-basis state in Eq.~\eqref{eq:PhiZ0purified2}, the $X(\theta)$-basis encoding state will become
\begin{equation} \label{eq:PhiXtheta0purified2}
\begin{aligned}
\ket{\tilde{\Phi}^{X(\theta)}_{2\mu}} &=
\frac{1}{\sqrt{2}D} \sum_{j_\theta=0}^{D-1} \sum_{j_2=0}^{D-1} \ket{j_2+j_\theta}_{\tilde{A}_1}\ket{j_2}_{\tilde{A}_2} \left(\ket{+}\ket{\sqrt{\mu}e^{i\phi_{j_2+j_\theta}} } \ket{\sqrt{\mu}e^{i\phi_{j_2}} } + \ket{-} \ket{\sqrt{\mu}e^{i(\phi_{j_2+j_\theta} + \pi ) } } \ket{\sqrt{\mu}e^{i\phi_{j_2}} } \right)_{A',A_1A_2} \\
&= \frac{1}{\sqrt{2D}D} \sum_{j_\theta=0}^{D-1} \sum_{j_2=0}^{D-1} \sum_{k=0}^{D-1} e^{-i \frac{2\pi}{D}j_2 k } \ket{\tilde{k},j_\theta}_{\tilde{A}_1,\tilde{A}_2} \left(\ket{+}\ket{\sqrt{\mu}e^{i\phi_{j_2+j_\theta}} } \ket{\sqrt{\mu}e^{i\phi_{j_2}} } + \ket{-} \ket{\sqrt{\mu}e^{i(\phi_{j_2+j_\theta} + \pi ) } } \ket{\sqrt{\mu}e^{i\phi_{j_2}} } \right)_{A',A_1A_2} \\
&= \frac{1}{\sqrt{2D}D} \sum_{j_\theta=0}^{D-1} \sum_{k=0}^{D-1} \ket{\tilde{k},j_\theta}_{\tilde{A}_1,\tilde{A}_2} \sum_{j_2=0}^{D-1} \left[ e^{-i \frac{2\pi}{D}j_2 k } \left(\ket{+}\ket{\sqrt{\mu}e^{i\phi_{j_2+j_\theta}} } \ket{\sqrt{\mu}e^{i\phi_{j_2}} } + \ket{-} \ket{\sqrt{\mu}e^{i(\phi_{j_2+j_\theta} + \pi ) } } \ket{\sqrt{\mu}e^{i\phi_{j_2}} } \right)_{A',A_1A_2} \right] \\
&= \frac{1}{\sqrt{D}} \sum_{j_\theta=0}^{D-1} \sum_{k=0}^{D-1} \sqrt{P_k^{2\mu}} \ket{\tilde{k},j_\theta}_{\tilde{A}_1,\tilde{A}_2} \hat{H}_{A'} \hat{C\text{-}U(\pi)}_{A',A_1} \left( \ket{+}_{A'}\ket{\lambda_{1,2}^{2\mu}(k),j_\theta}_{A_1A_2} \right)  \\
&= \frac{1}{\sqrt{D}} \sum_{j_\theta=0}^{D-1} \sum_{k=0}^{D-1} \sqrt{P_k^{2\mu}} \ket{\tilde{k},j_\theta}_{\tilde{A}_1,\tilde{A}_2} \ket{\lambda^{2\mu}_{X(\theta)}(k)}_{A',A_1A_2}.
\end{aligned}
\end{equation}
Here, in the second equality, we use the partial Fourier transform in Eq.~\eqref{eq:twoFT}. In the fourth equality, we simplify the expression with a Hadamard gate $\hat{H}_{A'}$ on the qubit $A'$ and the controlled phase gate $\hat{C\text{-}U(\phi)}_{A',A_1}$,
\begin{equation}
\hat{C\text{-}U(\phi)}_{A',A_1} = \ket{0}_A'\bra{0}\otimes I_{A_1} + \ket{1}_A'\bra{1}\otimes U(\phi)_{A_1}.
\end{equation}
The state $\ket{\lambda^{2\mu}_{1,2}(k), j_\theta}_{A',A_1A_2}$ is defined in Eq.~\eqref{eq:lambda_twomode}. In the fifth equality, we define
\begin{equation}
\ket{\lambda^{2\mu}_{X(\theta)}(k)}_{A',A_1A_2} := \hat{H}_{A'} \hat{C\text{-}U(\pi)}_{A',A_1} \left( \ket{+}_{A'}\ket{\lambda_{1,2}^{2\mu}(k),j_\theta}_{A_1A_2} \right).
\end{equation}

If Alice performs the measurement $M(k,\theta)$ defined in Eq.~\eqref{eq:Mktheta} on the ancillary systems $\tilde{A}_1$ and $\tilde{A}_2$ and obtains $k$ and $\theta= \frac{2\pi}{D}j_\theta$, then the conditional emitted state is $\ket{\lambda^{2\mu}_{X(\theta)}(k)}_{A,A_1A_2}$. When $D\to \infty$, the probability $P^{2\mu}_k$ to get the photon-number result $k$ becomes the Poisson distribution. As is shown in Eq.~\eqref{eq:lambda12kDInf}, the state $\ket{\lambda^{2\mu}_{1,2}(k), j_\theta}_{A',A_1A_2}$ (and hence the state $\ket{\lambda^{2\mu}_{X(\theta)}(k)}_{A',A_1A_2}$) becomes independent of the intensity $\mu$. Especially, when $k=1$, the conditional state $\ket{\lambda^{2\mu}_{X(\theta)}(k)}_{A',A_1A_2}$ becomes
\begin{equation} \label{eq:lambdaX1}
\ket{\lambda^{2\mu}_{X(\theta)}(1)}_{A,A_1A_2} = \frac{1}{2} \left[\ket{+}(\ket{01} + e^{i \theta}\ket{10}) + \ket{-}(\ket{01} - e^{i \theta}\ket{10}) \right]_{A,A_1A_2} = \ket{\Phi^{Sin}_{\theta}},
\end{equation}
where $\ket{\Phi^{Sin}_{\theta}}$ is defined in Eq.~\eqref{eq:PhisinTheta}. Again, when $D\geq 12$, the approximation caused by discrete (instead of continuous) phase randomization can be ignored.
\end{proof}

As is shown in Eqs.~\eqref{eq:PhiZ0purified2} and \eqref{eq:PhiXtheta0purified2}, Alice can obtain the information of the overall photon number $k$ of $A_1$ and $A_2$ indirectly based on the collective measurement on the ancillary systems $\tilde{A}_1$ and $\tilde{A}_2$. When the measurement result is $k=1$, both of the $Z$-basis and $X(\theta)$-basis encoding state will be reduced to the single-photon encoding state $\ket{\Phi^{Sin}_{\theta}}$ defined in Eq.~\eqref{eq:PhisinTheta}. In this way, the security of the coherent-state MDI-QKD will be reduced to the one of single-photon MDI-QKD.

In the coherent-state two-mode MDI-QKD scheme, Alice and Bob use $Z$-basis data to generate a secure key and $X(\theta)$-basis data to estimate information leakage, reflected in the $X(\theta)$-basis single-photon error rate $e_{11}^{X(\theta)}$. Here, the relative phase $\theta^a (\theta^b)$ is not fixed, but determined by $\phi^a_1 - \phi^a_2$ $(\phi^b_1 - \phi^b_2)$, as randomly chosen from $[0,2\pi)$. During the basis-sifting process, Alice and Bob sift the data with $\theta^a - \theta^b = 0$ or $\pi$ to estimate the privacy of raw keys generated in the $Z$-basis. Here, we mix all $X(\theta)$ data with different $\theta$ to estimate the average phase-error rate $\bar{e}^X_{11}$,
\begin{equation}
    \bar{e}^X_{11}:= \sum_{\theta} P(\theta) e^{X(\theta)}_{(1,1)},
\end{equation}
where $P(\theta)$ is the conditional probability of choosing the alignment angle $\theta$ in all of the sifted $X$-basis data with $\theta^a = \theta^b$ and $e^{X(\theta)}_{(1,1)}$ is the phase-error rate when the alignment angle is $\theta^a = \theta^b = \theta$. This averaged phase-error rate can still faithfully characterise the privacy of $Z$-basis key generation due to the concavity of the binary entropy function.

From Lemma~\ref{lem:Coh2Sin} we can see that, in the coherent-state two-mode MDI-QKD scheme, if Alice and Bob post-select the signals with the overall photon number $k^a=k^b=1$, then the remaining signals can be used to extract key information similar to the single-photon MDI-QKD scheme in Sec.~\ref{ssc:secureTMMDI}. In practice, Alice and Bob do not need to keep track of the photon number data and perform the photon-number post-selection. Instead, they may apply the ``tagging'' idea~\cite{gottesman2004security}: they first keep all the raw-key data generated by different photon numbers $k^a$ and $k^b$ and treat the photon number as a ``tag'' on the emitted signal; they then estimate the fraction $q_{(1,1)}$ of the single-photon-pair signals with $k^a=k^b=1$ among all the raw-key data and the $X(\theta)$-basis error rate of the single-photon-pair signals $e^{X(\theta)}_{(1,1)}$. Based on the estimation of $q_{(1,1)}$ and $e^{X(\theta)}_{(1,1)}$, they can also extract the same amount of secure keys by designing proper privacy amplification procedures. The key-rate formula is given by~\cite{gottesman2004security}
\begin{equation} \label{eq:keyrateDecoyMDI}
r = q_{(1,1)} \left[ 1 - H(\bar{e}^X_{11}) \right] - f Q_{\mu\mu} H(E^Z_{(\mu,\mu)}).
\end{equation}
where $q_{(1,1)}$ is the estimated detected fraction of single-photon components, $H(x)$ is the binary entropy function, and $f$ is a factor that reflects the efficiency of information reconciliation. $Q_{\mu\mu}$ and $E_{\mu\mu}$ are the overall detected fraction and bit-error rate when Alice and Bob emit coherent states with intensity $\mu$. Furthermore, Alice and Bob do not have to measure the photon-number $k^a$ and $k^b$ in each round. Thanks to the photon-number Poisson statistics and the independence of the photon-number state to the intensity setting in Lemma~\ref{lem:Coh2Sin}, Alice and Bob can vary the intensity $\mu$ to estimate the single-photon component in the final detected rounds using decoy-state methods~\cite{Lo2005Decoy}.


To summarize, we list the coherent-state two-mode MDI-QKD in Box~\ref{box:coherentMDIQKD}. Here, we asuume that Alice and Bob perform the photon-number measurement, but perform the data post-processing based on the tagging key rate formula in Eq.~\eqref{eq:keyrateDecoyMDI}.

\begin{Boxes}{Coherent-state two-mode MDI-QKD}{coherentMDIQKD}
\begin{enumerate}
\item 
State preparation: Alice decides at random whether to prepare the $Z$- or $X(\theta)$-basis states. For the $Z$-basis, Alice prepares an entangled state $\ket{\tilde{\Phi}^Z_\mu}$ defined in Eq.~\eqref{eq:PhiZ0purified} using the device in Fig.~\ref{fig:coherentMDI}(a), which contains a qubit system $A'$, two ancillary qudit systems $\tilde{A}_1$ and $\tilde{A}_2$, and two optical modes $A_1$ and $A_2$. For the $X(\theta)$-basis, she prepares $\ket{\tilde{\Phi}^{X(\theta)}_{2\mu}}$ defined in Eq.~\eqref{eq:PhiXtheta0purified} by means of Fig.~\ref{fig:coherentMDI}(b). Similarly, Bob prepares $\ket{\tilde{\Phi}^Z_\mu}$ or $\ket{\tilde{\Phi}^{X(\theta)}_{2\mu}}$ randomly on systems $B'$, $\tilde{B}_1$, $\tilde{B_2}$, $B_1$ and $B_2$.

\item 
Measurement: Alice and Bob send their optical modes $A_1$, $A_2$, $B_1$ and $B_2$ to an untrusted party, Charlie, who is supposed to perform coincident interference measurement, as shown in Fig.~\ref{fig:singleMDIdia}.

\item 
Announcement: Charlie announces the $L_1$, $R_1$, $L_2$ and $R_2$ detection results. If one of $L_1$ and $R_1$ clicks and one of $L_2$ and $R_2$ clicks, Alice and Bob keep the signals. Alice performs $M(k,\theta)$ measurement defined in Eq.~\eqref{eq:Mktheta} on qudits $\tilde{A}_1$ and $\tilde{A}_2$ to obtain the photon number $k^a$ and the relative phase $\theta^a$. Bob operates similarly to obatin $k^b$ and $\theta^b$. If there is an $(L_1,R_2)$-click or $(L_2,R_1)$-click, then Bob applies the $Z$ gate to his qubit $B'$. They announce the measured photon number $k^a$ and $k^b$.

\item 
Basis sifting: Alice and Bob announce that the chosen basis is either $Z$ or $X(\theta)$. If the chosen basis is $X(\theta)$, they further announce the relative phases $\theta^a$ and $\theta^b$. If they both choose the $Z$-basis or the $X(\theta)$-basis with $\theta^a - \theta^b=0/\pi$, they keep their data.

Alice and Bob perform the above steps over many rounds and end up with a joint $2n$-qubit state $\rho_{A' B'}\in(\mathcal{H}_{A'}\otimes\mathcal{H}_{B'})^{\otimes n}$. 

\item 
Key mapping: Alice and Bob measure the qubit systems $A'$ and $B'$ in the $Z$ or $X$ bases when the predetermined bases are $Z$ or $X(\theta)$, respectively. They record the $Z$-basis measurement results as the raw-data strings $\kappa^a$ and $\kappa^b$.

\item 
Parameter estimation: Alice and Bob announce the $X(\theta)$-basis measurement results. Based on the announced information, they then estimate the $X(\theta)$-basis error rate $e^{X(\theta)}_{(1,1)}$ for all the signals with $k^a=k^b=1$ and the fraction of signals with $k^a=k^b=1$ in the remained signals $q_{(1,1)}$.

\item 
Classical post-processing: Alice and Bob reconcile the key string to $\kappa^a$ and perform privacy amplification using universal-2 hashing matrices. The sizes of hashing matrices are determined by the estimated single-photon-pair fraction $q_{(1,1)}$ and $X(\theta)$-basis error rate $e^{X(\theta)}_{(1,1)}$ according to Eq.~\eqref{eq:keyrateDecoyMDI}.
\end{enumerate}
\end{Boxes}

\subsection{Fixed-pairing MP scheme} \label{ssc:MPfixed}

In this subsection, starting from the coherent-state MDI-QKD scheme, we introduce extra encoding redundancies to construct the mode-pairing (MP) scheme. The major differences of the MP scheme and the coherent-state two-mode MDI-QKD are
\begin{enumerate}
\item
All the optical modes are decoupled during the encoding process; the correlation among two different rounds $i$ and $j$ are built after the generation of encoding states;
\item
Instead of being determined at the beginning, the basis choice for each pair of location is determined by a pairwise measurement on the ancillary systems afterwards.
\end{enumerate}

Unlike the coherent-state MDI-QKD scheme where Alice emits two correlated optical modes in each round, in the MP scheme, Alice first emit optical modes in an i.i.d. manner, then group each two of the sending locations together as a ``pair''. She then perform collective quantum operations on each pair to extract the correlation information. The concept of ``pairing'' is formalized below. For the convenience of later discussion, we also define Charlie's announcement.

\begin{definition}[Pairing Setting $\vec{\chi}$]\label{def:pairing}
For $N$ location indexes $i=1,2,...,N$ where $N$ is an even number, a pairing setting $\vec{\chi}$ is a set composed of $N/2$ location pairs $(i,j)$ such that all the location indexes appear once and only once in $\vec{\chi}$. The possible pairing settings $\{\vec{\chi}\}$ form a finite set $\mathcal{X}$.
\end{definition}

\begin{definition}[Charlie's Announcement $\vec{C}$]\label{def:Cannounce}
For location indexes $i=1,2,...,N$, denote Charlie's announcement for the $i$-th round to be $C_i = (L_i, R_i)$, where $L_i$ and $R_i$ are two binary variables indicating the detection results of detectors $L$ and $R$, respectively. The detection announcement on all the locations can be denoted by a vector $\vec{C}$. The possible Charlie's annoucements $\{\vec{C}\}$ form a finite set $\mathcal{C}$.
\end{definition}

Once the $N$ optical modes are paired, we can denote the location of the $k$-th pair as $(F_k, R_k)$, where $F_k$ and $R_k$ are the front and rear pulse locations of the $k$-th pair, respectively. The pairing setting $\vec{\chi}$ can also be viewed as a vector formed by $(F_k, R_k)$. Without loss of generality, we focus on two specific paired locations 1 and 2 in the following discussions.

In the security proof, we first consider a simple MP scheme, where the pairing setting $\vec{\chi}$ is not related to the untrusted Charlie's announcement. We call this MP scheme the fixed-pairing MP scheme.

\begin{definition}[$\vec{\chi}$-Fixed-Pairing MP] \label{def:fixedMP}
In a MP MDI-QKD scheme, if the determination of pairing setting $\vec{\chi}$ is independent of the measurement announcement by the untrusted interferometer, Charlie, then we call such MP scheme the fixed-pairing MP scheme. The fixed-pairing MP scheme with the pairing setting $\vec{\chi}$ is called the $\vec{\chi}$-fixed-pairing MP scheme.
\end{definition}

In the MP scheme, Alice sends out a coherent state with independent intensity and phase modulation in each round,
\begin{equation}
\ket{\psi^{Com}}_{A_1} = \ket{\sqrt{z_1' \mu} e^{i (z_1''\pi + \tilde{z}_1 \frac{2\pi}{D})} },
\end{equation}
where $z_1'$ and $z_1''$ are two random bit values indicating the encoded intensity and $0/\pi$-phase, respectively; $\tilde{z}_1$ is a random dit indicating the discrete random phase. Here, we take the encoding state of the first round, with subscript 1, as the example.

To introduce the entanglement version of the scheme, we perform the source replacement to introduce two ancillary qubits $A_1'$ and $A_2''$ and an ancillary qudit $\tilde{A}_1$. In this case, Alice will generate the same composite state $\ket{\tilde{\Psi}^{Com}}$ for every round,
\begin{equation} \label{eq:PsiComPurified2}
\begin{aligned}
\ket{\tilde{\Psi}^{Com}}_{\tilde{A}_1, A_1', A_1'',A_1} &= \frac{1}{\sqrt{2}} \left( \ket{0}_{A_1'}\ket{\tilde{\Psi}^{X(\theta)}_{0}}_{\tilde{A}_1,A_1'',A_1} + \ket{1}_{A_1'}\ket{\tilde{\Psi}^{X(\theta)}_{\mu}}_{\tilde{A}_1,A_1'',A_1} \right) \\
&= \frac{1}{2\sqrt{D}} \sum_{j_1=0}^{D-1} \ket{j_1}_{\tilde{A}_1} \left( \ket{00}\ket{0} + \ket{01}\ket{0} + \ket{10}\ket{\sqrt{\mu}e^{i\phi^a_{j_1}}} + \ket{11}\ket{\sqrt{\mu}e^{i(\phi^a_{j_1}+\pi)}} \right)_{A_1',A_1'';A_1},
\end{aligned}
\end{equation}
which is composed of an ancillary qudit $\tilde{A}_1$, two ancillary qubits $A_1'$ and $A_1''$, and an optical mode $A_1$. In the second equality, we use the following definition of the state $\ket{\tilde{\Psi}^{X(\theta)}_{\mu}}_{\tilde{A}_1,A_1'',A_1}$
\begin{equation} \label{eq:PsiXmupurified}
\ket{\tilde{\Psi}^{X(\theta)}_{\mu}}_{\tilde{A}_1,A_1'',A_1} = \frac{1}{\sqrt{2D}} \sum_{j_1=0}^{D-1} \ket{j_1}_{\tilde{A}_1} \left( \ket{0}\ket{\sqrt{\mu} e^{i\phi^a_{j_1}}} + \ket{1}\ket{\sqrt{\mu} e^{i(\phi^a_{j_1}+\pi)} } \right)_{A_1''A_1}.
\end{equation}

Interestingly, Alice can post-select the encoding state of coherent-state MDI-QKD by performing some collective operations on two location pairs, as is stated in the following lemma.

\begin{lemma}[Encoding-state reduction from the fixed-pairing MP scheme to coherent-state MDI-QKD]\label{lem:fixedMPreduce}
Alice generates two composite states, $\ket{\tilde{\Psi}^{Com}}_{\tilde{A}_1,A_1',A_1'',A_1}$ and $\ket{\tilde{\Psi}^{Com}}_{\tilde{A}_2,A_2',A_2'',A_2}$, as defined in Eq.~\eqref{eq:PsiComPurified2}, on two locations $1$ and $2$, respectively. Locally, Alice applies a CNOT gate from $A_1'$ to $A_2'$, measures $A_2'$ on the $Z$ basis, and obtains a result $\tau^a$.

If $\tau^a=1$, Alice assigns the $Z$ basis. She then measures $A_1''$ and $A_2''$ on the $Z$-basis to obtain results $z_1''$ and $z_2''$, respectively. Then, the conditional state in the systems $\tilde{A}_1$, $\tilde{A}_2$, $A_1'$, $A_1$ and $A_2$ can be reduced to the $Z$-basis encoding state $\ket{\tilde{\Phi}^Z_\mu}$ of Eq.~\eqref{eq:PhiZ0purified} used in the coherent-state MDI-QKD scheme in Sec.~\ref{ssc:secureTMMDI} with possible $\pi$-phase modulations on the optical modes $A_1$ and $A_2$ when $z_1''=1$ and $z_2''=1$, respectively.

If $\tau^a=0$, Alice assigns the $X(\theta)$ basis. She first measures $A_1'$ on the $Z$ basis to obtain the result $\lambda^a$. If $\lambda^a=0$, the state will become $\ket{\tilde{\Phi}^{X(\theta)}_{0}}$, which is used for the decoy state estimation. If $\lambda^a=1$, Alice applies a CNOT gate from $A_2''$ to $A_1''$. After that, Alice measures $A_2''$ on the $Z$ basis to obtain the result $\zeta^a$. Then the conditional state in the systems $\tilde{A}_1$, $\tilde{A}_2$, $A_1''$, $A_1$ and $A_2$ can be reduced to the $X(\theta)$-basis encoding state $\ket{\tilde{\Phi}^{X(\theta)}_{2\mu}}$ of Eq.~\eqref{eq:PhiXtheta0purified} used in the coherent-state MDI-QKD scheme in Sec.~\ref{ssc:secureTMMDI} with a possible simultaneous $\pi$-phase modulation on the optical modes $A_1$ and $A_2$ when $\zeta^a=1$.
\end{lemma}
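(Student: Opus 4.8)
The plan is to prove the lemma by a direct, term-by-term evaluation of the two local ancilla circuits on the product state $\ket{\tilde{\Psi}^{Com}}_{\tilde{A}_1,A_1',A_1'',A_1}\otimes\ket{\tilde{\Psi}^{Com}}_{\tilde{A}_2,A_2',A_2'',A_2}$ from Eq.~\eqref{eq:PsiComPurified2}. First I would write this product out in the uniform form $\frac{1}{4D}\sum_{j_1,j_2}\ket{j_1}_{\tilde{A}_1}\ket{j_2}_{\tilde{A}_2}\sum_{z_1',z_1'',z_2',z_2''}\ket{z_1'z_1''}_{A_1'A_1''}\ket{z_2'z_2''}_{A_2'A_2''}\ket{\sqrt{z_1'\mu}\,e^{i(z_1''\pi+\phi_{j_1})}}_{A_1}\ket{\sqrt{z_2'\mu}\,e^{i(z_2''\pi+\phi_{j_2})}}_{A_2}$, so that the two intensity bits $z_1',z_2'$ live in $A_1',A_2'$ and the two $0/\pi$-phase bits $z_1'',z_2''$ live in $A_1'',A_2''$. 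Throughout I will use only three elementary facts: a vacuum mode $\ket{0}$ is invariant under any phase modulation; $\hat{U}(\pi)\ket{\sqrt{\mu}e^{i\phi}}=\ket{\sqrt{\mu}e^{i(\phi+\pi)}}$; and $e^{i(a\oplus b)\pi}=e^{ia\pi}e^{ib\pi}$ for $a,b\in\{0,1\}$.

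Next I would push the first CNOT (from $A_1'$ to $A_2'$) through, which amounts to the relabeling $z_2'\mapsto\tau^a:=z_1'\oplus z_2'$ in the $A_2'$ register while the optical amplitudes remain functions of the original bits, and then condition on the outcome $\tau^a$ of the $Z$-basis measurement of $A_2'$. For $\tau^a=1$ the two modes carry intensities $(0,\mu)$ or $(\mu,0)$; measuring $A_1''$ and $A_2''$ in $Z$ fixes $z_1'',z_2''$, and re-summing the two surviving branches $z_1'=0,1$ reassembles, using the first two elementary facts, exactly the $Z$-basis state $\ket{\tilde{\Phi}^Z_\mu}$ of Eq.~\eqref{eq:PhiZ0purified} with an additional $\pi$-phase $\hat{U}_{A_1}(z_1''\pi)$ and $\hat{U}_{A_2}(z_2''\pi)$ --- trivial inside the branch whose corresponding mode is vacuum, and otherwise a genuine modulation that Alice knows and can undo in post-processing. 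For $\tau^a=0$ the two modes share a common intensity $z_1'\mu$, so I would measure $A_1'$ in $Z$ to obtain $\lambda^a$: the outcome $\lambda^a=0$ leaves a fixed vacuum state (equivalently $\ket{\tilde{\Phi}^{X(\theta)}_{2\mu}}$ of Eq.~\eqref{eq:PhiXtheta0purified} at $\mu=0$), which has no single-photon content and feeds only the decoy estimation. For $\lambda^a=1$ I would push the second CNOT (from $A_2''$ to $A_1''$) through, relabeling $z_1''\mapsto\sigma^a:=z_1''\oplus z_2''$ in $A_1''$, and condition on the $Z$-basis outcome $\zeta^a$ of $A_2''$; re-summing over $\sigma^a$ and invoking $e^{i(\sigma^a\oplus\zeta^a)\pi}=e^{i\sigma^a\pi}e^{i\zeta^a\pi}$ factors out a common phase $\hat{U}_{A_1}(\zeta^a\pi)\hat{U}_{A_2}(\zeta^a\pi)$, and the remainder is the $X(\theta)$-basis state $\ket{\tilde{\Phi}^{X(\theta)}_{2\mu}}$ of Eq.~\eqref{eq:PhiXtheta0purified} with $A_1''$ in the role of the key qubit $A'$. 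A brief accounting of amplitudes at each step shows every binary outcome is unbiased and that the prefactors of $\ket{\tilde{\Phi}^Z_\mu}$ and $\ket{\tilde{\Phi}^{X(\theta)}_{2\mu}}$ come out correctly after renormalization.

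The step I expect to cost the most care is the bookkeeping of the key-qubit convention rather than any hard estimate. The composite state stores the $0/\pi$ phases in the \emph{computational} basis of $A_1'',A_2''$, whereas $\ket{\tilde{\Phi}^{X(\theta)}_{2\mu}}$ as written in Eq.~\eqref{eq:PhiXtheta0purified} carries its key qubit $A'$ in the \emph{Hadamard} basis, so the reduction yields $\ket{\tilde{\Phi}^{X(\theta)}_{2\mu}}$ only up to a fixed local $\hat{H}_{A'}$ (equivalently, the $X(\theta)$-basis readout of $A'$ in Box~\ref{box:coherentMDIQKD} is realized here as a $Z$-basis readout of $A_1''$). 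Since a fixed local unitary on Alice's key ancilla commutes with everything Charlie does and can be absorbed into the key-generation operation, this is immaterial for the reduction. Finally I would emphasize that every gate and measurement in the statement --- the $A_1'\to A_2'$ CNOT, the $Z$-basis measurements, and conditionally the $A_2''\to A_1''$ CNOT --- is local to Alice and may be deferred until after the optical modes have been sent to Charlie, so the whole construction (including the conditioning on $\tau^a$, $\lambda^a$ and the measured bits) is a post-selection \emph{independent of Charlie's announcement} $\vec{C}$; this is exactly the hypothesis under which Corollary~\ref{coro:equivMDIpost} can later be applied to conclude security equivalence with coherent-state two-mode MDI-QKD.
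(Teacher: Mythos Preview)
Your proposal is correct and follows essentially the same direct-computation approach as the paper: both expand the product $\ket{\tilde{\Psi}^{Com}}\otimes\ket{\tilde{\Psi}^{Com}}$, push the CNOT through, condition on $\tau^a$, and then regroup terms to identify $\ket{\tilde{\Phi}^Z_\mu}$ or $\ket{\tilde{\Phi}^{X(\theta)}_{2\mu}}$ up to the stated $\pi$-phase modulations. The only organizational difference is that the paper first packages each single-round state as $\ket{\tilde{\Psi}^{Com}}=\tfrac{1}{\sqrt{2}}(\ket{0}\ket{\tilde{\Psi}^{X(\theta)}_0}+\ket{1}\ket{\tilde{\Psi}^{X(\theta)}_\mu})$ and works with the resulting four-term block superposition over $A_1'A_2'$, whereas you keep the explicit bit-indexed sum over $(z_1',z_1'',z_2',z_2'')$ throughout and interpret each CNOT as a relabeling of the relevant bit; these are two views of the same computation.

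Your remark on the Hadamard convention is well taken and in fact sharper than the paper's treatment: the paper absorbs this into its figure-based argument for the $X(\theta)$ reduction without spelling out that the resulting key qubit on $A_1''$ sits in the computational basis rather than the $\ket{\pm}$ basis of Eq.~\eqref{eq:PhiXtheta0purified}. Your resolution (a fixed local $\hat H$ on Alice's ancilla, equivalently reading out $A_1''$ in $Z$ instead of $X$) is exactly right and is consistent with how the prepare-and-measure reduction in Sec.~\ref{ssc:PrepareandMeasure} ultimately uses $\sigma^a=z_1''\oplus z_2''$ as the $X$-basis key bit.
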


\begin{proof}

We first summarize the introduced reduction procedure in Fig.~\ref{fig:miMDIComb}, which is independent of Charlie's operation.
\begin{figure}[htbp]
\centering \includegraphics[width=13cm]{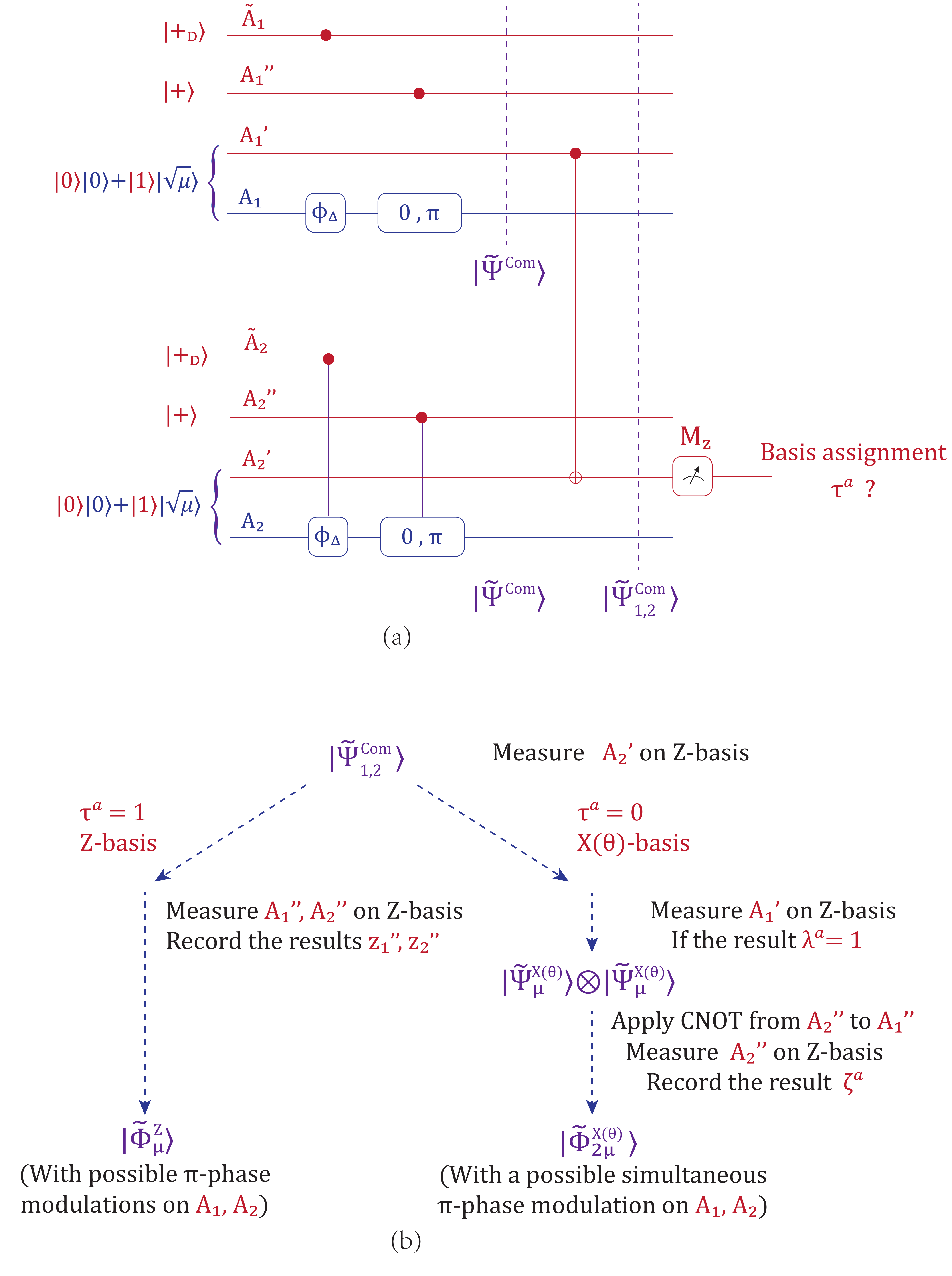}
\caption{(a) Alice's composite encoding process in the MP scheme. In each round, Alice generates a state $\ket{\tilde{\Psi}^{Com}}$ on the one optical mode $A_i$, two qubit ancillaries $A_i'$ and $A_i''$ and a qudit $\tilde{A}_i$ defined in Eq.~\eqref{eq:PsiComPurified2}. (here $i=1$ or $2$.) She transmits the system $A_i$ to Charlie. Based on the detection, she pairs the ancillary systems based on a pairing setting $\vec{\chi}$. For a pair on locations $1$ and $2$, she first performs collective detection on systems $A_1'$ and $A_2'$ to determine the basis. (b) The reduction of Alice's state to the $Z$-basis encoding state $\ket{\tilde{\Phi}^Z_\mu}$ defined in Eq.~\eqref{eq:PhiZ0purified} or the $X(\theta)$-basis encoding state $\ket{\tilde{\Phi}^{X(\theta)}_{2\mu}}$ defined in Eq.~\eqref{eq:PhiXtheta0purified} of coherent-state MDI-QKD. If the measurement result on $A_2'$ is $\tau^a=1$, Alice measures the modes $A_1''$ and $A_2''$ on $Z$-basis and records the results $z_1''$ and $z_2''$. The final state on $A_1'$, $A_1$, $A_2$, $\tilde{A}_1$ and $\tilde{A}_2$ will be the $Z$-basis encoding state $\ket{\tilde{\Phi}^Z_\mu}$ with two possible extra $\pi$-phase modulations on $A_1$ and $A_2$, based on the values of $z_1''$ and $z_2''$, respectively. On the other hand, if $\tau^a=0$, Alice measures $A_1'$ on the $Z$-basis. If the result is $1$, Alice obtains $\ket{\tilde{\Psi}^{X(\theta)}_\mu} \otimes \ket{\tilde{\Psi}^{X(\theta)}_\mu}$ defined in Eq.~\eqref{eq:PsiXmupurified}. She applies a CNOT gate from $A_2''$ to $A_1''$, measure $A_2''$ on $Z$-basis and records the result $\zeta^a$. In this way, she generates $\ket{\tilde{\Phi}^{X(\theta)}_{2\mu}}$ with a possible simultaneous $\pi$-phase modulation on $A_1$ and $A_2$, based on the value of $\zeta^a$.} \label{fig:miMDIComb}
\end{figure}

To perform the basis-assignment measurement, Alice first performs a CNOT gate on qubits $A_1'$ and $A_2'$ and then measure $A_2'$. After the CNOT gate, the joint state on $A_1''A_1'A_1$ and $A_2''A_2'A_2$ becomes
\begin{equation}
\begin{aligned}
\ket{\tilde{\Psi}^{Com}_{1,2}} =& \frac{1}{2} \big( \ket{00}\ket{\tilde{\Psi}^{X(\theta)}_0}\ket{\tilde{\Psi}^{X(\theta)}_0} + \ket{01}\ket{\tilde{\Psi}^{X(\theta)}_0}\ket{\tilde{\Psi}^{X(\theta)}_\mu} \\
+ &\ket{11}\ket{\tilde{\Psi}^{X(\theta)}_\mu}\ket{\tilde{\Psi}^{X(\theta)}_0} + \ket{10}\ket{\tilde{\Psi}^{X(\theta)}_\mu}\ket{\tilde{\Psi}^{X(\theta)}_\mu} \big)_{A_1',A_2';\tilde{A}_1,A_1'',A_1;\tilde{A}_2,A_2'',A_2},
\end{aligned}
\end{equation}
where $\ket{\tilde{\Psi}^{X(\theta)}_\mu}_{\tilde{A}_1 A_1''A_1}$ is defined in Eq.~\eqref{eq:PsiXmupurified}. Now, Alice performs $Z$-basis measurement on the system $A_2'$ to determine the encoding basis.

When the outcome is $\tau^a =1$, the post-selected state is
\begin{equation}
\begin{aligned}
& \frac{1}{\sqrt{2}} \big( \ket{0}\ket{\tilde{\Psi}^{X(\theta)}_0}\ket{\tilde{\Psi}^{X(\theta)}_\mu} + \ket{1}\ket{\tilde{\Psi}^{X(\theta)}_\mu}\ket{\tilde{\Psi}^{X(\theta)}_0} \big)_{A_1';\tilde{A}_1,A_1'',A_1;\tilde{A}_2,A_2'',A_2} \\
= & \frac{1}{2\sqrt{2}D} \sum_{j_1,j_2=0}^{D-1} \ket{j_1}_{\tilde{A}_1}\ket{j_2}_{\tilde{A}_2} \Big[ \ket{0}\left(\ket{0}\ket{0}+\ket{1}\ket{0}\right)\left(\ket{0}\ket{\sqrt{\mu}e^{i\phi^a_{j_2}}} + \ket{1}\ket{\sqrt{\mu}e^{i(\phi^a_{j_2}+\pi)}}\right) \\
\quad & + \ket{1}\left(\ket{0}\ket{\sqrt{\mu}e^{i\phi^a_{j_1}}} + \ket{1}\ket{\sqrt{\mu}e^{i(\phi^a_{j_1}+\pi)}}\right)\left(\ket{0}\ket{0}+\ket{1}\ket{0}\right) \Big]_{A_1';A_1'',A_1;A_2'',A_2} \\
= & \frac{1}{2} \Big[ \ket{00} \ket{\tilde{\Phi}^Z_\mu} + \ket{01} \hat{U}(\pi)_{A_2} \ket{\tilde{\Phi}^Z_\mu}  + \ket{10} \hat{U}(\pi)_{A_1} \ket{\tilde{\Phi}^Z_\mu} + \ket{11} (\hat{U}(\pi)\otimes \hat{U}(\pi))_{A_1,A_2} \ket{\tilde{\Phi}^Z_\mu} \Big]_{A_1';A_1'',A_1;A_2'',A_2},
\end{aligned}
\end{equation}
where $\ket{\tilde{\Phi}^Z_\mu}_{\tilde{A}_1\tilde{A}_2,A_1',A_1A_2}$ is defined in Eq.~\eqref{eq:PhiZ0purified}.
Now, suppose Alice measures the systems $A_1''$ and $A_2''$ on $Z$-basis and obtains the results $z_1''$ and $z_2''$. The post-selected state on the systems $A_1',A_1,A_2$ will become the $Z$-basis encoding state $\ket{\tilde{\Phi}^Z_\mu}$ of the coherent-state MDI-QKD scheme with a possible $\pi$-phase modulation on the optical modes $A_1$ and $A_2$, depending on the values $z_1''$ and $z_2''$. This reduction process is illustrated in Fig.~\ref{fig:miMDIreduceZ}.

\begin{figure}[htbp]
\centering \includegraphics[width=18cm]{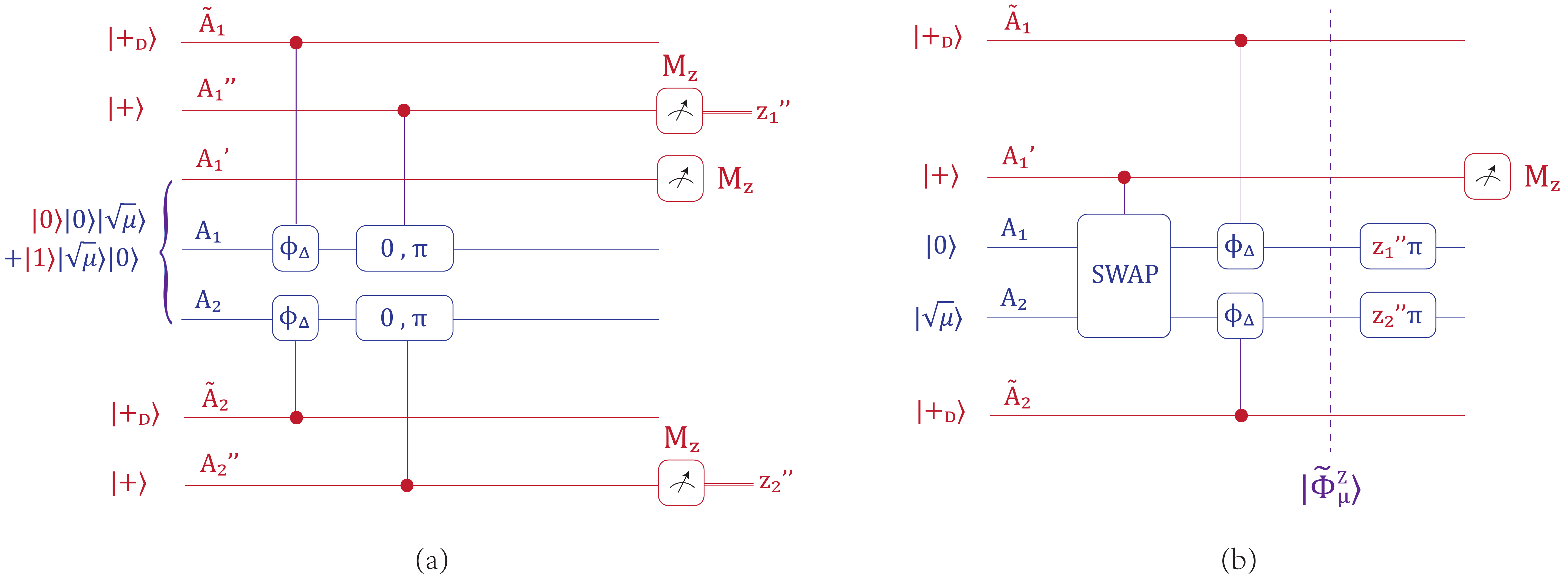}
\caption{Reduction of the encoding state of the MP scheme in Fig.~\ref{fig:miMDIComb}(b) to the $Z$-basis encoding state $\ket{\tilde{\Phi}^Z_\mu}$ in Eq.~\eqref{eq:PhiZ0purified} and Fig.~\ref{fig:coherentMDI}(a). After we measure the ancillary qubits $A_1''$ and $A_2''$ on $Z$-basis and obtain the results $z_1''$ and $z_2''$, we will project the remaining encoding state to $\ket{\tilde{\Phi}^Z_\mu}$ with extra phase modulations $z_1'' \pi$ and $z_2'' \pi$ on the optical modes $A_1$ and $A_2$, respectively. The extra $\pi$-phase modulation will not affect the security and the performance of the $Z$-basis coherent-state MDI-QKD scheme.} \label{fig:miMDIreduceZ}
\end{figure}

We remark that, for the single-photon component, a $\pi$-phase modulation on the optical modes $\tilde{A}_1$ and $\tilde{A}_2$ can be regarded as a $Z$-axis rotation on the ancillary qubit $A'$ shown in Fig.~\ref{fig:singleMDIreduce}. This rotation will not affect the $Z$-basis measurement result. Therefore, the $\pi$-phase modulation on $A_1$ and $A_2$ will not affect the security of the $Z$-basis key generation.


On the other hand, when $\tau^a = 0$, the post-selected state is
\begin{equation}
\begin{aligned}
& \frac{1}{\sqrt{2}} \big( \ket{0}\ket{\tilde{\Psi}^{X(\theta)}_0}\ket{\tilde{\Psi}^X_0} + \ket{1}\ket{\tilde{\Psi}^{X(\theta)}_\mu}\ket{\tilde{\Psi}^{X(\theta)}_\mu} \big)_{A_1';A_1'',A_1;A_2'',A_2}.
\end{aligned}
\end{equation}
In this case, Alice further measures $A_1'$ on the $Z$ basis and keeps her signals if the result is $1$. The post-selected state is then $\ket{\tilde{\Psi}^{X(\theta)}_\mu}\otimes \ket{\tilde{\Psi}^{X(\theta)}_\mu}_{\tilde{A}_1A_1''A_1;\tilde{A}_2A_2''A_2}$, where $\ket{\tilde{\Psi}^{X(\theta)}_\mu}$ is defined in Eq.~\eqref{eq:PsiXmupurified}. We remark that, if the measurement result on $A_1'$ is $0$, then the state will not be used for key generation; instead, it can be used for the parameter estimation.

To reduce the state $\ket{\tilde{\Psi}^{X(\theta)}_\mu}\otimes \ket{\tilde{\Psi}^{X(\theta)}_\mu}_{\tilde{A}_1A_1''A_1;\tilde{A}_2A_2''A_2}$ to the $X(\theta)$-basis encoding state $\ket{\tilde{\Phi}^{X(\theta)}_{2\mu}}$ of coherent-state MDI-QKD scheme defined in Eq.~\eqref{eq:PhiXtheta0purified}, Alice applies a CNOT operation from the system $A_2''$ to $A_1''$, measures $A_2''$ on the $Z$-basis, and records the result $\zeta^a$. As is shown in Fig.~\ref{fig:miMDIreduceXtheta}, we can equivalently move the $Z$-basis measurement on $A_2''$ forward to the beginning. In this way, we can reduce the encoding state to $\ket{\tilde{\Phi}^{X(\theta)}_{2\mu}}$ with a simultaneous $\pi$-phase modulation on the optical modes $A_1$ and $A_2$. For the $X(\theta)$-basis coherent-state MDI-QKD scheme, such $\pi$-phase modulation will not cause any effect.

\begin{figure}[htbp]
\centering \includegraphics[width=18cm]{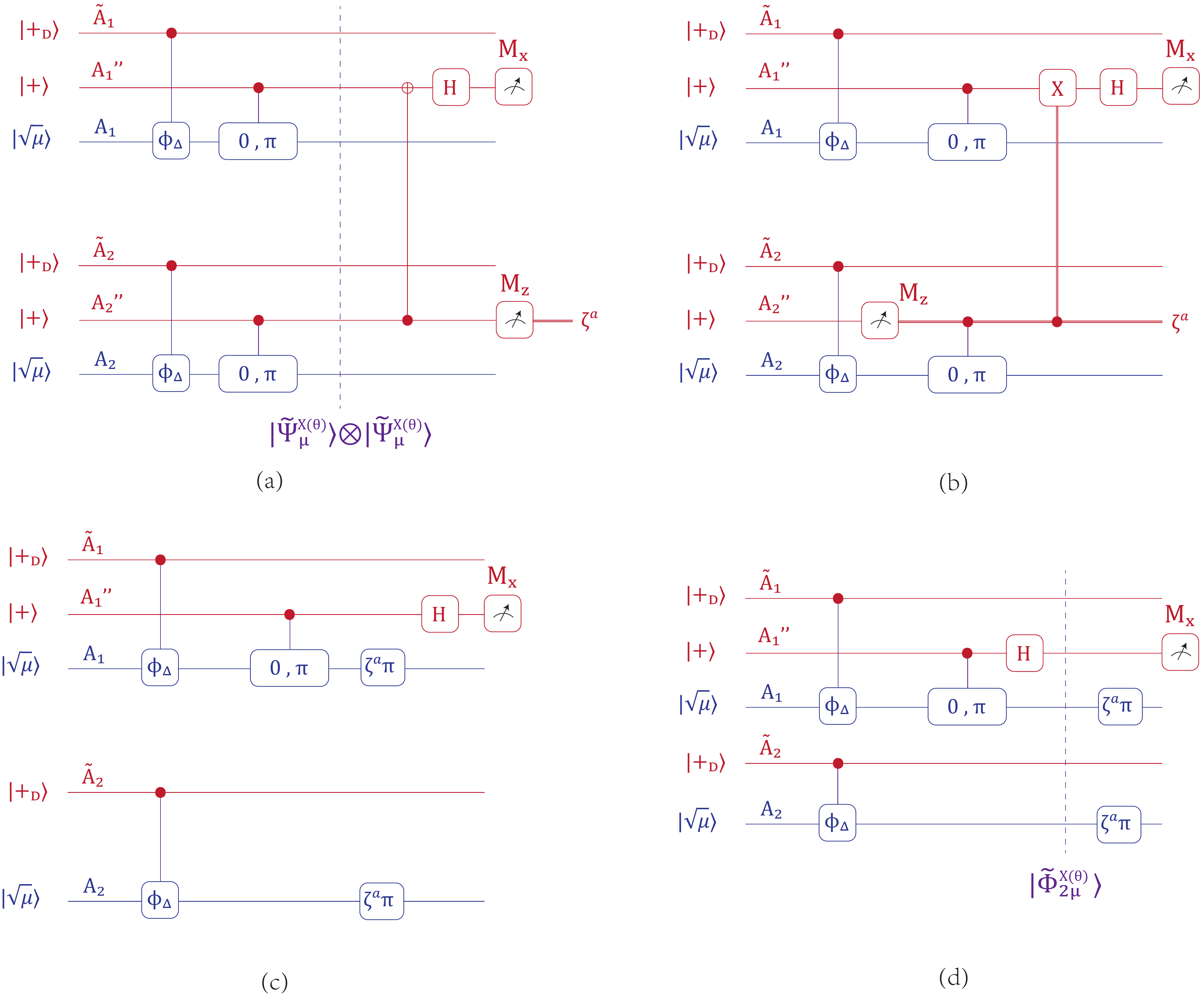}
\caption{Reduction of encoding in Fig.~\ref{fig:miMDIComb}(b) to the $X(\theta)$-basis encoding state $\ket{\tilde{\Phi}^{X(\theta)}_{2\mu}}$ in Eq.~\eqref{eq:PhiXtheta0purified} and Fig.~\ref{fig:coherentMDI}(b). From (a) to (b), we advance the $Z$-basis measurement on $A_2''$. This entails introducing an extra random number $\zeta^a$ in the phase modulations of $A_1$ and $A_2$, as shown in (c). We remark that, a simultaneous phase modulation will not affect the security and performance of the $X(\theta)$-basis coherent-state MDI-QKD.} \label{fig:miMDIreduceXtheta}
\end{figure}

\end{proof}

Based on Lemma~\ref{lem:fixedMPreduce}, we introduced the fixed-pairing MP scheme in Box~\ref{box:fixedMP}, which is also illustrated in Fig.~\ref{fig:miMDIdia}.

\begin{Boxes}{Fixed-pairing MP scheme}{fixedMP}
\begin{enumerate}
\item \label{StepPrep}
State preparation: In each round, Alice prepares the composite state $\ket{\tilde{\Psi}^{Com}}$ defined in Eq.~\eqref{eq:PsiComPurified2} on the optical mode $A_i$, two ancillary qubits $A_i'$ and $A_i''$ and an ancillary qudit $\tilde{A}_i$. In a similar manner, Bob prepares $\ket{\tilde{\Psi}^{Com}}$.

\item \label{StepMeasure}
Measurement: Alice and Bob send modes $A_i$ and $B_i$ to Charlie, who is supposed to perform the single-photon interference measurement. Charlie announces the clicks of the detectors $L$ and/or $R$.

Alice and Bob perform the above steps over many rounds. Afterwards, they perform the following data post-processing.

\item \label{StepPairing}
Mode Pairing: Based on a pre-set pairing setting $\vec{\chi}\in\mathcal{X}$, Alice and Bob pair their locations (i.e., group each two locations together).

\item \label{StepSifting}
Basis Sifting: For each pair on locations $i$ and $j$, Alice performs the basis-assignment measurement on $A_i'$ and $A_j'$ to determine the basis, as shown in Fig.~\ref{fig:miMDIComb}. If the result $\tau^a = 0$, Alice further measures $A_i'$ on the $Z$-basis and records the result $\lambda^a$. Alice assigns the basis of the pair as $Z$ if $\tau^a=1$, $X$ if $\tau^a=0$ and $\lambda^a=1$ and `0' if $\tau^a=\lambda^a=0$. Bob assigns the basis in the same way. Alice and Bob announce their bases. If the announced bases are the same, $X,X$ or $Z,Z$, they keep their signals.

\item \label{StepKeyMapping}
Key mapping: For each $Z$-pair on locations $i$ and $j$, Alice discards qubits $A_i''$ and $A_j''$ and keeps $A_i'$ for later usage. For each $X$-pair, Alice applies CNOT gate from qubit $A_j''$ to $A_i''$, discards qubit $A_j''$, keeps qubit $A_i''$. Alice then performs global measurement $M(k,\theta)$ defined in Eq.~\eqref{eq:Mktheta} on qudits $\tilde{A}_i$ and $\tilde{A}_j$ to obtain the photon number $k^a$ and the relative phase $\theta^a$. Bob operates similarly. For all $X$ pairs, Alice and Bob further announce $\theta^a$ and $\theta^b$. If $\theta^a=\theta^b$, they keep the $X$-pair data; if $\theta^a -\theta^b = \pi$, they keep the $X$-pair data with Bob applying the $Z$ gate on his left qubit. Moreover, if Charlie's announcement on locations $i$ and $j$ is $(L,R)$ or $(R,L)$, Bob applies $Z$ gate on the qubit system $B_i''$ for the $X$-pairs. Alice and Bob then measure the ancillary qubits $A''$ and $B''$ of all the $X$-pairs and the ancillary qubits $A'$ and $B'$ of all the $Z$-pairs on the $X$- and $Z$-bases, respectively, to obtain the raw key bits $\kappa^a$ and $\kappa^b$.

\item \label{StepParaEst}
Parameter estimation: Alice and Bob estimate the fraction of clicked single-photon signals $q_{(1,1)}$ and the corresponding phase-error rate $e^{X(\theta)}_{(1,1)}$ in the $Z$-pairs, with Alice and Bob both emitting a single photon at locations $i$ and $j$. 

\item \label{StepClassical}
Key distillation: Alice and Bob reconcile the key string to $\kappa^a$ and perform privacy amplification using universal-2 hashing matrices. The sizes of hashing matrices are determined by the estimated single-photon yield $q_{(1,1)}$ and the $X(\theta)$-basis error rate $e^{X(\theta)}_{(1,1)}$ according to Eq.~\eqref{eq:keyrateDecoyMDI}.
\end{enumerate}
\end{Boxes}

\begin{figure}[htbp]
\includegraphics[width=16cm]{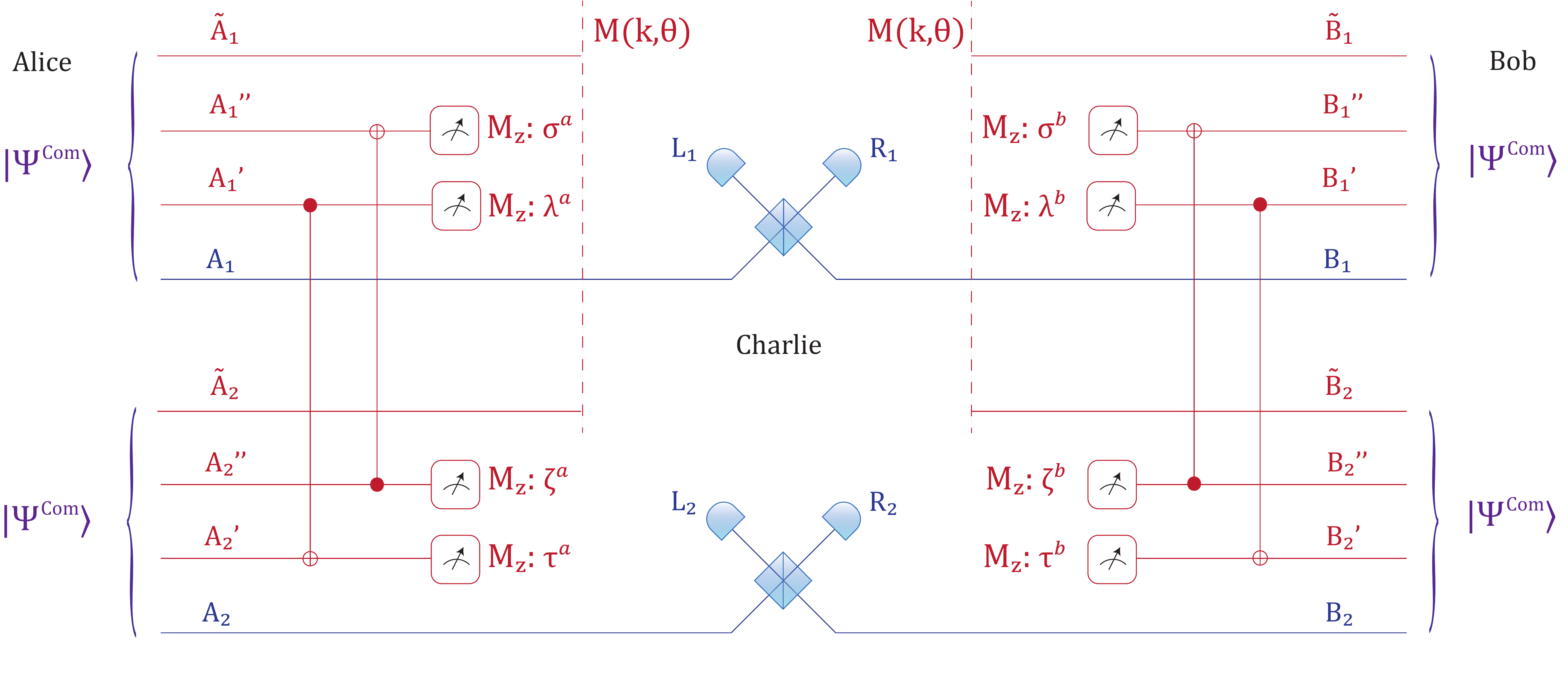}
\caption{Diagram of the entanglement-based MP scheme with a fixed pairing location. Here we take the pair on location $1$ and $2$ for example. Alice and Bob both generate a composite state $\ket{\tilde{\Psi}^{Com}}$ on an optical mode, two ancillary qubits and a ancillary qudit in each round and send the optical modes to Charlie. After Charlie's detection announcement, they perform control operations on the ancillary qubits at the predetermined paired locations and perform measurements on them. They then perform $Z$-basis measurements on all the ancillary qubits $A'_1 (B'_1)$, $A''_1(B''_1)$, $A'_2(B'_2)$ and $A''_2(B''_2)$ to obtain the measurement results $\sigma^a(\sigma^b)$, $\lambda^a(\lambda^b)$, $\zeta^a(\zeta^b)$, $\tau^a(\tau^b)$. The usage of these results is listed in Fig.~\ref{fig:miMDICombReduce}(d).} \label{fig:miMDIdia}
\end{figure}

We introduce encoding redundancies in the coherent-state MDI-QKD scheme, so that the proposed MP scheme owns i.i.d.~state preparation process in each round. Due to this i.i.d.~encoding property, we can see that the MP scheme can be reduced to the coherent-state scheme based on Lemma~\ref{lem:fixedMPreduce} \emph{for any given pairing setting $\vec{\chi}$}. We now clarify this lemma.

\begin{lemma}[Security of the MP scheme under any fixed pairing setting]\label{lem:arbitraryPairing}
In the MP scheme, for any given pairing setting $\vec{\chi}$, suppose that Alice and Bob perform the mode-pairing operations on the ancillary qubits $\{A_i',A_i'',B_i',B_i''\}_i$ in Lemma~\ref{lem:fixedMPreduce} and the pairwise measurement $M(k,\theta)$ defined in Eq.~\eqref{eq:Mktheta} on the ancillary qudits $\{\tilde{A}_i,\tilde{B}_i\}_i$. When the discrete phase number $D\to \infty$, we have,
\begin{enumerate}
\item
(Photon-number Poisson statistics) For the $(i,j)$-location pairs with overall intensities $\mu^a:= \mu^a_i + \mu^a_j, \mu^b:=\mu^b_i + \mu^b_j$, the photon-number-measurement results $(k^a,k^b)$ of Alice and Bob's pairwise measurement $M(k,\theta)$ follow Poisson distributions,
\begin{equation}
\Pr(k^a,k^b) = e^{-(\mu^a+\mu^b)} \frac{(\mu^a)^{(k^a)} (\mu^b)^{(k^b)}}{k^a! k^b!}.
\end{equation}
\item
(Independence of the photon-number states to the intensity) The resultant state after the joint measurement $M(k,\theta)$ is independent of the intensity values $\mu^a$ and $\mu^b$.
\item
(Basis-independence of the single-photon pairs) For the location pairs with the overall photon-number measurement result $k^a=k^b=1$, the $X(\theta)$-basis error rate can be used as a fair estimation of the $Z$-basis phase-error rate.
\end{enumerate}
As a result, with any fixed pairing $\vec{\chi}$, the $\vec{\chi}$-fixed-pairing scheme is able to generate secure key strings with the tagging key-rate formula, Eq.~\eqref{eq:keyrateDecoyMDI}.
\end{lemma}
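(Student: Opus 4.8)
The plan is to prove the three numbered claims by \emph{composing} the two encoding-state reductions already established — Lemma~\ref{lem:fixedMPreduce} (fixed-pairing MP $\to$ coherent-state two-mode MDI-QKD) and Lemma~\ref{lem:Coh2Sin} (coherent-state $\to$ single-photon two-mode MDI-QKD) — and then invoking the security of single-photon two-mode MDI-QKD from Sec.~\ref{ssc:secureTMMDI} together with the equivalence-under-post-selection statement of Corollary~\ref{coro:equivMDIpost} and the tagging argument of Ref.~\cite{gottesman2004security}.

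First I would fix an arbitrary pairing setting $\vec{\chi}\in\mathcal{X}$ and a location pair $(i,j)$, which by Definition~\ref{def:fixedMP} is chosen independently of Charlie's announcement $\vec{C}$. Because the optical modes $A_i,A_j$ (and $B_i,B_j$) are handed to Charlie in Step~\ref{StepMeasure} of Box~\ref{box:fixedMP} \emph{before} any of Alice's or Bob's pairing operations act, all of Charlie's operations $M_c$ — measurement, processing, and announcement — commute with the local maps Alice and Bob apply to their ancillas $\{A_i',A_i'',\tilde A_i\}$; hence the reduction circuit of Lemma~\ref{lem:fixedMPreduce} (the CNOT on $A_i',A_j'$, the $Z$-measurement giving $\tau^a$, and the conditional CNOT and $Z$-measurements giving $\lambda^a,\zeta^a,z_1'',z_2''$) may be executed after Charlie's announcement without affecting any statistics. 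By Lemma~\ref{lem:fixedMPreduce}, conditioned on $\tau^a=\tau^b=1$ the joint encoding state of the pair is $\ket{\tilde\Phi^Z_\mu}\otimes\ket{\tilde\Phi^Z_\mu}$ (Alice's and Bob's copies) up to $\pi$-phase modulations on the optical modes controlled by $z_1'',z_2''$; conditioned on $\tau^a=\tau^b=0,\ \lambda^a=\lambda^b=1$ it is $\ket{\tilde\Phi^{X(\theta)}_{2\mu}}\otimes\ket{\tilde\Phi^{X(\theta)}_{2\mu}}$ up to a simultaneous $\pi$-phase modulation controlled by $\zeta^a,\zeta^b$; the remaining case ($\tau=0,\lambda=0$) is the vacuum ``$0$''-basis used only for decoy estimation. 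These modulations are harmless: on the single-photon component a $\pi$-phase on $A_1$ ($A_2$) is a $Z$-axis rotation on the qubit $A'$ (Fig.~\ref{fig:singleMDIreduce}), which leaves $Z$-basis outcomes invariant and merely relabels the $X(\theta)$ basis, while a simultaneous modulation is a global phase. Thus, after a $\vec{C}$-independent post-selection, each pair has exactly the coherent-state two-mode MDI-QKD encoding states of Box~\ref{box:coherentMDIQKD} with $Z$-intensity $\mu$ and $X$-intensity $2\mu$, so Corollary~\ref{coro:equivMDIpost} reduces its security to that scheme.

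Next I would apply Lemma~\ref{lem:Coh2Sin} to this coherent-state pair. Item~1 gives that $k^a$, the outcome of $M(k^a,\theta^a)$ on $\tilde A_i,\tilde A_j$, is Poisson-distributed with mean $\mu^a=\mu^a_i+\mu^a_j$, and the independence of Alice's and Bob's systems yields the product form $\Pr(k^a,k^b)=e^{-(\mu^a+\mu^b)}(\mu^a)^{k^a}(\mu^b)^{k^b}/(k^a!\,k^b!)$, which is claim~1. Item~2 gives that, conditioned on $(k^a,k^b)$, the post-measurement optical state does not depend on $\mu^a,\mu^b$, which is claim~2. Item~3 gives that, conditioned on $k^a=k^b=1$ and relative phase $\theta$, the remaining state is the single-photon Bell-type state $\ket{\Phi^{Sin}_\theta}\otimes\ket{\Phi^{Sin}_\theta}$; the $X$-pair sifting of Step~\ref{StepKeyMapping} ($\theta^a=\theta^b$, with the $Z$-gate correction when $\theta^a-\theta^b=\pi$ and when Charlie announces $(L,R)/(R,L)$) then turns the sifted $X$-pairs into a single-photon MDI-QKD run in the $X(\theta)$ basis and the $Z$-pairs into a run in the $Z$ basis. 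Since the single-photon source is basis-independent, the BBM92/complementarity security proof of Sec.~\ref{ssc:secureTMMDI} \cite{lo1999Unconditional,shor2000Simple,koashi2009simple} applies, so the $X(\theta)$-basis single-photon error rate faithfully bounds the $Z$-basis phase-error rate, which is claim~3; concavity of $H(\cdot)$ lets the averaged quantity $\bar e^X_{11}=\sum_\theta P(\theta)e^{X(\theta)}_{(1,1)}$ be used in place of the individual $e^{X(\theta)}_{(1,1)}$.

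Finally, since $\vec{\chi}$ is fixed and the preparation in Step~\ref{StepPrep} is i.i.d., the whole run decomposes into $N/2$ independent coherent-state two-mode MDI-QKD instances, to which the tagging argument of \cite{gottesman2004security} applies directly: Alice and Bob need not record the photon numbers but may treat them as tags, estimate the clicked single-pair fraction $q_{(1,1)}$ and the single-pair error rate $e^{X(\theta)}_{(1,1)}$, and — using claims~1 and~2 — obtain these quantities from the observed pair statistics via the decoy-state method \cite{Lo2005Decoy} with intensities $\{0,\nu,\mu\}$; privacy amplification at the rate of Eq.~\eqref{eq:keyrateDecoyMDI} then yields a secure key, establishing the final ``as a result'' clause. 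I expect the main obstacle to be the bookkeeping in the third paragraph — checking that the $X$-pair bit-flip rules (the $Z$-gate on $B_i''$ for $(L,R)/(R,L)$ announcements and for $\theta^a-\theta^b=\pi$) are precisely what convert the paired $\ket{\Phi^{Sin}_\theta}$ states into the standard single-photon MDI-QKD correlation so that the existing phase-error estimator carries over verbatim — rather than the composition of the reductions itself, which is mechanical once the commutation of Charlie's operations with Alice's and Bob's ancilla operations has been noted.
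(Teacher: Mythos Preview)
Your proposal is correct and follows essentially the same approach as the paper: compose Lemma~\ref{lem:fixedMPreduce} with Lemma~\ref{lem:Coh2Sin} to reduce each fixed pair to a single-photon two-mode MDI-QKD instance, then invoke basis-independence and the tagging argument to obtain Eq.~\eqref{eq:keyrateDecoyMDI}. Your write-up is considerably more detailed than the paper's terse proof (in particular the commutation-with-Charlie remark, the harmlessness of the $\pi$-phase modulations, and the explicit appeal to Corollary~\ref{coro:equivMDIpost}), but the logical skeleton is identical.
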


\begin{proof}
From Lemma~\ref{lem:fixedMPreduce} we see that, under any pairing setting $\vec{\chi}$, with the given basis-assignment and post-selection condition, Alice and Bob can generate the encoding states $\ket{\tilde{\Phi}^Z_\mu}$ and $\ket{\tilde{\Phi}^{X(\theta)}_{2\mu}}$ in the coherent-state MDI-QKD scheme. Furthermore, if Alice and Bob perform the pairwise measurement $M(k,\theta)$ defined in Eq.~\eqref{eq:Mktheta} on the remaining state, from Lemma~\ref{lem:Coh2Sin} we see that, the results $(k^a,k^b)$ of $M(k,\theta)$ defined in Eq.~\eqref{eq:Mktheta} obey the Poisson distribution; after the pairwise measurement $M(k,\theta)$, the post-selected photon-number state is independent of the intensity setting $\mu^a$ and $\mu^b$; the post-selected encoding states with $k^a=k^b=1$ will become to the ones in the single-photon MDI-QKD scheme. If we post-select the corresponding encoding states, the security of this scheme will be equivalent to the single-photon MDI-QKD scheme, where the encoding state is basis-independent and the security proof can be easily done based on the complementarity.

If we set the photon numbers $k^a$ and $k^b$ to be the tag on the MP scheme, the key rate of the MP scheme also follows the traditional tagging key-rate formula in Eq.~\eqref{eq:keyrateDecoyMDI}.
\end{proof}

\subsection{Free-pairing MP scheme} \label{ssc:MPfree}
The freedom of choosing the pairing setting, $\vec{\chi}$, in the fixed-pairing MP scheme inspires us to consider a more flexible variant, where $\vec{\chi}$ is determined based on Charlie's announcements $\vec{C}$. We call this variant the free-pairing MP scheme, in Definition \ref{def:freeMP}. Note that the pairing setting, $\vec{\chi}\in\mc{X}$ in Definition~\ref{def:pairing}, is chosen for all the $N$ emitted signals, even for the locations without successful detection. This is due to the usage of Lemma~\ref{lem:arbitraryPairing} in the latter discussion.

\begin{definition}[Pairing Strategy $T$]\label{def:pairingstra}
For $N$ location indexes $i=1,2,...,N$ where $N$ is an even number, a pairing strategy, $T:\mc{C}\to\mc{X}$, is defined to be a map from Charlie's announcement $\vec{C}\in\mc{C}$ in Definition~\ref{def:Cannounce} to the pairing setting $\vec{\chi}\in\mc{X}$ in Definition~\ref{def:pairing}. The possible pairing strategies form a finite set $\mc{T}$.
\end{definition}

\begin{definition}[$T$-Free-Pairing MP] \label{def:freeMP}
In a MP MDI-QKD scheme, if pairing setting $\vec{\chi}$ depends on Charlie's announcement, then we call it the free-pairing MP scheme. A free-pairing MP scheme with pairing strategy $T\in\mc{T}$ is called the $T$-free-pairing MP scheme.
\end{definition}

We remark that, the pairing setting $\vec{\chi}$ in Definition~\ref{def:pairing} and the pairing strategy $T$ in Definition~\ref{def:pairingstra} are defined for \emph{all the locations}, even for the ones with unsuccessful detections. In a pairing strategy $T$, we need to pair all the locations based on Charlie’s announcement $\vec{C}$ and do not perform any post-selection. The pairing definition in this way is helpful to build up a connection between the fixed- and free-pairing schemes, so that the loss sifting in the free-pairing schemes will be handled following the same way as the fixed-pairing schemes. 
For example, suppose there are 4 rounds in the mode-pairing scheme and Charlie announces successful detections on locations 1 and 4, then based on a simple pairing strategy, Alice and Bob pair the locations as follows: $(1,4)$, $(2,3)$. The “lost” pair $(2,3)$ will be used for the parameter estimation. 

We also note that, the $\vec{\chi}$-fixed-pairing scheme can be regarded as a special case of the $T$-free-pairing scheme, where the pairing strategy is a constant function, $T(\vec{C})\equiv\vec{\chi}$, $\forall \vec{C}\in\mc{C}$. Following the fixed-pairing MP scheme in Box~\ref{box:fixedMP}, we present the procedures of the free-pairing MP scheme with a pairing strategy $T$ in Box~\ref{box:freeMP}.

\begin{Boxes}{Free-pairing MP scheme}{freeMP}

\begin{enumerate}
\item 
State preparation: Same as the fixed-pairing MP scheme in Box~\ref{box:fixedMP}.

\item 
Measurement: Same as the fixed-pairing MP scheme in Box~\ref{box:fixedMP}.

Alice and Bob perform the above steps over many rounds. After that, they perform the following data post-processing steps.

\item
Mode Pairing: Charlie announces $\vec{C}\in\mathcal{C}$ of Definition~\ref{def:Cannounce}, based on which Alice and Bob determine a pairing setting $\vec{\chi}=T(\vec{C})\in\mathcal{X}$ of Definition~\ref{def:pairing} using a predetermined pairing strategy $T\in\mc{T}$ of Definition~\ref{def:pairingstra}.

\item 
Basis Sifting: Same as the fixed-pairing MP scheme in Box~\ref{box:fixedMP}.

\item
Key mapping: Same as the fixed-pairing MP scheme in Box~\ref{box:fixedMP}.

\item 
Parameter estimation: Same as the fixed-pairing MP scheme in Box~\ref{box:fixedMP}.

\item 
Key distillation: Same as the fixed-pairing MP scheme in Box~\ref{box:fixedMP}.
\end{enumerate}
\end{Boxes}

In the free-pairing scheme, Charlie announces $\vec{C}$ and then Alice and Bob determine $\vec{\chi}$ based on $\vec{C}$. In a way, Charlie can manipulate the choice of paring settings. Here, we prove the security of the free-pairing scheme by taking the advantage of the arbitrariness of the pairing setting $\vec{\chi}$ in the fixed-pairing scheme.

\begin{figure}[htbp!]
\centering\includegraphics[width=16cm]{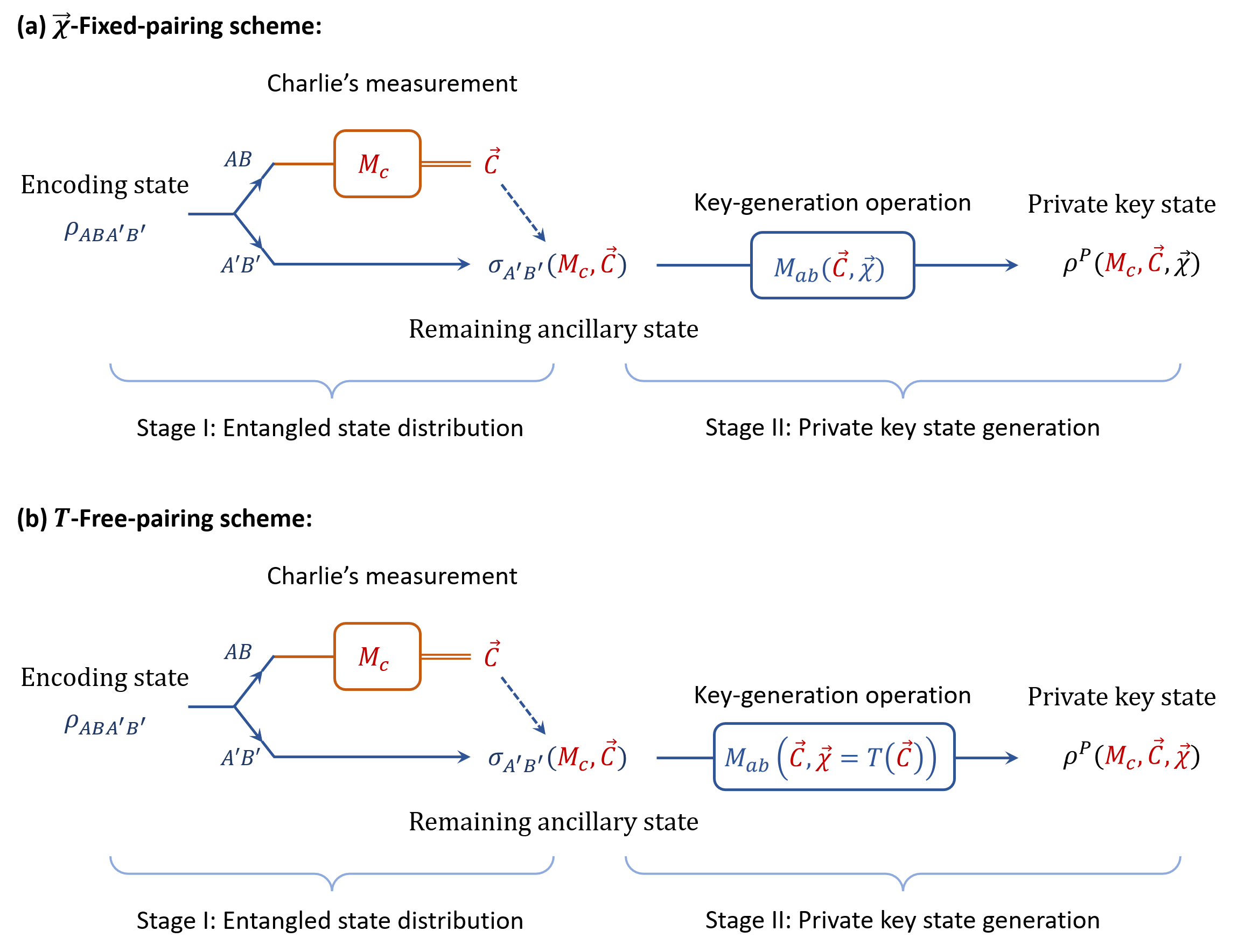}
\caption{Flowcharts of (a) $\vec{\chi}$-fixed-pairing scheme and (b) $T$-free-pairing scheme. Here, the key-generation operation $M_{ab}$ correponds to Steps~\ref{StepSifting}-\ref{StepClassical} in Box~\ref{box:fixedMP} and \ref{box:freeMP}. The major difference of the fixed-pairing and free-pairing scheme lies in how the key-generation operation $M_{ab}$ depends on the pairing setting $\vec{\chi}$ and how the pairing setting is determined by Charlie's announcement $\vec{C}$.} \label{fig:FixedVSFree}
\end{figure}

Before we examine the free-pairing scheme, we first briefly review the process of fixed-pairing scheme and revisit what we proved in Lemma~\ref{lem:arbitraryPairing}. A simple diagram of $\vec{\chi}$-fixed-pairing scheme is shown in Fig.~\ref{fig:FixedVSFree}(a), as a special case of Fig.~\ref{fig:MDIMcMab}. Alice and Bob first generate $N=2n$ rounds of the encoding state $\rho_{ABA'B'}$ as the pure state $\ket{\tilde{\Psi}^{Com}}_{Alice} \otimes \ket{\tilde{\Psi}^{Com}}_{Bob}$ defined in Eq.~\eqref{eq:PsiComPurified2}. They send all the optical signals of systems $A$ and $B$ to Charlie, who performs unknown measurement $M_c$ on them and announces the result $\vec{C}$ defined in Definition~\ref{def:Cannounce}. Note that $M_c$ refers to all Charlie's operations, including measurement, result processing, and announcement strategy.

\begin{figure}[htbp!]
\centering\includegraphics[width=8cm]{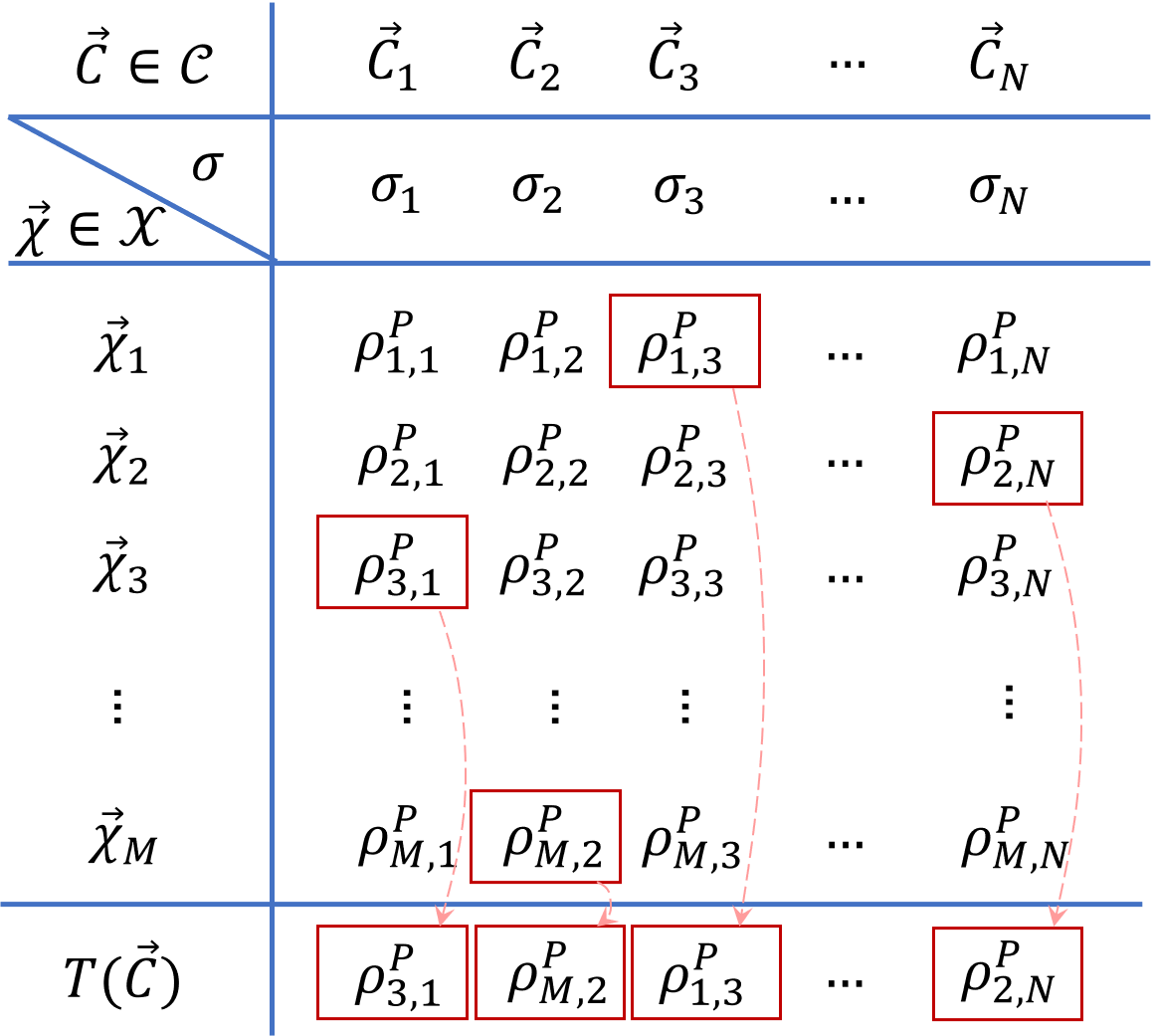}
\caption{Security of the free-pairing MP scheme. Given Charlie's measurement $M_c$, we list all possible ancillary states $\sigma_{\vec{C}}:=\sigma(\vec{C})$ and the private key states $\rho^P_{\vec{\chi},\vec{C}}:=\rho^P(\vec{C},\vec{\chi})$ with different announcements $\vec{C}$ and pairing settings $\vec{\chi}$. Adopting a free pairing strategy $T(\vec{C})$ in the key-generation procedure is essentially to pick out a private key state $\rho^P$ for each announcement $\vec{C}$ according to $\vec{\chi}=T(\vec{C})$. For the example shown in the figure, connected by dashed red arrow lines, $T(\vec{C}_1)=\vec{\chi}_3$, $T(\vec{C}_2)=\vec{\chi}_M$, $T(\vec{C}_3)=\vec{\chi}_1$, and $T(\vec{C}_N)=\vec{\chi}_2$.} \label{fig:FreePairSecure}
\end{figure}

Based on $M_c$ and $\vec{C}$, Alice and Bob's remaining ancillary state $\sigma(M_c,\vec{C})$ at hand is supposed to be entangled if Charlie honestly follows the designed measurement. Afterwards, in the $\vec{\chi}$-fixed-pairing scheme, Alice and Bob perform the key-generation operation $M_{ab}(\vec{C},\vec{\chi})$, including basis-sifting, key mapping, parameter estimation, and key distillation (i.e., Steps~\ref{StepSifting}-\ref{StepClassical} in Box~\ref{box:fixedMP}) based on a pre-determined pairing setting $\vec{\chi}$ in Definition~\ref{def:pairing} and Charlie's announcement $\vec{C}$. Following Lemma~\ref{lem:arbitraryPairing}, we have proven that, for \textit{all possible Charlie's measurement operation $M_c$ and announcement result $\vec{C}$}, the final state $\rho^P(M_c,\vec{C},\vec{\chi})$ will be private for every $\vec{\chi}\in\mc{X}$.

Now we are going to show the security of a free-pairing scheme.

\begin{theorem}[Security of the free-pairing MP scheme]
For any pairing strategy $T\in\mc{T}$ in Definition~\ref{def:pairingstra}, under all Charlie's measurement operations $M_c$ and announced results $\vec{C}$, in the $T$-free-pairing scheme, Alice and Bob is able to generate private key state $\rho^P(M_c,\vec{C},\vec{\chi})$ with the tagging key-rate formula Eq.~\eqref{eq:keyrateDecoyMDI}.
\end{theorem}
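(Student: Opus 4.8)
The plan is to reduce the $T$-free-pairing scheme, announcement-by-announcement, to an instance of the $\vec{\chi}$-fixed-pairing scheme of Box~\ref{box:fixedMP}, for which Lemma~\ref{lem:arbitraryPairing} already guarantees security for \emph{every} pairing setting $\vec{\chi}\in\mc{X}$. The structural fact that makes this work is that the encoding is i.i.d.: in each of the $N$ rounds Alice and Bob prepare the same composite state $\ket{\tilde{\Psi}^{Com}}$ of Eq.~\eqref{eq:PsiComPurified2}, so the optical systems $A,B$ delivered to Charlie are identical to those in any fixed-pairing scheme; the space of attacks $M_c$ is therefore the same, and after Charlie announces $\vec{C}$ the ancillary state $\sigma(M_c,\vec{C})$ held by Alice and Bob on $\{\tilde{A}_i,A_i',A_i'',\tilde{B}_i,B_i',B_i''\}_i$ is fixed by $M_c$ and $\vec{C}$ alone --- it carries no commitment to any pairing, since all pairing-dependent operations are deferred to the post-announcement stage.

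First I would spell out that in Box~\ref{box:freeMP} the only step that differs from Box~\ref{box:fixedMP} is the determination of $\vec{\chi}$: once $\vec{\chi}=T(\vec{C})$ is computed, Steps~3--7 of the free-pairing scheme are literally Steps~\ref{StepSifting}--\ref{StepClassical} of the fixed-pairing scheme --- the basis-assignment CNOT and $Z$-measurements, the key-mapping CNOTs, the global photon-number-and-phase measurement $M(k,\theta)$ of Eq.~\eqref{eq:Mktheta}, the parameter estimation on the paired and the ``lost'' pairs, and the hashing-based privacy amplification --- all acting on $\sigma(M_c,\vec{C})$. Hence the private key state output of the $T$-free-pairing scheme under attack $M_c$ and announcement $\vec{C}$ is exactly $\rho^P(M_c,\vec{C},T(\vec{C}))$, the output of the $T(\vec{C})$-fixed-pairing scheme under the same attack and announcement. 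Formally this is an application of Corollary~\ref{coro:equivMDIpost} (equivalently Lemma~\ref{lem:equivMDI}) conditioned on $\vec{C}$: the two schemes share the same initial ancillary state and the same control operation $\vec{C}$-$M_{ab}$, so they render the same private state.

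Then I would invoke Lemma~\ref{lem:arbitraryPairing}: for every $\vec{\chi}\in\mc{X}$, every $M_c$ and every $\vec{C}$, the state $\rho^P(M_c,\vec{C},\vec{\chi})$ is private, with secret fraction given by the tagging formula Eq.~\eqref{eq:keyrateDecoyMDI} through the estimated single-photon-pair fraction $q_{(1,1)}$ and phase-error rate $e^{X(\theta)}_{(1,1)}$ (in the $D\to\infty$ limit, the discrete-phase-randomisation correction being negligible for $D\gtrsim 12$). Since $T(\vec{C})$ is always a legitimate member of $\mc{X}$, the free-pairing output $\rho^P(M_c,\vec{C},T(\vec{C}))$ lies in this provably-private family for every $\vec{C}$, which is the claim. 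Note that because Lemma~\ref{lem:arbitraryPairing} holds for all $\vec{\chi}$ \emph{simultaneously} under one and the same $M_c$, an adversary who adaptively chooses his announcement $\vec{C}$ --- and thereby steers $\vec{\chi}=T(\vec{C})$ --- cannot gain anything, since whatever pairing he forces was already covered.

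The main obstacle is not a computation but pinning down the $\vec{C}$-independence of $\sigma(M_c,\vec{C})$: I must argue that no aspect of the pairing needs to be decided before $\vec{C}$ is known. This is precisely what the source replacement of Sec.~\ref{Sec:random} provides --- the random phases and photon numbers live in the ancillas $\tilde{A}_i,\tilde{B}_i$, and the joint measurement $M(k,\theta)$ of Eq.~\eqref{eq:Mktheta} can be applied to whichever two locations $T(\vec{C})$ selects, so the ``photon-number channel'' post-selection and the reduction to single-photon two-mode MDI-QKD go through for the chosen pair exactly as in the fixed case. A secondary point to treat carefully is the loss sifting: the ``lost'' pairs used for parameter estimation are themselves determined by $\vec{\chi}=T(\vec{C})$, but since pairing in Definition~\ref{def:pairing} is defined over all $N$ locations, the estimation step is handled identically in the fixed- and free-pairing pictures, so its validity transfers verbatim from Lemma~\ref{lem:arbitraryPairing}.
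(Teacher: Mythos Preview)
Your proposal is correct and follows essentially the same route as the paper's proof: you establish that the optical state handed to Charlie is pairing-independent (so the attack space coincides), that the post-announcement ancillary state $\sigma(M_c,\vec{C})$ is the same, and that the key-generation operation then applied is exactly that of the $T(\vec{C})$-fixed-pairing scheme, whence Lemma~\ref{lem:arbitraryPairing} delivers privacy. The paper packages these as three numbered statements and uses Fig.~\ref{fig:FreePairSecure} to visualise the selection $\vec{C}\mapsto\rho^P(M_c,\vec{C},T(\vec{C}))$ from the grid of fixed-pairing outputs, but the logical content is the same as yours.
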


\begin{proof}
We are going to prove the following three statements,
\begin{enumerate}
\item
In any fixed-pairing scheme where $T(\vec{C})\equiv\vec{\chi}\in\mc{X}$, for any possible Charlie's measurement operation $M_c$ and announcement result $\vec{C}$, after the key-generation operation $M_{ab}$ (i.e., Steps~\ref{StepSifting}-\ref{StepClassical} in Box~\ref{box:fixedMP}) , the final state $\rho^P$ is private with the tagging key-rate formula Eq.~\eqref{eq:keyrateDecoyMDI}.
	
\item
All possible Charlie's measurements $M_c$ in a $T$-free-pairing scheme can be realized in every fixed-pairing scheme~\footnote{That is, the set of possible $M_c$ in the free-pairing scheme is contained in that of fixed-pairing scheme. Strictly speaking, the sets of possible $M_c$ in the two cases are the same, since it only depends on the state Alice and Bob prepare, which is independent of $T$-free-pairing or $\chi$-fixed-pairing.}.

\item
Given Charlie's measurement $M_c$, the final state generated by a $T$-free-pairing scheme is equal to the private state $\rho^P(M_c,\vec{C},\vec{\chi})$ generated by one of the $\vec{\chi}$-fixed-pairing schemes.
\end{enumerate}

The first statement is a direct result of Lemma~\ref{lem:arbitraryPairing}.

To prove the second statement, we notice that the encoding state prepared in both free-pairing and fixed-pairing schemes are the same. Then, Charlie receives exactly the same quantum states in the two schemes. The only difference is that Charlie receives different classical information of the pairing strategies $T\in \mc{T}$. Recall that the fixed pairing $\vec{\chi}$ is a special pairing strategy in $\mc{T}$. The second statement becomes the following question: for any Charlie's measurements $M_c$ in a $T$-free-pairing scheme, whether Charlie can perform the same $M_c$ in a fixed-pairing scheme. Since the pairing strategy $T$ is independent of the quantum states sent to Charlie, the answer is yes and hence the second statement holds.


Now, we prove the third statement. For a given Charlie's measurement $M_c$, we can list the resultant ancillary states $\sigma(\vec{C})$ at Alice and Bob's hand corresponding to different measurement results $\vec{C}$, shown in Fig.~\ref{fig:FreePairSecure}. Furthermore, we list all the possible resultant states $\rho^P$ after the key-generation operation $M_{ab}$ correspond to different fixed pairing settings $\vec{\chi}\in\mc{X}$ in the fixed-pairing MP schemes, shown in Fig.~\ref{fig:FreePairSecure}.

In the $T$-free-pairing scheme in Definition~\ref{def:freeMP}, Alice and Bob choose the pairing setting $\vec{\chi}= T(\vec{C})$ during the key generation process $M_{ab}$. For a pairing strategy $T\in\mc{T}$ in Definition~\ref{def:pairingstra}, we have $T(\vec{C})\in \mc{X}$. That is, in a $T$-free-pairing scheme, the resultant state after the key-generation operation will be the same as a private key state from one of the fixed-pairing schemes, as shown in Fig.~\ref{fig:FreePairSecure}.

Combining the three statements, we conclude that: under any possible Charlie's measurement operation $M_c$ and announcement $\vec{C}$,
a $T$-free-pairing scheme results in a private key state with the key length determined by the tagging key-rate formula Eq.~\eqref{eq:keyrateDecoyMDI}.
\end{proof}

Having said that Alice and Bob can choose the pairing strategy $T$ freely even based on Charlie's announcement $\vec{C}$,
the pairing setting $\vec{\chi}$ cannot depend on the measurement results of Alice's and Bob's ancillary states. This is due to the fact that in our security proof, the pairing setting $\vec{\chi}$ is determined before any operation on the ancillary states.



\subsection{Prepare-and-measure MP scheme} \label{ssc:PrepareandMeasure}
As the final step, we reduce the entanglement-based free-pairing MP scheme in Box~\ref{box:freeMP} to the prepare-and-measure one presented in the main text. In the following discussions, we focus on the reduction on Alice's side. The reduction of Bob's side follows the same manner. Also, for the simplicity of discussion, we take a paired location $1$ and $2$ as an example.

First, we remove all the pairwise joint measurements $M(k,\theta)$ defined in Eq.~\eqref{eq:Mktheta} used to read out the global photon number and the relative phase on each pair of optical modes. Instead, we assume Alice and Bob measure the (discrete) phase of each pulse directly. In this way, the relative phase can be obtained by classical post-processing. Then, the global photon number is lost due to the uncertain relationship between phase and photon number measurements. In practice, Alice and Bob can use the decoy-state method to estimate the fraction of the single-photon component. Furthermore, we assume the random phase is measured at the beginning, which is then equivalent to a random phase encoding on the emitted optical pulses. In this way, we remove the two ancillary qudits $\tilde{A}_1$ and $\tilde{A}_2$ in the encoding process.

After removing the phase encoding ancillary systems, we are now going to remove the four ancillary qubits $A_1'$, $A_1''$, $A_2'$ and $A_2''$. To do this, we move all the measurements on the ancillary qubits mentioned in Lemma~\ref{lem:fixedMPreduce} forward.
In Fig.~\ref{fig:miMDICombReduce}(a), we list the complete basis and key-assignment procedure of the entanglement-based MP scheme. The final measurement results $\sigma^a,\lambda^a,\zeta^a$ and $\tau^a$ are used in different ways, as shown in Fig.~\ref{fig:miMDICombReduce}(d). Note that the $Z$-basis measurements commute with all $CNOT$ gates. Therefore, moving all measurements before the control gates between locations $1$ and $2$, one can equivalently obtain the measurement results $\sigma^a,\lambda^a,\zeta^a$, and $\tau^a$ by classical post-processing, as shown in Fig.~\ref{fig:miMDICombReduce}(b).

\begin{figure}[htbp]
\includegraphics[width=18cm]{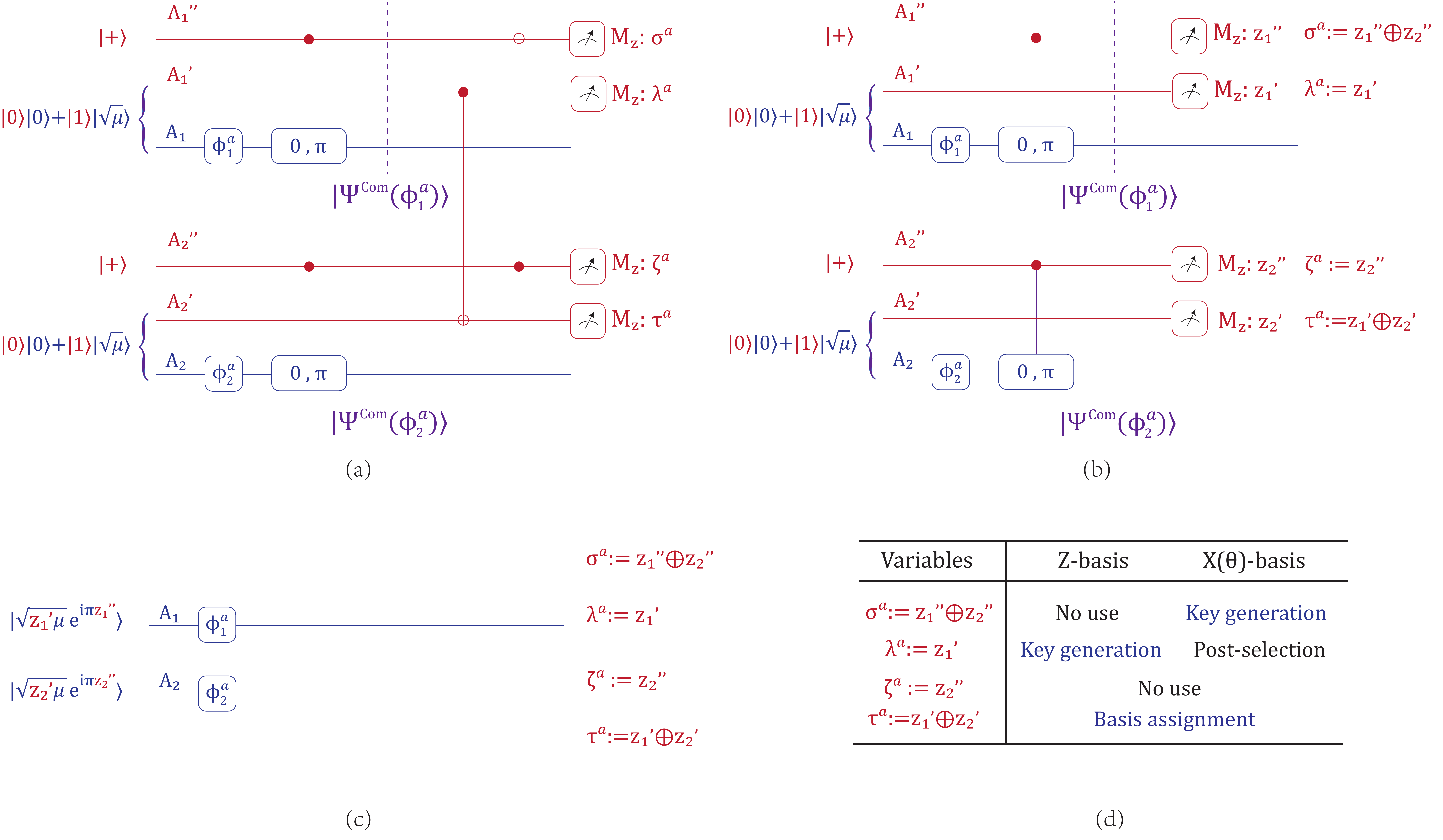}
\caption{(a) All of the operations that Alice performs on the ancillary qubits $A_1''$, $A_1'$, $A_2''$ and $A_2'$ to assign bases and generate raw key bits when the locations $1$ and $2$ are paired. $\tau^a$ is used for the basis assignment, $\lambda^a$ is used for key generation in the $Z$-basis and post-selection in the $X(\theta)$-basis and $\sigma^a$ is used for key generation in the $X(\theta)$-basis. (b) An equivalent process whereby Alice moves all the measurement before the global control gates between locations $1$ and $2$. The measurement results $z_1''$ and $z_1'$ on $A_1''$ and $A_1'$ reveal the encoded phase and the intensity on $A_1$, respectively. (c) Alice further measures all ancillaries in the beginning. This is equivalent to preparing the coherent states on $A_1$ and $A_2$ with modulated intensities and phases. (d) The values and usage of the variables $\sigma^a,\lambda^a,\zeta^a$ and $\tau^a$.} \label{fig:miMDICombReduce}
\end{figure}

In Fig.~\ref{fig:miMDICombReduce}(b), the measurement results $z_1'$ and $z_1''$ on $A_1'$ and $A_1''$ reveal the encoded intensity and phase on $A_1$, respectively. Alice further moves the measurements on the ancillary qubits to the beginning. It is then equivalent to preparing the coherent states $\ket{\sqrt{z_1' \mu} e^{i\pi z_1''}}, \ket{\sqrt{z_2' \mu} e^{i\pi z_2''}}$ on $A_1$ and $A_2$ with modulated intensities and phases. Thus, we reduce the entanglement-based scheme to a prepare-and-measure version.

\begin{Boxes}{Prepare-and-measure MP scheme}{PrepMeaMP}

\begin{enumerate}
\item 
State preparation: In the $i$-th round, Alice generates two random bits $z_i'$ and $z_i''$ and a random phase $\phi^a_i$ uniformly and discretely chosen from $\{\frac{2\pi}{D}j\}_{j\in[D]}$. She generates a coherent state $\ket{\sqrt{z_i'\mu}e^{i(\pi z_i''+\phi^a_i)}}$. Similarly, Bob generates two random bits $y_i'$ and $y_i''$, $\phi^b_i$, and a coherent state $\ket{\sqrt{y_i'\mu}e^{i(\pi y_i''+\phi^b_i)}}$.

\item 
Measurement: Alice and Bob send the coherent states to Charlie, who is supposed to perform a single-photon interference measurement. Charlie announces the clicks of the $L$ and/or $R$ detectors.

Alice and Bob perform the above steps over many rounds. Afterwards, they perform the following data post-processing steps.

\item
Mode pairing: Alice and Bob set a strategy to pair the clicked locations, i.e., to group each two detected rounds together based on Charlie's announcement.

\item 
Basis sifting: For each pair on locations $i$ and $j$, Alice sets $\tau^a := z_i'\oplus z_j'$, $\lambda^a:= z_i'$. Alice sets the basis of the pair to $Z$ if $\tau^a=1$, $X$ if $\tau^a=0$ and $\lambda^a=1$, and `0' if $\tau^a=\lambda^a=0$. Bob assigns the basis in the same way. Alice and Bob announce the bases. If the announced bases are $X,X$ or $Z,Z$, they maintain the signals.

\item
Key mapping: For each $Z$-pair on locations $i$ and $j$, Alice records $\lambda^a$ as the key bit. For each $X$-pair, Alice sets $\sigma^a:= z_i''\oplus z_j''$ as the key bit and records $\theta^a_{i,j}:= \phi^a_i - \phi^a_j$ for later usage. For all $X$-pairs, Alice and Bob further announce $\theta^a$ and $\theta^b$. If $\theta^a=\theta^b$, they keep the $X$-pair data; if $\theta^a -\theta^b = \pi$, they keep the $X$-pair data with Bob flipping the value of $\sigma^b$; otherwise, they discard the $X$-pair data. Moreover, if Charlie's announcement at locations $i$ and $j$ is $(L,R)$ or $(R,L)$, Bob flips the value of $\sigma^b$.

\item 
Parameter estimation: Alice and Bob estimate the fraction of clicked signals $q_{(1,1)}$ and the corresponding phase-error rate $e^X_{(1,1)}$ in the $Z$-pairs, where Alice and Bob both emit a single photon at locations $i$ and $j$. They also estimate the quantum bit-error rate $E^Z_{(\mu,\mu)}$ of the $Z$-pairs.

\item 
Key distillation: Alice and Bob reconcile the key string to $\kappa^a$ and perform privacy amplification using universal-2 hashing matrices. The sizes of hashing matrices are determined by the estimated single-photon yield $q_{(1,1)}$ and the $X(\theta)$-basis error rate $e^{X(\theta)}_{(1,1)}$ according to Eq.~\eqref{eq:keyrateDecoyMDI}.
\end{enumerate}
\end{Boxes}

To reduce the scheme to the MP scheme in the main text, we further simplify the announcement of the random phase in the $X$ basis. Note that the encoding phase in the $i$-th round is $\phi^a_i + \pi z_i''$, with $\phi^a_i \in [0,2\pi)$. The phase modulation of $z_i''$ is redundant in the experiment. We now try to absorb it into $\phi^a_i$.

In the following discussion, without loss of generality, we first assume that $\theta^a$ and $\theta^b$ are equal to $0$ or $\pi$. In this case, Alice and Bob's random phases on locations $i$ and $j$ are either $0$ or $\pi$. We denote them as
\begin{equation}
\begin{aligned}
&\phi^a_i = \pi \hat{\phi}^a_i, \quad \phi^a_j = \pi \hat{\phi}^a_j, \\
&\phi^b_i = \pi \hat{\phi}^b_i, \quad \phi^b_j = \pi \hat{\phi}^b_j,
\end{aligned}
\end{equation}
where $\hat{\phi}^a_i, \hat{\phi}^a_j, \hat{\phi}^b_i$ and $\hat{\phi}^b_j$ are four binary variables.

During the $X$-basis key mapping, Alice and Bob's raw key and random phase at locations $i$ and $j$ are
\begin{equation}
\begin{aligned}
&\sigma^a:= z_i''\oplus z_j'', \quad \theta^a:= \pi(\hat{\phi}^a_i \oplus \hat{\phi}^a_j), \\
&\sigma^b:= y_i''\oplus y_j''\oplus \delta_\theta \oplus \delta_d, \quad \theta^b:= \pi(\hat{\phi}^b_i \oplus \hat{\phi}^b_j).
\end{aligned}
\end{equation}
Here, the binary variable $\delta_\theta:= (\hat{\phi}^a_i \oplus \hat{\phi}^a_j \oplus \hat{\phi}^b_i \oplus \hat{\phi}^b_j)$ indicates whether the announcements of $\theta^a$ and $\theta^b$ are the same or different. $\delta_\theta$ describes the detection announcement in $i$ and $j$: if the announcements are $(L,L)$ and $(R,R)$, then $\delta_d = 0$; if the announcements are $(L,R)$ and $(R,L)$, $\delta_d = 1$.

To simplify the key-mapping strategy, Alice and Bob absorb the random phase $z_i'', z_j''; y_i'', y_j''$ into $\phi_i^a, \phi_j^a; \phi_i^a, \phi_j^b$. They revise the raw key announcement as follows:
\begin{equation}
\begin{aligned}
\bar{\sigma}^a := (\hat{\phi}^a_i \oplus z_i'') \oplus (\hat{\phi}^a_j \oplus z_j''), \\
\bar{\sigma}^b := (\hat{\phi}^b_i \oplus y_i'') \oplus (\hat{\phi}^b_j \oplus y_j'').
\end{aligned}
\end{equation}
Meanwhile, they set the announced basis angle to be $\bar{\theta}^a = \bar{\theta}^b = 0$. In this case, the announced information is less than the original scheme. Hence, the privacy of Alice's raw key $\bar{\sigma}^a$ will not worsen. The new raw key bit is related to the original raw key bit as
\begin{equation}
\begin{aligned}
\bar{\sigma}^a = \sigma^a \oplus (\hat{\phi}^a_i \oplus \hat{\phi}^a_j), \\
\bar{\sigma}^b = \sigma^b \oplus (\hat{\phi}^b_i \oplus \hat{\phi}^b_j).
\end{aligned}
\end{equation}
Therefore, $\bar{\sigma}^a = \bar{\sigma}^b$ iff $\sigma^a = \sigma^b$. Up to a random phase difference, the new raw key bits are equivalent to the original ones.

In general, $\theta^a$ and $\theta^b$ may not be $0$ or $\pi$. In this case, we revise the random phase announcement. Let
\begin{equation} \label{eq:thetar}
\theta^a_{r} := (\phi^a_i + \pi z_i'') - (\phi^a_j + \pi z_j'') = (\phi^a_i - \phi^a_j) + \pi(z_i'' - z_j'') = \theta^a + \pi(z_i'' \oplus z_j'').
\end{equation}
Now, we further split $\theta^a_{r}$ into two parts,
\begin{equation} \label{eq:thetarsplit}
\theta^a_r = \theta^a_0 + \pi \kappa^a,
\end{equation}
where $\theta^a_0 := \theta^a_r \text{ mod } \pi \in [0,\pi)$ denotes half of $\theta^a_r$ while $\kappa^a := (\theta^a_r/\pi) \text{ mod } 2$. Using a similar method, one can show that $\kappa^a = \kappa^b$ iff $\sigma^a = \sigma^b$.

Therefore, in an equivalent but simplified $X(\theta)$-basis key mapping procedure, Alice (Bob) regards the phase $(\phi^a_i+z_i'')$($(\phi^b_i+y_i'')$) as a whole. They calculate the basis angle as Eq.~\eqref{eq:thetar} and \eqref{eq:thetarsplit}. The corresponding revised scheme can now be reduced to the MP scheme in the main text.

\section{Decoy-state estimation} \label{Sec:decoyestimate}

Here, we apply decoy-state analysis~\cite{Zhang2017improved} to the mode-pairing scheme, with the typical three-intensity settings --- vacuum (zero intensity), weak (intensity $\nu$) and signal (intensity $\mu$) states. The discussion below can naturally be extended to the decoy-state method with an arbitrary number of intensities. In the mode-pairing scheme, Alice and Bob each independently generate coherent states with intensities $\mu_i^{a}$ and $\mu_i^{b}$, respectively, for the $i$-th round. Suppose the state intensities are randomly chosen from the set $\{0,\nu,\mu\}$ with probabilities $s_{0}, s_\nu$ and $s_\mu$, respectively, where, $s_{0} + s_\nu + s_\mu = 1$.

We consider the pairing setting $\vec{\chi}$ in Definition~\ref{def:pairing}, which defined on all the locations, including the locations with unsuccessful clicks. In the practical mode-pairing scheme, Alice and Bob decide the pairing setting based on Charlie's announcement $\vec{C}$, which is a $2N$-bit string. On each location, Charlie announces two-bit information $(L_i,R_i)$, indicating the detection results of the two detectors $L$ and $R$. The pairing strategy we mentioned in the main text is an algorithm to choose a specific pairing setting $\vec{\chi}$ based on Charlie's announcement $\vec{C}$. We simply pair all the unpaired locations (including the locations with unsuccessful detections and the ones discarded during the pairing process) in Algorithm~1 to construct a complete pairing setting $\vec{\chi}$.

In the following discussion, we consider the case that Alice and Bob choose a specific fixed pairing setting $\vec{\chi}$. Since the following decoy-state estimation methods holds for all pairing settings $\vec{\chi}$ from $\mathbf{X}$, it can also be applied to the pairing setting $\vec{\chi}$ chosen by the pairing strategy based on $\vec{C}$. A detailed discussion on the security of mode-pairing scheme with a pairing setting $\vec{\chi}$ chosen based on Charlie's announcement $\vec{C}$ can be found in Sec.~\ref{ssc:MPfree}.

Let us first study the source property under a pairing setting $\vec{\chi}$. Alice and Bob pair two locations, $i$ and $j$, based on $\vec{\chi}$ as one round of QKD.
The intensity vector of the $(i,j)$ pair is denoted as
\begin{equation}
\vec{\mu} = (\mu^a_i + \mu^a_j, \mu^b_i + \mu^b_j),
\end{equation}
where $\mu^a_i, \mu^a_j, \mu^b_i, \mu^b_j \in \{0,\nu,\mu\}$ and hence $\mu^a_i + \mu^a_j, \mu^b_i + \mu^b_j \in \{0,\nu,\mu,2\nu,\mu+\nu,2\mu\}$. The probability of Alice and Bob sending out intensities $\vec{\mu}$ for the $(i,j)$ pair can be derived from the single-round probabilities as
\begin{equation} \label{eq:qmu}
q^{\vec{\mu}} = \sum_{(\mu^a_i + \mu^a_j, \mu^b_i + \mu^b_j) = \vec{\mu}} s_{\mu^a_i}\, s_{\mu^a_j}\, s_{\mu^b_i}\, s_{\mu^b_j},
\end{equation}
which is independent of the pairing strategy.

In the entanglement-based mode-pairing scheme presented in Sec.~\ref{Sec:Security}, for each pair of locations $(i,j)$, Alice is able to perform the overall-photon-number measurement on the systems $\tilde{A}_i$ and $\tilde{A}_j$, respectively. We denote the photon-number measurement result to be $k^a$. Likewise, Bob can also perform the photon-number measurement on $\tilde{B}_i$ and $\tilde{B}_j$ and obtain $k^b$. Let $\vec{k}$ denote the photon numbers $(k^a,k^b)$ on the $(i,j)$ pair. Given the intensity setting $\vec{\mu}$ on the $(i,j)$ pair, the probability for Alice and Bob's photon-number measurement result to be $\vec{k}=(k^a,k^b)$ is given by
\begin{equation} \label{eq:twoPoisson}
\Pr(\vec{k}|\vec{\mu})= e^{-(\mu^a_i+\mu^a_j)-(\mu^b_i+\mu^b_j)}\frac{(\mu^a_i + \mu^a_j)^{k^a}(\mu^b_i + \mu^b_j)^{k^b}}{k^a! k^b!},
\end{equation}
which is a product of two Poisson distributions, since the intensity settings of Alice and Bob are independent.

In the above description, Alice and Bob first choose the intensity setting $\vec{\mu}$ with probability $q^{\vec{\mu}}$ and then measure the ancillary systems to obtain the photon number $\vec{k}$ with probability $\Pr(\vec{k}|\vec{\mu})$. From another viewpoint, the \emph{conditional probability} of the intensity setting being $\vec{\mu}$ if Alice and Bob measure the photon number on $(i,j)$-pair and obtain the result $\vec{k}$ is given by
\begin{equation} \label{eq:Prmuk}
\begin{aligned}
\Pr(\vec{\mu}|\vec{k}) 
&= \frac{q^{\vec{\mu}}\Pr(\vec{k}|\vec{\mu})}{\sum_{\vec{\mu}'}  q^{\vec{\mu}'}\Pr(\vec{k}|\vec{\mu}') },
\end{aligned}
\end{equation}
where the summation of $\vec{\mu}'=(\mu^a, \mu^b)$ takes over all possible source-intensity settings. 
The conditional probability $\Pr(\vec{\mu}|\vec{k})$ reflects an intrinsic property of the source, which is independent of the specific pairing setting $\vec{\chi}$. We remark that, the probabilities $\Pr(\vec{k}|\vec{\mu}), q^{\vec{\mu}}$ and $\Pr(\vec{\mu}|\vec{k})$ in Eq.~\eqref{eq:twoPoisson} and \eqref{eq:Prmuk} are \emph{a prior} probability distributions without the post-selection of the emitted pulses based on Charlie's announcement $\vec{C}$.

Now, let us investigate the detection side. For an $(i,j)$ pair, Alice and Bob determine the bases and raw key bits based on the encoded intensities and phases. If the intensity vector $\vec{\mu}$ satisfies
\begin{equation}
\mu^a_i \mu^a_j = \mu^b_i \mu^b_j = 0, \mu^a_i+\mu^a_j+\mu^b_i+\mu^b_j\neq 0,
\end{equation}
then it is a $Z$-pair. Below, we mainly consider decoy-state estimation in the $Z$ basis, but the analysis for $X$-basis estimation is similar.

After mode pairing and sifting, Alice and Bob obtain $M$ rounds of $Z$-pairs with successful detections; of these, $E$ rounds are erroneous. Let $M^{\vec{\mu}}$ denote the number of rounds detected with intensity setting $\vec{\mu}$, among which $E^{\vec{\mu}}$ rounds are wrongly detected. Let the total and erroneous numbers of rounds with photon numbers $\vec{k}$ be denoted by $M_{\vec{k}}$ and $E_{\vec{k}}$, respectively, among which $M^{\vec{\mu}}_{\vec{k}}$ and $E^{\vec{\mu}}_{\vec{k}}$ are the rounds with intensity setting $\vec{\mu}$. We have the following relations,
\begin{equation} \label{eq:MsumEsum}
\begin{aligned}
M &= \sum_{\vec{\mu}} M^{\vec{\mu}} = \sum_{\vec{k}} M_{\vec{k}} =\sum_{\vec{\mu}}\sum_{\vec{k}} M^{\vec{\mu}}_{\vec{k}} \\
E &= \sum_{\vec{\mu}} E^{\vec{\mu}} = \sum_{\vec{k}} E_{\vec{k}} =\sum_{\vec{\mu}}\sum_{\vec{k}} E^{\vec{\mu}}_{\vec{k}}.
\end{aligned}
\end{equation}
In the entire QKD process, the values $\{M^{\vec{\mu}}, E^{\vec{\mu}}\}_{\vec{\mu}}$ are known to Alice and Bob. The values $\{M_{\vec{k}}$ and $E_{\vec{k}}\}_{\vec{k}}$, however, are unknown to them but are fixed after Charlie's announcement.

We remark that, no matter what the intensity setting $\vec{\mu}$ is, if Alice and Bob measures the photon number on the ancillary systems and obtain the result $\vec{k}$, the post-selected state is independent of the intensity setting. As a result, \emph{conditional on Alice and Bob's specific photon-number measurement result $\vec{k}$, Charlie's announcement strategy is independent of the intensity setting $\vec{\mu}$}. This is an alternative description of the basic assumption that the original decoy-state method makes in its security proof~\cite{Lo2005Decoy}.

Based on this property, in all the generated $Z$-pairs with the photon-number measurement result to be $\vec{k}$, the expected ratio of different intensity settings should be the same as the ratio of the emitted states,
\begin{equation} \label{eq:MmukZ}
\frac{ M^{\vec{\mu}}_{\vec{k}} }{ M^{\vec{\mu}'}_{\vec{k}} } \approx \frac{ \Pr(\vec{\mu},\vec{k}) }{ \Pr(\vec{\mu}',\vec{k}) } = \frac{ q^{\vec{\mu}}\Pr(\vec{k}|\vec{\mu}) }{q^{\vec{\mu}'}\Pr(\vec{k}|\vec{\mu}') },
\end{equation}
where the approximation is due to statistical fluctuation. In the limit of infinite data size, the approximation becomes an equality.

In our experiment, we assume that Alice and Bob randomly determine the intensity $\vec{\mu}$ at each location by measuring an ancillary system. Measurement results are used to classically control the intensity of the signals to Charlie. In practice, Alice and Bob first measure the overall photon number and obtain $\vec{k}$. In this case, the intensity choice is determined only by $\vec{k}$, independent of Charlie's announcement.
Among the $M_{\vec{k}}$ pairs with photon number $\vec{k}$, the number of pairs with an intensity setting $\vec{\mu}$ is a variable $\mathcal{M}_{\vec{k}}^{\vec{\mu}}$ that is determined by Alice and Bob's ancillary systems. The expected ratio of the intensity setting $\vec{\mu}$ is given by
\begin{equation} \label{eq:conPrmuK2}
\mathbb{E}\left( \frac{ \mathcal{M}^{\vec{\mu}}_{\vec{k}}}{ M_{\vec{k}}} \right) = \Pr(\vec{\mu}|\vec{k}) = \frac{q^{\vec{\mu}}\Pr(\vec{k}|\vec{\mu})}{\sum_{\vec{\mu}'}  q^{\vec{\mu}'}\Pr(\vec{k}|\vec{\mu}') },
\end{equation}
where the expectation value is taken for the random variable $\mathcal{M}_{\vec{k}}^{\vec{\mu}}$ characterising Alice and Bob's intensity settings. Similarly, for erroneous detection among the $E_{\vec{k}}$ pairs with $\vec{k}$ photon numbers, $\mathcal{E}_{\vec{k}}^{\vec{\mu}}$ is the variable for describing the number of pairs with intensity setting $\vec{\mu}$,
\begin{equation} \label{eq:conPrmuK2E}
\mathbb{E}\left( \frac{ \mathcal{E}^{\vec{\mu}}_{\vec{k}}}{ E_{\vec{k}}} \right) = \Pr(\vec{\mu}|\vec{k}) = \frac{q^{\vec{\mu}}\Pr(\vec{k}|\vec{\mu})}{\sum_{\vec{\mu}'}  q^{\vec{\mu}'}\Pr(\vec{k}|\vec{\mu}') }.
\end{equation}
From Eqs.~\eqref{eq:conPrmuK2} and \eqref{eq:conPrmuK2E}, we have,
\begin{equation} \label{eq:EMeMmuk}
\begin{aligned}
\mathbb{E}[\mathcal{M}^{\vec{\mu}}_{\vec{k}}] &= \Pr(\vec{\mu}|\vec{k}) M_{\vec{k}}, \\
\mathbb{E}[\mathcal{E}^{\vec{\mu}}_{\vec{k}}] &= \Pr(\vec{\mu}|\vec{k}) E_{\vec{k}}. \\
\end{aligned}
\end{equation}
Let the composite variables $\mathcal{M}^{\vec{\mu}}$ and $\mathcal{E}^{\vec{\mu}}$ denote the overall numbers of pairs with intensity settings $\vec{\mu}$,
\begin{equation} \label{eq:barMmu}
\begin{aligned}
\mathcal{M}^{\vec{\mu}} &= \sum_{k} \mathcal{M}^{\vec{\mu}}_{\vec{k}}, \\
\mathcal{E}^{\vec{\mu}} &= \sum_{k} \mathcal{E}^{\vec{\mu}}_{\vec{k}}. \\
\end{aligned}
\end{equation}
From Eqs.~\eqref{eq:EMeMmuk} and \eqref{eq:barMmu}, we have
\begin{equation} \label{eq:EMeMmu}
\begin{aligned}
\mathbb{E}[\mathcal{M}^{\vec{\mu}}] &= \sum_{\vec{k}}\Pr(\vec{\mu}|\vec{k}) M_{\vec{k}} \\
\mathbb{E}[\mathcal{E}^{\vec{\mu}}] &= \sum_{\vec{k}}\Pr(\vec{\mu}|\vec{k}) E_{\vec{k}}. \\
\end{aligned}
\end{equation}

To summarise, decoy-state analysis can be formulated as follows. \emph{For unknown $\{M_k, E_k\}_k$ and known $\{\Pr(\vec{\mu}|\vec{k})\}_{\vec{\mu},\vec{k}}$, the lower bound of $\mathcal{M}^{(\mu,\mu)}_{(1,1)}$ and the upper bound of $\mathcal{E}^{(\mu,\mu)}_{(1,1)}$ are evaluated given the observed constraints of the detected pair numbers $\{ M^{\vec{\mu}}, E^{\vec{\mu}}\}_{\vec{\mu}}$}. This problem can be solved through the following steps.
\begin{enumerate}
\item
Record the number of clicked rounds $\{M^{\vec{\mu}}, E^{\vec{\mu}}\}_{\vec{\mu}}$.
\item
Evaluate the upper and lower bounds of the expectation values $\left\{ \mathbb{E}\left[\mathcal{M}^{\vec{\mu}}\right], \mathbb{E}\left[\mathcal{E}^{\vec{\mu}}\right] \right\}_{\vec{\mu}}$ using the (inverse) Chernoff bound for independent variables.
\item
Solve the following linear-programming problems:
\begin{equation} \label{eq:progM11}
\begin{aligned}
\min \quad & M_{(1,1)} \\
\text{s.t.} \quad & \mathbb{E}^L\left[\mathcal{M}^{\vec{\mu}}\right] \leq \sum_{\vec{k}}\Pr(\vec{\mu}|\vec{k}) M_{\vec{k}} \leq \mathbb{E}^U\left[\mathcal{M}^{\vec{\mu}}\right], \quad \forall \vec{\mu},\\
& 0 \leq M_{\vec{k}} \leq M,
\end{aligned}
\end{equation}
\begin{equation} \label{eq:progE11}
\begin{aligned}
\max \quad & E_{(1,1)} \\
\text{s.t.} \quad & \mathbb{E}^L\left[\mathcal{E}^{\vec{\mu}}\right] \leq \sum_{\vec{k}}\Pr(\vec{\mu}|\vec{k}) E_{\vec{k}} \leq \mathbb{E}^U\left[\mathcal{E}^{\vec{\mu}}\right], \quad \forall \vec{\mu},\\
& 0 \leq E_{\vec{k}} \leq E,
\end{aligned}
\end{equation}
where $\vec{\mu}\in \{ 0, \nu, \mu\}^{\otimes 2}$, $\vec{k} \in \mathbb{Z}_+^{\otimes 2}$ and $\mathbb{Z}_+ = \{0,1,2,...\}$ is the set of natural numbers. Denote the results for these two programmes as $M^{L}_{(1,1)}$ and $E^{U}_{(1,1)}$, respectively.

\item
To evaluate the upper bound of $M^{(\mu,\mu)}_{(1,1)}$ and the lower bound of $E^{(\mu,\mu)}_{(1,1)}$ using the values of $M^{L}_{(1,1)}$ and $E^{U}_{(1,1)}$ with Eq.~\eqref{eq:EMeMmuk}, one can apply the Chernoff bound again for independent variables.
\end{enumerate}

Here, we explain why Alice and Bob can use the Chernoff bound for independent variables in Step~2 and 4. As we mentioned before Eq.~\eqref{eq:MmukZ}, when the photon number $\vec{k}$ on the pair of location is given, the intensity $\vec{\mu}$ for each pair of locations is determined merely by the conditional probability $\Pr(\vec{\mu}|\vec{k})$ in Eq.~\eqref{eq:EMeMmuk}. As a result, the intensity choices in different pair of locations are independent when the photon number $\vec{k}$ of the pairs are given. Moreover, when the emitted state in each round is identical, the intensity choices in different pairs of locations are also identical.
To be more clear, we consider all the $M_{\vec{k}}$ clicked pairs for the pairs with specific measured photon number $\vec{k}$. If Alice and Bob check the intensity label $\vec{\mu}$ on these pairs, and treat these intensity labels as random variables, then these varibles are i.i.d.,
\begin{equation} \label{eq:inten_var}
\hat{\vec{\mu}}_i = \vec{\mu}, \text{  with probability  } \Pr(\vec{\mu}|\vec{k}).\quad  (i=1,2,...,M_{\vec{k}})
\end{equation}
Here, the value of the conditional probability $\Pr(\vec{\mu}|\vec{k})$ only depends on the property of the source.
In this case, Alice and Bob can evaluate the real number of pairs $M_{\vec{k}}^{\vec{\mu}}$ from its expectation value $M_{\vec{k}} \Pr(\vec{\mu}|\vec{k})$ using Chernoff bound of independent variables (Step 4 in the decoy estimation procedure above, cf. Eq.~(33) in Ref.~\cite{Zhang2017improved}); on the other hand, if Alice and Bob were awared of the photon number measurement results on the pairs, they can infer the expectation value from the real number of pairs $M_{\vec{k}}^{\vec{\mu}}$ by the inverse Chernoff bound (Step 2 in the decoy estimation procedure above, cf. Eq.~(30) in Ref.~\cite{Zhang2017improved}).

In the practical Step 2, however, Alice and Bob do not have the photon numer measurement results on the pairs. As an alternative way, instead of solving the inverse Chernoff problem for each photon number setting $\vec{k}$, they can combine the estimation problems together: to infer the expectation value of $\mathcal{M}^{\vec{\mu}}:= \sum_{\vec{k}} \mathcal{M}^{\vec{\mu}}_{\vec{k}}$ based on the observed value of $M^{\vec{\mu}}$. When the underlying values the photon numbers $\vec{k}$ for each pair of locations are determined, the intensity variables $\hat{\vec{\mu}}_i$ are independent, following Eq.~\eqref{eq:inten_var}, regardless of the photon number $\vec{k}$. For different $\vec{k}$, however, the variables $\hat{\vec{\mu}}_i$ are not identical anymore, as the conditional probabilities $\Pr(\vec{\mu}|\vec{k})$ differ. In this case, we can still apply the inverse Chernoff bound for the independent variables.

We have introduced the decoy-state procedure for the $Z$-basis. Likewise, Alice and Bob can perform decoy-state analysis for the sifted $X$-basis obtain the single-photon-pair numbers $M_{(1,1)}^{X,L}$ and $E_{(1,1)}^{X,U}$ for successful and erroneous detection, respectively. Let $M^{(\mu,\mu),Z,L}_{(1,1)}$ denote the estimated lower bound on the single-photon-pair number with intensity $(\mu^a,\mu^b) = (\mu,\mu)$ in the case where Alice and Bob emit $Z$-basis data. The finite key-length formula of the mode-pairing scheme is given by
\begin{equation}
\begin{aligned}
M_R &= M^{(\mu,\mu),Z}\left\{q^{(\mu,\mu),Z,L}_{(1,1)}\left[ 1 - H\left(e^{(\mu,\mu),Z,U}_{(1,1),ph}\right) \right] - f H(E^Z_{\mu\mu}) \right\}, \\
&= M^{(\mu,\mu),Z,L}_{(1,1)}\left[ 1 - H\left( \frac{E^{(\mu,\mu),Z,U}_{(1,1),ph}}{ M^{(\mu,\mu),Z,L}_{(1,1)} } \right) \right] - fM^{(\mu,\mu),Z} H(E^Z_{\mu\mu}),
\end{aligned}
\end{equation}
where $E^{(\mu,\mu),Z,U}_{(1,1),ph}$ is the upper bound of the phase-error number of the single-photon pair with the intensity $(\mu^a,\mu^b) = (\mu,\mu)$ when Alice and Bob emit $Z$-basis data. $E^{(\mu,\mu),Z,U}_{(1,1),ph}$ can be directly estimated from $M^{(\mu,\mu),Z,L}_{(1,1)}$, $M_{(1,1)}^{X,L}$ and $E_{(1,1)}^{X,U}$ based on the statistics of random sampling without a replacement problem~\cite{Fung2010practical}. The number of $(\mu,\mu)$-pairs in the $Z$-basis $M^{(\mu,\mu,Z)}$ and the quantum bit-error number $E^Z_{\mu\mu}$ can be directly obtained from the experimental results. The lower bound of the fraction $q^{(\mu,\mu),Z,L}_{(1,1)}$ and upper bound of the phase-error rate $e^{(\mu,\mu),Z,U}_{(1,1),ph}$ are defined as
\begin{equation} \label{eq:qmumuemumu}
\begin{aligned}
q^{(\mu,\mu),Z,L}_{(1,1)} &:= \frac{ M^{(\mu,\mu),Z,L}_{(1,1)} }{ M^{(\mu,\mu),Z} }, \\
e^{(\mu,\mu),Z,U}_{(1,1),ph} &:= \frac{E^{(\mu,\mu),Z,U}_{(1,1),ph}}{ M^{(\mu,\mu),Z,L}_{(1,1)} },
\end{aligned}
\end{equation}
In the asymptotic case, $q^{(\mu,\mu),Z,L}_{(1,1)}$ and $e^{(\mu,\mu),Z,U}_{(1,1),ph}$ become $q_{11}$ and $e^X_{11}$, respectively, defined in the main text.

In practice, the linear-programming problems in Eq.~\eqref{eq:progM11} and \eqref{eq:progE11} can be analytically solved by converting the detection number $M^{\vec{\mu}}, E^{\vec{\mu}}, M_{\vec{k}}$ and $E_{\vec{k}}$ into traditional ratio values (i.e. the gain and yield). To do this, we denote the expected total pair number to be $N_p = \lfloor N/2 \rfloor$.
Furthermore, we define $N^{\vec{\mu}} := q^{\vec{\mu}} N_p$ to be the expected number of rounds with intensity setting $\vec{\mu}$. We define
$N^\infty_{\vec{k}}:= \sum_{\vec{\mu}} \Pr(\vec{k}|\vec{\mu}) N^{\vec{\mu}}$ to be the expected number of rounds where the source sends out $\vec{k}$-photon states. Then the `gain' and `yield' are defined as
\begin{equation} \label{eq:QeQYeYdetected}
\begin{aligned}
\mathcal{Q}^{\vec{\mu}} := \frac{ \mathcal{M}^{\vec{\mu}} }{ N^{\vec{\mu}} }, & \quad \mathcal{Q}_E^{\vec{\mu}} := \frac{ \mathcal{E}^{\vec{\mu}} }{ N^{\vec{\mu}} }, \\
Y_{\vec{k}}^* := \frac{M_{\vec{k}}}{N^\infty_{\vec{k}} }, & \quad (Y_E)_{\vec{k}}^* := \frac{E_{\vec{k}}}{N^\infty_{\vec{k}} }.
\end{aligned}
\end{equation}

The relationship between $\mathcal{Q}^{\vec{\mu}}$ and $Y_{\vec{k}}^*$ is,
\begin{equation} \label{eq:normaldecoydetected}
\begin{aligned}
\mathbb{E}[\mathcal{Q}^{\vec{\mu}}] &= \mathbb{E}\left[\frac{ \mathcal{M}^{\vec{\mu}} }{ N^{\vec{\mu}} }\right] = \frac{ \mathbb{E}[\mathcal{M}^{\vec{\mu}} ]}{ N^{\vec{\mu}} }\\
&= \sum_{\vec{k}} \Pr(\vec{\mu}|\vec{k}) \frac{ M_{\vec{k}} }{ q^{\vec{\mu}} N_p } \\
&= \sum_{\vec{k}} \frac{ q^{\vec{\mu}} \Pr( \vec{k} | \vec{\mu} ) }{ \sum_{\vec{\mu}'} q^{\vec{\mu}'} \Pr(\vec{k}|\vec{\mu}')  } \frac{ M_{\vec{k}} }{q^{\vec{\mu}} N_p } \\
&= \sum_{\vec{k}} \Pr( \vec{k} | \vec{\mu} )  \frac{ M_{\vec{k}} }{ \sum_{\vec{\mu}'} \left(N_p q^{\vec{\mu}'}\right) \Pr(\vec{k}|\vec{\mu}') } \\
&= \sum_{\vec{k}} \Pr(\vec{k}|\vec{\mu}) Y_{\vec{k}}^*, \\
\mathbb{E}[\mathcal{Q}_E^{\vec{\mu}}] &= \sum_{\vec{k}} \Pr(\vec{k}|\vec{\mu}) (Y_E)_{\vec{k}}^*, \\
\end{aligned}
\end{equation}
where $\Pr(\vec{k}|\vec{\mu})$ is the product of two Poisson distributions, given by Eq.~\eqref{eq:twoPoisson}. In this way, one can directly apply the traditional decoy-state formulas for MDI-QKD as a solution to the linear-programming problems \cite{ma2012statistical,curty2014finite}. Note that the value of $N_p$ will not affect the correctness of Eq.~\eqref{eq:normaldecoydetected}. If we redefine $N_p$ as $N_p' = cN_p$, then the defined gain and yield will become $\frac{1}{c}\mathcal{Q}^{\vec{\mu}}$ and $\frac{1}{c}Y_{\vec{k}}^*$, respectively. From Eq.~\eqref{eq:QeQYeYdetected}, we can see that the final estimated value $M_{\vec{k}}$ is invariant only if the experimental data $\{M^{\vec{\mu}}\}$ are unchanged.

\section{Simulation formulas} \label{Sec:Simu}

Here, we list the formulas used for key-rate simulation of different QKD schemes. In Sec.~\ref{Ssec:MPsimu}, we derive the simulation formula for the free-round MP scheme with maximal pairing length $l$. In a similar fashion, we derive the simulation formula for the time-bin MDI-QKD scheme in Sec.~\ref{SSec:timebinMDIQKDsimu}. Finally, in Sec.~\ref{Ssec:othersimu}, we list the simulation formulas used in the main text for the decoy-BB84 and PM-QKD schemes.

\subsection{Mode-pairing scheme} \label{Ssec:MPsimu}

The key rate of the MP scheme can be calculated using Eqs.~\eqref{eq:keyrate},~\eqref{eq:rpSupp},~\eqref{eq:pulseQ}, \eqref{eq:pairQ}, \eqref{eq:pairEZ}, \eqref{eq:barq11} and \eqref{eq:e11X} below.

In the MP scheme, Alice and Bob use the $Z$-basis to generate keys. In the asymptotic case, we assume that Alice and Bob choose to randomly emit signals with intensity $\{0,\mu\}$ with a probability of nearly $1/2$ and a decoy intensity $\nu$ with negligible probability. Let us express the coherent pulse emitted by Alice in the $i$-th round as $\ket{\sqrt{z_i^a \mu} \exp(i\phi_i^a)}$, with $z_i^a\in \{0,1\}$ being a random variable indicating the intensity and $\phi_i^a$ being a random phase. Similarly, Bob emits $\ket{\sqrt{z_i^b \mu} \exp(i\phi_i^b)}$ in the $i$-th round. The $i$-th round intensity setting is then denoted by a $2$-bit vector $z_i := [z_i^a, z_i^b]$.

In the simulation, Alice and Bob are assumed to emit pulses to Charlie through a typical symmetric-loss channel. The single-side transmittance of the channel (with included detection efficiency $\eta_d$ included) is $\eta_s$, and the dark count of each detector is $p_d$. The simulation data is provided in Fig.~4 in the main text. In this case, the channel is i.i.d. for each round. Alice and Bob then pair the clicked pulses and set their bases. For the $(i,j)$-th pulses to be paired, let $\tau_{i,j} = [\tau^a_{i,j}, \tau^b_{i,j}] := [z_i^a\oplus z_j^a, z_i^a\oplus z_j^a]$, where $\oplus$ is the bit-wise addition modulo 2. When $\tau_{i,j}=[1,1]$, the $(i,j)$-pair is then set to be a signal pair.

The key rate of the MP scheme is expressed as
\begin{equation} \label{eq:keyrate}
R = r_p(p,l)\, r_s \left\{\bar{q}_{11}[ 1 - H(e^X_{(1,1)}) ] - f H(E^Z_{(\mu,\mu)}) \right\},
\end{equation}
where $l$ is the maximal pairing gap, $r_p(p,l)$ is the expected pair number generated during each round, and $r_s$ is the probability that a clicked pulse pair is a signal pair, $\bar{q}_{11}$ and $e^X_{(1,1)}$ are, respectively, the expected single-photon pair ratio in all signal pairs and the phase error of the single-photon pairs, which are estimated via decoy-state methods. $f$ is the error-correction efficiency, $E^Z_{(\mu,\mu)}$ is the bit-error rate of the signal pairs and $p$ is the probability of the $i$-th signal to be successfully clicked (as defined in Eq.~\eqref{eq:pulseQ}). The value of $r_p(p,l)$ has already been given in the Methods as
\begin{equation} \label{eq:rpSupp}
r_p(p, l) = \left[ \frac{1}{p [1 - (1-p)^l]} + \frac{1}{p} \right]^{-1}.
\end{equation}


Here, we first consider the calculation of $r_s, \bar{q}_{11}$ and $E^Z_{(\mu,\mu)}$. In the $i$-th round, we denote the click events of the left and right detectors at the $i$-th turn by two variables $(L_i, R_i)$. The successful click variable is $C_i = L_i \oplus R_i$. Only when $C_i = 1$ does an successful click occur. The detection probability $\Pr(C_i=1|z_i)$ is given by
\begin{equation} \label{eq:pulsexiQ}
\Pr(C_i=1|z_i) \approx 1 - (1-2p_d) \exp\left[-\eta_s\mu\left(z^a_i + z^b_i\right)\right].
\end{equation}
The expected click probability, i.e., the total transmittance of each round, is
\begin{equation} \label{eq:pulseQ}
p := \Pr(C_i = 1) = \sum_{z_i} \Pr(C_i = 1|z_i) \Pr(z_i) = \frac{1}{4} \sum_{z_i} \Pr(C_i = 1|z_i).
\end{equation}
The pairing rate $r_p(p, l)$ can then be calculated using Eq.~\eqref{eq:rpSupp}.

The phase-randomised coherent states emitted in the $i$-th round can also be regarded as the mixture of photon-number states. $\Pr(C_i=1|n_i)$ denotes the detection probability when Alice and Bob emit photon-number states $\ket{n_i^a}$ and $\ket{n_i^b}$, respectively; it is given by
\begin{equation} \label{eq:pulsexiY}
\Pr(C_i=1|n_i) \approx 1 - (1-2p_d) (1 - \eta_s)^{(n_i^a + n_i^b)}.
\end{equation}


Now, let us calculate the signal-pair ratio $r_s$. Without loss of generality, we consider the $i$-th and $j$-th turns as a paired group. For a general turn, the probability of a click caused by the intensity setting $z$ is
\begin{equation}
\Pr(z|C=1) = \frac{ \Pr(z, C=1) }{ \Pr(C = 1) } = \frac{ \Pr(C=1|z) }{ \sum_{z'} \Pr(C=1|z') }.
\end{equation}
Note that the subscript is omitted, since the detections in all rounds are identical and independently distributed in our simulation.

In the MP scheme, an successful click happens when $\tau_{i,j} = [1,1]$. Therefore, four possible configurations of $z_i$ and $z_j$ (which generate signal pairs) are
\begin{equation}
[z_i,z_j] \in \{[00,11],[01,10],[10,01],[11,00]\};
\end{equation}
of these, $Err := \{[00,11],[11,00]\}$ are the two configurations that cause bit errors. To simplify our notations, we introduce several events,
\begin{equation}
\begin{aligned}
&\Pr(C) = \Pr(\textit{Pair Clicked}) := \Pr(C_i=C_j=1)= p^2, \\
&\Pr(E) = \Pr(\textit{Pair Effective}) := \Pr(z_i\oplus z_j=11), \\
&\Pr(Err) = \Pr(\textit{Pair Erroneous}) := \Pr([z_i,z_j]\in Err), \\
&\Pr(S) = \Pr(\textit{Single-photon Pair}) := \Pr(n_i\oplus n_j = 11).
\end{aligned}
\end{equation}

The signal-pair ratio $r_s$ is then
\begin{equation} \label{eq:pairQ}
\begin{aligned}
r_s =&\Pr(E|C) = \Pr(z_i \oplus z_j = 11| C_i=1, C_j=1) \\
&= \sum_{z_i \oplus z_j =11} \, \Pr(z_i| C_i=1) \Pr(z_j | C_j = 1) \\
&= \sum_{z_i \oplus z_j = 11} \frac{\Pr(C_i = 1|z_i) \Pr(z_i) }{ \Pr(C_i = 1) } \frac{\Pr(C_j = 1|z_j) \Pr(z_j)}{\Pr(C_j = 1)} \\
&= \frac{1}{16}\frac{1}{p^2} \sum_{z_i \oplus z_j = 11} \Pr(C_i = 1|z_i) \Pr(C_j = 1|z_j), \\
\end{aligned}
\end{equation}

We remark that, when $\eta_s\mu \ll 1$, the signal-pair ratio $r_s$ is approximately
\begin{equation}
\begin{aligned}
r_s &\approx \frac{1}{16} \frac{1}{p^2} [2 (\eta_s\mu)^2 ] \\
&= \frac{1}{16} \frac{1}{(\eta_s\mu)^2} [2 (\eta_s\mu)^2 ]  = \frac{1}{8},
\end{aligned}
\end{equation}
which is nearly a constant independent of $\eta_s$ and $\mu$.

The expected error rate $E^Z_{(\mu,\mu)}$ of the $(i,j)$-pair is
\begin{equation} \label{eq:pairEZini}
\begin{aligned}
E^Z_{(\mu,\mu)} &=\Pr\left(Err|E,C\right) \\
&= \frac{ \Pr\left(Err,E|C \right) }{ \Pr\left(E|C\right) } = \frac{ \Pr\left(Err|C \right) }{ \Pr\left(E|C\right) }\\
&= r_s^{-1} \Pr\left(Err|C \right).
\end{aligned}
\end{equation}
Note that the erroneous pair condition is contained in the effective pair condition. The erroneous probability is given by
\begin{equation}
\begin{aligned}
&\Pr\left(Err|C \right) =\Pr \left([z_i,z_j] \in Err|C_i = C_j=1 \right) \\
&= \sum_{[z_i,z_j] \in Err} \, \Pr(z_i| C_i=1) \Pr(z_j | C_j = 1) \\
&= \sum_{[z_i,z_j] \in Err} \frac{\Pr(C_i = 1|z_i) \Pr(z_i) }{ \Pr(C_i = 1) } \frac{\Pr(C_j = 1|z_j) \Pr(z_j)}{\Pr(C_j = 1)} \\
&= \frac{1}{16}\frac{1}{p^2} \sum_{[z_i,z_j]\in Err } \Pr(C_i = 1|z_i) \Pr(C_j = 1|z_j).
\end{aligned}
\end{equation}
Therefore,
\begin{equation} \label{eq:pairEZ}
\begin{aligned}
E^Z_{(\mu,\mu)} &= \frac{1}{16}\frac{1}{r_s p^2} \sum_{[z_i,z_j] \in Err} \Pr(C_i = 1|z_i) \Pr(C_j = 1|z_j).
\end{aligned}
\end{equation}

Then, we calculate the expected single-photon pair ratio $\bar{q}_{11}$ in the effective signal-pairs:
\begin{equation} \label{eq:barq11}
\begin{aligned}
\bar{q}_{11} &= \Pr(S| E, C) = \frac{ \Pr(S, E, C) }{ \Pr(E, C) } \\
&= \frac{1}{r_s p^2}\Pr(S, E, C) \\
&= \frac{1}{r_s p^2} \sum_{z_i,z_j} \Pr(S,E,C| z_i,z_j) \Pr(z_i,z_j) \\
&= \frac{1}{16} \frac{1}{r_s p^2}  \sum_{z_i\oplus z_j=11} \Pr(S, C| z_i,z_j) \\
&= \frac{1}{16} \frac{1}{r_s p^2}  \sum_{z_i\oplus z_j=11} \Pr(C| S, z_i,z_j)  \Pr(S | z_i,z_j) \\
&= \frac{1}{16} \frac{P_\mu(1)^2}{r_s p^2}  \sum_{z_i\oplus z_j=11} \Pr(C_i=1| n_i = z_i)\Pr(C_j=1| n_j = z_j),
\end{aligned}
\end{equation}
where $P_\mu(k) = \exp(-\mu)\frac{\mu^k}{k!}$ is the Poisson distribution and $\Pr(C_i=1|n_i)$ is defined in Eq.~\eqref{eq:pulsexiY}.

The phase-error rate formula of the MP scheme cannot be calculated for single-pulse-based methods. In the MP scheme, if the decoy-state estimation is perfect, then the phase-error rate should be the same as that of the original two-mode MDI-QKD scheme. Here, we apply the formulas (A9) and (A11) in Ref.~\cite{Ma2012alternative}:
\begin{equation} \label{eq:e11X}
\begin{aligned}
Y_{11} & = (1-p_d)^2 \left[ \dfrac{\eta_a\eta_b}{2} + (2\eta_a + 2\eta_b -3\eta_a\eta_b)p_d + 4(1-\eta_a)(1-\eta_b)p_d^2 \right], \\
e_{11}^X Y_{11} & = e_0Y_{11} - (e_0- e_d)(1-p_d^2)\dfrac{\eta_a\eta_b}{2}.
\end{aligned}
\end{equation}
Here, $e_0=1/2$ is the error caused by vacuum signals, $e_d$ is the pre-set misalignment error, and $\eta_a(\eta_b)$ are the transmittances from Alice (Bob) to Charlie. In our case, $\eta_a = \eta_b = \eta_s$.

\subsection{Time-bin MDI-QKD} \label{SSec:timebinMDIQKDsimu}

The key rate of time-bin MDI-QKD is calculated as Eqs.~\eqref{eq:keyrateMDI},~\eqref{eq:MDIQmumu},~\eqref{eq:MDIEZmumu},~\eqref{eq:MDIq11}, and ~\eqref{eq:e11X2} below.

The time-bin MDI-QKD scheme can be regarded as a special case of the MP scheme with fixed-pairing setting and a predetermined effective-intensity setting $[z_i, z_j]$ at locations $i$ and $j$. That is, if the pairs are clicked, then the pair is effective with $z_i\oplus z_j = 11$. The derivation of the simulation formulas for time-bin MDI-QKD is similar to that of the MP scheme in Sec.~\ref{Ssec:MPsimu}.

In the time-bin MDI-QKD scheme, Alice and Bob use the $Z$-basis to generate keys. In the asymptotic case, we assume that they choose to emit the $X$-basis state with a negligible probability, such that the basis-sifting factor is almost $1$. The key rate of the time-bin MDI-QKD scheme is
\begin{equation} \label{eq:keyrateMDI}
R_{MDI} = \frac{1}{2} Q_{\mu\mu}\{\bar{q}_{11} [1 - H(e^X_{(1,1)})] - f H(E^Z_{(\mu,\mu)}) \},
\end{equation}
where the factor $\frac{1}{2}$ is for a comparison with the MP scheme, since the key-rate formula of $R_{MDI}$ is defined on a pair of rounds while the key-rate formula Eq.~\eqref{eq:keyrate} for the MP scheme is defined on each round. $Q_{\mu\mu}$ is the probability of the pair with successful detection. The parameters $\bar{q}_{11}, e^X_{(1,1)}$ and $E^Z_{(\mu,\mu)}$ are similar to those in the MP scheme.

Similar to Sec.~\ref{Ssec:MPsimu}, we also introduce $z_i = [\zeta^a_i, z^b_i]$ and $C_i$ to denote the intensity setting of the $i$-th location and detection results, respectively. The single-location detection probability of a given $z_i$ is the same as that in the MP scheme,
\begin{equation} \label{eq:pulsexiQ2}
\Pr(C_i=1|z_i) \approx 1 - (1-2p_d) \exp\left[-\eta_s\mu\left(z^a_i + z^b_i\right)\right].
\end{equation}

Meanwhile, we introduce $n_i=[n^a_i,n^b_i]$ to denote the emitted photon number at the $i$-th location. The conditional detection probability $\Pr(C_i=1|n_i)$ is still given by
\begin{equation} \label{eq:pulsexiY2}
\Pr(C_i=1|n_i) \approx 1 - (1-2p_d) (1 - \eta_s)^{(n_i^a + n_i^b)}.
\end{equation}

In time-bin MDI-QKD and for a fixed pair of locations $i$ and $j$, the configurations of $z_i$ and $z_j$ are predetermined to be chosen in the set
\begin{equation}
[z_i,z_j] \in \{[00,11],[01,10],[10,01],[11,00]\}.
\end{equation}
Among these $Err := \{[00,11],[11,00]\}$ are the two configurations causing bit errors. To simplify our notation, we also introduce several events,
\begin{equation}
\begin{aligned}
&\Pr(C) = \Pr(\textit{Pair Clicked}) := Q_{\mu\mu}, \\
&\Pr(Err) = \Pr(\textit{Pair Erroneous}) := \Pr([z_i,z_j]\in Err), \\
&\Pr(S) = \Pr(\textit{Single-photon Pair}) := \Pr(n_i\oplus n_j = 11).
\end{aligned}
\end{equation}

By definition we have
\begin{equation} \label{eq:MDIQmumu}
\begin{aligned}
Q_{\mu\mu} &= \Pr(C) = \sum_{z_i\oplus z_j = 11} \Pr(z_i,z_j) \Pr(C_i=1|z_i) \Pr(C_j=1|z_j) \\
&= \frac{1}{4} \sum_{z_i\oplus z_j = 11} \Pr(C_i=1|z_i) \Pr(C_j=1|z_j).
\end{aligned}
\end{equation}

The quantum bit-error rate $E^Z_{(\mu,\mu)}$, is
\begin{equation} \label{eq:MDIEZmumu}
\begin{aligned}
E^Z_{(\mu,\mu)} &= \Pr(Err|C) = \frac{\Pr(Err,C)}{\Pr(C)} \\
&= \frac{1}{Q_{\mu\mu}} \sum_{[z_i,z_j]\in Err} \Pr(z_i,z_j) \Pr(C_i=1|z_i) \Pr(C_j=1|z_j) \\
&= \frac{1}{4}\frac{1}{Q_{\mu\mu}} \sum_{[z_i,z_j]\in Err} \Pr(C_i=1|z_i) \Pr(C_j=1|z_j).
\end{aligned}
\end{equation}

The ratio of the single-photon pairs in the signals with successful detection $\bar{q}_{11}$ is
\begin{equation} \label{eq:MDIq11}
\begin{aligned}
\bar{q}_{11} &= \Pr(S|C) = \frac{\Pr(S,C)}{\Pr(C)} \\
&= \frac{1}{Q_{\mu\mu}} \sum_{z_i\oplus z_j =11} \Pr(z_i,z_j) \Pr(S,C|z_i,z_j) \\
&= \frac{1}{4} \frac{1}{Q_{\mu\mu}} \sum_{z_i\oplus z_j =11} \Pr(C|S,z_i,z_j) \Pr(S|z_i,z_j) \\
&= \frac{1}{4} \frac{P_\mu(1)^2}{Q_{\mu\mu}} \sum_{z_i\oplus z_j =11} \Pr(C_i=1|n_i=z_i)\Pr(C_j=1|n_j=z_j),
\end{aligned}
\end{equation}
where $P_\mu(k) = \exp(-\mu)\frac{\mu^k}{k!}$ is the Poisson distribution and $\Pr(C_i=1|n_i)$ is defined in Eq.~\eqref{eq:pulsexiY2}.

The phase-error-rate formula of time-bin MDI-QKD is given by the formulas (A9) and (A11) in Ref.~\cite{Ma2012alternative},
\begin{equation} \label{eq:e11X2}
\begin{aligned}
Y_{11} & = (1-p_d)^2 \left[ \dfrac{\eta_a\eta_b}{2} + (2\eta_a + 2\eta_b -3\eta_a\eta_b)p_d + 4(1-\eta_a)(1-\eta_b)p_d^2 \right], \\
e_{11}^X & = e_0Y_{11} - (e_0- e_d)(1-p_d^2)\dfrac{\eta_a\eta_b}{2}.
\end{aligned}
\end{equation}
Here, $e_0=1/2$ is the error caused by vacuum signals, $e_d$ is the pre-set misalignment error, and $\eta_a(\eta_b)$ are the transmittances from Alice (Bob) to Charlie. In our case, $\eta_a = \eta_b = \eta_s$.

\subsection{Other QKD schemes} \label{Ssec:othersimu}

Here, we list the simulation formulas for decoy-state BB84 and PM-QKD which are used in the key rate simulation programme. We also list the repeaterless rate-transmittance bound drawn in the main text.

The key rate of the decoy-state BB84 scheme is \cite{Lo2005Decoy}
\begin{equation}
R_{BB84} = \frac{1}{2} Q_\mu\{- f H(E_{\mu}^Z) + q_1[1 - H(e^X_1)] \}.
\end{equation}
Here, for a fair comparison with the time-bin MDI-QKD and the MP scheme, we consider a time-bin encoding BB84 scheme, where Alice's $Z$-basis encoding is undertaken by choosing to emit a phase-randomised coherent state in one of two time-bin optical modes. In this case, the $Z$-basis bit error is merely caused by the dark counts, regardless of the misalignment error.
We now consider an efficient basis choice with a sifting factor of $1$. For a fair comparison, we multiply the original key rate of decoy-state BB84 by a factor of $1/2$. The reason is that we are considering the averaged key rate generated by each pair of optical modes, $A_i$ and $B_i$, while the key rate for the original decoy-state BB84 is found for each round in which Alice (and Bob) emit two optical modes.

In the simulation, the yield and error rates of the $k$-photon component in the $X$-basis are \cite{Ma2008PhD}
\begin{equation} \label{eq:YemuBB84}
\begin{aligned}
Y_k &= 1-(1-Y_0)(1-\eta)^{k}, \\
e_k &= e_d + \frac{(e_0-e_d)Y_0}{Y_k}, \\
\end{aligned}
\end{equation}
where $e_d$ is the intrinsic misalignment-error rate caused by a phase-reference mismatch. The gain and QBER of the $Z$-basis pulse are given by
\begin{equation} \label{eq:QEmuBB84}
\begin{aligned}
Q_\mu &= \sum_{k=0}^{\infty} \frac{\mu^k e^{-\mu}}{k!} Y_k, \\
&= 1-(1-Y_0)e^{-\eta\mu}, \\
E_\mu 
&= \frac{e_d Y_0}{Q_\mu}, \\
\end{aligned}
\end{equation}
where $Y_0=2p_d$ and $e_0 = 1/2$.

For PM-QKD, the key-rate formula is \cite{Ma2018phase,Zeng2019Symmetryprotected},
\begin{equation} \label{eq:groupkey}
R_{PM} = \dfrac{2 Q_{\mu}}{D}\sum_j \left[ 1 - H(E_{\mu}^X) - f H(E_{\mu, j}^Z) \right],
\end{equation}
where $D=16$ is the number of slices. The gain, yield and error rates are
\begin{equation}
\begin{aligned}
Q_{\mu} & = 1-(1 - 2p_d)e^{-\eta\mu}, \\
Y_k & = 1 - (1 - 2p_d)(1-\eta)^k, \\
q_k & = Y_k\frac{e^{-\mu}\mu^k/k!}{Q_{\mu}}, \\
E_{\mu}^X & = 1 - \sum_k q_{2k+1} < 1 - q_1- q_3 - q_5, \\
E_{\mu, j}^Z & = [p_d+\eta\mu (e_{\delta, j} + e_d^{PM})]\frac{e^{-\eta\mu}}{Q_{\mu}}, \\
e_{\delta, j} & = \sin^2\left[\frac{\pi}{4}-\left|\frac{\pi}{4}-\frac{\pi j}{D}\right|\right],
\end{aligned}
\end{equation}
where $e_d^{PM}$ is the intrinsic misalignment error of the PM-QKD. We take these formulas from Eqs.~(8)-(12) and (D1)-(D4) in Ref.~\cite{Zeng2019Symmetryprotected}.

For SNS-TF-QKD, the key-rate formula is~\cite{wang2018twin,jiang2019unconditional},
\begin{equation}
R_{SNS} = 2 p_{z0} (1 - p_{z0}) \mu_z e^{-\mu_z} s_1 \left[ 1 - H(e_1^{ph}) \right] - f S_z H(E^Z),
\end{equation}
where $p_{z0}$ is Alice's (or Bob's) probability of sending vaccum state, $\mu_z$ is the light intensity for Alice or Bob's signal pulses, $s_1$ is the detecting probability of single-photon signals, $e_1^{ph}$ is the single-photon phase error rate, $S_z$ is the detecting probability of signal lights and $E^Z$ is the quantum bit error rate. $p_{z0}$ and $\mu_z$ need to be optimized. We use the following simulation formula, matching the results in Ref.~\cite{jiang2019unconditional},
\begin{equation}
\begin{aligned}
s_1 &= (1-\eta) 2p_d(1-p_d) + \eta(1-p_d), \\
e_1^{ph} &= (1-e_d^{SNS})\frac{(1-\eta)p_d(1-p_d)}{s_1} + e_d^{SNS}\left[1- \frac{(1-\eta)p_d(1-p_d)}{s_1} \right], \\
S_z &= p_{z0}^2 S_{00} + (1-p_{z0})^2 S_{22} + 2p_{z0}(1-p_{z0}) S_{02},\\
E^Z &= \left[p_{z0}^2 S_{00} + (1-p_{z0})^2 S_{22}\right]/S_z.
\end{aligned}
\end{equation}
Here, $e_d^{SNS}$ is the misalignment error. $S_{00}$, $S_{02}$, and $S_{22}$ indicate the detecting probabilities of the signals when Alice and Bob emit light with intensity $(0,0)$, $(0,\mu_z)$, and $(\mu_z,\mu_z)$, respectively, which are given by the following formulas,
\begin{equation}
\begin{aligned}
S_{00} &= 2p_d (1-p_d), \\
S_{02} &= 2\left[ (1-p_d)e^{-\eta\mu_z/2} - (1-p_d)^2 e^{-\eta\mu_z} \right], \\
S_{22} &= 2\left[ (1-p_d)e^{-\eta\mu_z}I_0(\eta\mu_z) - (1-p_d)^2 e^{-2\eta\mu_z} \right],
\end{aligned}
\end{equation}
where $I_0(x)$ is the modified Bessel function of the first kind.

The repeaterless rate-transmittance bound of the repeaterless point-to-point QKD scheme is the PLOB bound~\cite{pirandola2017fundamental}
\begin{equation}\label{eq:plob}
R_{PLOB} = - \log_2(1-\eta).
\end{equation}

\section{Numerical results for the optimal intensity} \label{Sec:optimalmu}

To further understand the key rate property of the MP scheme, we simulate this scheme's optimal intensity settings, as well as those of the other schemes mentioned  in the main text. We choose exactly the same simulation parameters as used in Fig.~4 in the main text. Fig.~\ref{fig:MDIfamIntComp} illustrates the optimal intensities of different QKD schemes; Fig.~\ref{fig:MPIntComp} illustrates those corresponding to different communication distances and pairing lengths $l$.

We first note that, for the time-bin-encoding version of BB84, the two-mode MDI-QKD and the MP scheme, the $Z$-basis bit error is almost $0$ and is irrelevant to the intensity $\mu$. In this case, the error-correction-related term has a negligible contribution in the key rate. For BB84, the optimal intensity satisfies (cf. Eq.~(12) in Ref.~\cite{Ma2005practical})
\begin{equation}
(1-\mu)e^{-\mu} = \frac{f H(E^Z_\mu)}{1 - H(e^X_1)}.
\end{equation}
When $f H(E^Z_\mu)\to 0$, we have $(1-\mu)e^{-\mu}\to 0$; hence, $\mu\to 1$. The optimal $\mu$ for MDI-QKD can be derived in a similar fashion. As shown in Fig.~\ref{fig:MDIfamIntComp}, before the error rates caused by the dark-count events come to dominate the bit-error rate $E^Z$, the optimal $\mu$ values for BB84 and MDI-QKD are always $1$.

The optimal $\mu$ for the MP scheme, however, cannot be directly derived from the key-rate formula due to an additional dependence of the key rate $R$ on the pairing rate $r_p$. If we set $p_d \to 0$ and assume that $\eta_s\mu\ll 1$, the key rate of the MP scheme in Eq.~\eqref{eq:keyrate} is approximately
\begin{equation}
\begin{aligned}
R &= r_p(p,l) r_s \{ q_{(1,1)} [1 - H(e^X_{(1,1)})] - fH(E^Z_{(\mu,\mu)})\} \\
&\approx p\left[\frac{1}{1 - (1-p)^l}\right]^{-1} \frac{1}{8} \{ e^{-2\mu} [1 - H(e^X_{(1,1)})] - fH(E^Z_{(\mu,\mu)}) \}.
\end{aligned}
\end{equation}
Here,
\begin{equation}
\begin{aligned}
r_s &\approx \frac{1}{16} \frac{1}{p^2} [2 (\eta_s\mu)^2 ] = \frac{1}{8},\\
q_{(1,1)} &\approx \frac{1}{16} \frac{ (\mu e^{-\mu})^2 }{r_s (\eta_s\mu)^2} 2\eta^2_s = e^{-2\mu}.
\end{aligned}
\end{equation}

Now, we consider two extreme cases: 1) $l$ is small, such that $pl\ll 1$; 2) $l$ is large, such that $pl \gg 1$. Note that, we always assume that $1\gg p\gg p_d$.

When $pl \ll 1$, we have $(1 - p)^l \approx 1-lp$. The key rate can be simplified to be
\begin{equation}
\begin{aligned}
R &\approx \frac{1}{8}lp^2\{ e^{-2\mu} [1 - H(e^X_{(1,1)})] - fH(E^Z_{(\mu,\mu)}) \}, \\
&= \frac{1}{8}l(\eta_s \mu)^2\{ e^{-2\mu} [1 - H(e^X_{(1,1)})] - fH(E^Z_{(\mu,\mu)}) \}.
\end{aligned}
\end{equation}
Taking the derivative of $R$ with respective to $\mu$, we obtain the following condition for optimal $\mu$:
\begin{equation}
(1-\mu) e^{-2\mu} = \frac{fH(E^Z_{(\mu,\mu)})}{1 - H(e^X_{(1,1)})}.
\end{equation}
Suppose that $E^Z_{(\mu,\mu)}\to 0$; then $\mu\to 1$. In the long-distance regime, $E^Z_{(\mu,\mu)}$ gets larger, resulting in a smaller $\mu$.

When $pl \gg 1$, we have $(1 - p)^l \approx 0$. The key rate can be simplified as
\begin{equation}
\begin{aligned}
R &\approx \frac{1}{16}p\{ e^{-2\mu} [1 - H(e^X_{(1,1)})] - fH(E^Z_{(\mu,\mu)}) \}, \\
&= \frac{1}{16}\eta_s\mu\{ e^{-2\mu} [1 - H(e^X_{(1,1)})] - fH(E^Z_{(\mu,\mu)}) \}.
\end{aligned}
\end{equation}
Taking the derivative of $R$ with respective to $\mu$, we obtain the following condition for optimal $\mu$:
\begin{equation}
(1-2\mu) e^{-2\mu} = \frac{fH(E^Z_{(\mu,\mu)})}{1 - H(e^X_{(1,1)})}.
\end{equation}
Suppose $E^Z_{(\mu,\mu)}\to 0$; then $\mu\to 1/2$. In the long-distance regime, $E^Z_{(\mu,\mu)}$ increases, resulting in a smaller $\mu$.

In the general case, for a fixed $l$, the value of $pl$ grows smaller as the communication distance becomes longer. The optimal value of $\mu$ may first be close to $0.5$ and then increased to nearly $1$.

\begin{figure}[htbp]
\includegraphics[width=10cm]{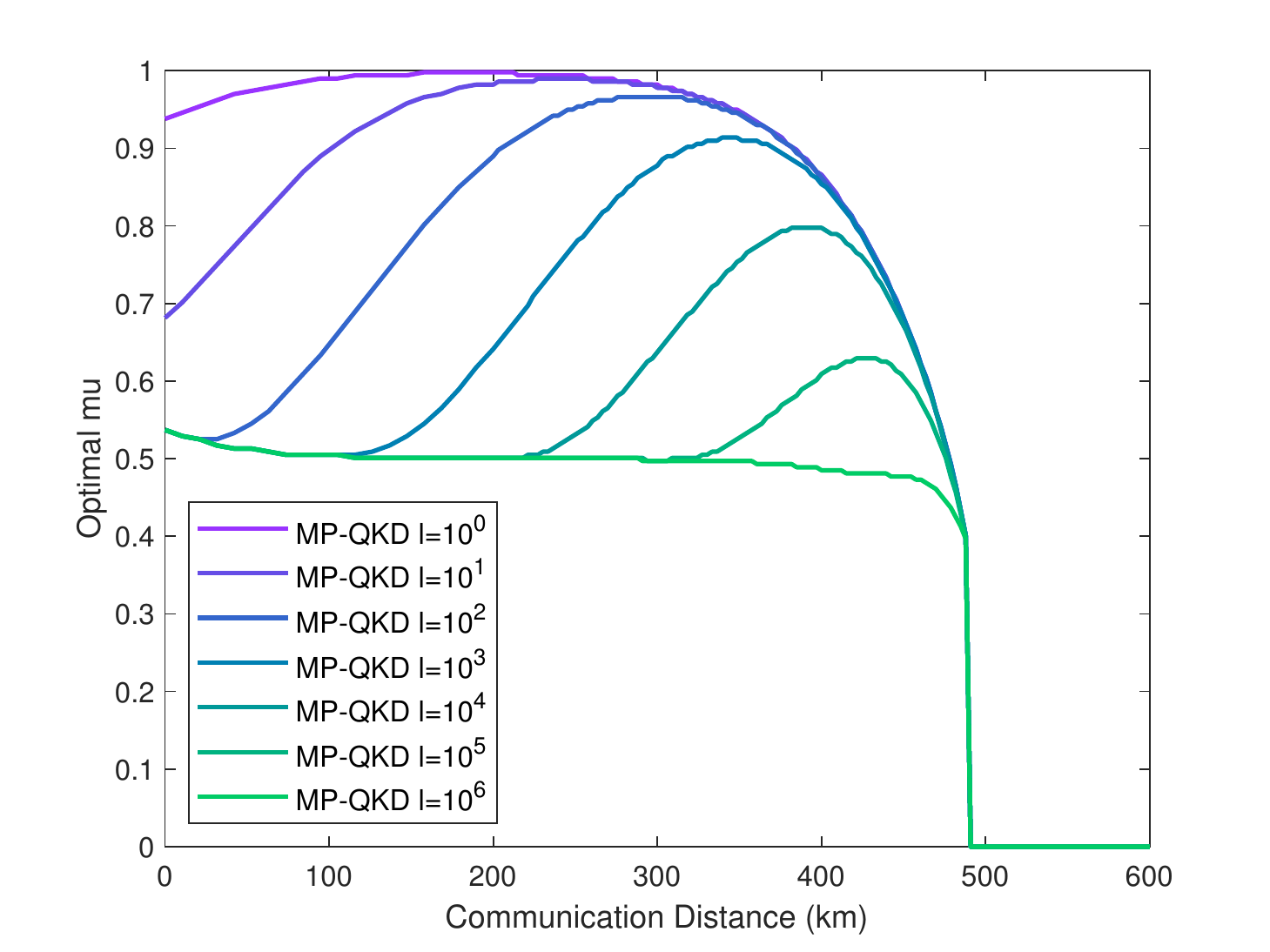}
\caption{Optimal intensity setting of the MP scheme with different pairing lengths $l$. Here, all of the simulation parameters are chosen to be the same as the ones in Fig.~4 in the main text.} \label{fig:MPIntComp}
\end{figure}

\begin{figure}[htbp]
\includegraphics[width=10cm]{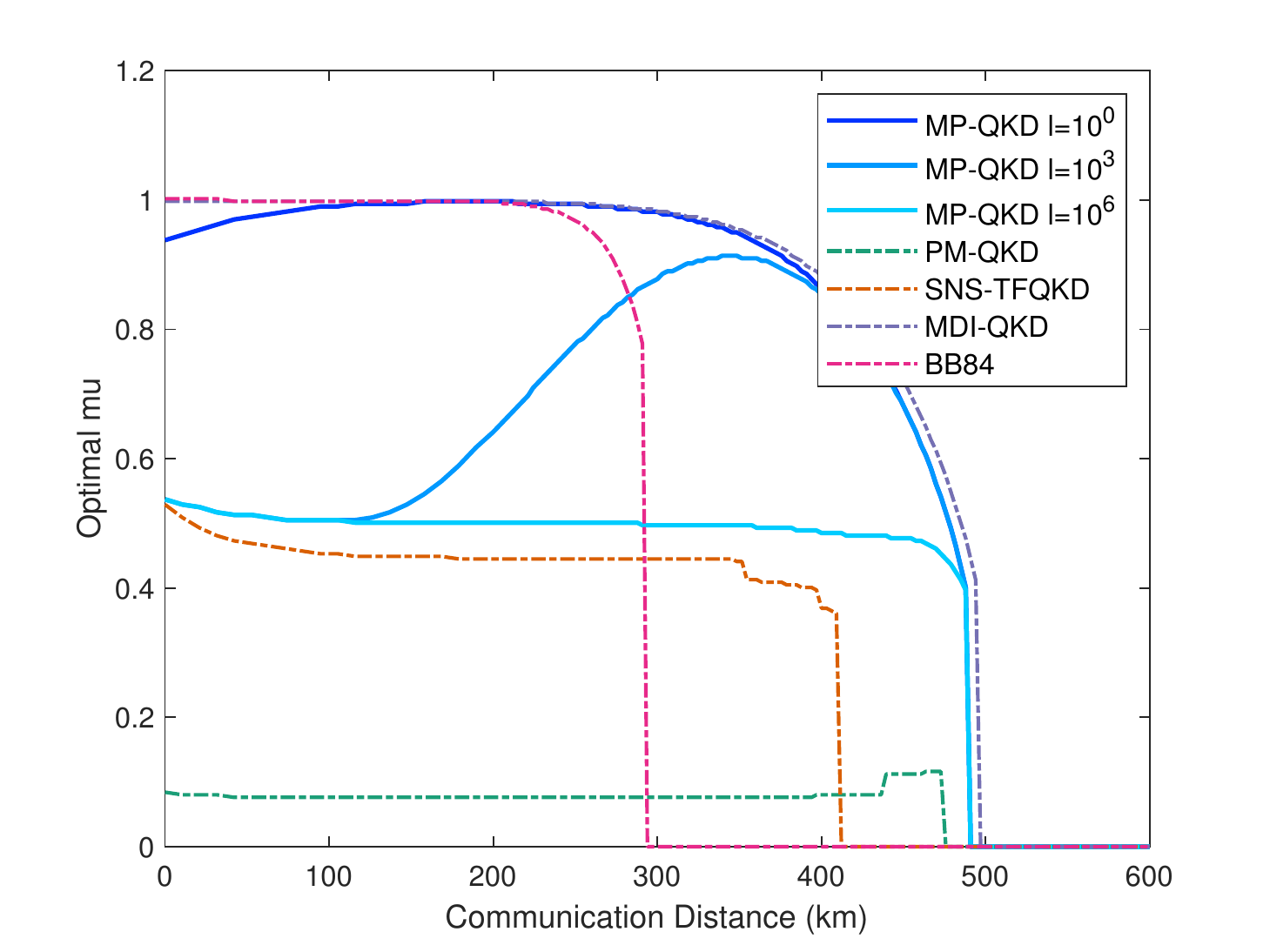}
\caption{Comparison of the optimal intensity settings of different QKD schemes. Here, all the simulation parameters are chosen to be the same as those in Fig.~4 in the main text. Here, the optimal $\mu$ for the BB84 scheme is defined to be the overall intensity of two emitted modes in each round. The optimal $\mu$ value for (two-mode) MDI-QKD is defined as the overall intensity of Alice's (or Bob's) two emitted modes in each round. The optimal $\mu$ values for MP-QKD, PM-QKD, and SNS-TFQKD are defined on each single round.} \label{fig:MDIfamIntComp}
\end{figure}

\section{Proof-of principle experimental demonstration}\label{sc:ExpDemo}

In the mode-pairing scheme, like other phase-encoding QKD schemes, certain ``local phase stabilization" is still required. In this section, we want to clarify the difference between this local phase stabilization and the ``global phase locking" over two remote independent lasers required for the one-mode MDI-QKD schemes. 

Suppose Alice and Bob hold two independent lasers emitting the optical pulses independently. One can track the phases of these coherent state pulses. For the pulses emitted by Alice and Bob at the time $t$, denote the global phases of the pulses as $\phi^a(t)$ and $\phi^b(t)$, respectively. Generally, the circular frequencies of Alice's and Bob's lasers, $\omega_a(t)$ and $\omega_b(t)$, are changing over time. We model the phases $\phi^a(t)$ and $\phi^b(t)$ as follows,
\begin{equation}\label{eq:phiab}
	\begin{aligned}
		\phi^a(t)=\int_0^t \omega_a(t)dt+ \psi_a(t), \\
		\phi^b(t)=\int_0^t \omega_b(t)dt+ \psi_b(t),
	\end{aligned}
\end{equation}
where $\psi_a(t)$ and $\psi_b(t)$ are two randomly fluctuated noise terms with a rate much slower than the one caused by circular frequencies and independent of $\omega_a$ and $\omega_b$. Here, the phase fluctuation caused by fibres is normally slow, based on the former experimental result. When the fibre length is around $500$ km, the phase drift velocity is less than $10$ rad/ms~\cite{fang2020implementation}. Then, we define global phase locking and local phase stabilization as following.

\begin{definition}[Global phase locking]\label{def:globallock}
After Alice and Bob perform global phase locking, they can achieve either of the following two tasks during the whole process of the experiment. 
\begin{itemize}
\item
They keep the relative phase between the two emitted pulses $\phi^b(t)-\phi^a(t)$ to be a constant, independent of the time $t$.
\item
They are able to estimate the phase difference $\phi^b(t)-\phi^a(t)$ for every moment $t$ accurately.
\end{itemize}
\end{definition}

If we want to perform global phase locking, we have to stabilize the phase difference $\phi^b(t)-\phi^a(t)$ for each moment $t$ or accurately estimate it. 
Due to the large frequency difference $\omega_b(t)-\omega_a(t)$ of two independent lasers, the value of $\phi^b(t)-\phi^a(t)$ changes rapidly with respect to $t$, making the stabilization or accurate estimation of $\phi^b(t)-\phi^a(t)$ challenging. 


\begin{definition}[Local phase stabilization]\label{def:localstable}
	After Alice and Bob perform local phase locking, they can achieve either of the following two tasks for any two moments $t_1$ and $t_2$, satisfying $|t_1-t_2|\le \Delta t$,
	\begin{itemize}
		\item
		Alice keeps the relative phase between her two emitted pulses $\phi^a(t_2)-\phi^a(t_1)$ to be a constant, independent of $t_1$ and $t_2$. Bob does the same thing for $\phi^b(t_2)-\phi^b(t_1)$.
		\item
		They are able to estimate the phase difference $\theta^{\delta}(t_1,t_2):=\phi^b(t_2)-\phi^b(t_1)-\phi^a(t_2)+\phi^a(t_1)$ accurately.
	\end{itemize}
\end{definition}

We remark that, the core difference of the local phase stabilization from the global phase locking is to estimate the \emph{difference} of the phase $\phi^b(t)-\phi^a(t)$ between two moments $t_1$ and $t_2$ instead of estimating the phase itself. We now discuss why this task will be easier than the global phase locking for a reasonable time period $\Delta t=t_2 - t_1$. 

In the task of local phase stabilization, the users try to stabilize
\begin{equation}\label{eq:relativephaseab}
	\begin{aligned}
\theta^{\delta}(t_1,t_2) = [\phi^b(t_2)-\phi^b(t_1)]-[\phi^a(t_2)-\phi^a(t_1)]=\int_{t_1}^{t_2}[\omega_b (t)-\omega_a (t)]dt+[\psi_b(t_2)-\psi_b(t_1)]-[\psi_a(t_2)-\psi_a(t_1)]
	\end{aligned}
\end{equation} 
within time $\Delta t=t_2-t_1$. 
For a short time period $\Delta t$, $\psi_a(t_2)-\psi_a(t_1)$ and $\psi_b(t_2)-\psi_b(t_1)$ will be relatively small.
To estimate the value of $\theta^{\delta}(t_1,t_2)$ in a short period $\Delta t$, Alice and Bob only need to track the frequency difference $\omega_b(t)-\omega_a(t)$, which is easier than directly measure the phase difference $\phi^b(t)-\phi^a(t)$. Then, they use the integration $\int_{t_1}^{t_2}[\omega_b (t)-\omega_a (t)]dt$ to estimate the phase difference $\theta^{\delta}(t_1,t_2)$.
On the other hand, when the time period $\Delta t$ gets longer, the accuracy of integration $\int_{t_1}^{t_2}[\omega_b (t)-\omega_a (t)]dt$ will be affected, and the slow phase fluctuation $\psi_a(t_2)-\psi_a(t_1)$ and $\psi_b(t_2)-\psi_b(t_1)$ would come into play, which would affect the final results of local phase stabilization. 
The hardness of local phase stabilization requirement depends on the time interval $\Delta t$. We now consider the following two extreme cases.

\begin{enumerate}
\item When $\Delta t = t_2-t_1 \ll 1/\Delta f$, where $\Delta f$ is the laser linwidth. For example, in the regular phase-encoding MDI-QKD, the $\Delta t$ of local phase stabilization requirement is the time between two adjacent pulses, which is usually in this scenario. In this case, the values of $\omega_b(t)-\omega_a(t)$ is stable, and the value of $\psi_a(t_2)-\psi_a(t_1)$ and $\psi_b(t_2)-\psi_b(t_1)$ will be close to zero. As a result, the local phase stabilization is easy to realize. 
\item When $\Delta t = t_2-t_1 \gg 1/\Delta f$. Then, the frequency difference $\omega_b (t)-\omega_a (t)$ fluctuates randomly, and the values of $\psi_a(t_2)-\psi_a(t_1)$ and $\psi_b(t_2)-\psi_b(t_1)$ become large. To fulfill the requirements in Definition~\ref{def:localstable}, we have to estimate both the phases $\phi^b(t_1) - \phi^a(t_1)$ and $\phi^b(t_2)- \phi^a(t_2)$ accurately. In this case, the local phase stabilization approaches the case of global phase locking. 
\end{enumerate}

In order to show the feasibility of realizing the local phase stabilization in Definition~\ref{def:localstable} with a reasonable large time period $\Delta t$, we perform an experimental demonstration using two independent lasers. This is the essentially the same setup as the phase-matching experiment \cite{fang2020implementation}, except for the removal of the strong master laser for global phase-locking. Alice and Bob each use a laser with the central frequency of $\omega_a\approx\omega_b\approx 193.533$ THz. The linewidth $\Delta f$ of the lasers is $2$ kHz. The frequency difference of two lasers is about $30$ MHz, which is much larger than the linewidth. The system frequency is $625$ MHz. 

In the experiment, Alice and Bob both send laser pulses of intensity $\mu$ to a measurement site in the middle. Depending on the transmission distance, they will choose a proper $\mu$ such that they will get a suitable number of detection results. Besides, they do not perform any phase modulation on these pulses for simplicity. The coherent state prepared by Alice and Bob can be denoted as $\ket{\sqrt{\mu}e^{i\phi^{a}(t)}}$ and $\ket{\sqrt{\mu}e^{i\phi^{b}(t)}}$, respectively. 
The two pulses are interfered at the measurement site and then are detected by single-photon detectors. The detection result is ``$L$'' if one of the detectors clicks, and ``$R$'' if the other detector clicks.

Now, they do not apply any phase-locking techniques in the demonstration, it is very challenging to track the global phases and predict the measurement result in single round. However, in mode-pairing scheme, we only care about whether the measurement results of the two paired locations are the same or not. Which is determined by the phase difference betweem two paired locations $i$ and $j$,
\begin{equation}\label{eq:phasediff}
	\theta^\delta = (\phi^b_j - \phi^b_i) - (\phi^a_j - \phi^a_i) = \int_{t_i}^{t_j} \omega_\delta(t) dt,
\end{equation}
where $\omega_\delta(t):=\omega_b(t) - \omega_a(t)$ is the frequency difference between Alice's and Bob's laser pulses. For a short period of time, we can assume that $\omega_\delta(t)$ drifts with a linear model,
\begin{equation} \label{eq:omegaDelta}
	\omega_\delta(t) = k t + \omega_\delta(0).
\end{equation}
After accumulating enough detection results, Alice and Bob can use them to estimate the frequency difference of their lasers during data post-processing. First, they pair the locations where the successful detection occurs. For a pair, there are four possible detection results, $(L,R)$, $(R,L)$, $(L,L)$ and $(R,R)$. One click on location $i$ and the other on $j$. With the paired locations and corresponding detection results, Alice and Bob can estimate $\omega_\delta(t)$ using a probabilistic model corresponding to this optical setting. Then they can calculate the phase difference $\theta^\delta$ using Eq.~\eqref{eq:phasediff}.

To test the accuracy of the estimation of $\omega_\delta(t)$ in Eq.~\eqref{eq:omegaDelta}, we further use the estimated $\theta^\delta$ to predict the detection results. When the estimated value of $\theta^\delta$ of a pair is in the region $[-\Delta/2, +\Delta/2)$ or $[\pi-\Delta/2, \pi+\Delta/2)$, this data pair can be further used for testing. Here, $\Delta = 2\pi/32$ is the width of the phase slice. In the former case, Alice and Bob assign the detection $(L,R)$ or $(R,L)$ to be error and $(L,L)$ or $(R,R)$ to be correct; while in the latter case, they assign the detection $(L,L)$ or $(R,R)$ to be error and $(L,R)$ or $(R,L)$ to be correct.

Below we draw the error rate of the paired strong pulses with respect to different pairing length $l=j-i$. We group the paired locations by different pairing length, as shown in Figure \ref{fig:ErrorVSL}. This error rate can reflect the $X$-basis error rate in mode-pairing QKD. Similar to the phase-encoding MDI-QKD, the $X$-basis error has a 25\% intrinsic error rate caused by the multi-photon components in the coherent states. From the demonstration we can see that, apart from the intrinsic error, our method only introduces a reasonable additional error rate. One can further reduce the error rate by post-selecting the good signals in post-processing of reference pulses. If we use the decoy method to estimate the phase error in a single photon pair, the error rate will be a reasonable value that we can obtain a positive final key rate. Therefore, the mode-pairing scheme is feasible without global phase locking.

\begin{figure}[htbp!]
	\centering \includegraphics[width=8cm]{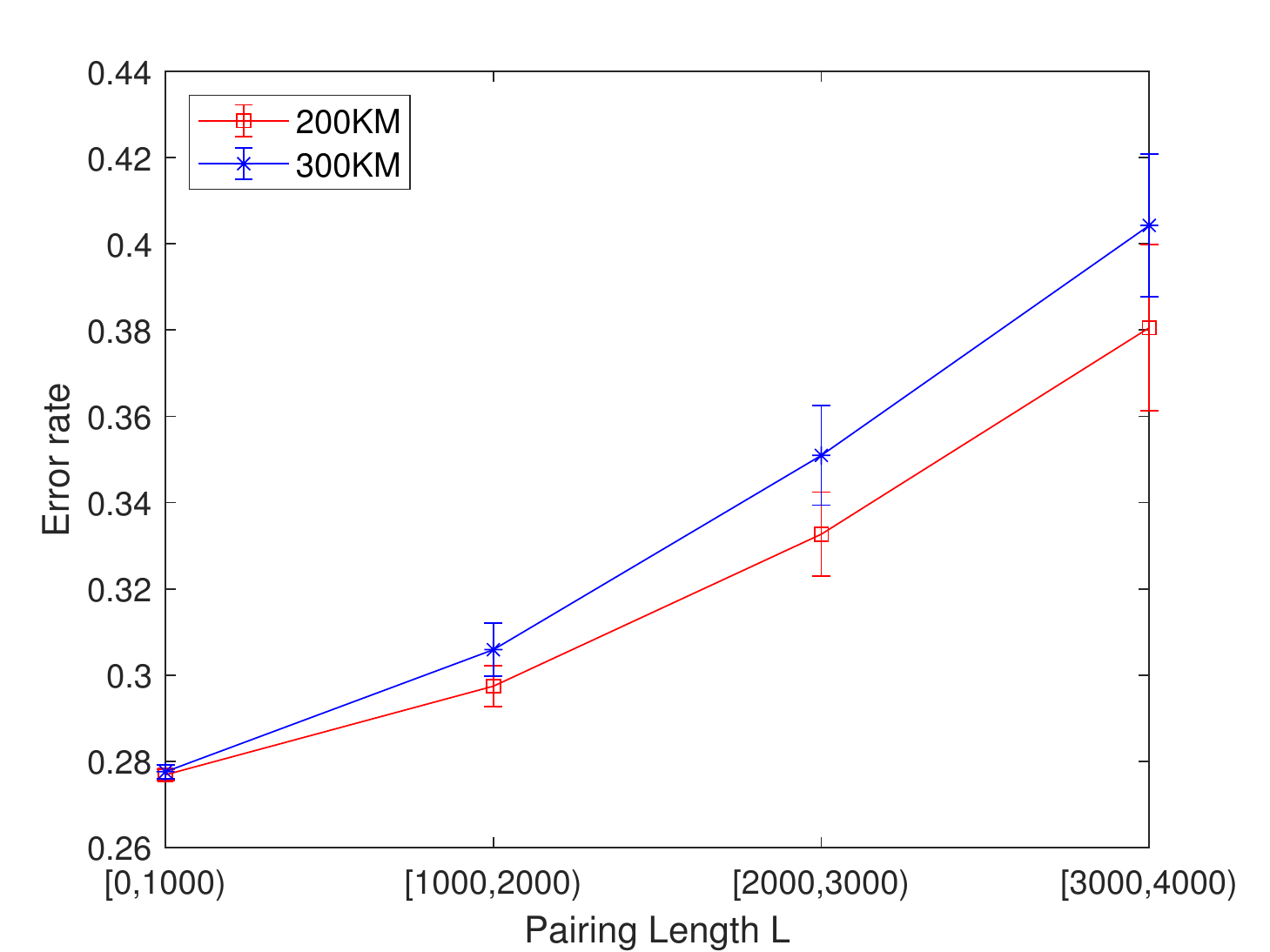}
	\caption{Error rates of the paired laser pulses under different pairing intervals and different transmission distances. The intensities of laser pulses are chosen such that the arrived signals on the measurement site are strong enough to provide enough clicks. Given a pairing length interval, say $l\in[1000,2000)$, we pair all detection clicks such that $1000\le j-i <2000$.} \label{fig:ErrorVSL}
\end{figure}

From Figure \ref{fig:ErrorVSL}, we can see that the error rate is stable when the pairing length is up to $3000\sim 4000$. If we further enhance the system frequency to $4$ GHz~\cite{wang2022twin}, we are able to achieve a maximal pairing length of 20000. In this way, the performance of the mode-pairing scheme can surpass the repeaterless key-rate bound. We leave a full demonstration of the mode-pairing scheme for future works.

\end{appendix}


\begin{thebibliography}{59}%
\makeatletter
\providecommand \@ifxundefined [1]{%
 \@ifx{#1\undefined}
}%
\providecommand \@ifnum [1]{%
 \ifnum #1\expandafter \@firstoftwo
 \else \expandafter \@secondoftwo
 \fi
}%
\providecommand \@ifx [1]{%
 \ifx #1\expandafter \@firstoftwo
 \else \expandafter \@secondoftwo
 \fi
}%
\providecommand \natexlab [1]{#1}%
\providecommand \enquote  [1]{``#1''}%
\providecommand \bibnamefont  [1]{#1}%
\providecommand \bibfnamefont [1]{#1}%
\providecommand \citenamefont [1]{#1}%
\providecommand \href@noop [0]{\@secondoftwo}%
\providecommand \href [0]{\begingroup \@sanitize@url \@href}%
\providecommand \@href[1]{\@@startlink{#1}\@@href}%
\providecommand \@@href[1]{\endgroup#1\@@endlink}%
\providecommand \@sanitize@url [0]{\catcode `\\12\catcode `\$12\catcode
  `\&12\catcode `\#12\catcode `\^12\catcode `\_12\catcode `\%12\relax}%
\providecommand \@@startlink[1]{}%
\providecommand \@@endlink[0]{}%
\providecommand \url  [0]{\begingroup\@sanitize@url \@url }%
\providecommand \@url [1]{\endgroup\@href {#1}{\urlprefix }}%
\providecommand \urlprefix  [0]{URL }%
\providecommand \Eprint [0]{\href }%
\providecommand \doibase [0]{https://doi.org/}%
\providecommand \selectlanguage [0]{\@gobble}%
\providecommand \bibinfo  [0]{\@secondoftwo}%
\providecommand \bibfield  [0]{\@secondoftwo}%
\providecommand \translation [1]{[#1]}%
\providecommand \BibitemOpen [0]{}%
\providecommand \bibitemStop [0]{}%
\providecommand \bibitemNoStop [0]{.\EOS\space}%
\providecommand \EOS [0]{\spacefactor3000\relax}%
\providecommand \BibitemShut  [1]{\csname bibitem#1\endcsname}%
\let\auto@bib@innerbib\@empty
\bibitem [{\citenamefont {Bennett}\ and\ \citenamefont
  {Brassard}(1984)}]{bennett1984quantum}%
  \BibitemOpen
  \bibfield  {author} {\bibinfo {author} {\bibfnamefont {C.~H.}\ \bibnamefont
  {Bennett}}\ and\ \bibinfo {author} {\bibfnamefont {G.}~\bibnamefont
  {Brassard}},\ }\bibfield  {title} {\bibinfo {title} {{Quantum Cryptography:
  Public Key Distribution and Coin Tossing}},\ }in\ \href
  {https://doi.org/10.1016/j.tcs.2014.05.025} {\emph {\bibinfo {booktitle}
  {Proceedings of the IEEE International Conference on Computers, Systems and
  Signal Processing}}}\ (\bibinfo  {publisher} {IEEE Press},\ \bibinfo
  {address} {New York},\ \bibinfo {year} {1984})\ pp.\ \bibinfo {pages}
  {175--179}\BibitemShut {NoStop}%
\bibitem [{\citenamefont {Ekert}(1991)}]{ekert1991Quantum}%
  \BibitemOpen
  \bibfield  {author} {\bibinfo {author} {\bibfnamefont {A.~K.}\ \bibnamefont
  {Ekert}},\ }\bibfield  {title} {\bibinfo {title} {Quantum cryptography based
  on bell's theorem},\ }\href {https://doi.org/10.1103/PhysRevLett.67.661}
  {\bibfield  {journal} {\bibinfo  {journal} {Phys. Rev. Lett.}\ }\textbf
  {\bibinfo {volume} {67}},\ \bibinfo {pages} {661} (\bibinfo {year}
  {1991})}\BibitemShut {NoStop}%
\bibitem [{\citenamefont {Chen}\ \emph {et~al.}(2021)\citenamefont {Chen},
  \citenamefont {Zhang}, \citenamefont {Chen}, \citenamefont {Cai},
  \citenamefont {Liao}, \citenamefont {Zhang}, \citenamefont {Chen},
  \citenamefont {Yin}, \citenamefont {Ren}, \citenamefont {Chen}, \citenamefont
  {Han}, \citenamefont {Yu}, \citenamefont {Liang}, \citenamefont {Zhou},
  \citenamefont {Yuan}, \citenamefont {Zhao}, \citenamefont {Wang},
  \citenamefont {Jiang}, \citenamefont {Zhang}, \citenamefont {Liu},
  \citenamefont {Li}, \citenamefont {Shen}, \citenamefont {Cao}, \citenamefont
  {Lu}, \citenamefont {Shu}, \citenamefont {Wang}, \citenamefont {Li},
  \citenamefont {Liu}, \citenamefont {Xu}, \citenamefont {Wang}, \citenamefont
  {Peng},\ and\ \citenamefont {Pan}}]{Chen2021integrated}%
  \BibitemOpen
  \bibfield  {author} {\bibinfo {author} {\bibfnamefont {Y.-A.}\ \bibnamefont
  {Chen}}, \bibinfo {author} {\bibfnamefont {Q.}~\bibnamefont {Zhang}},
  \bibinfo {author} {\bibfnamefont {T.-Y.}\ \bibnamefont {Chen}}, \bibinfo
  {author} {\bibfnamefont {W.-Q.}\ \bibnamefont {Cai}}, \bibinfo {author}
  {\bibfnamefont {S.-K.}\ \bibnamefont {Liao}}, \bibinfo {author}
  {\bibfnamefont {J.}~\bibnamefont {Zhang}}, \bibinfo {author} {\bibfnamefont
  {K.}~\bibnamefont {Chen}}, \bibinfo {author} {\bibfnamefont {J.}~\bibnamefont
  {Yin}}, \bibinfo {author} {\bibfnamefont {J.-G.}\ \bibnamefont {Ren}},
  \bibinfo {author} {\bibfnamefont {Z.}~\bibnamefont {Chen}}, \bibinfo {author}
  {\bibfnamefont {S.-L.}\ \bibnamefont {Han}}, \bibinfo {author} {\bibfnamefont
  {Q.}~\bibnamefont {Yu}}, \bibinfo {author} {\bibfnamefont {K.}~\bibnamefont
  {Liang}}, \bibinfo {author} {\bibfnamefont {F.}~\bibnamefont {Zhou}},
  \bibinfo {author} {\bibfnamefont {X.}~\bibnamefont {Yuan}}, \bibinfo {author}
  {\bibfnamefont {M.-S.}\ \bibnamefont {Zhao}}, \bibinfo {author}
  {\bibfnamefont {T.-Y.}\ \bibnamefont {Wang}}, \bibinfo {author}
  {\bibfnamefont {X.}~\bibnamefont {Jiang}}, \bibinfo {author} {\bibfnamefont
  {L.}~\bibnamefont {Zhang}}, \bibinfo {author} {\bibfnamefont {W.-Y.}\
  \bibnamefont {Liu}}, \bibinfo {author} {\bibfnamefont {Y.}~\bibnamefont
  {Li}}, \bibinfo {author} {\bibfnamefont {Q.}~\bibnamefont {Shen}}, \bibinfo
  {author} {\bibfnamefont {Y.}~\bibnamefont {Cao}}, \bibinfo {author}
  {\bibfnamefont {C.-Y.}\ \bibnamefont {Lu}}, \bibinfo {author} {\bibfnamefont
  {R.}~\bibnamefont {Shu}}, \bibinfo {author} {\bibfnamefont {J.-Y.}\
  \bibnamefont {Wang}}, \bibinfo {author} {\bibfnamefont {L.}~\bibnamefont
  {Li}}, \bibinfo {author} {\bibfnamefont {N.-L.}\ \bibnamefont {Liu}},
  \bibinfo {author} {\bibfnamefont {F.}~\bibnamefont {Xu}}, \bibinfo {author}
  {\bibfnamefont {X.-B.}\ \bibnamefont {Wang}}, \bibinfo {author}
  {\bibfnamefont {C.-Z.}\ \bibnamefont {Peng}},\ and\ \bibinfo {author}
  {\bibfnamefont {J.-W.}\ \bibnamefont {Pan}},\ }\bibfield  {title} {\bibinfo
  {title} {An integrated space-to-ground quantum communication network over
  4,600 kilometres},\ }\href {https://doi.org/10.1038/s41586-020-03093-8}
  {\bibfield  {journal} {\bibinfo  {journal} {Nature}\ }\textbf {\bibinfo
  {volume} {589}},\ \bibinfo {pages} {214} (\bibinfo {year}
  {2021})}\BibitemShut {NoStop}%
\bibitem [{\citenamefont {Takeoka}\ \emph {et~al.}(2014)\citenamefont
  {Takeoka}, \citenamefont {Guha},\ and\ \citenamefont
  {Wilde}}]{takeoka2014fundamental}%
  \BibitemOpen
  \bibfield  {author} {\bibinfo {author} {\bibfnamefont {M.}~\bibnamefont
  {Takeoka}}, \bibinfo {author} {\bibfnamefont {S.}~\bibnamefont {Guha}},\ and\
  \bibinfo {author} {\bibfnamefont {M.~M.}\ \bibnamefont {Wilde}},\ }\bibfield
  {title} {\bibinfo {title} {Fundamental rate-loss tradeoff for optical quantum
  key distribution},\ }\href {https://www.nature.com/articles/ncomms6235}
  {\bibfield  {journal} {\bibinfo  {journal} {Nat. Commun.}\ }\textbf {\bibinfo
  {volume} {5}},\ \bibinfo {pages} {5235} (\bibinfo {year} {2014})}\BibitemShut
  {NoStop}%
\bibitem [{\citenamefont {Pirandola}\ \emph {et~al.}(2017)\citenamefont
  {Pirandola}, \citenamefont {Laurenza}, \citenamefont {Ottaviani},\ and\
  \citenamefont {Banchi}}]{pirandola2017fundamental}%
  \BibitemOpen
  \bibfield  {author} {\bibinfo {author} {\bibfnamefont {S.}~\bibnamefont
  {Pirandola}}, \bibinfo {author} {\bibfnamefont {R.}~\bibnamefont {Laurenza}},
  \bibinfo {author} {\bibfnamefont {C.}~\bibnamefont {Ottaviani}},\ and\
  \bibinfo {author} {\bibfnamefont {L.}~\bibnamefont {Banchi}},\ }\bibfield
  {title} {\bibinfo {title} {Fundamental limits of repeaterless quantum
  communications},\ }\href {https://www.nature.com/articles/ncomms15043}
  {\bibfield  {journal} {\bibinfo  {journal} {Nat. Commun.}\ }\textbf {\bibinfo
  {volume} {8}},\ \bibinfo {pages} {15043} (\bibinfo {year}
  {2017})}\BibitemShut {NoStop}%
\bibitem [{\citenamefont {Zukowski}\ \emph {et~al.}(1993)\citenamefont
  {Zukowski}, \citenamefont {Zeilinger}, \citenamefont {Horne},\ and\
  \citenamefont {Ekert}}]{EntSwap1993}%
  \BibitemOpen
  \bibfield  {author} {\bibinfo {author} {\bibfnamefont {M.}~\bibnamefont
  {Zukowski}}, \bibinfo {author} {\bibfnamefont {A.}~\bibnamefont {Zeilinger}},
  \bibinfo {author} {\bibfnamefont {M.~A.}\ \bibnamefont {Horne}},\ and\
  \bibinfo {author} {\bibfnamefont {A.~K.}\ \bibnamefont {Ekert}},\ }\bibfield
  {title} {\bibinfo {title} {``event-ready-detectors'' bell experiment via
  entanglement swapping},\ }\href {https://doi.org/10.1103/PhysRevLett.71.4287}
  {\bibfield  {journal} {\bibinfo  {journal} {Phys. Rev. Lett.}\ }\textbf
  {\bibinfo {volume} {71}},\ \bibinfo {pages} {4287} (\bibinfo {year}
  {1993})}\BibitemShut {NoStop}%
\bibitem [{\citenamefont {Briegel}\ \emph {et~al.}(1998)\citenamefont
  {Briegel}, \citenamefont {D\"ur}, \citenamefont {Cirac},\ and\ \citenamefont
  {Zoller}}]{Briegel1998Repeater}%
  \BibitemOpen
  \bibfield  {author} {\bibinfo {author} {\bibfnamefont {H.-J.}\ \bibnamefont
  {Briegel}}, \bibinfo {author} {\bibfnamefont {W.}~\bibnamefont {D\"ur}},
  \bibinfo {author} {\bibfnamefont {J.~I.}\ \bibnamefont {Cirac}},\ and\
  \bibinfo {author} {\bibfnamefont {P.}~\bibnamefont {Zoller}},\ }\bibfield
  {title} {\bibinfo {title} {Quantum repeaters: The role of imperfect local
  operations in quantum communication},\ }\href
  {https://doi.org/10.1103/PhysRevLett.81.5932} {\bibfield  {journal} {\bibinfo
   {journal} {Phys. Rev. Lett.}\ }\textbf {\bibinfo {volume} {81}},\ \bibinfo
  {pages} {5932} (\bibinfo {year} {1998})}\BibitemShut {NoStop}%
\bibitem [{\citenamefont {Azuma}\ \emph
  {et~al.}(2015{\natexlab{a}})\citenamefont {Azuma}, \citenamefont {Tamaki},\
  and\ \citenamefont {Lo}}]{azuma2015all}%
  \BibitemOpen
  \bibfield  {author} {\bibinfo {author} {\bibfnamefont {K.}~\bibnamefont
  {Azuma}}, \bibinfo {author} {\bibfnamefont {K.}~\bibnamefont {Tamaki}},\ and\
  \bibinfo {author} {\bibfnamefont {H.-K.}\ \bibnamefont {Lo}},\ }\bibfield
  {title} {\bibinfo {title} {All-photonic quantum repeaters},\ }\href
  {https://www.nature.com/articles/ncomms7787} {\bibfield  {journal} {\bibinfo
  {journal} {Nat. Commun.}\ }\textbf {\bibinfo {volume} {6}},\ \bibinfo {pages}
  {6787} (\bibinfo {year} {2015}{\natexlab{a}})}\BibitemShut {NoStop}%
\bibitem [{\citenamefont {Xu}\ \emph {et~al.}(2020)\citenamefont {Xu},
  \citenamefont {Ma}, \citenamefont {Zhang}, \citenamefont {Lo},\ and\
  \citenamefont {Pan}}]{Xu2020Secure}%
  \BibitemOpen
  \bibfield  {author} {\bibinfo {author} {\bibfnamefont {F.}~\bibnamefont
  {Xu}}, \bibinfo {author} {\bibfnamefont {X.}~\bibnamefont {Ma}}, \bibinfo
  {author} {\bibfnamefont {Q.}~\bibnamefont {Zhang}}, \bibinfo {author}
  {\bibfnamefont {H.-K.}\ \bibnamefont {Lo}},\ and\ \bibinfo {author}
  {\bibfnamefont {J.-W.}\ \bibnamefont {Pan}},\ }\bibfield  {title} {\bibinfo
  {title} {Secure quantum key distribution with realistic devices},\ }\href
  {https://doi.org/10.1103/RevModPhys.92.025002} {\bibfield  {journal}
  {\bibinfo  {journal} {Rev. Mod. Phys.}\ }\textbf {\bibinfo {volume} {92}},\
  \bibinfo {pages} {025002} (\bibinfo {year} {2020})}\BibitemShut {NoStop}%
\bibitem [{\citenamefont {Lo}\ and\ \citenamefont
  {Chau}(1999)}]{lo1999Unconditional}%
  \BibitemOpen
  \bibfield  {author} {\bibinfo {author} {\bibfnamefont {H.~K.}\ \bibnamefont
  {Lo}}\ and\ \bibinfo {author} {\bibfnamefont {H.~F.}\ \bibnamefont {Chau}},\
  }\bibfield  {title} {\bibinfo {title} {Unconditional security of quantum key
  distribution over arbitrarily long distances},\ }\href
  {http://science.sciencemag.org/content/283/5410/2050} {\bibfield  {journal}
  {\bibinfo  {journal} {Science}\ }\textbf {\bibinfo {volume} {283}},\ \bibinfo
  {pages} {2050} (\bibinfo {year} {1999})}\BibitemShut {NoStop}%
\bibitem [{\citenamefont {Shor}\ and\ \citenamefont
  {Preskill}(2000)}]{shor2000Simple}%
  \BibitemOpen
  \bibfield  {author} {\bibinfo {author} {\bibfnamefont {P.~W.}\ \bibnamefont
  {Shor}}\ and\ \bibinfo {author} {\bibfnamefont {J.}~\bibnamefont
  {Preskill}},\ }\bibfield  {title} {\bibinfo {title} {Simple proof of security
  of the bb84 quantum key distribution protocol},\ }\href
  {https://journals.aps.org/prl/abstract/10.1103/PhysRevLett.85.441} {\bibfield
   {journal} {\bibinfo  {journal} {Phys. Rev. Lett.}\ }\textbf {\bibinfo
  {volume} {85}},\ \bibinfo {pages} {441} (\bibinfo {year} {2000})}\BibitemShut
  {NoStop}%
\bibitem [{\citenamefont {Koashi}(2009)}]{koashi2009simple}%
  \BibitemOpen
  \bibfield  {author} {\bibinfo {author} {\bibfnamefont {M.}~\bibnamefont
  {Koashi}},\ }\bibfield  {title} {\bibinfo {title} {Simple security proof of
  quantum key distribution based on complementarity},\ }\href@noop {}
  {\bibfield  {journal} {\bibinfo  {journal} {New Journal of Physics}\ }\textbf
  {\bibinfo {volume} {11}},\ \bibinfo {pages} {045018} (\bibinfo {year}
  {2009})}\BibitemShut {NoStop}%
\bibitem [{\citenamefont {Gottesman}\ \emph {et~al.}(2004)\citenamefont
  {Gottesman}, \citenamefont {Lo}, \citenamefont {L\"{u}tkenhaus},\ and\
  \citenamefont {Preskill}}]{gottesman2004security}%
  \BibitemOpen
  \bibfield  {author} {\bibinfo {author} {\bibfnamefont {D.}~\bibnamefont
  {Gottesman}}, \bibinfo {author} {\bibfnamefont {H.-K.}\ \bibnamefont {Lo}},
  \bibinfo {author} {\bibfnamefont {N.}~\bibnamefont {L\"{u}tkenhaus}},\ and\
  \bibinfo {author} {\bibfnamefont {J.}~\bibnamefont {Preskill}},\ }\bibfield
  {title} {\bibinfo {title} {Security of quantum key distribution with
  imperfect devices},\ }\href
  {http://dl.acm.org/citation.cfm?id=2011586.2011587} {\bibfield  {journal}
  {\bibinfo  {journal} {Quantum Info. Comput.}\ }\textbf {\bibinfo {volume}
  {4}},\ \bibinfo {pages} {325} (\bibinfo {year} {2004})}\BibitemShut {NoStop}%
\bibitem [{\citenamefont {Makarov}\ \emph {et~al.}(2006)\citenamefont
  {Makarov}, \citenamefont {Anisimov},\ and\ \citenamefont
  {Skaar}}]{makarov2008effects}%
  \BibitemOpen
  \bibfield  {author} {\bibinfo {author} {\bibfnamefont {V.}~\bibnamefont
  {Makarov}}, \bibinfo {author} {\bibfnamefont {A.}~\bibnamefont {Anisimov}},\
  and\ \bibinfo {author} {\bibfnamefont {J.}~\bibnamefont {Skaar}},\ }\bibfield
   {title} {\bibinfo {title} {Effects of detector efficiency mismatch on
  security of quantum cryptosystems},\ }\href
  {https://doi.org/10.1103/PhysRevA.74.022313} {\bibfield  {journal} {\bibinfo
  {journal} {Phys. Rev. A}\ }\textbf {\bibinfo {volume} {74}},\ \bibinfo
  {pages} {022313} (\bibinfo {year} {2006})}\BibitemShut {NoStop}%
\bibitem [{\citenamefont {Qi}\ \emph {et~al.}(2007)\citenamefont {Qi},
  \citenamefont {Fung}, \citenamefont {Lo},\ and\ \citenamefont
  {Ma}}]{qi2007time}%
  \BibitemOpen
  \bibfield  {author} {\bibinfo {author} {\bibfnamefont {B.}~\bibnamefont
  {Qi}}, \bibinfo {author} {\bibfnamefont {C.-H.~F.}\ \bibnamefont {Fung}},
  \bibinfo {author} {\bibfnamefont {H.-K.}\ \bibnamefont {Lo}},\ and\ \bibinfo
  {author} {\bibfnamefont {X.}~\bibnamefont {Ma}},\ }\bibfield  {title}
  {\bibinfo {title} {Time-shift attack in practical quantum cryptosystems},\
  }\href {https://dl.acm.org/doi/abs/10.5555/2011706.2011709} {\bibfield
  {journal} {\bibinfo  {journal} {Quantum Information \& Computation}\ }\textbf
  {\bibinfo {volume} {7}},\ \bibinfo {pages} {73} (\bibinfo {year}
  {2007})}\BibitemShut {NoStop}%
\bibitem [{\citenamefont {Lo}\ \emph {et~al.}(2012)\citenamefont {Lo},
  \citenamefont {Curty},\ and\ \citenamefont {Qi}}]{lo2012Measurement}%
  \BibitemOpen
  \bibfield  {author} {\bibinfo {author} {\bibfnamefont {H.-K.}\ \bibnamefont
  {Lo}}, \bibinfo {author} {\bibfnamefont {M.}~\bibnamefont {Curty}},\ and\
  \bibinfo {author} {\bibfnamefont {B.}~\bibnamefont {Qi}},\ }\bibfield
  {title} {\bibinfo {title} {Measurement-device-independent quantum key
  distribution},\ }\href {https://doi.org/10.1103/PhysRevLett.108.130503}
  {\bibfield  {journal} {\bibinfo  {journal} {Phys. Rev. Lett.}\ }\textbf
  {\bibinfo {volume} {108}},\ \bibinfo {pages} {130503} (\bibinfo {year}
  {2012})}\BibitemShut {NoStop}%
\bibitem [{\citenamefont {Rubenok}\ \emph {et~al.}(2013)\citenamefont
  {Rubenok}, \citenamefont {Slater}, \citenamefont {Chan}, \citenamefont
  {Lucio-Martinez},\ and\ \citenamefont {Tittel}}]{rubenok2013real}%
  \BibitemOpen
  \bibfield  {author} {\bibinfo {author} {\bibfnamefont {A.}~\bibnamefont
  {Rubenok}}, \bibinfo {author} {\bibfnamefont {J.~A.}\ \bibnamefont {Slater}},
  \bibinfo {author} {\bibfnamefont {P.}~\bibnamefont {Chan}}, \bibinfo {author}
  {\bibfnamefont {I.}~\bibnamefont {Lucio-Martinez}},\ and\ \bibinfo {author}
  {\bibfnamefont {W.}~\bibnamefont {Tittel}},\ }\bibfield  {title} {\bibinfo
  {title} {Real-world two-photon interference and proof-of-principle quantum
  key distribution immune to detector attacks},\ }\href
  {https://doi.org/10.1103/PhysRevLett.111.130501} {\bibfield  {journal}
  {\bibinfo  {journal} {Phys. Rev. Lett.}\ }\textbf {\bibinfo {volume} {111}},\
  \bibinfo {pages} {130501} (\bibinfo {year} {2013})}\BibitemShut {NoStop}%
\bibitem [{\citenamefont {Liu}\ \emph {et~al.}(2013)\citenamefont {Liu},
  \citenamefont {Chen}, \citenamefont {Wang}, \citenamefont {Liang},
  \citenamefont {Shentu}, \citenamefont {Wang}, \citenamefont {Cui},
  \citenamefont {Yin}, \citenamefont {Liu}, \citenamefont {Li}, \citenamefont
  {Ma}, \citenamefont {Pelc}, \citenamefont {Fejer}, \citenamefont {Peng},
  \citenamefont {Zhang},\ and\ \citenamefont {Pan}}]{liu2013experimental}%
  \BibitemOpen
  \bibfield  {author} {\bibinfo {author} {\bibfnamefont {Y.}~\bibnamefont
  {Liu}}, \bibinfo {author} {\bibfnamefont {T.-Y.}\ \bibnamefont {Chen}},
  \bibinfo {author} {\bibfnamefont {L.-J.}\ \bibnamefont {Wang}}, \bibinfo
  {author} {\bibfnamefont {H.}~\bibnamefont {Liang}}, \bibinfo {author}
  {\bibfnamefont {G.-L.}\ \bibnamefont {Shentu}}, \bibinfo {author}
  {\bibfnamefont {J.}~\bibnamefont {Wang}}, \bibinfo {author} {\bibfnamefont
  {K.}~\bibnamefont {Cui}}, \bibinfo {author} {\bibfnamefont {H.-L.}\
  \bibnamefont {Yin}}, \bibinfo {author} {\bibfnamefont {N.-L.}\ \bibnamefont
  {Liu}}, \bibinfo {author} {\bibfnamefont {L.}~\bibnamefont {Li}}, \bibinfo
  {author} {\bibfnamefont {X.}~\bibnamefont {Ma}}, \bibinfo {author}
  {\bibfnamefont {J.~S.}\ \bibnamefont {Pelc}}, \bibinfo {author}
  {\bibfnamefont {M.~M.}\ \bibnamefont {Fejer}}, \bibinfo {author}
  {\bibfnamefont {C.-Z.}\ \bibnamefont {Peng}}, \bibinfo {author}
  {\bibfnamefont {Q.}~\bibnamefont {Zhang}},\ and\ \bibinfo {author}
  {\bibfnamefont {J.-W.}\ \bibnamefont {Pan}},\ }\bibfield  {title} {\bibinfo
  {title} {Experimental measurement-device-independent quantum key
  distribution},\ }\href {https://doi.org/10.1103/PhysRevLett.111.130502}
  {\bibfield  {journal} {\bibinfo  {journal} {Phys. Rev. Lett.}\ }\textbf
  {\bibinfo {volume} {111}},\ \bibinfo {pages} {130502} (\bibinfo {year}
  {2013})}\BibitemShut {NoStop}%
\bibitem [{\citenamefont {Ferreira~da Silva}\ \emph {et~al.}(2013)\citenamefont
  {Ferreira~da Silva}, \citenamefont {Vitoreti}, \citenamefont {Xavier},
  \citenamefont {do~Amaral}, \citenamefont {Tempor\~ao},\ and\ \citenamefont
  {von~der Weid}}]{silva2013proof}%
  \BibitemOpen
  \bibfield  {author} {\bibinfo {author} {\bibfnamefont {T.}~\bibnamefont
  {Ferreira~da Silva}}, \bibinfo {author} {\bibfnamefont {D.}~\bibnamefont
  {Vitoreti}}, \bibinfo {author} {\bibfnamefont {G.~B.}\ \bibnamefont
  {Xavier}}, \bibinfo {author} {\bibfnamefont {G.~C.}\ \bibnamefont
  {do~Amaral}}, \bibinfo {author} {\bibfnamefont {G.~P.}\ \bibnamefont
  {Tempor\~ao}},\ and\ \bibinfo {author} {\bibfnamefont {J.~P.}\ \bibnamefont
  {von~der Weid}},\ }\bibfield  {title} {\bibinfo {title} {Proof-of-principle
  demonstration of measurement-device-independent quantum key distribution
  using polarization qubits},\ }\href
  {https://doi.org/10.1103/PhysRevA.88.052303} {\bibfield  {journal} {\bibinfo
  {journal} {Phys. Rev. A}\ }\textbf {\bibinfo {volume} {88}},\ \bibinfo
  {pages} {052303} (\bibinfo {year} {2013})}\BibitemShut {NoStop}%
\bibitem [{\citenamefont {Woodward}\ \emph {et~al.}(2021)\citenamefont
  {Woodward}, \citenamefont {Lo}, \citenamefont {Pittaluga}, \citenamefont
  {Minder}, \citenamefont {Paraiso}, \citenamefont {Lucamarini}, \citenamefont
  {Yuan},\ and\ \citenamefont {Shields}}]{woodward2021gigahertz}%
  \BibitemOpen
  \bibfield  {author} {\bibinfo {author} {\bibfnamefont {R.~I.}\ \bibnamefont
  {Woodward}}, \bibinfo {author} {\bibfnamefont {Y.}~\bibnamefont {Lo}},
  \bibinfo {author} {\bibfnamefont {M.}~\bibnamefont {Pittaluga}}, \bibinfo
  {author} {\bibfnamefont {M.}~\bibnamefont {Minder}}, \bibinfo {author}
  {\bibfnamefont {T.}~\bibnamefont {Paraiso}}, \bibinfo {author} {\bibfnamefont
  {M.}~\bibnamefont {Lucamarini}}, \bibinfo {author} {\bibfnamefont
  {Z.}~\bibnamefont {Yuan}},\ and\ \bibinfo {author} {\bibfnamefont
  {A.}~\bibnamefont {Shields}},\ }\bibfield  {title} {\bibinfo {title}
  {Gigahertz measurement-device-independent quantum key distribution using
  directly modulated lasers},\ }\href
  {https://www.nature.com/articles/s41534-021-00394-2} {\bibfield  {journal}
  {\bibinfo  {journal} {npj Quantum Information}\ }\textbf {\bibinfo {volume}
  {7}},\ \bibinfo {pages} {1} (\bibinfo {year} {2021})}\BibitemShut {NoStop}%
\bibitem [{\citenamefont {Tang}\ \emph {et~al.}(2016)\citenamefont {Tang},
  \citenamefont {Yin}, \citenamefont {Zhao}, \citenamefont {Liu}, \citenamefont
  {Sun}, \citenamefont {Huang}, \citenamefont {Zhang}, \citenamefont {Chen},
  \citenamefont {Zhang}, \citenamefont {You}, \citenamefont {Wang},
  \citenamefont {Liu}, \citenamefont {Lu}, \citenamefont {Jiang}, \citenamefont
  {Ma}, \citenamefont {Zhang}, \citenamefont {Chen},\ and\ \citenamefont
  {Pan}}]{Tang2016MDInet}%
  \BibitemOpen
  \bibfield  {author} {\bibinfo {author} {\bibfnamefont {Y.-L.}\ \bibnamefont
  {Tang}}, \bibinfo {author} {\bibfnamefont {H.-L.}\ \bibnamefont {Yin}},
  \bibinfo {author} {\bibfnamefont {Q.}~\bibnamefont {Zhao}}, \bibinfo {author}
  {\bibfnamefont {H.}~\bibnamefont {Liu}}, \bibinfo {author} {\bibfnamefont
  {X.-X.}\ \bibnamefont {Sun}}, \bibinfo {author} {\bibfnamefont {M.-Q.}\
  \bibnamefont {Huang}}, \bibinfo {author} {\bibfnamefont {W.-J.}\ \bibnamefont
  {Zhang}}, \bibinfo {author} {\bibfnamefont {S.-J.}\ \bibnamefont {Chen}},
  \bibinfo {author} {\bibfnamefont {L.}~\bibnamefont {Zhang}}, \bibinfo
  {author} {\bibfnamefont {L.-X.}\ \bibnamefont {You}}, \bibinfo {author}
  {\bibfnamefont {Z.}~\bibnamefont {Wang}}, \bibinfo {author} {\bibfnamefont
  {Y.}~\bibnamefont {Liu}}, \bibinfo {author} {\bibfnamefont {C.-Y.}\
  \bibnamefont {Lu}}, \bibinfo {author} {\bibfnamefont {X.}~\bibnamefont
  {Jiang}}, \bibinfo {author} {\bibfnamefont {X.}~\bibnamefont {Ma}}, \bibinfo
  {author} {\bibfnamefont {Q.}~\bibnamefont {Zhang}}, \bibinfo {author}
  {\bibfnamefont {T.-Y.}\ \bibnamefont {Chen}},\ and\ \bibinfo {author}
  {\bibfnamefont {J.-W.}\ \bibnamefont {Pan}},\ }\bibfield  {title} {\bibinfo
  {title} {Measurement-device-independent quantum key distribution over
  untrustful metropolitan network},\ }\href
  {https://doi.org/10.1103/PhysRevX.6.011024} {\bibfield  {journal} {\bibinfo
  {journal} {Phys. Rev. X}\ }\textbf {\bibinfo {volume} {6}},\ \bibinfo {pages}
  {011024} (\bibinfo {year} {2016})}\BibitemShut {NoStop}%
\bibitem [{\citenamefont {Tamaki}\ \emph {et~al.}(2012)\citenamefont {Tamaki},
  \citenamefont {Lo}, \citenamefont {Fung},\ and\ \citenamefont
  {Qi}}]{tamaki2012phase}%
  \BibitemOpen
  \bibfield  {author} {\bibinfo {author} {\bibfnamefont {K.}~\bibnamefont
  {Tamaki}}, \bibinfo {author} {\bibfnamefont {H.-K.}\ \bibnamefont {Lo}},
  \bibinfo {author} {\bibfnamefont {C.-H.~F.}\ \bibnamefont {Fung}},\ and\
  \bibinfo {author} {\bibfnamefont {B.}~\bibnamefont {Qi}},\ }\bibfield
  {title} {\bibinfo {title} {Phase encoding schemes for
  measurement-device-independent quantum key distribution with basis-dependent
  flaw},\ }\href {https://doi.org/10.1103/PhysRevA.85.042307} {\bibfield
  {journal} {\bibinfo  {journal} {Phys. Rev. A}\ }\textbf {\bibinfo {volume}
  {85}},\ \bibinfo {pages} {042307} (\bibinfo {year} {2012})}\BibitemShut
  {NoStop}%
\bibitem [{\citenamefont {Ma}\ and\ \citenamefont
  {Razavi}(2012)}]{Ma2012alternative}%
  \BibitemOpen
  \bibfield  {author} {\bibinfo {author} {\bibfnamefont {X.}~\bibnamefont
  {Ma}}\ and\ \bibinfo {author} {\bibfnamefont {M.}~\bibnamefont {Razavi}},\
  }\bibfield  {title} {\bibinfo {title} {Alternative schemes for
  measurement-device-independent quantum key distribution},\ }\href
  {https://doi.org/10.1103/PhysRevA.86.062319} {\bibfield  {journal} {\bibinfo
  {journal} {Phys. Rev. A}\ }\textbf {\bibinfo {volume} {86}},\ \bibinfo
  {pages} {062319} (\bibinfo {year} {2012})}\BibitemShut {NoStop}%
\bibitem [{\citenamefont {Lucamarini}\ \emph {et~al.}(2018)\citenamefont
  {Lucamarini}, \citenamefont {Yuan}, \citenamefont {Dynes},\ and\
  \citenamefont {Shields}}]{lucamarini2018overcoming}%
  \BibitemOpen
  \bibfield  {author} {\bibinfo {author} {\bibfnamefont {M.}~\bibnamefont
  {Lucamarini}}, \bibinfo {author} {\bibfnamefont {Z.}~\bibnamefont {Yuan}},
  \bibinfo {author} {\bibfnamefont {J.}~\bibnamefont {Dynes}},\ and\ \bibinfo
  {author} {\bibfnamefont {A.}~\bibnamefont {Shields}},\ }\bibfield  {title}
  {\bibinfo {title} {Overcoming the rate--distance limit of quantum key
  distribution without quantum repeaters},\ }\href
  {https://www.nature.com/articles/s41586-018-0066-6} {\bibfield  {journal}
  {\bibinfo  {journal} {Nature}\ }\textbf {\bibinfo {volume} {557}},\ \bibinfo
  {pages} {400} (\bibinfo {year} {2018})}\BibitemShut {NoStop}%
\bibitem [{\citenamefont {Ma}\ \emph {et~al.}(2018)\citenamefont {Ma},
  \citenamefont {Zeng},\ and\ \citenamefont {Zhou}}]{Ma2018phase}%
  \BibitemOpen
  \bibfield  {author} {\bibinfo {author} {\bibfnamefont {X.}~\bibnamefont
  {Ma}}, \bibinfo {author} {\bibfnamefont {P.}~\bibnamefont {Zeng}},\ and\
  \bibinfo {author} {\bibfnamefont {H.}~\bibnamefont {Zhou}},\ }\bibfield
  {title} {\bibinfo {title} {Phase-matching quantum key distribution},\ }\href
  {https://doi.org/10.1103/PhysRevX.8.031043} {\bibfield  {journal} {\bibinfo
  {journal} {Phys. Rev. X}\ }\textbf {\bibinfo {volume} {8}},\ \bibinfo {pages}
  {031043} (\bibinfo {year} {2018})}\BibitemShut {NoStop}%
\bibitem [{\citenamefont {Lin}\ and\ \citenamefont
  {L\"utkenhaus}(2018)}]{lin2018simple}%
  \BibitemOpen
  \bibfield  {author} {\bibinfo {author} {\bibfnamefont {J.}~\bibnamefont
  {Lin}}\ and\ \bibinfo {author} {\bibfnamefont {N.}~\bibnamefont
  {L\"utkenhaus}},\ }\bibfield  {title} {\bibinfo {title} {Simple security
  analysis of phase-matching measurement-device-independent quantum key
  distribution},\ }\href {https://doi.org/10.1103/PhysRevA.98.042332}
  {\bibfield  {journal} {\bibinfo  {journal} {Phys. Rev. A}\ }\textbf {\bibinfo
  {volume} {98}},\ \bibinfo {pages} {042332} (\bibinfo {year}
  {2018})}\BibitemShut {NoStop}%
\bibitem [{\citenamefont {Wang}\ \emph {et~al.}(2018)\citenamefont {Wang},
  \citenamefont {Yu},\ and\ \citenamefont {Hu}}]{wang2018twin}%
  \BibitemOpen
  \bibfield  {author} {\bibinfo {author} {\bibfnamefont {X.-B.}\ \bibnamefont
  {Wang}}, \bibinfo {author} {\bibfnamefont {Z.-W.}\ \bibnamefont {Yu}},\ and\
  \bibinfo {author} {\bibfnamefont {X.-L.}\ \bibnamefont {Hu}},\ }\bibfield
  {title} {\bibinfo {title} {Twin-field quantum key distribution with large
  misalignment error},\ }\href {https://doi.org/10.1103/PhysRevA.98.062323}
  {\bibfield  {journal} {\bibinfo  {journal} {Phys. Rev. A}\ }\textbf {\bibinfo
  {volume} {98}},\ \bibinfo {pages} {062323} (\bibinfo {year}
  {2018})}\BibitemShut {NoStop}%
\bibitem [{\citenamefont {Duan}\ \emph {et~al.}(2001)\citenamefont {Duan},
  \citenamefont {Lukin}, \citenamefont {Cirac},\ and\ \citenamefont
  {Zoller}}]{duan2001long}%
  \BibitemOpen
  \bibfield  {author} {\bibinfo {author} {\bibfnamefont {L.-M.}\ \bibnamefont
  {Duan}}, \bibinfo {author} {\bibfnamefont {M.}~\bibnamefont {Lukin}},
  \bibinfo {author} {\bibfnamefont {J.~I.}\ \bibnamefont {Cirac}},\ and\
  \bibinfo {author} {\bibfnamefont {P.}~\bibnamefont {Zoller}},\ }\bibfield
  {title} {\bibinfo {title} {Long-distance quantum communication with atomic
  ensembles and linear optics},\ }\href@noop {} {\bibfield  {journal} {\bibinfo
   {journal} {Nature}\ }\textbf {\bibinfo {volume} {414}},\ \bibinfo {pages}
  {413} (\bibinfo {year} {2001})}\BibitemShut {NoStop}%
\bibitem [{\citenamefont {Minder}\ \emph {et~al.}(2019)\citenamefont {Minder},
  \citenamefont {Pittaluga}, \citenamefont {Roberts}, \citenamefont
  {Lucamarini}, \citenamefont {Dynes}, \citenamefont {Yuan},\ and\
  \citenamefont {Shields}}]{minder2019experimental}%
  \BibitemOpen
  \bibfield  {author} {\bibinfo {author} {\bibfnamefont {M.}~\bibnamefont
  {Minder}}, \bibinfo {author} {\bibfnamefont {M.}~\bibnamefont {Pittaluga}},
  \bibinfo {author} {\bibfnamefont {G.}~\bibnamefont {Roberts}}, \bibinfo
  {author} {\bibfnamefont {M.}~\bibnamefont {Lucamarini}}, \bibinfo {author}
  {\bibfnamefont {J.}~\bibnamefont {Dynes}}, \bibinfo {author} {\bibfnamefont
  {Z.}~\bibnamefont {Yuan}},\ and\ \bibinfo {author} {\bibfnamefont
  {A.}~\bibnamefont {Shields}},\ }\bibfield  {title} {\bibinfo {title}
  {Experimental quantum key distribution beyond the repeaterless secret key
  capacity},\ }\href {https://www.nature.com/articles/s41566-019-0377-7}
  {\bibfield  {journal} {\bibinfo  {journal} {Nature Photonics}\ }\textbf
  {\bibinfo {volume} {13}},\ \bibinfo {pages} {334} (\bibinfo {year}
  {2019})}\BibitemShut {NoStop}%
\bibitem [{\citenamefont {Wang}\ \emph {et~al.}(2019)\citenamefont {Wang},
  \citenamefont {He}, \citenamefont {Yin}, \citenamefont {Lu}, \citenamefont
  {Cui}, \citenamefont {Chen}, \citenamefont {Zhou}, \citenamefont {Guo},\ and\
  \citenamefont {Han}}]{wang2019beating}%
  \BibitemOpen
  \bibfield  {author} {\bibinfo {author} {\bibfnamefont {S.}~\bibnamefont
  {Wang}}, \bibinfo {author} {\bibfnamefont {D.-Y.}\ \bibnamefont {He}},
  \bibinfo {author} {\bibfnamefont {Z.-Q.}\ \bibnamefont {Yin}}, \bibinfo
  {author} {\bibfnamefont {F.-Y.}\ \bibnamefont {Lu}}, \bibinfo {author}
  {\bibfnamefont {C.-H.}\ \bibnamefont {Cui}}, \bibinfo {author} {\bibfnamefont
  {W.}~\bibnamefont {Chen}}, \bibinfo {author} {\bibfnamefont {Z.}~\bibnamefont
  {Zhou}}, \bibinfo {author} {\bibfnamefont {G.-C.}\ \bibnamefont {Guo}},\ and\
  \bibinfo {author} {\bibfnamefont {Z.-F.}\ \bibnamefont {Han}},\ }\bibfield
  {title} {\bibinfo {title} {Beating the fundamental rate-distance limit in a
  proof-of-principle quantum key distribution system},\ }\href
  {https://doi.org/10.1103/PhysRevX.9.021046} {\bibfield  {journal} {\bibinfo
  {journal} {Phys. Rev. X}\ }\textbf {\bibinfo {volume} {9}},\ \bibinfo {pages}
  {021046} (\bibinfo {year} {2019})}\BibitemShut {NoStop}%
\bibitem [{\citenamefont {Fang}\ \emph {et~al.}(2020)\citenamefont {Fang},
  \citenamefont {Zeng}, \citenamefont {Liu}, \citenamefont {Zou}, \citenamefont
  {Wu}, \citenamefont {Tang}, \citenamefont {Sheng}, \citenamefont {Xiang},
  \citenamefont {Zhang}, \citenamefont {Li} \emph
  {et~al.}}]{fang2020implementation}%
  \BibitemOpen
  \bibfield  {author} {\bibinfo {author} {\bibfnamefont {X.-T.}\ \bibnamefont
  {Fang}}, \bibinfo {author} {\bibfnamefont {P.}~\bibnamefont {Zeng}}, \bibinfo
  {author} {\bibfnamefont {H.}~\bibnamefont {Liu}}, \bibinfo {author}
  {\bibfnamefont {M.}~\bibnamefont {Zou}}, \bibinfo {author} {\bibfnamefont
  {W.}~\bibnamefont {Wu}}, \bibinfo {author} {\bibfnamefont {Y.-L.}\
  \bibnamefont {Tang}}, \bibinfo {author} {\bibfnamefont {Y.-J.}\ \bibnamefont
  {Sheng}}, \bibinfo {author} {\bibfnamefont {Y.}~\bibnamefont {Xiang}},
  \bibinfo {author} {\bibfnamefont {W.}~\bibnamefont {Zhang}}, \bibinfo
  {author} {\bibfnamefont {H.}~\bibnamefont {Li}}, \emph {et~al.},\ }\bibfield
  {title} {\bibinfo {title} {Implementation of quantum key distribution
  surpassing the linear rate-transmittance bound},\ }\href
  {https://www.nature.com/articles/s41566-020-0599-8} {\bibfield  {journal}
  {\bibinfo  {journal} {Nature Photonics}\ ,\ \bibinfo {pages} {1}} (\bibinfo
  {year} {2020})}\BibitemShut {NoStop}%
\bibitem [{\citenamefont {Zhong}\ \emph {et~al.}(2019)\citenamefont {Zhong},
  \citenamefont {Hu}, \citenamefont {Curty}, \citenamefont {Qian},\ and\
  \citenamefont {Lo}}]{zhong2020proofofprinciple}%
  \BibitemOpen
  \bibfield  {author} {\bibinfo {author} {\bibfnamefont {X.}~\bibnamefont
  {Zhong}}, \bibinfo {author} {\bibfnamefont {J.}~\bibnamefont {Hu}}, \bibinfo
  {author} {\bibfnamefont {M.}~\bibnamefont {Curty}}, \bibinfo {author}
  {\bibfnamefont {L.}~\bibnamefont {Qian}},\ and\ \bibinfo {author}
  {\bibfnamefont {H.-K.}\ \bibnamefont {Lo}},\ }\bibfield  {title} {\bibinfo
  {title} {Proof-of-principle experimental demonstration of twin-field type
  quantum key distribution},\ }\href
  {https://doi.org/10.1103/PhysRevLett.123.100506} {\bibfield  {journal}
  {\bibinfo  {journal} {Phys. Rev. Lett.}\ }\textbf {\bibinfo {volume} {123}},\
  \bibinfo {pages} {100506} (\bibinfo {year} {2019})}\BibitemShut {NoStop}%
\bibitem [{\citenamefont {Chen}\ \emph {et~al.}(2020)\citenamefont {Chen},
  \citenamefont {Zhang}, \citenamefont {Liu}, \citenamefont {Jiang},
  \citenamefont {Zhang}, \citenamefont {Hu}, \citenamefont {Guan},
  \citenamefont {Yu}, \citenamefont {Xu}, \citenamefont {Lin}, \citenamefont
  {Li}, \citenamefont {Chen}, \citenamefont {Li}, \citenamefont {You},
  \citenamefont {Wang}, \citenamefont {Wang}, \citenamefont {Zhang},\ and\
  \citenamefont {Pan}}]{chen2020sending}%
  \BibitemOpen
  \bibfield  {author} {\bibinfo {author} {\bibfnamefont {J.-P.}\ \bibnamefont
  {Chen}}, \bibinfo {author} {\bibfnamefont {C.}~\bibnamefont {Zhang}},
  \bibinfo {author} {\bibfnamefont {Y.}~\bibnamefont {Liu}}, \bibinfo {author}
  {\bibfnamefont {C.}~\bibnamefont {Jiang}}, \bibinfo {author} {\bibfnamefont
  {W.}~\bibnamefont {Zhang}}, \bibinfo {author} {\bibfnamefont {X.-L.}\
  \bibnamefont {Hu}}, \bibinfo {author} {\bibfnamefont {J.-Y.}\ \bibnamefont
  {Guan}}, \bibinfo {author} {\bibfnamefont {Z.-W.}\ \bibnamefont {Yu}},
  \bibinfo {author} {\bibfnamefont {H.}~\bibnamefont {Xu}}, \bibinfo {author}
  {\bibfnamefont {J.}~\bibnamefont {Lin}}, \bibinfo {author} {\bibfnamefont
  {M.-J.}\ \bibnamefont {Li}}, \bibinfo {author} {\bibfnamefont
  {H.}~\bibnamefont {Chen}}, \bibinfo {author} {\bibfnamefont {H.}~\bibnamefont
  {Li}}, \bibinfo {author} {\bibfnamefont {L.}~\bibnamefont {You}}, \bibinfo
  {author} {\bibfnamefont {Z.}~\bibnamefont {Wang}}, \bibinfo {author}
  {\bibfnamefont {X.-B.}\ \bibnamefont {Wang}}, \bibinfo {author}
  {\bibfnamefont {Q.}~\bibnamefont {Zhang}},\ and\ \bibinfo {author}
  {\bibfnamefont {J.-W.}\ \bibnamefont {Pan}},\ }\bibfield  {title} {\bibinfo
  {title} {Sending-or-not-sending with independent lasers: Secure twin-field
  quantum key distribution over 509 km},\ }\href
  {https://doi.org/10.1103/PhysRevLett.124.070501} {\bibfield  {journal}
  {\bibinfo  {journal} {Phys. Rev. Lett.}\ }\textbf {\bibinfo {volume} {124}},\
  \bibinfo {pages} {070501} (\bibinfo {year} {2020})}\BibitemShut {NoStop}%
\bibitem [{\citenamefont {Pittaluga}\ \emph {et~al.}(2021)\citenamefont
  {Pittaluga}, \citenamefont {Minder}, \citenamefont {Lucamarini},
  \citenamefont {Sanzaro}, \citenamefont {Woodward}, \citenamefont {Li},
  \citenamefont {Yuan},\ and\ \citenamefont {Shields}}]{pittaluga2021600}%
  \BibitemOpen
  \bibfield  {author} {\bibinfo {author} {\bibfnamefont {M.}~\bibnamefont
  {Pittaluga}}, \bibinfo {author} {\bibfnamefont {M.}~\bibnamefont {Minder}},
  \bibinfo {author} {\bibfnamefont {M.}~\bibnamefont {Lucamarini}}, \bibinfo
  {author} {\bibfnamefont {M.}~\bibnamefont {Sanzaro}}, \bibinfo {author}
  {\bibfnamefont {R.~I.}\ \bibnamefont {Woodward}}, \bibinfo {author}
  {\bibfnamefont {M.-J.}\ \bibnamefont {Li}}, \bibinfo {author} {\bibfnamefont
  {Z.}~\bibnamefont {Yuan}},\ and\ \bibinfo {author} {\bibfnamefont {A.~J.}\
  \bibnamefont {Shields}},\ }\bibfield  {title} {\bibinfo {title} {600-km
  repeater-like quantum communications with dual-band stabilization},\ }\href
  {https://www.nature.com/articles/s41566-021-00811-0} {\bibfield  {journal}
  {\bibinfo  {journal} {Nature Photonics}\ ,\ \bibinfo {pages} {1}} (\bibinfo
  {year} {2021})}\BibitemShut {NoStop}%
\bibitem [{\citenamefont {Clivati}\ \emph {et~al.}(2020)\citenamefont
  {Clivati}, \citenamefont {Meda}, \citenamefont {Donadello}, \citenamefont
  {Virzi}, \citenamefont {Genovese}, \citenamefont {Levi}, \citenamefont
  {Mura}, \citenamefont {Pittaluga}, \citenamefont {Yuan}, \citenamefont
  {Shields}, \citenamefont {Lucamarini}, \citenamefont {Degiovanni},\ and\
  \citenamefont {Calonico}}]{clivati2020coherent}%
  \BibitemOpen
  \bibfield  {author} {\bibinfo {author} {\bibfnamefont {C.}~\bibnamefont
  {Clivati}}, \bibinfo {author} {\bibfnamefont {A.}~\bibnamefont {Meda}},
  \bibinfo {author} {\bibfnamefont {S.}~\bibnamefont {Donadello}}, \bibinfo
  {author} {\bibfnamefont {S.}~\bibnamefont {Virzi}}, \bibinfo {author}
  {\bibfnamefont {M.}~\bibnamefont {Genovese}}, \bibinfo {author}
  {\bibfnamefont {F.}~\bibnamefont {Levi}}, \bibinfo {author} {\bibfnamefont
  {A.}~\bibnamefont {Mura}}, \bibinfo {author} {\bibfnamefont {M.}~\bibnamefont
  {Pittaluga}}, \bibinfo {author} {\bibfnamefont {Z.~L.}\ \bibnamefont {Yuan}},
  \bibinfo {author} {\bibfnamefont {A.~J.}\ \bibnamefont {Shields}}, \bibinfo
  {author} {\bibfnamefont {M.}~\bibnamefont {Lucamarini}}, \bibinfo {author}
  {\bibfnamefont {I.~P.}\ \bibnamefont {Degiovanni}},\ and\ \bibinfo {author}
  {\bibfnamefont {D.}~\bibnamefont {Calonico}},\ }\href
  {https://arxiv.org/abs/2012.15199} {\bibinfo {title} {Coherent phase transfer
  for real-world twin-field quantum key distribution}} (\bibinfo {year}
  {2020}),\ \Eprint {https://arxiv.org/abs/2012.15199} {arXiv:2012.15199
  [quant-ph]} \BibitemShut {NoStop}%
\bibitem [{\citenamefont {Wang}\ \emph {et~al.}(2022)\citenamefont {Wang},
  \citenamefont {Yin}, \citenamefont {He}, \citenamefont {Chen}, \citenamefont
  {Wang}, \citenamefont {Ye}, \citenamefont {Zhou}, \citenamefont {Fan-Yuan},
  \citenamefont {Wang}, \citenamefont {Zhu}, \citenamefont {Morozov},
  \citenamefont {Divochiy}, \citenamefont {Zhou}, \citenamefont {Guo},\ and\
  \citenamefont {Han}}]{wang2022twin}%
  \BibitemOpen
  \bibfield  {author} {\bibinfo {author} {\bibfnamefont {S.}~\bibnamefont
  {Wang}}, \bibinfo {author} {\bibfnamefont {Z.-Q.}\ \bibnamefont {Yin}},
  \bibinfo {author} {\bibfnamefont {D.-Y.}\ \bibnamefont {He}}, \bibinfo
  {author} {\bibfnamefont {W.}~\bibnamefont {Chen}}, \bibinfo {author}
  {\bibfnamefont {R.-Q.}\ \bibnamefont {Wang}}, \bibinfo {author}
  {\bibfnamefont {P.}~\bibnamefont {Ye}}, \bibinfo {author} {\bibfnamefont
  {Y.}~\bibnamefont {Zhou}}, \bibinfo {author} {\bibfnamefont {G.-J.}\
  \bibnamefont {Fan-Yuan}}, \bibinfo {author} {\bibfnamefont {F.-X.}\
  \bibnamefont {Wang}}, \bibinfo {author} {\bibfnamefont {Y.-G.}\ \bibnamefont
  {Zhu}}, \bibinfo {author} {\bibfnamefont {P.~V.}\ \bibnamefont {Morozov}},
  \bibinfo {author} {\bibfnamefont {A.~V.}\ \bibnamefont {Divochiy}}, \bibinfo
  {author} {\bibfnamefont {Z.}~\bibnamefont {Zhou}}, \bibinfo {author}
  {\bibfnamefont {G.-C.}\ \bibnamefont {Guo}},\ and\ \bibinfo {author}
  {\bibfnamefont {Z.-F.}\ \bibnamefont {Han}},\ }\bibfield  {title} {\bibinfo
  {title} {Twin-field quantum key distribution over 830-km fibre},\ }\bibfield
  {journal} {\bibinfo  {journal} {Nature Photonics}\ }\href
  {https://doi.org/10.1038/s41566-021-00928-2} {10.1038/s41566-021-00928-2}
  (\bibinfo {year} {2022})\BibitemShut {NoStop}%
\bibitem [{\citenamefont {Tang}\ \emph
  {et~al.}(2014{\natexlab{a}})\citenamefont {Tang}, \citenamefont {Liao},
  \citenamefont {Xu}, \citenamefont {Qi}, \citenamefont {Qian},\ and\
  \citenamefont {Lo}}]{Tang2014Experimental}%
  \BibitemOpen
  \bibfield  {author} {\bibinfo {author} {\bibfnamefont {Z.}~\bibnamefont
  {Tang}}, \bibinfo {author} {\bibfnamefont {Z.}~\bibnamefont {Liao}}, \bibinfo
  {author} {\bibfnamefont {F.}~\bibnamefont {Xu}}, \bibinfo {author}
  {\bibfnamefont {B.}~\bibnamefont {Qi}}, \bibinfo {author} {\bibfnamefont
  {L.}~\bibnamefont {Qian}},\ and\ \bibinfo {author} {\bibfnamefont {H.-K.}\
  \bibnamefont {Lo}},\ }\bibfield  {title} {\bibinfo {title} {Experimental
  demonstration of polarization encoding measurement-device-independent quantum
  key distribution},\ }\href {https://doi.org/10.1103/PhysRevLett.112.190503}
  {\bibfield  {journal} {\bibinfo  {journal} {Phys. Rev. Lett.}\ }\textbf
  {\bibinfo {volume} {112}},\ \bibinfo {pages} {190503} (\bibinfo {year}
  {2014}{\natexlab{a}})}\BibitemShut {NoStop}%
\bibitem [{\citenamefont {Tang}\ \emph
  {et~al.}(2014{\natexlab{b}})\citenamefont {Tang}, \citenamefont {Yin},
  \citenamefont {Chen}, \citenamefont {Liu}, \citenamefont {Zhang},
  \citenamefont {Jiang}, \citenamefont {Zhang}, \citenamefont {Wang},
  \citenamefont {You}, \citenamefont {Guan} \emph {et~al.}}]{Tang2014field}%
  \BibitemOpen
  \bibfield  {author} {\bibinfo {author} {\bibfnamefont {Y.-L.}\ \bibnamefont
  {Tang}}, \bibinfo {author} {\bibfnamefont {H.-L.}\ \bibnamefont {Yin}},
  \bibinfo {author} {\bibfnamefont {S.-J.}\ \bibnamefont {Chen}}, \bibinfo
  {author} {\bibfnamefont {Y.}~\bibnamefont {Liu}}, \bibinfo {author}
  {\bibfnamefont {W.-J.}\ \bibnamefont {Zhang}}, \bibinfo {author}
  {\bibfnamefont {X.}~\bibnamefont {Jiang}}, \bibinfo {author} {\bibfnamefont
  {L.}~\bibnamefont {Zhang}}, \bibinfo {author} {\bibfnamefont
  {J.}~\bibnamefont {Wang}}, \bibinfo {author} {\bibfnamefont {L.-X.}\
  \bibnamefont {You}}, \bibinfo {author} {\bibfnamefont {J.-Y.}\ \bibnamefont
  {Guan}}, \emph {et~al.},\ }\bibfield  {title} {\bibinfo {title} {Field test
  of measurement-device-independent quantum key distribution},\ }\href
  {https://ieeexplore.ieee.org/abstract/document/6920009} {\bibfield  {journal}
  {\bibinfo  {journal} {IEEE Journal of Selected Topics in Quantum
  Electronics}\ }\textbf {\bibinfo {volume} {21}},\ \bibinfo {pages} {116}
  (\bibinfo {year} {2014}{\natexlab{b}})}\BibitemShut {NoStop}%
\bibitem [{\citenamefont {Lo}\ \emph {et~al.}(2005)\citenamefont {Lo},
  \citenamefont {Ma},\ and\ \citenamefont {Chen}}]{Lo2005Decoy}%
  \BibitemOpen
  \bibfield  {author} {\bibinfo {author} {\bibfnamefont {H.-K.}\ \bibnamefont
  {Lo}}, \bibinfo {author} {\bibfnamefont {X.}~\bibnamefont {Ma}},\ and\
  \bibinfo {author} {\bibfnamefont {K.}~\bibnamefont {Chen}},\ }\bibfield
  {title} {\bibinfo {title} {Decoy state quantum key distribution},\ }\href
  {https://doi.org/10.1103/PhysRevLett.94.230504} {\bibfield  {journal}
  {\bibinfo  {journal} {Phys. Rev. Lett.}\ }\textbf {\bibinfo {volume} {94}},\
  \bibinfo {pages} {230504} (\bibinfo {year} {2005})}\BibitemShut {NoStop}%
\bibitem [{\citenamefont {Inoue}\ \emph {et~al.}(2003)\citenamefont {Inoue},
  \citenamefont {Waks},\ and\ \citenamefont
  {Yamamoto}}]{inoue2003differential}%
  \BibitemOpen
  \bibfield  {author} {\bibinfo {author} {\bibfnamefont {K.}~\bibnamefont
  {Inoue}}, \bibinfo {author} {\bibfnamefont {E.}~\bibnamefont {Waks}},\ and\
  \bibinfo {author} {\bibfnamefont {Y.}~\bibnamefont {Yamamoto}},\ }\bibfield
  {title} {\bibinfo {title} {Differential-phase-shift quantum key distribution
  using coherent light},\ }\href {https://doi.org/10.1103/PhysRevA.68.022317}
  {\bibfield  {journal} {\bibinfo  {journal} {Phys. Rev. A}\ }\textbf {\bibinfo
  {volume} {68}},\ \bibinfo {pages} {022317} (\bibinfo {year}
  {2003})}\BibitemShut {NoStop}%
\bibitem [{\citenamefont {Sasaki}\ \emph {et~al.}(2014)\citenamefont {Sasaki},
  \citenamefont {Yamamoto},\ and\ \citenamefont
  {Koashi}}]{sasaki2014practical}%
  \BibitemOpen
  \bibfield  {author} {\bibinfo {author} {\bibfnamefont {T.}~\bibnamefont
  {Sasaki}}, \bibinfo {author} {\bibfnamefont {Y.}~\bibnamefont {Yamamoto}},\
  and\ \bibinfo {author} {\bibfnamefont {M.}~\bibnamefont {Koashi}},\
  }\bibfield  {title} {\bibinfo {title} {Practical quantum key distribution
  protocol without monitoring signal disturbance},\ }\href
  {https://www.nature.com/articles/nature13303} {\bibfield  {journal} {\bibinfo
   {journal} {Nature}\ }\textbf {\bibinfo {volume} {509}},\ \bibinfo {pages}
  {475} (\bibinfo {year} {2014})}\BibitemShut {NoStop}%
\bibitem [{\citenamefont {Hwang}(2003)}]{hwang2003decoy}%
  \BibitemOpen
  \bibfield  {author} {\bibinfo {author} {\bibfnamefont {W.-Y.}\ \bibnamefont
  {Hwang}},\ }\bibfield  {title} {\bibinfo {title} {Quantum key distribution
  with high loss: Toward global secure communication},\ }\href
  {https://doi.org/10.1103/PhysRevLett.91.057901} {\bibfield  {journal}
  {\bibinfo  {journal} {Phys. Rev. Lett.}\ }\textbf {\bibinfo {volume} {91}},\
  \bibinfo {pages} {057901} (\bibinfo {year} {2003})}\BibitemShut {NoStop}%
\bibitem [{\citenamefont {Wang}(2005)}]{wang2005decoy}%
  \BibitemOpen
  \bibfield  {author} {\bibinfo {author} {\bibfnamefont {X.-B.}\ \bibnamefont
  {Wang}},\ }\bibfield  {title} {\bibinfo {title} {Beating the
  photon-number-splitting attack in practical quantum cryptography},\ }\href
  {https://doi.org/10.1103/PhysRevLett.94.230503} {\bibfield  {journal}
  {\bibinfo  {journal} {Phys. Rev. Lett.}\ }\textbf {\bibinfo {volume} {94}},\
  \bibinfo {pages} {230503} (\bibinfo {year} {2005})}\BibitemShut {NoStop}%
\bibitem [{\citenamefont {Cao}\ \emph {et~al.}(2015)\citenamefont {Cao},
  \citenamefont {Zhang}, \citenamefont {Lo},\ and\ \citenamefont
  {Ma}}]{Cao2015discrete}%
  \BibitemOpen
  \bibfield  {author} {\bibinfo {author} {\bibfnamefont {Z.}~\bibnamefont
  {Cao}}, \bibinfo {author} {\bibfnamefont {Z.}~\bibnamefont {Zhang}}, \bibinfo
  {author} {\bibfnamefont {H.-K.}\ \bibnamefont {Lo}},\ and\ \bibinfo {author}
  {\bibfnamefont {X.}~\bibnamefont {Ma}},\ }\bibfield  {title} {\bibinfo
  {title} {Discrete-phase-randomized coherent state source and its application
  in quantum key distribution},\ }\href
  {http://stacks.iop.org/1367-2630/17/i=5/a=053014} {\bibfield  {journal}
  {\bibinfo  {journal} {New J. Phys.}\ }\textbf {\bibinfo {volume} {17}},\
  \bibinfo {pages} {053014} (\bibinfo {year} {2015})}\BibitemShut {NoStop}%
\bibitem [{\citenamefont {Zeng}\ \emph {et~al.}(2020)\citenamefont {Zeng},
  \citenamefont {Wu},\ and\ \citenamefont {Ma}}]{Zeng2019Symmetryprotected}%
  \BibitemOpen
  \bibfield  {author} {\bibinfo {author} {\bibfnamefont {P.}~\bibnamefont
  {Zeng}}, \bibinfo {author} {\bibfnamefont {W.}~\bibnamefont {Wu}},\ and\
  \bibinfo {author} {\bibfnamefont {X.}~\bibnamefont {Ma}},\ }\bibfield
  {title} {\bibinfo {title} {Symmetry-protected privacy: Beating the
  rate-distance linear bound over a noisy channel},\ }\href
  {https://doi.org/10.1103/PhysRevApplied.13.064013} {\bibfield  {journal}
  {\bibinfo  {journal} {Phys. Rev. Applied}\ }\textbf {\bibinfo {volume}
  {13}},\ \bibinfo {pages} {064013} (\bibinfo {year} {2020})}\BibitemShut
  {NoStop}%
\bibitem [{\citenamefont {Jiang}\ \emph {et~al.}(2019)\citenamefont {Jiang},
  \citenamefont {Yu}, \citenamefont {Hu},\ and\ \citenamefont
  {Wang}}]{jiang2019unconditional}%
  \BibitemOpen
  \bibfield  {author} {\bibinfo {author} {\bibfnamefont {C.}~\bibnamefont
  {Jiang}}, \bibinfo {author} {\bibfnamefont {Z.-W.}\ \bibnamefont {Yu}},
  \bibinfo {author} {\bibfnamefont {X.-L.}\ \bibnamefont {Hu}},\ and\ \bibinfo
  {author} {\bibfnamefont {X.-B.}\ \bibnamefont {Wang}},\ }\bibfield  {title}
  {\bibinfo {title} {Unconditional security of sending or not sending
  twin-field quantum key distribution with finite pulses},\ }\href
  {https://doi.org/10.1103/PhysRevApplied.12.024061} {\bibfield  {journal}
  {\bibinfo  {journal} {Phys. Rev. Applied}\ }\textbf {\bibinfo {volume}
  {12}},\ \bibinfo {pages} {024061} (\bibinfo {year} {2019})}\BibitemShut
  {NoStop}%
\bibitem [{\citenamefont {Ma}\ \emph {et~al.}(2012)\citenamefont {Ma},
  \citenamefont {Fung},\ and\ \citenamefont {Razavi}}]{ma2012statistical}%
  \BibitemOpen
  \bibfield  {author} {\bibinfo {author} {\bibfnamefont {X.}~\bibnamefont
  {Ma}}, \bibinfo {author} {\bibfnamefont {C.-H.~F.}\ \bibnamefont {Fung}},\
  and\ \bibinfo {author} {\bibfnamefont {M.}~\bibnamefont {Razavi}},\
  }\bibfield  {title} {\bibinfo {title} {Statistical fluctuation analysis for
  measurement-device-independent quantum key distribution},\ }\href
  {https://journals.aps.org/pra/abstract/10.1103/PhysRevA.86.052305} {\bibfield
   {journal} {\bibinfo  {journal} {Phys. Rev. A}\ }\textbf {\bibinfo {volume}
  {86}},\ \bibinfo {pages} {052305} (\bibinfo {year} {2012})}\BibitemShut
  {NoStop}%
\bibitem [{\citenamefont {Curty}\ \emph {et~al.}(2014)\citenamefont {Curty},
  \citenamefont {Xu}, \citenamefont {Cui}, \citenamefont {Lim}, \citenamefont
  {Tamaki},\ and\ \citenamefont {Lo}}]{curty2014finite}%
  \BibitemOpen
  \bibfield  {author} {\bibinfo {author} {\bibfnamefont {M.}~\bibnamefont
  {Curty}}, \bibinfo {author} {\bibfnamefont {F.}~\bibnamefont {Xu}}, \bibinfo
  {author} {\bibfnamefont {W.}~\bibnamefont {Cui}}, \bibinfo {author}
  {\bibfnamefont {C.~C.~W.}\ \bibnamefont {Lim}}, \bibinfo {author}
  {\bibfnamefont {K.}~\bibnamefont {Tamaki}},\ and\ \bibinfo {author}
  {\bibfnamefont {H.-K.}\ \bibnamefont {Lo}},\ }\bibfield  {title} {\bibinfo
  {title} {Finite-key analysis for measurement-device-independent quantum key
  distribution},\ }\href@noop {} {\bibfield  {journal} {\bibinfo  {journal}
  {Nature communications}\ }\textbf {\bibinfo {volume} {5}},\ \bibinfo {pages}
  {3732} (\bibinfo {year} {2014})}\BibitemShut {NoStop}%
\bibitem [{\citenamefont {Xu}\ \emph {et~al.}(2014)\citenamefont {Xu},
  \citenamefont {Xu},\ and\ \citenamefont {Lo}}]{xi2014protocol}%
  \BibitemOpen
  \bibfield  {author} {\bibinfo {author} {\bibfnamefont {F.}~\bibnamefont
  {Xu}}, \bibinfo {author} {\bibfnamefont {H.}~\bibnamefont {Xu}},\ and\
  \bibinfo {author} {\bibfnamefont {H.-K.}\ \bibnamefont {Lo}},\ }\bibfield
  {title} {\bibinfo {title} {Protocol choice and parameter optimization in
  decoy-state measurement-device-independent quantum key distribution},\ }\href
  {https://doi.org/10.1103/PhysRevA.89.052333} {\bibfield  {journal} {\bibinfo
  {journal} {Phys. Rev. A}\ }\textbf {\bibinfo {volume} {89}},\ \bibinfo
  {pages} {052333} (\bibinfo {year} {2014})}\BibitemShut {NoStop}%
\bibitem [{\citenamefont {Panayi}\ \emph {et~al.}(2014)\citenamefont {Panayi},
  \citenamefont {Razavi}, \citenamefont {Ma},\ and\ \citenamefont
  {L{\"u}tkenhaus}}]{panayi2014memory}%
  \BibitemOpen
  \bibfield  {author} {\bibinfo {author} {\bibfnamefont {C.}~\bibnamefont
  {Panayi}}, \bibinfo {author} {\bibfnamefont {M.}~\bibnamefont {Razavi}},
  \bibinfo {author} {\bibfnamefont {X.}~\bibnamefont {Ma}},\ and\ \bibinfo
  {author} {\bibfnamefont {N.}~\bibnamefont {L{\"u}tkenhaus}},\ }\bibfield
  {title} {\bibinfo {title} {Memory-assisted measurement-device-independent
  quantum key distribution},\ }\href
  {https://iopscience.iop.org/article/10.1088/1367-2630/16/4/043005/meta}
  {\bibfield  {journal} {\bibinfo  {journal} {New Journal of Physics}\ }\textbf
  {\bibinfo {volume} {16}},\ \bibinfo {pages} {043005} (\bibinfo {year}
  {2014})}\BibitemShut {NoStop}%
\bibitem [{\citenamefont {Azuma}\ \emph
  {et~al.}(2015{\natexlab{b}})\citenamefont {Azuma}, \citenamefont {Tamaki},\
  and\ \citenamefont {Munro}}]{azuma2015intercity}%
  \BibitemOpen
  \bibfield  {author} {\bibinfo {author} {\bibfnamefont {K.}~\bibnamefont
  {Azuma}}, \bibinfo {author} {\bibfnamefont {K.}~\bibnamefont {Tamaki}},\ and\
  \bibinfo {author} {\bibfnamefont {W.~J.}\ \bibnamefont {Munro}},\ }\bibfield
  {title} {\bibinfo {title} {All-photonic intercity quantum key distribution},\
  }\href {https://www.nature.com/articles/ncomms10171/} {\bibfield  {journal}
  {\bibinfo  {journal} {Nature communications}\ }\textbf {\bibinfo {volume}
  {6}},\ \bibinfo {pages} {1} (\bibinfo {year}
  {2015}{\natexlab{b}})}\BibitemShut {NoStop}%
\bibitem [{\citenamefont {Scarani}\ \emph {et~al.}(2009)\citenamefont
  {Scarani}, \citenamefont {Bechmann-Pasquinucci}, \citenamefont {Cerf},
  \citenamefont {Du\ifmmode~\check{s}\else \v{s}\fi{}ek}, \citenamefont
  {L\"utkenhaus},\ and\ \citenamefont {Peev}}]{Scarani2009security}%
  \BibitemOpen
  \bibfield  {author} {\bibinfo {author} {\bibfnamefont {V.}~\bibnamefont
  {Scarani}}, \bibinfo {author} {\bibfnamefont {H.}~\bibnamefont
  {Bechmann-Pasquinucci}}, \bibinfo {author} {\bibfnamefont {N.~J.}\
  \bibnamefont {Cerf}}, \bibinfo {author} {\bibfnamefont {M.}~\bibnamefont
  {Du\ifmmode~\check{s}\else \v{s}\fi{}ek}}, \bibinfo {author} {\bibfnamefont
  {N.}~\bibnamefont {L\"utkenhaus}},\ and\ \bibinfo {author} {\bibfnamefont
  {M.}~\bibnamefont {Peev}},\ }\bibfield  {title} {\bibinfo {title} {The
  security of practical quantum key distribution},\ }\href
  {https://doi.org/10.1103/RevModPhys.81.1301} {\bibfield  {journal} {\bibinfo
  {journal} {Rev. Mod. Phys.}\ }\textbf {\bibinfo {volume} {81}},\ \bibinfo
  {pages} {1301} (\bibinfo {year} {2009})}\BibitemShut {NoStop}%
\bibitem [{\citenamefont {Ferenczi}\ and\ \citenamefont
  {L\"utkenhaus}(2012)}]{Ferenczi2012symmetries}%
  \BibitemOpen
  \bibfield  {author} {\bibinfo {author} {\bibfnamefont {A.}~\bibnamefont
  {Ferenczi}}\ and\ \bibinfo {author} {\bibfnamefont {N.}~\bibnamefont
  {L\"utkenhaus}},\ }\bibfield  {title} {\bibinfo {title} {Symmetries in
  quantum key distribution and the connection between optimal attacks and
  optimal cloning},\ }\href {https://doi.org/10.1103/PhysRevA.85.052310}
  {\bibfield  {journal} {\bibinfo  {journal} {Phys. Rev. A}\ }\textbf {\bibinfo
  {volume} {85}},\ \bibinfo {pages} {052310} (\bibinfo {year}
  {2012})}\BibitemShut {NoStop}%
\bibitem [{\citenamefont {Xie}\ \emph {et~al.}(2021)\citenamefont {Xie},
  \citenamefont {Lu}, \citenamefont {Weng}, \citenamefont {Cao}, \citenamefont
  {Jia}, \citenamefont {Bao}, \citenamefont {Wang}, \citenamefont {Fu},
  \citenamefont {Yin},\ and\ \citenamefont {Chen}}]{xie2021breaking}%
  \BibitemOpen
  \bibfield  {author} {\bibinfo {author} {\bibfnamefont {Y.-M.}\ \bibnamefont
  {Xie}}, \bibinfo {author} {\bibfnamefont {Y.-S.}\ \bibnamefont {Lu}},
  \bibinfo {author} {\bibfnamefont {C.-X.}\ \bibnamefont {Weng}}, \bibinfo
  {author} {\bibfnamefont {X.-Y.}\ \bibnamefont {Cao}}, \bibinfo {author}
  {\bibfnamefont {Z.-Y.}\ \bibnamefont {Jia}}, \bibinfo {author} {\bibfnamefont
  {Y.}~\bibnamefont {Bao}}, \bibinfo {author} {\bibfnamefont {Y.}~\bibnamefont
  {Wang}}, \bibinfo {author} {\bibfnamefont {Y.}~\bibnamefont {Fu}}, \bibinfo
  {author} {\bibfnamefont {H.-L.}\ \bibnamefont {Yin}},\ and\ \bibinfo {author}
  {\bibfnamefont {Z.-B.}\ \bibnamefont {Chen}},\ }\href
  {https://arxiv.org/abs/2112.11635} {\bibinfo {title} {Breaking the rate-loss
  relationship of quantum key distribution with asynchronous two-photon
  interference}} (\bibinfo {year} {2021}),\ \Eprint
  {https://arxiv.org/abs/2112.11635} {arXiv:2112.11635 [quant-ph]} \BibitemShut
  {NoStop}%
\bibitem [{\citenamefont {Bennett}\ \emph {et~al.}(1992)\citenamefont
  {Bennett}, \citenamefont {Brassard},\ and\ \citenamefont
  {Mermin}}]{bbm1992quantum}%
  \BibitemOpen
  \bibfield  {author} {\bibinfo {author} {\bibfnamefont {C.~H.}\ \bibnamefont
  {Bennett}}, \bibinfo {author} {\bibfnamefont {G.}~\bibnamefont {Brassard}},\
  and\ \bibinfo {author} {\bibfnamefont {N.~D.}\ \bibnamefont {Mermin}},\
  }\bibfield  {title} {\bibinfo {title} {Quantum cryptography without bell's
  theorem},\ }\href {https://doi.org/10.1103/PhysRevLett.68.557} {\bibfield
  {journal} {\bibinfo  {journal} {Phys. Rev. Lett.}\ }\textbf {\bibinfo
  {volume} {68}},\ \bibinfo {pages} {557} (\bibinfo {year} {1992})}\BibitemShut
  {NoStop}%
\bibitem [{\citenamefont {Zhang}\ \emph {et~al.}(2017)\citenamefont {Zhang},
  \citenamefont {Zhao}, \citenamefont {Razavi},\ and\ \citenamefont
  {Ma}}]{Zhang2017improved}%
  \BibitemOpen
  \bibfield  {author} {\bibinfo {author} {\bibfnamefont {Z.}~\bibnamefont
  {Zhang}}, \bibinfo {author} {\bibfnamefont {Q.}~\bibnamefont {Zhao}},
  \bibinfo {author} {\bibfnamefont {M.}~\bibnamefont {Razavi}},\ and\ \bibinfo
  {author} {\bibfnamefont {X.}~\bibnamefont {Ma}},\ }\bibfield  {title}
  {\bibinfo {title} {Improved key-rate bounds for practical decoy-state
  quantum-key-distribution systems},\ }\href
  {https://journals.aps.org/pra/abstract/10.1103/PhysRevA.95.012333} {\bibfield
   {journal} {\bibinfo  {journal} {Phys. Rev. A}\ }\textbf {\bibinfo {volume}
  {95}},\ \bibinfo {pages} {012333} (\bibinfo {year} {2017})}\BibitemShut
  {NoStop}%
\bibitem [{\citenamefont {Fung}\ \emph {et~al.}(2010)\citenamefont {Fung},
  \citenamefont {Ma},\ and\ \citenamefont {Chau}}]{Fung2010practical}%
  \BibitemOpen
  \bibfield  {author} {\bibinfo {author} {\bibfnamefont {C.-H.~F.}\
  \bibnamefont {Fung}}, \bibinfo {author} {\bibfnamefont {X.}~\bibnamefont
  {Ma}},\ and\ \bibinfo {author} {\bibfnamefont {H.~F.}\ \bibnamefont {Chau}},\
  }\bibfield  {title} {\bibinfo {title} {Practical issues in
  quantum-key-distribution postprocessing},\ }\href
  {https://doi.org/10.1103/PhysRevA.81.012318} {\bibfield  {journal} {\bibinfo
  {journal} {Phys. Rev. A}\ }\textbf {\bibinfo {volume} {81}},\ \bibinfo
  {pages} {012318} (\bibinfo {year} {2010})}\BibitemShut {NoStop}%
\bibitem [{\citenamefont {Ma}(2008)}]{Ma2008PhD}%
  \BibitemOpen
  \bibfield  {author} {\bibinfo {author} {\bibfnamefont {X.}~\bibnamefont
  {Ma}},\ }\emph {\bibinfo {title} {Quantum cryptography: from theory to
  practice}},\ \href {https://arxiv.org/abs/0808.1385} {Ph.D. thesis},\
  \bibinfo  {school} {University of Toronto} (\bibinfo {year} {2008}),\
  \bibinfo {note} {also available in arXiv:0808.1385}\BibitemShut {NoStop}%
\bibitem [{\citenamefont {Ma}\ \emph {et~al.}(2005)\citenamefont {Ma},
  \citenamefont {Qi}, \citenamefont {Zhao},\ and\ \citenamefont
  {Lo}}]{Ma2005practical}%
  \BibitemOpen
  \bibfield  {author} {\bibinfo {author} {\bibfnamefont {X.}~\bibnamefont
  {Ma}}, \bibinfo {author} {\bibfnamefont {B.}~\bibnamefont {Qi}}, \bibinfo
  {author} {\bibfnamefont {Y.}~\bibnamefont {Zhao}},\ and\ \bibinfo {author}
  {\bibfnamefont {H.-K.}\ \bibnamefont {Lo}},\ }\bibfield  {title} {\bibinfo
  {title} {Practical decoy state for quantum key distribution},\ }\href
  {https://doi.org/10.1103/PhysRevA.72.012326} {\bibfield  {journal} {\bibinfo
  {journal} {Phys. Rev. A}\ }\textbf {\bibinfo {volume} {72}},\ \bibinfo
  {pages} {012326} (\bibinfo {year} {2005})}\BibitemShut {NoStop}%
\end{thebibliography}
\end{document}